\documentclass{article}

\usepackage{amsmath, xypic, graphicx}
\usepackage{stmaryrd,xspace,amssymb}
\usepackage{amsthm}
\usepackage[active]{srcltx}
\usepackage{url}

\newcommand{\truob}{\underline{\mathbb{T}}^{\vert\psi\rangle}}
\newcommand{\stat}{\vert\psi\rangle}
\newcommand{\dirac}{\vert\psi\rangle\langle\psi\vert}
\newcommand{\pseu}{\underline{\mathfrak{w}}^{\vert\psi\rangle}}
\newcommand{\daspo}{\delta^{o}(p)_{C}}
\newcommand{\daspi}{\delta^{i}(p)_{C}}
\newcommand{\dasao}{\delta^{o}(a)_{C}}
\newcommand{\dasai}{\delta^{i}(a)_{C}}
\newcommand{\uSs}{\underline{\Sigma}_{\ast}}
\newcommand{\Ss}{\Sigma_{\ast}}

\newcommand{\IR}{\mathbb{IR}}
\newcommand{\Q}{\mathbb{Q}}
\newcommand{\uC}{\underline{\mathbb{C}}}
\newcommand{\uIR}{\underline{\mathbb{IR}}}
\newcommand{\unR}{\underline{\mathbb{R}}}
\newcommand{\drie}{\vartriangleleft}

\pagestyle{headings}
\setcounter{secnumdepth}{5}
\newcommand{\se}{\section} 
\newcommand{\su}{\subsection}

\newcommand{\beq}{\begin{equation}}
\newcommand{\eeq}{\end{equation}}
\newcommand{\bea}{\begin{eqnarray}}
\newcommand{\eea}{\end{eqnarray}}


%
%
\newcommand{\C}{{\mathbb C}} 
 
\newcommand{\N}{{\mathbb N}} \newcommand{\R}{{\mathbb R}}
 
%
%

 %
%


%

\newcommand{\Cat}[1]{\ensuremath{\mathrm{\textbf{#1}}}}

\newcommand{\id}[1]{\ensuremath{\mathrm{id}}}
\newcommand{\op}{\ensuremath{^{\mathrm{op}}}}

\newcommand{\Set}{\Cat{Sets}\xspace}
\newcommand{\Sh}{\ensuremath{\mathrm{Sh}}}

\newcommand{\uS}{\underline{\Sigma}}
\newcommand{\uA}{\underline{A}}

\newtheorem{dork}{Definition}[section]
\newtheorem{tut}[dork]{Theorem}
\newtheorem{poe}[dork]{Proposition}
\newtheorem{lem}[dork]{Lemma}
\newtheorem{cor}[dork]{Corollary}

\begin{document}

\title{A Comparison of Two Topos-Theoretic Approaches to Quantum Theory}
\author{Sander A.M. Wolters\\
\small \textit{Institute for Mathematics, Astrophysics, and Particle Physics}\\
\small \textit{Radboud Universiteit Nijmegen,} \\
\small \textit{Heyendaalseweg 135, 6525 AJ Nijmegen, The Netherlands.}\\
\small \textit{e-mail: s.wolters@math.ru.nl}.}
\date{August 2011}
\maketitle

\begin{abstract}
The aim of this paper is to compare the two topos-theoretic approaches to quantum mechanics that may be found in the literature to date. The first approach, which we will call the contravariant approach, was originally proposed by Isham and Butterfield, and was later extended by D\"oring and Isham. The second approach, which we will call the covariant approach, was developed by Heunen, Landsman and Spitters.  

Motivated by coarse-graining and the Kochen-Specker theorem, the contravariant approach uses the topos of presheaves on a specific context category, defined as the poset of commutative von Neumann subalgebras of some given von Neumann algebra. In particular, the approach uses the spectral presheaf. The intuitionistic logic of this approach is given by the (complete) Heyting algebra of closed open subobjects of the spectral presheaf. We show that this Heyting algebra is, in a natural way, a locale in the ambient topos, and compare this locale with the internal Gelfand spectrum of the covariant approach.

In the covariant approach, a non-commutative C*-algebra (in the topos $\mathbf{Set}$) defines a commutative C*-algebra internal to the topos of covariant functors from the context category to the category of sets. We give an explicit description of the internal Gelfand spectrum of this commutative C*-algebra, from which it follows that the external spectrum is spatial.

Using the daseinisation of self-adjoint operators from the contravariant approach, we give a new definition of the daseinisation arrow in the covariant approach and compare it with the original version. States and state-proposition pairing in both approaches are compared. We also investigate the physical interpretation of the covariant approach.

\end{abstract}

\se{Introduction}

The goal of this paper is to compare two different but related recent applications of topos theory to quantum theory. Both approaches provide an intuitionistic logic for quantum mechanics, which forms an alternative to the orthodox quantum `logic' of Birkhoff and von Neumann \cite{bine}. In these alternatives, the propositions about the system under investigation form a Heyting algebra, in contrast to the orthomodular lattice of projections in orthodox quantum logic. Much has already been said about the relationship between the topos-theoretic approaches and orthodox quantum logic \cite{di2, di3, di, hls2}. In this paper, however, we are interested in the relationship between the logics of these two topos-based approaches.

Some familiarity with basic topos theory is required. A basic introduction can be found in Goldblatt \cite{gol}. A more extensive introduction is given by Mac Lane and Moerdijk \cite{mm}, which covers more than enough to understand the topos theory used in both topos approaches to quantum theory. Of course, everything and more can also be found in Johnstone \cite{jh1}. Another useful reference is Borceux \cite{bor}.

\su{Contravariant or Coarse-Graining Approach}

As far as the author knows, the oldest application of topos theory to quantum mechanics is due to Adelman and Corbett \cite{adco}, but apparently it has not influenced later authors, and indeed it will play no role in this paper either. Of the two topos-theoretic approaches to quantum theory that are to be compared in this paper, the older approach originates with Isham \cite{ish} and Butterfield and Isham \cite{buhais, butish1, butish2, butish3}. We will call this the \textbf{contravariant approach}. The \textbf{coarse-graining approach} would also have been a suitable name, for coarse-graining is one of the guiding principles of the contravariant approach, as we shall see shortly. 

It is important to remark that the formalism of Butterfield and Isham, initially intended as a reformulation of quantum mechanics, was extended by D\"oring and Isham to a topos approach to theories of physics in general \cite{di1, di2, di3, di4}. In this more general perspective, a language is associated to a physical system, and a physical theory for that system is a suitable representation of this language in a topos. This is more general than using topoi of contravariant functors, as proposed for theories of quantum mechanics. In this wider setting, the name contravariant approach seems misplaced. However, this more general picture will play only a minor role in this paper, so it seems harmless to use the name ``contravariant approach". A historical overview of the more general framework has been given by Isham \cite{ish2}, and the comprehensive review \cite{di} covers most of the ideas of the contravariant approach as of 2009. 

We will now sketch some of the important ideas in the contravariant approach. In this approach a quantum system is described by a von Neumann algebra $A$. We can typically think of $A$ as the von Neumann algebra of bounded operators on some Hilbert space $\mathcal{B}(\mathcal{H})$, but the approach works for any von Neumann algebra.\footnote{In what follows we only need the fact  that the projections of the operator algebra form a complete lattice, so we may generalize von Neumann algebras to AW*-algebras \cite{ber}.} \textbf{Contextuality}, motivated by the Kochen-Specker theorem \cite{kosp}, is an important ingredient of the contravariant approach \cite{buhais, butish1, butish2, butish3}. In the contravariant approach a classical \textbf{context} is represented by an Abelian von Neumann subalgebra\footnote{Only von Neumann algebras $C$ where the unit of $C$ is the unit of $A$ are included. Usually the trivial context $\mathbb{C}1$ is excluded as a context.} of $A$. Such classical contexts form a poset $\mathcal{V}(A)$, where the partial order is given by inclusion. Next, one considers the category $[\mathcal{V}(A)\op,\mathbf{Set}]$, of contravariant functors from $\mathcal{V}(A)$ to $\mathbf{Set}$. Working with this functor category allows one to work with all classical contexts at the same time, whilst keeping track of relations between the different contexts. 

The category $[\mathcal{V}(A)\op,\mathbf{Set}]$ is an example of a topos.\footnote{By topos we will always mean a Grothendieck topos. Every elementary topos encountered in this paper is a Grothendieck topos. In particular, it has a natural numbers object. } A topos is a highly structured category that has many different faces \cite[Preface]{jh1}.

Another important concept in the contravariant approach is \textbf{coarse-graining} (\cite[Section 5]{di}, or any other paper on the contravariant approach). Let $C',C\in\mathcal{V}(A)$ be contexts such that $C'\subset C$. Considering a self-adjoint operator $a\in C_{sa}$ and an open subset $\Delta\subseteq\R$, the proposition $a\in\Delta$ is represented by a (spectral) projection operator $p=[a\in\Delta]$. Because $C$ is a von Neumann algebra, it follows that $p\in C$. For the `coarser' context $C'\subset C$, it may very well be that $p\notin C'$. In this context the projection $p$ is replaced by an approximation using the available projection operators of $C'$, namely $\delta^{o}(p)_{C'}$, the smallest projection operator $q$ in $C'$, such that $p\leq q$. Note that we associate a weaker proposition (i.e. larger projection operator) to the coarser context $C'$, compared to the context $C$. Although we may not be able to assign the truth value `true' to the propostion $p=[a\in\Delta]$, it may be the case that we may assign `true' to the weaker proposition $\delta^{o}(p)_{C'}$. If $C''\subset C'\subset C$ is an even coarser context and $\delta^{o}(p)_{C'}$ is assigned `true', then $\delta^{o}(p)_{C''}\geq\delta^{o}(p)_{C'}$ is also `true'.\footnote{This is not ambiguous, as it does not matter if we take $p\in C$ and approximate it in $C''$, or if we take $\delta^{o}(p)_{C'}\in C'$ and approximate it in $C''$. The outcome is the same.} This means that the collection of $C'\in\mathcal{V}(A)$, $C'\subseteq C$ such that $\delta^{o}(p)_{C'}$ is true is a sieve on $C$.\footnote{This is a collection $t_{C}\subseteq\mathcal{V}(A)$ such that if $C'\in t_{C}$, then $C'\subseteq C$ and if $C''\subseteq C'\in t_{C}$, then $C''\in t_{C}$. Note that a sieve in $\mathcal{V}(A)$ is the same as an ideal of this poset.} This can be seen as another motivation for using the topos $[\mathcal{V}(A)\op,\mathbf{Set}]$. The subobject classifier of this topos, denoted $\underline{\Omega}$, is crucial for the notion of truth in this topos. It is defined as follows: for a context $C\in\mathcal{V}(A)$, the set $\underline{\Omega}(C)$ is the set of sieves on $C$. 

More generally, let $p$ be any projection operator in $A$, and $C$ any context. Then $\delta^{o}(p)_{C}$ is defined to be the smallest projection operator $q$ in $C$ with the property $p\leq q$. If $p\in C$, then clearly $\delta^{o}(p)_{C}=p$. Following the literature on the contravariant approach, we call $\delta^{o}(p)_{C}$ the \textbf{outer daseinisation} of $p$ in $C$. Similarly, the \textbf{inner daseinisation} of $p$ in context $C$, denoted by $\delta^{i}(a)_{C}$, approximates $p$ in $C$ by taking the largest projection operator $q$ in $C$ such that $q\leq p$.

Next, consider the so-called \textbf{spectral presheaf} $\uS:\mathcal{V}(A)\op\to\mathbf{Set}$. At a context $C\in\mathcal{V}(A)$, the set $\uS(C)$ is defined as the Gelfand spectrum $\Sigma_{C}$ of $C$, seen as a commutative C*-algebra. If $C'\subseteq C$, this gives a restriction map $\uS(C)\to\uS(C')$, $\lambda\mapsto\lambda|_{C'}$. For a projection $p$, outer daseinisation gives, for every context $C$, a closed open subset of the spectrum $\Sigma_{C}$, namely the support of the Gelfand transform of the projection operator $\delta^{o}(p)_{C}$. These closed open subsets combine to give a subobject of the spectral presheaf $\underline{\delta}^{o}(p)\rightarrowtail\uS$. Such subobjects are special cases of closed open subobjects of the spectral presheaf: a subobject $\underline{U}\rightarrowtail\uS$ is called a \textbf{closed open subobject} if for every $C\in\mathcal{V}(A)$ the set $\underline{U}(C)$ is closed and open in $\Sigma_{C}$. The set of closed open subobjects of the spectral presheaf is denoted by $\mathcal{O}_{cl}\uS$.

In the logic of the contravariant approach, the spectral presheaf $\uS$ plays the role of a state space. In accordance with coarse-graining, the closed open subobjects of $\uS$ represent propositions about the system. As shown in \cite{di2} and \cite[Appendix 1]{di}, the set $\mathcal{O}_{cl}\uS$ may be given the structure of a Heyting algebra. 

Let $\underline{1}$ be the terminal object of the topos $[\mathcal{V}(A)\op,\mathbf{Set}]$, given by the constant functor $\underline{1}(C)=\{\ast\}$. The arrows $\underline{1}\to\uS$ would be natural candidates for pure states in the contravariant quantum logic. However, in many cases, including $A=\mathcal{B}(\mathcal{H})$ with $\text{dim}(\mathcal{H})>2$, the Kochen-Specker theorem prohibits the existence of such arrows \cite{buhais}, \cite{di2}. Instead of taking points, one therefore considers pseudo-states $\pseu$ (see Subsection 4.1), which are subobjects of $\uS$.\footnote{Alternatively states can be described as measures (\cite{doe3}, \cite{doe2}), which we also discuss in Subsection 4.1.}

We can combine a proposition $\underline{S}\in\mathcal{O}_{cl}\uS$ with a pseudostate $\pseu$ such that it gives a truth value in $[\mathcal{V}(A)\op,\mathbf{Set}]$: this is defined as an arrow $\underline{1}\to\underline{\Omega}$. For every context $C$ we thus obtain a sieve on $C$ in accordance with the idea of coarse-graining. At context $C$, the truth value is given by
\begin{equation}
\nu(\pseu\subseteq\underline{S})_{C}=\{C'\in\mathcal{V}(A)\mid C'\subseteq C,\ \pseu(C')\subseteq\underline{S}(C')\}\in\underline{\Omega}(C).
\end{equation}

\su{Covariant or Bohrification Approach}

The other active topos-theoretic approach to quantum mechanics is the \textbf{covariant approach}. Another suitable name would be \textbf{Bohrification}. The covariant approach was initiated in Heunen, Landsman and Spitters \cite{hls}, and further developed in \cite{hls2}. A more detailed description can be found in \cite{hls3}, and an explicit discussion for finite dimensional systems is given in \cite{chls}. We now give a brief sketch of the covariant approach. The first steps appear to look like the contravariant approach, but soon the covariant approach takes a different direction. 

The covariant approach is inspired by algebraic quantum theory \cite{haag}, insofar as the system under investigation is described by a C*-algebra $A$, which we assume to be unital. A second ingredient is Bohr's doctrine of classical concepts \cite{bohr}, or rather a particular mathematical interpretation of this principle. This principle states that we can only look at a quantum system from the point of view of some classical context. The classical contexts are represented by unital\footnote{The unit is included for technical reasons.} commutative C*-subalgebras of $A$.\footnote{We demand that the unit of the context $C$ is equal to the unit of $A$.}  These classical contexts, partially ordered by inclusion, form a poset $\mathcal{C}(A)$. 

We consider the following interpretation of a context in the covariant setting. The selfadjoint elements of a context represent physical quantities. As $C$ is commutative, these physical quantities are compatible, and therefore they fit in a single measurement context. In looking for a physical interpretation for the covariant approach, we think of a context $C$ as a measurement context, (where we measure one or more of the compatible observable quantities that correspond to an element of $C_{sa}$). The reader who does not want to use operational notions (although using operational notions does not imply having an instrumentalist interpretation of the theory) may think of a context more abstractly as a classical snapshot of the system, or a stage of knowledge about the system\footnote{Before we continue, we first issue a warning. The goal of this paper is to compare the covariant and the contravariant approaches. For this comparison it helps to have a physical interpretation for various constructions in the covariant approach. Please keep in mind that the physical interpretation given to contexts (and later to the internal language and elementary propositions) need not exactly be the interpretation that the originators of this approach have in mind. It is simply the most natural interpretation, according to the present author}. 

If we take Bohr's doctrine of classical contexts seriously, then instead of talking about an obervable represented by $a\in A_{sa}$, we should always use some suitable classical context and consider pairs $(C,a)$ with $C$ a context and $a\in C_{sa}$. More generally, we would like to talk about observables in an arbitrary context. For example if $p,q\in A_{sa}$ correspond to physical quantities that are complementary, we would like to talk about $q$ in any context that includes $p$ (using only expressions that make sense in this context). In Section 3 we will see how the daseinisation maps of the contravariant approach help in achieving this, but for now we stick with pairs $(C,a)$ with $a\in C$. Treating physical quantities as such pairs, it is natural to consider the following topos. Let $\mathcal{C}_{d}(A)$ be the set of all unital commutative C*-subalgebras of $A$, seen as a discrete category (so we forget about the order relation of $\mathcal{C}(A)$). Next, consider the topos $\mathcal{T}_{d}=[\mathcal{C}_{d}(A),\Set]$. An object of $\mathcal{T}_{d}$ is equivalent to a bundle over $\mathcal{C}_{d}(A)$, or a $\mathcal{C}_{d}(A)$ indexed family of sets. The observables in context, given by pairs $(C,a)$, with $a\in C_{sa}$ provide such a collection of sets and define an object $\underline{A}$ of $\mathcal{T}_{d}$.

Every topos can be seen as a (generalized) universe of sets and has an internal language, the Mitchell-B\'enabou language of the topos, and a semantics for this language, the Kripke-Joyal semantics (\cite{mm}, Chapter VI). For the rather simple topos $\mathcal{T}_{d}$ consisting of functors from the discrete category $\mathcal{C}_{d}(A)$, the internal logic is a copy of the logic of $\Set$ for every context $C$. Mathematics internal to this topos is the same as classical mathematics (the mathematics of $\Set$) while keeping track of a context.

Internal to the topos $\mathcal{T}_{d}$, the generalized set $\underline{A}$ is a \underline{commutative} C*-algebra.\footnote{The claim can be shown using the proof of Theorem 5 of \cite{hls} which simplifies as we work with a simpler topos.} The definition of a C*-algebra in a topos can be found in \cite{banmul, banmul2, banmul3}. In the work of Banaschewski and Mulvey \cite{banmul, banmul2, banmul3} a version of Gelfand duality is presented that holds in any topos, expressing a duality between the category of unital commutative C*-algebras and the category of compact completely regular locales.\footnote{A \textbf{locale} can be thought of as a pointfree description of a topological space. In this picture a compact completely regular locale corresponds to a compact Hausdorff space, which is automatically a compact completely regular space. For an introduction to locales see \cite[Chapter 2]{jh4}, \cite[Chapter IX]{mm}, and \cite{vic}.} A more explicit and fully constructive description of Gelfand duality is given in \cite{coq, coqsp}. 

By this topos generalization of Gelfand duality, the internal observable algebra $\underline{A}$ is isomorphic to the internal set $C(\uS,\mathbb{C})$ of continuous complex valued maps on a certain internal compact regular locale. The frame of this locale is given by the $\mathcal{C}_{d}(A)$ indexed family of sets $(\mathcal{O}\Sigma_{C})_{C\in\mathcal{C}_{d}(A)}$,  where $\mathcal{O}\Sigma_{C}$ denotes the topology of the Gel'fand spectrum of $C$.\footnote{The internal spectrum can be calculated as in Appendix A of \cite{hls}. The proof given there simplifies as our topos has a simple semantics.} Furthermore, it can be shown\footnote{Again, in the same way as in \cite{hls}.} that a state of the system, in the sense of a normalized, positive linear functional on the C*-algebra $A$, defines an internal probability valuation on the spectrum $\uS$.

To summarize, guided by algebraic quantum theory and Bohr's doctrine of classical concepts, we have arrived at the topos $\mathcal{T}_{d}$ and the use of its internal language. Internally (that is, keeping track of a classical context), the description of the system looks more like classical physics. Physical quantities are represented by continuous functions on a space and states are represented by probability valuations on this space. The topos $\mathcal{T}_{d}$ is interesting neither mathematically, as it is too simple, nor physically, as it does not allow for relations between different contexts. In order to remedy this, we replace $\mathcal{C}_{d}(A)$ by the poset $\mathcal{C}(A)$.
 
Starting from the poset $\mathcal{C}(A)$, the two simplest topoi to consider are the topos of covariant functors $[\mathcal{C}(A),\Set]$ and the topos of presheaves $[\mathcal{C}(A)^{\op},\Set]$. Using $[\mathcal{C}(A)^{\op},\Set]$ has the advantage that it connects better with the contravariant approach, but we are immediately faced with a problem. How do we see the (contextual) observables as an object $\underline{A}$, of this presheaf topos? In particular, for every inclusion $C\subset C'$ in $\mathcal{C}(A)$, we need a function $C'\to C$. If we are willing to restrict attention from C*-algebras to von Neumann algebras, then the daseinisation of self-adjoint operators from the contravariant approach, discussed in Subsection 3.1, could be used to define $\underline{A}_{sa}$ in $[\mathcal{C}(A)^{\op},\Set]$ by taking for every $C\in\mathcal{V}(A)$, $\underline{A}_{sa}(C)=C_{sa}$ and for every $C\subseteq C'$, the restriction map $C'_{sa}\to C_{sa}$ is given by $a\mapsto\delta^{o}(a)_{C}$ (we could also have used inner daseinisation). Although this gives a well defined object of $[\mathcal{C}(A)^{\op},\Set]$, this object is not in any obvious way an internal commutative C*-algebra. Consider for example the addition maps $+_{C}:C_{sa}\times C_{sa}\to C_{sa}$, $(a,b)\mapsto a+b$. With restrictions given by outer or inner daseinisation, these maps do not combine to a natural transformation. In the discrete case, internal to the topos, the observables (and states) looked more classical, at least at a mathematical level. It is not clear how this can be attained for the topos of presheaves.

The other option is using the topos of covariant functors $[\mathcal{C}(A),\Set]$. Define the functor $\underline{A}:\mathcal{C}(A)\to\mathbf{Set}$ by $\underline{A}(C)=C$, and for $C'\subseteq C$ take $\underline{A}(C')\to\underline{A}(C)$ to be the inclusion $C'\hookrightarrow C$. This `tautological functor' is called the \textbf{Bohrification} of $A$. Internal to  $[\mathcal{C}(A),\Set]$, the Bohrification $\underline{A}$ is a unital commutative C*-algebra (Theorem 5 of \cite{hls}). By constructive Gelfand duality, the Bohrification $\underline{A}$ is isomorphic to the internal C*-algebra of complex valued function on a compact completely regular locale $\underline{\Sigma}_{\underline{A}}$ the spectrum of $\underline{A}$. States on $A$ translate to probability valuations on the spectrum (Section 4 of \cite{hls}). In choosing the observables to be covariant functors, we keep the internal more classical description.

In the discrete case $\mathcal{T}_{d}=[\mathcal{C}_{d}(A),\Set]$, working internally just means that we used contexts, which makes sense physically. We subsequently switched to the topos $\mathcal{T}=[\mathcal{C}(A),\Set]$ which, like the discrete case, gives an internal description in which the formalism of quantum theory looks like the formalism of classical physics, and, in contrast to the discrete case, allows for relations between contexts. Does the internal language of this new topos $\mathcal{T}$ make sense physically? Suppose that $\phi$ represents a formula in the internal language of $\mathcal{T}$. To make things more explicit, let $\phi$ express `relative to a certain state, an observable quantity represented by $a$ only takes values in $\Delta\subset\mathbb{R}$'. At the end of Section 4 we will give the precise description of $\phi$, but for now it remains a black box. If $\phi$ holds at context $C$ in the internal language, i.e. $C\Vdash\phi$, then we interpret this as follows. By only making use of the measurements corresponding to $C$ we can verify that the claim made by $\phi$ holds. This is still vague, especially the `\textit{the claim made by $\phi$ holds'} part, and we will return to this issue in Subsection 3.5 and Section 4.

The Kripke-Joyal semantics for the functor topos $[\mathcal{C}(A),\mathbf{Set}]$ is the same as that of a Kripke model for intuitionistic logic. The interpretation given above is just a physical version of this Kripke model.  Note that the `information order' of this Kripke model agrees with physical intuition in the following sense. If $C'\subset C$ in $\mathcal{C}(A)$, then $C'$ is lower in the `information order' of the Kripke model than $C$, and from the physics point of view one can describe fewer physical observations from $C'$ (compared to $C$). Also note that for the presheaf topos $[\mathcal{C}(A)^{\op},\Set]$, the Kripke-Joyal semantics can also be seen as a Kripke model, but the information order for this model is opposite to the physical intuition, making the Kripke model perspective unattractive in this case.

In the covariant approach, propositions about the system are represented by open subsets of the spectrum, or equivalently by the points of its associated frame. As a locale is a complete Heyting algebra, the spectrum therefore automatically has a Heyting algebra structure internally, but the set of opens of the spectrum also give a complete Heyting algebra in $\Set$. Thus, like its contravariant counterpart, the logic of the covariant quantum approach is in general intuitionistic (and hence distributive, unlike conventional quantum logic based on orthomodular lattices). The states of the covariant approach are (internal) probability valuations on $\underline{\Sigma}_{\underline{A}}$, which are equivalent to quasi-states on $\underline{A}$, \cite{hls}. States combine in a natural way with propositions, yielding truth values (formulae such as $\phi$ above) as points of $\underline{\Omega}$. Here, $\underline{\Omega}$ is the subobject classifier of $[\mathcal{C}(A),\mathbf{Set}]$. If $C\in\mathcal{C}(A)$, then an element of $\underline{\Omega}(C)$ is a cosieve on $C$.\footnote{This is a collection $t_{C}\subseteq\mathcal{C}(A)$ such that if $C'\in t_{C}$, then $C\subseteq C'$, and if $C''\supseteq C'\in t_{C}$, then $C''\in t_{C}$.} A truth value is equivalent to a cosieve $t_{C}$ on $C$, for every context $C$, such that, if $C\subseteq C'$, then $t_{C}\cap(\uparrow C')=t_{C'}$. Here $\uparrow C'$ stands for the set of all contexts $C''\in\mathcal{C}(A)$ such that $C'\subseteq C''$. Using cosieves fits well with the interpretation given above. If we can verify a claim using the measurments corresponding to $C$, and if $C'\supset C$ represents a context using more refined measurements, then clearly we can verify that same claim using the measurements corresponding to $C'$.

\su{Differences Between the Two Approaches}

Clearly, there are differences between the two approaches. To name a few: 

\begin{itemize}
\item The contravariant approach uses von Neumann algebras, whereas the covariant approach uses C*-algebras. This difference has to do with daseinisation, which plays an important role in the contravariant approach but is far less significant in the covariant approach. Daseinisation makes heavy use of the additional structure that von Neumann algebras have to offer, notably the abundance of projections.
\item The covariant approach makes extensive use of the internal (Mitchell-B\'enabou) language and the corresponding Kripke-Joyal semantics of the topos $[\mathcal{C}(A),\mathbf{Set}]$. In the contravariant approach the language and corresponding semantics of $[\mathcal{V}(A)^{\op},\Set]$ do not play a role (thus far). This does not mean that internal constructions are unimportant for the contravariant approach. As an example, consider the assignment of truth values to propositions and states (see e.g. \cite{di} Section 6) This assignment is natural when the topos is seen as a generalized universe of sets. Another example is the value object $\underline{\mathbb{R}}^{\leftrightarrow}$, which is shown (see e.g. \cite{di} Subsection 8.6) to be  an internal commutative monoid. 
\item The contravariant approach uses coarse-graining, which does not appear in covariant quantum logic. This point is connected to the previous one regarding the use of internal language. The covariant approach relies on the internal language of the topos and uses the corresponding Kripke-Joyal semantics. The semantics of the contravariant approach, on the other hand is guided by the idea of coarse-graining.
\item The state spaces are constructed in a very different way. In the contravariant approach the state object is the spectral presheaf, which is obtained by assembling all the Gelfand spectra of the commutative subalgebras. In the covariant approach the state space is the external description of the locale obtained by taking the constructive Gelfand spectrum of the internal commutative C*-algebra obtained from all the commutative subalgebras. Are these objects, which live in different topoi, related in any way?
\item States are defined in a completely different way. However, in \cite{doe2} and \cite{doe3} D\"oring describes contravariant states as measures on the closed open subobjects of the spectral presheaf. This description looks like it is closely related to the covariant notion of state. 
\end{itemize}

We will study these differences and some others in the next sections. 

Section 2 discusses the two different state spaces. To summarize, let $\Sigma\equiv\Sigma(A)$ be the disjoint union of all the Gelfand spectra $\Sigma_{C}$, where $C\subseteq A$ is a context.\footnote{We ignore the difference in context categories between the approaches (i.e. between C*-algebras and von Neumann algebras) for the moment.} The set $\Sigma$ may be equipped with two different topologies. The first topology $\mathcal{O}\Sigma^{\ast}$ is connected to the contravariant approach. We show that there is an injective morphism (of complete Heyting algebras) from $\mathcal{O}_{cl}\uS$ into $\mathcal{O}\Sigma^{\ast}$. The second topology $\mathcal{O}\Sigma_{\ast}$ is connected to the covariant approach. We show that its associated locale is the external description of the constructive Gelfand spectrum $\underline{\Sigma}_{\underline{A}}$. This result is of interest independently of the comparison between the two approaches.

Section 3 investigates daseinisation and elementary propositions, i.e. propositions of the form $a\in\Delta$. In the covariant approach there is a daseinisation arrow and there are elementary propositions as well, but in the development so far these have not played a fundamental role. Nonetheless, by restricting from C*-algebras to von Neumann algebras we can use the daseinisation techniques of the contravariant approach in the covariant approach. This leads to an explicit description of the daseinisation arrow as well as of elementary propositions in the covariant approach, and at the end of the day, the two notions turn out to be closely related.

Section 4 deals with states and the assignment of truth values in both approaches. Using the covariant daseinisation developed in Section 3, we introduce a counterpart of the contravariant pseudo-states into the covariant approach and compare these with the original notion of states used in the covariant approach. Subsequently we compare the covariant states with the definition of contravariant states as measures by D\"oring.

\se{State Spaces}

As we have seen, the quantum state spaces in the two approaches to topos quantum logic are constructed in different ways. In the contravariant approach the state space, or rather state object, is the spectral presheaf $\underline{\Sigma}$. Recall that this is the presheaf that assigns to every context $C$ (which is an Abelian von Neumann subalgebra of the von Neumann algebra $A$ associated to the system under investigation) its Gelfand spectrum. Let $\mathcal{V}(A)$ be the poset of contexts, with partial order given by inclusion, viewed as a category. In the contravariant approach we use the topos $[\mathcal{V}(A)\op,\mathbf{Set}]$ of contravariant functors from the context category $\mathcal{V}(A)$ to the category $\mathbf{Set}$. 

In the covariant approach the observable algebra is a unital C*-algebra $A$. A classical context $C$ is a unital commutative C*-subalgebra of $A$. The context category $\mathcal{C}(A)$ is the poset of classical contexts partially ordered by inclusion, viewed as a category. The algebra $A$ defines a functor $\underline{A}:\mathcal{C}(A)\to\mathbf{Set}$, which is a commutative C*-algebra in the internal language of the topos $[\mathcal{C}(A),\mathbf{Set}]$ of functors $\mathcal{C}(A)\to\mathbf{Set}$. The corresponding quantum state space $\underline{\Sigma}_{\underline{A}}$ is a compact regular locale, internal to $[\mathcal{C}(A),\mathbf{Set}]$, which is obtained by applying a constructive version of Gelfand duality to $\underline{A}$. Trivially, instead of looking at the topos of covariant functors $\mathcal{C}(A)\to\mathbf{Set}$, we can equivalently look at the topos of presheaves (that is, contravariant functors) $\mathcal{C}(A)\op\to\mathbf{Set}$.

In this section we will see that even though the state objects of the two different approaches are constructed in different ways and live in different topoi, there are strong connections between the two. Before we can get started, we need to deal with the difference in context categories. The contravariant approach uses abelian von Neumann subalgebras in defining the context category $\mathcal{V}(A)$, whereas the covariant approach uses unital commutative C*-subalgebras in defining the context category $\mathcal{C}(A)$. This difference will be important in Section 3 when we discuss daseinisation. However, in the current section it plays no role at all. We can use either the category $\mathcal{C}(A)$ or the category $\mathcal{V}(A)$ in both the covariant and the contravariant approaches. Whenever we compare the state spaces of the approaches, we can safely ignore the differences that arise from the differences in context categories.

In Subsection 2.1 we focus on the contravariant approach. We will define a topological space $\Sigma^{\ast}$ and a continuous map $\pi:\Sigma^{\ast}\to\mathcal{V}(A)$. The associated frame $\mathcal{O}\Sigma^{\ast}$ is closely connected to the contravariant approach, as follows: Theorem 2.2 shows that there is an injective morphism of complete Heyting algebras $\mathcal{O}_{cl}\underline{\Sigma}\to\mathcal{O}\Sigma^{\ast}$, where $\mathcal{O}_{cl}\uS$ is the complete Heyting algebra of closed open subobjects of the spectral presheaf. The propositions in the contravariant approach are elements of $\mathcal{O}_{cl}\uS$. The map $\pi:\Sigma^{\ast}\to\mathcal{V}(A)$ defines a locale $\uS^{\ast}$, internal to $[\mathcal{V}(A)\op,\mathbf{Set}]$. This locale is shown to be compact (Corollary 2.7), but in general it is not regular (Corollary 2.10). Proposition 2.3 demonstrates that $\mathcal{O}_{cl}\uS$ itself also defines a locale internal to $[\mathcal{V}(A)\op,\mathbf{Set}]$ in a natural way.

Subsection 2.2 deals with the covariant approach. In a similar vein, we define a topological space  $\Sigma_{\ast}$ and a continuous map $\pi:\Sigma_{\ast}\to\mathcal{C}(A)$, which turn out to be closely related to the space $\Sigma^{\ast}$ and map $\pi:\Sigma^{\ast}\to\mathcal{V}(A)$ of Subsection 2.1. Corollary 2.18 shows that the map $\pi:\Sigma_{\ast}\to\mathcal{C}(A)$ is just the external description of the spectrum of $\uA$ in $[\mathcal{C}(A),\mathbf{Set}]$. In Subsection 2.4 there is a brief discussion of the Gelfand transform of the Bohrification $\uA$.

\su{Contravariant Approach}

We start by investigating if the state object in the contravariant approach, i.e. the spectral presheaf $\underline{\Sigma}$, is, in a natural way, a locale (and consequently a Heyting algebra) internal to the topos of contravariant functors $[\mathcal{V}(A)^{\op},\mathbf{Set}]$. Subsequently we show how the frame $\mathcal{O}\uS^{\ast}$ of this locale relates to the propositions of the contravariant approach, the closed open subobjects of the spectral presheaf. It is also interesting to check if  $\underline{\Sigma}^{\ast}$ might be a compact completely regular locale. If so, we could recognize it internally as the spectrum of a commutative C*-algebra. By Corollary 2.10, this will turn out not to be the case.

\subsubsection{Spectral Presheaf as an Internal Locale}

Let $A$ be a von Neumann algebra and let $\mathcal{V}(A)$ be the poset category corresponding to the poset of all abelian von Neumann subalgebras\footnote{Such that the unit of the subalgebra is equal to the unit of $A$. Usually the trivial context $\mathbb{C}1$ is not considered a context in this approach.} of $A$, partially ordered by inclusion. The spectral presheaf $\underline{\Sigma}:\mathcal{V}(A)\op\to\mathbf{Set}$ is given by
\begin{equation}
\underline{\Sigma}(C)=\Sigma_{C},\ \ \ \rho_{CD}:=\underline{\Sigma}(i_{DC}):\Sigma_{C}\to\Sigma_{D},\ \lambda\mapsto\lambda |_{D},
\end{equation}
with $C,D\in\mathcal{V}(A)$ and $i_{DC}:D\to C$ is the inclusion of $D$ into $C$. Here $\Sigma_{C}$ denotes the Gelfand spectrum of $C\in\mathcal{V}(A)$.\footnote{The Gelfand spectrum $\Sigma_{C}$ is the set of characters on $C$, given the relative weak* topology.} Recall that a subobject $\underline{U}\to\underline{\Sigma}$ is called a closed open subobject if for every $C\in\mathcal{V}(A)$ the set $\underline{U}(C)\subseteq\Sigma_{C}$ is both open and closed in $\Sigma_{C}$. 

Equip the set $\mathcal{V}(A)$ with a topology by declaring all downwards closed sets to be open\footnote{This is the 'anti-Alexandrov' topology on $\mathcal{V}(A)$. We could have taken the Alexandrov topology, which consists of all upwards closed sets, and which is used in the covariant approach. However, this topology does not get us any closer to $\mathcal{O}_{cl}\underline{\Sigma}$. We will use the Alexandrov topology at a later stage (see below Theorem 2.13).} (these are all sets $U\subseteq\mathcal{V}(A)$ such that if $C\in U$ and $D\subseteq C$, then $D\in U$). This topology has the principal downsets $\downarrow C=\{D\in\mathcal{V}(A)\mid D\subseteq C\}$ as a basis. Using the correspondence
\begin{equation}
\overline{\Sigma}(\downarrow C) = \underline{\Sigma}(C),
\end{equation}
it is easy to check that a presheaf $\underline{\Sigma}$ on $\mathcal{V}(A)$ (seen as a poset category) is equivalent to a sheaf $\overline{\Sigma}$ on $\mathcal{V}(A)$ (seen as a space), equipped with the downset topology. Recall that a sheaf on a topological space $X$ is equivalent to an \'etale space over $X$ \cite[Chapter II]{mm}. An \'etale space is a topological space $Y $over a topological space X, i.e. a continuous map $f:Y\to X$ that is a local homeomorphism in the following sense: for any $y\in Y$ there is a neighborhood $V$ of $y$ in $Y$ such that $f(V)$ is open in $X$ and $f|_{V}:V\to f(V)$ is a homeomorphism. Given the spectral presheaf $\underline{\Sigma}$, seen as a sheaf $\overline{\Sigma}$, we can construct the corresponding \'etale space $\Sigma$. We thus obtain the local homeomorphism
\begin{equation}
\pi:\Sigma\to\mathcal{V}(A),\ \  (C,\lambda)\mapsto C,
\end{equation}
\begin{equation}
\Sigma=\{(C,\lambda)\mid C\in\mathcal{V}(A),\lambda\in\Sigma_{C}\}=\coprod_{C\in\mathcal{V}(A)}\Sigma_{C},
\end{equation}
 where $\Sigma$ has the topology generated by the basis
 \begin{equation}
 \mathcal{W}=\{W_{C,\lambda}\mid C\in\mathcal{V}(A),\lambda\in\Sigma_{C}\},\ \ W_{C,\lambda}=\{(D,\lambda|_{D})\mid D\subseteq C\}.
 \end{equation}
It is shown in \cite[Section C1.6]{jh1} that for a locale $X$ in $\mathbf{Set}$ the slice category $\mathbf{Loc}/X$ is equivalent the the category $\mathbf{Loc}(\Sh(X))$ of locales internal to $\Sh(X)$. Here $\mathbf{Loc}/X$ denotes the category that has locale maps $f:Y\to X$, for arbitrary locales $Y$ in $\mathbf{Set}$, as objects. Let $f$ and $g$ be such maps. An arrow $h:f\to g$ is given by a commuting triangle of locale maps. 
$$\xymatrix{
Y \ar[dr]_{f} \ar[rr]^{h} & & Z \ar[dl]^{g} \\
& X &}$$
Given a locale map $f:Y\to X$, a locale $\mathcal{I}(f)$ internal to $\Sh(X)$ is constructed as follows. First note that a locale map $f:Y\to X$ induces a geometric morphism $f:\Sh(Y)\to\Sh(X)$. Let $\Omega_{Y}$ be the subobject classifier of $\Sh(Y)$. This object is an internal locale of $\Sh(Y)$. The direct image $f_{\ast}$ of the geometric morphism $f$ is cartesian and preserves internal complete posets. Hence $\mathcal{I}(f)=f_{\ast}(\Omega_{Y})$ is an internal locale of $\Sh(X)$. 

Applying this to the case at hand, a locale internal to $\Sh(\mathcal{V}(A))$ is equivalent to a locale map $L\to\mathcal{V}(A)$, where $L$ is a locale in $\mathbf{Set}$. We can now recognize the continuous map $\pi:\Sigma\to\mathcal{V}(A)$ in (4) as a locale internal to $\Sh(\mathcal{V}(A))$. The spectral presheaf $\underline{\Sigma}$ thus yields a locale in $[\mathcal{V}(A)\op,\mathbf{Set}]$, with associated frame
 \begin{equation}
 \mathcal{O}\underline{\Sigma}(C)=\mathcal{O}\overline{\Sigma}(\downarrow C)=\mathcal{O}\Sigma|_{\downarrow C}=\mathcal{O}\Sigma\cap B_{C,\Sigma_{C}},
 \end{equation}
where $B_{C,\Sigma_{C}}=\{(D,\lambda)\mid D\subseteq C,\lambda\in\Sigma_{D}\}$. 

What are the points of this locale? A point of the internal locale $\underline{\Sigma}$ is equivalent to a continuous cross-section of $\pi$ (this follows from the identification of $\mathbf{Loc}(\Sh(X))$ and $\mathbf{Loc}/X$). This is a locale map
 \begin{equation}
 \phi:\mathcal{V}(A)\to\Sigma,\ \ \phi(C)=(C,\tilde{\phi}(C)),
 \end{equation}
 where, of course, $\tilde{\phi}(C)\in\Sigma_{C}$. As this map is continuous, we obtain
 \begin{equation}
 \phi^{-1}:\mathcal{O}\Sigma\to\mathcal{O}\mathcal{V}(A),\ \ W_{C,\lambda}\mapsto\{D\in\mathcal{V}(A)\mid\tilde{\phi}(D)=\lambda(D)\}.
 \end{equation}
 As $\phi^{-1}(W_{C,\lambda})$ is open, it is downward closed. This implies that if $\tilde{\phi}(C)=\lambda$ and $D\subseteq C$, then $\tilde{\phi}(D)=\lambda|_{D}$. This shows that a point of the locale $\underline{\Sigma}$ corresponds to a global point of the spectral presheaf $\underline{\Sigma}$. So whenever the Kochen-Specker theorem tells us that the spectral presheaf has no global sections (which depends on $A$), this is equivalent to $\underline{\Sigma}$ having no internal points as a locale. This is a localic reformulation of a similar result by Butterfield and Isham \cite{butish1, butish2}.
 
 \subsubsection{The Locales $\Sigma^{\ast}$ and $\mathcal{O}_{cl}\uS$}
 
In the previous subsection we discussed a procedure that yields a locale in $\Sh(\mathcal{V}(A))$ from any contravariant functor $\mathcal{V}(A)\op\to\mathbf{Set}$. The reader may have noticed that the internal locale associated to the spectral presheaf is just the exponential $\mathcal{P}\underline{\Sigma}=\underline{\Omega}^{\uS}$. On the other hand, the propositions in the contravariant approach are represented by closed open subobjects of the spectral presheaf. The propositions therefore correspond to certain points of the locale $\mathcal{P}\underline{\Sigma}$. In principle, we would like to change the topology on $\Sigma$ to a coarser topology in such a way that the opens of $\Sigma$ correspond to the closed open subobjects in this new topology, instead of having an open for every subobject of $\underline{\Sigma}$. We will do something slightly different however\footnote{As we shall see in Subsection 2.2, this makes it easier to draw a comparison between the locale obtained and the spectrum of the covariant approach.}, by taking a topology where the opens of $\Sigma$ correspond to open subobjects of the spectral presheaf. As a basis for this topology, take
 \begin{equation}
 \mathcal{B}=\{B_{C,u}\mid C\in\mathcal{V}(A),u\in\mathcal{O}\Sigma_{C}\},\ B_{C,u}=\{(D,\lambda|_{D})\mid D\leq C, \lambda\in u\}.
 \end{equation}
 We need to check that this defines a basis for a topology on $\Sigma$. If $(C,\lambda)\in\Sigma$, then $(C,\lambda)\in B_{C,\Sigma_{C}}$. Now suppose that 
 \begin{equation}
 (C,\lambda)\in B_{C_{1},u}\cap B_{C_{2},v},\ \ C\subseteq C_{1},C_{2}.
 \end{equation}
 It is demonstrated in \cite{di2} (and \cite[Appendix 1]{di}) that the restriction maps of the spectral presheaf $\rho_{DC}$ (with $D\subseteq C$) are open. Hence $\rho_{CC_{1}}(u)$ and $\rho_{CC_{2}}(v)$ are open neighborhoods of $\lambda$ in $\Sigma_{C}$. Take $w=\rho_{CC_{1}}(u)\cap\rho_{CC_{2}}(v)$, then $(C,\lambda)\in B_{C,w}$ and $B_{C,w}\subseteq B_{C_{1},u},B_{C_{2},v}$. This demonstrates that $\mathcal{B}$ is indeed a basis for a topology. 
 
\begin{dork}
Let $\mathcal{O}\Sigma^{\ast}$ be the topology generated by the basis $\mathcal{B}$. For any $C\in\mathcal{V}(A)$ and $U\subseteq\Sigma$, define the set $U_{C}:=U\cap\Sigma_{C}$. Then $U\in\mathcal{O}\Sigma^{\ast}$ iff:
\begin{enumerate}
\item $\forall C\in\mathcal{V}(A),\ U_{C}\in\mathcal{O}\Sigma_{C}$.
\item If $\lambda\in U_{C}$ and $D\subseteq C$ then $\lambda |_{D}\in U_{D}$.
\end{enumerate}
We use the shorthand notation $\Sigma^{\ast}$ for the topological space $(\Sigma,\mathcal{O}\Sigma^{\ast})$.
\end{dork} 
 
Consider the projection map from (5) once again, but this time with $\Sigma$ equipped with the topology of Definition 2.1. We write this as $\pi:\Sigma^{\ast}\to\mathcal{V}(A)$. The projection map $\pi$ is no longer a local homeomorphism, but it is easily checked to be continuous\footnote{Thus it defines a frame map $\pi^{-1}:\mathcal{O}\mathcal{V}(A)\to\mathcal{O}\Sigma^{\ast}$, where $\mathcal{O}\mathcal{V}(A)$ is the downset topology on the poset $\mathcal{V}(A)$. Recall that a frame morphism is a function that preserves finite meets and all joins.}. This follows from $\pi^{-1}(\downarrow C)=B_{C,\Sigma_{C}}$. As before, $\pi$ defines a locale in $[\mathcal{V}(A)\op,\mathbf{Set}]$. Its associated frame is given by 
 \begin{equation}
 \mathcal{O}\underline{\Sigma}^{\ast}(C)=\mathcal{O}\overline{\Sigma}_{I}(\downarrow C)=\mathcal{O}B_{C,\Sigma_{C}},
 \end{equation}
where $\mathcal{O}B_{C,\Sigma_{C}}$ denotes the relative topology on $B_{C,\Sigma_{C}}\subseteq\Sigma^{\ast}$. Just like before, depending on $A$,  the Kochen-Specker theorem may prevent the locale $\underline{\Sigma}^{\ast}$ from having points. This can be shown in the same way as earlier (cf. the end of 2.1.1). A point of $\underline{\Sigma}^{\ast}$ gives a continuous cross-section
 \begin{equation}
 \phi:\mathcal{V}(A)\to\Sigma^{\ast},\ \ \phi(C)=(C,\tilde{\phi}(C)).
 \end{equation}
 Continuity of the cross-section entails that given any open neighborhood $U$ of $\tilde{\phi}(C)$ in $\Sigma_{C}$, and any $D\in\ \downarrow C$, there exists a $\lambda\in U$, such that $\lambda|_{D}=\tilde{\phi}(D)$. Suppose that $\tilde{\phi}(D)\neq\tilde{\phi}(C)|_{D}$. Then $\rho^{-1}_{DC}(\{\tilde{\phi}(D)\}^{c})$ is an open neighborhood of $\tilde{\phi}(C)$ in $\Sigma_{C}$. Yet it contains no element that restricted to $D$ yields $\tilde{\phi}(D)$. As this contradicts the continuity of $\phi$, we find that $\tilde{\phi}(D)=\tilde{\phi}(C)|_{D}$. So once again, also with this new topology, a point of the locale $\underline{\Sigma}^{\ast}$ amounts to a global point of the spectral presheaf. 
 
We will now compare the natural Heyting algebra structure of $\mathcal{O}\Sigma^{\ast}$ with that of $\mathcal{O}_{cl}\uS$, the set of closed open subobjects of the spectral presheaf. The Heyting algebra structure of $\mathcal{O}_{cl}\uS$, is defined as follows \cite{di2}. Let $\underline{R},\underline{S}\in\mathcal{O}_{cl}\uS$. Then
\begin{equation}
(\underline{R}\wedge\underline{S})(C)=\underline{R}(C)\cap\underline{S}(C),
\end{equation}
\begin{equation}
(\underline{R}\vee\underline{S})(C)=\underline{R}(C)\cup\underline{S}(C),
\end{equation}
The Heyting arrow, and consequently  the negation of $\mathcal{O}_{cl}\underline{\Sigma}$, is given by
\begin{equation}
(\underline{R}\Rightarrow\underline{S})(C)=\text{int}\left(\bigcap_{D\leq C}\{\lambda\in\Sigma_{C}\mid \text{if}\  \lambda|_{D}\in\underline{R}(D)\ \text{then}\ \lambda|_{D}\in\underline{S}(D)\}\right);
\end{equation}
\begin{equation}
(\neg\underline{S})(C)=\text{int}\left(\bigcap_{D\leq C}\{\lambda\in\Sigma_{C}\mid\lambda|_{D}\in\underline{S}(D)^{c}\}\right),
\end{equation}
where \textit{int} means taking the interior and $(-)^{c}$ means taking the set-theoretic complement. Define the map
\begin{equation}
I:\mathcal{O}_{cl}\uS\to\mathcal{O}\Sigma^{\ast}\ \ I(\underline{S})=\coprod_{C\in\mathcal{V}(A)}\underline{S}(C).
\end{equation}
It is easy to check that this map is well-defined and is an injective Heyting algebra morphism. When we consider an $n$-level system $A=M_{n}(\mathbb{C})$, then for any $C\in\mathcal{V}(A)$ we have that $U\subseteq\Sigma_{C}$ is open iff it is closed and open. In that case, $\mathcal{O}_{cl}\underline{\Sigma}\cong\mathcal{O}\Sigma^{\ast}$ as Heyting algebras. 

The Heyting algebra $\mathcal{O}_{cl}\uS$ is in fact a complete Heyting algebra and $I$ preserves arbitrary joins (because it is an open map), making $I$ into a morphism of complete Heyting algebras.

\begin{tut}
The projection map $\pi:\Sigma^{\ast}\to\mathcal{V}(A)$ is continuous and defines a locale $\underline{\Sigma}^{\ast}$ in $[\mathcal{V}(A)\op,\mathbf{Set}]$. There exists an injective morphism of complete Heyting algebras $I: \mathcal{O}_{cl}\underline{\Sigma}\to\mathcal{O}\Sigma^{\ast}$.
\end{tut}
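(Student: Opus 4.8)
The plan is to note that the first assertion has essentially been assembled in the discussion preceding the statement, and to concentrate the work on the properties of the comparison map $I$. For the first assertion, continuity of $\pi:\Sigma^{\ast}\to\mathcal{V}(A)$ is witnessed by the identity $\pi^{-1}(\downarrow C)=B_{C,\Sigma_{C}}$, so preimages of the basic opens $\downarrow C$ of the downset topology $\mathcal{O}\mathcal{V}(A)$ already lie in the basis $\mathcal{B}$; then the equivalence $\mathbf{Loc}/X\simeq\mathbf{Loc}(\Sh(X))$ of \cite[C1.6]{jh1}, applied with $X=\mathcal{V}(A)$ carrying the downset topology (so that $\Sh(X)\simeq[\mathcal{V}(A)\op,\mathbf{Set}]$), turns the continuous map $\pi$ into the internal locale $\underline{\Sigma}^{\ast}$ with associated frame $\mathcal{O}\underline{\Sigma}^{\ast}(C)=\mathcal{O}B_{C,\Sigma_{C}}$ as displayed above. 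I would simply cite these facts; nothing new is required.

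The bulk of the proof concerns the map $I(\underline{S})=\coprod_{C}\underline{S}(C)$. First I would check that $I$ indeed lands in $\mathcal{O}\Sigma^{\ast}$ by verifying the two conditions of Definition 2.1: condition (1), that $I(\underline{S})\cap\Sigma_{C}=\underline{S}(C)$ is open in $\Sigma_{C}$, holds because $\underline{S}(C)$ is closed and open by hypothesis; condition (2), that $\lambda\in\underline{S}(C)$ and $D\subseteq C$ imply $\lambda|_{D}\in\underline{S}(D)$, is precisely the statement that the closed open subobject $\underline{S}$ is a subpresheaf of $\underline{\Sigma}$, i.e. $\rho_{DC}(\underline{S}(C))\subseteq\underline{S}(D)$. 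Injectivity is then immediate, since $I(\underline{S})\cap\Sigma_{C}=\underline{S}(C)$ recovers $\underline{S}$. Preservation of $\wedge$, $\vee$ and of the least and greatest elements is componentwise and routine: in the topology $\mathcal{O}\Sigma^{\ast}$ finite meet is intersection and join is union, so $I(\underline{R}\wedge\underline{S})=\coprod_{C}\big(\underline{R}(C)\cap\underline{S}(C)\big)=I(\underline{R})\cap I(\underline{S})$ and likewise for $\vee$, while $I(\underline{\Sigma})=\Sigma^{\ast}$ and $I$ sends the least element to $\emptyset$. Completeness of $\mathcal{O}_{cl}\underline{\Sigma}$ and the fact that $I$ respects arbitrary joins have been recorded above (the latter since $I$ is an open map), so once the Heyting arrow is handled these combine to make $I$ a morphism of complete Heyting algebras.

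The step I expect to be the real obstacle is the identity $I(\underline{R}\Rightarrow\underline{S})=I(\underline{R})\Rightarrow I(\underline{S})$, where the right-hand implication is that of the frame $\mathcal{O}\Sigma^{\ast}$, i.e. the largest open $W$ with $W\cap I(\underline{R})\subseteq I(\underline{S})$; the difficulty is that the two sides are defined by apparently unrelated recipes. For the inclusion $\subseteq$ I would take $D=C$ in the displayed formula for $(\underline{R}\Rightarrow\underline{S})(C)$, obtaining $(\underline{R}\Rightarrow\underline{S})(C)\subseteq\underline{R}(C)^{c}\cup\underline{S}(C)$ and hence $(\underline{R}\Rightarrow\underline{S})(C)\cap\underline{R}(C)\subseteq\underline{S}(C)$ for every $C$; summing over $C$ gives $I(\underline{R}\Rightarrow\underline{S})\cap I(\underline{R})\subseteq I(\underline{S})$, and since $I(\underline{R}\Rightarrow\underline{S})$ is open (well-definedness of $I$) it is therefore contained in $I(\underline{R})\Rightarrow I(\underline{S})$. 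For the reverse inclusion I would put $W:=I(\underline{R})\Rightarrow I(\underline{S})$ and use Definition 2.1 to write $W=\coprod_{C}W_{C}$ with each $W_{C}=W\cap\Sigma_{C}$ open in $\Sigma_{C}$ and downward closed, meaning $\lambda\in W_{C}$ and $D\subseteq C$ imply $\lambda|_{D}\in W_{D}$; from $W\cap I(\underline{R})\subseteq I(\underline{S})$ one reads off $W_{C}\cap\underline{R}(C)\subseteq\underline{S}(C)$ for all $C$. Then for $\lambda\in W_{C}$ and any $D\subseteq C$, downward closedness gives $\lambda|_{D}\in W_{D}$, so if moreover $\lambda|_{D}\in\underline{R}(D)$ then $\lambda|_{D}\in W_{D}\cap\underline{R}(D)\subseteq\underline{S}(D)$; hence $W_{C}$ is contained in $\bigcap_{D\leq C}\{\lambda\in\Sigma_{C}\mid\text{if }\lambda|_{D}\in\underline{R}(D)\text{ then }\lambda|_{D}\in\underline{S}(D)\}$, and being open it is contained in the interior of that set, i.e. $W_{C}\subseteq(\underline{R}\Rightarrow\underline{S})(C)$. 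Summing over $C$ yields $W\subseteq I(\underline{R}\Rightarrow\underline{S})$, completing the argument. The crux is thus the dictionary supplied by Definition 2.1, between ``open subset of $\Sigma^{\ast}$'' and ``downward-closed family of opens of the $\Sigma_{C}$'', which lets the generic topological Heyting arrow on $\mathcal{O}\Sigma^{\ast}$ be matched, context by context, with the explicit formula for the Heyting arrow of $\mathcal{O}_{cl}\underline{\Sigma}$.
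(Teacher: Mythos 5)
Your proposal is correct and follows the same route as the paper, which merely asserts that the map $I$ is ``easy to check'' to be a well-defined injective Heyting algebra morphism; your verifications of well-definedness, injectivity and the finite lattice operations, and in particular your matching of the generic frame implication on $\mathcal{O}\Sigma^{\ast}$ with the explicit $\bigcap_{D\leq C}$ formula via the downward-closedness clause of Definition 2.1, supply exactly the details the paper leaves to the reader. The only ingredient you inherit rather than prove is the preservation of arbitrary joins, which the paper itself only asserts parenthetically in the paragraph preceding the theorem.
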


A morphism of complete Heyting algebras is in particular a morphism of frames\footnote{But a morphism of frames need not preserve the implication arrow and therefore need not be a morphism of Heyting algebras}. Let $\Sigma_{cl}$ be the locale (in $\mathbf{Set}$) corresponding to the frame $\mathcal{O}_{cl}\uS$. The injective frame map $I$ defines a surjection of locales $\Sigma^{\ast}\twoheadrightarrow\Sigma_{cl}$. The projection $\pi$ factors through this locale map, giving the following commutative triangles in $\mathbf{Loc}$ and $\mathbf{Frm}$ respectively:
$$\xymatrix{
\Sigma^{\ast} \ar[d]_{\pi} \ar@{->>}[r] & \Sigma_{cl} \ar@{.>}[dl]^{\pi_{cl}} & & \mathcal{O}\Sigma^{\ast} & \ \mathcal{O}_{cl}\uS \ar@{>->}[l]_{I} \\
\mathcal{V}(A) & & & \mathcal{O}\mathcal{V}(A) \ar[u]^{\pi^{-1}} \ar@{.>}[ru]_{\pi_{cl}^{-1}} &}$$
To prove that we have such commuting triangles, let $U\in\mathcal{O}\mathcal{V}(A)$ be any downwards closed set. Define $\underline{S}_{U}:\mathcal{V}(A)\op\to\mathbf{Set}$ by $C\mapsto\Sigma_{C}$ if $C\in U$ and $C\mapsto\emptyset$ if $C\notin U$. It is easy to check that $\underline{S}_{U}\in\mathcal{O}_{cl}\uS$ and that $\pi^{-1}(U)=I(\underline{S}_{U})$. We find the following proposition.

\begin{poe}
Let $\Sigma_{cl}$ be the locale associated to the frame $\mathcal{O}_{cl}\uS$ in $\mathbf{Set}$. Then the map $\pi_{cl}:\Sigma_{cl}\to\mathcal{V}(A)$, defined by $\pi^{-1}_{cl}(U)=\underline{S}_{U}$, is a locale map and thus defines a locale $\underline{\Sigma}_{cl}$ internal to $[\mathcal{V}(A)\op,\mathbf{Set}]$. The map $I:\mathcal{O}_{cl}\uS\to\mathcal{O}\Sigma^{\ast}$ defines an internal surjection of locales $\uS^{\ast}\twoheadrightarrow\uS_{cl}$.
\end{poe}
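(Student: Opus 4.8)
The plan is to prove the three assertions in turn: that $\pi_{cl}^{-1}$ is a frame homomorphism, so that $\pi_{cl}$ is a genuine locale map in $\mathbf{Set}$; that $\pi_{cl}$ therefore corresponds to an internal locale $\uS_{cl}$ under the equivalence $\mathbf{Loc}/\mathcal{V}(A)\simeq\mathbf{Loc}([\mathcal{V}(A)\op,\mathbf{Set}])$ established above; and that $I$ then induces an \emph{internal} surjection onto it.

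First I would check that $\pi_{cl}^{-1}\colon\mathcal{O}\mathcal{V}(A)\to\mathcal{O}_{cl}\uS$, $U\mapsto\underline{S}_{U}$, preserves the top element, binary meets, and arbitrary joins. The top $\mathcal{V}(A)$ is sent to the full presheaf $\uS$, the top of $\mathcal{O}_{cl}\uS$. Binary meets are immediate from the componentwise formula $(\underline{R}\wedge\underline{S})(C)=\underline{R}(C)\cap\underline{S}(C)$, since $\underline{S}_{U}(C)\cap\underline{S}_{V}(C)=\Sigma_{C}$ exactly when $C\in U\cap V$. For joins the crucial observation is that, for a family of downward closed sets $(U_{i})$, the pointwise union $C\mapsto\bigcup_{i}\underline{S}_{U_{i}}(C)$ only ever takes the values $\Sigma_{C}$ or $\emptyset$; hence it is already a closed open subpresheaf of $\uS$ (functoriality uses that each $U_{i}$ is downward closed), it manifestly equals $\underline{S}_{\bigcup_{i}U_{i}}$, and, being pointwise the union, it lies below any closed open subobject lying above all the $\underline{S}_{U_{i}}$; so it is their join in $\mathcal{O}_{cl}\uS$. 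Thus $\pi_{cl}^{-1}$ is a frame morphism (and injective, so $\pi_{cl}$ is itself a surjection of locales), and by \cite[Section C1.6]{jh1}, exactly as for $\uS^{\ast}$ above, it determines an internal locale $\uS_{cl}$ of $[\mathcal{V}(A)\op,\mathbf{Set}]$ whose frame at a context $C$ is the principal downset $\{\underline{T}\in\mathcal{O}_{cl}\uS\mid\underline{T}\leq\underline{S}_{\downarrow C}\}$, i.e.\ the closed open subobjects of $\uS$ that vanish outside $\downarrow C$.

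For the surjection, recall from the discussion immediately preceding the statement that $\pi^{-1}(U)=\coprod_{C\in U}\Sigma_{C}=I(\underline{S}_{U})$ for every downward closed $U$ (and that this is open in $\Sigma^{\ast}$, being $\bigcup_{C\in U}B_{C,\Sigma_{C}}$). Hence $\pi^{-1}=I\circ\pi_{cl}^{-1}$, which exhibits $I$ as the inverse-image part of a morphism $h\colon\Sigma^{\ast}\to\Sigma_{cl}$ in $\mathbf{Loc}/\mathcal{V}(A)$ and therefore, transported along the equivalence, of an internal locale map $\underline{h}\colon\uS^{\ast}\to\uS_{cl}$. To see that $\underline{h}$ is a surjection I would compute its internal inverse-image frame map explicitly: at stage $C$ it is the restriction of $I$ to the subobjects vanishing outside $\downarrow C$, namely $\underline{S}\mapsto\coprod_{D\subseteq C}\underline{S}(D)$, which is injective because $\underline{S}(D)$ is recovered as the fibre over $D$. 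Injectivity at every stage makes this a monomorphism of internal frames, which is precisely what it means for $\underline{h}$ to be an internal surjection of locales.

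I expect the main friction to lie in two bookkeeping points rather than anything deep. The first is that arbitrary joins in $\mathcal{O}_{cl}\uS$ are \emph{not} computed pointwise in general --- a union of closed open subobjects need not be closed open --- so the join computation above genuinely relies on the special shape of the subobjects $\underline{S}_{U}$. The second is to avoid appealing loosely to ``the equivalence preserves surjections'' and instead to exhibit the internal frame map of $\underline{h}$ concretely and observe that it is objectwise injective; doing this also pins down the internal frame of $\uS_{cl}$. The remaining ingredients --- that each $\underline{S}_{U}$ is genuinely a closed open subobject, and that $\pi$ factors through $h$ --- have already been verified in the text preceding the statement.
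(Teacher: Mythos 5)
Your proposal is correct and follows the paper's own route: the paper likewise defines $\underline{S}_{U}$, checks that it is a closed open subobject with $\pi^{-1}(U)=I(\underline{S}_{U})$, and lets the rest follow from the fact that $I$ is an injective frame map. The only substantive detail you add is the direct verification that $\pi_{cl}^{-1}$ preserves arbitrary joins (correctly flagging that joins in $\mathcal{O}_{cl}\uS$ are not pointwise in general but are for subobjects of the special shape $\underline{S}_{U}$), together with the explicit stage-wise description of the internal surjection; the paper leaves both implicit, since they can be read off from the factorization $\pi^{-1}=I\circ\pi_{cl}^{-1}$ and the fact that the injective frame map $I$ reflects finite meets and arbitrary joins.
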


Despite the fact that $\mathcal{O}_{cl}\uS$ is more closely related to $\pi_{cl}:\Sigma_{cl}\to\mathcal{V}(A)$ than to $\pi:\Sigma^{\ast}\to\mathcal{V}(A)$, in what follows we will only use the map $\pi$ and the space $\Sigma^{\ast}$. The reason is that $\Sigma^{\ast}$ is closely related to the state space of the Bohrification approach (See Corollary 2.18 below).  

Next, we show that internal locale $\underline{\Sigma}^{\ast}$ is compact. 
 \begin{dork}
 Let $L$ be a locale. Then $L$ is \textbf{compact} if for any $S\subseteq L$ such that $1_{L}=\bigvee S$, there is a finite $F\subseteq S$ such that $1_{L}=\bigvee F$. Here $1_{L}$ denotes the top element of $L$. Equivalently, one can say that $L$ is compact if for every ideal $I$ of $L$ such that $\bigvee I=1_{L}$, we have $1_{L}\in I$. 
 \end{dork}
The following definition and lemma help to show that $\underline{\Sigma}^{\ast}$ is compact. A proof of Lemma 2.6 can be found in \cite{jh2}. 
 \begin{dork}
 A continuous map of spaces $f:Y\to X$ is called \textbf{perfect} if the following two conditions are satisfied:
 \begin{enumerate}
 \item $f$ has compact fibres: if $x\in X$ then $f^{-1}(x)$ is compact in $Y$.
 \item $f$ is closed: if $C$ is closed in $Y$, then $f(C)$ is closed in $X$.
 \end{enumerate}
 \end{dork}
\begin{lem}{(\cite{jh2}, Proposition 1.1)}
Let $f:Y\to X$ be continuous. If $f$ is perfect, then the internal locale $\mathcal{I}(f)=f_{\ast}(\Omega_{\Sh(Y)})$ in $\Sh(X)$ is compact.
\end{lem}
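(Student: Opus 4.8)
The plan is to verify the defining condition of Definition 2.4 for the internal locale $\mathcal{I}(f)$ directly, by unwinding it in the internal language of $\Sh(X)$ and using the external description of the frame. Recall that for continuous $f\colon Y\to X$ the frame $\mathcal{O}\mathcal{I}(f)=f_{\ast}(\Omega_{\Sh(Y)})$, viewed as a sheaf on $X$, is the assignment $U\mapsto\mathcal{O}(f^{-1}U)$ (opens of the subspace $f^{-1}U\subseteq Y$), with restriction along $U'\subseteq U$ given by $W\mapsto W\cap f^{-1}U'$; this is a sheaf because $f^{-1}(\bigcup_{i}U_{i})=\bigcup_{i}f^{-1}U_{i}$. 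Via Kripke--Joyal semantics one checks that an internal ideal $\underline{I}$ of this frame is the same thing as a subsheaf $\underline{I}\hookrightarrow\mathcal{O}\mathcal{I}(f)$ with each $\underline{I}(U)\subseteq\mathcal{O}(f^{-1}U)$ a down-set closed under finite unions (so in particular $\emptyset\in\underline{I}(U)$); the hypothesis $\bigvee\underline{I}=1$ unwinds, at the top stage, to $\bigcup\{a\mid U'\subseteq X\ \text{open},\ a\in\underline{I}(U')\}=Y$, and the conclusion $1\in\underline{I}$ to $Y\in\underline{I}(X)$. It suffices to establish this implication at the top stage, since at a general stage $U$ the same argument applies verbatim to the map $f^{-1}U\to U$, which is again perfect (compact fibres and closedness pass to open subspaces of the base).

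So let $\underline{I}$ be such an ideal with $\bigcup\{a\mid U'\subseteq X,\ a\in\underline{I}(U')\}=Y$, and fix a point $x\in X$. For each $y\in f^{-1}(x)$ the covering hypothesis supplies an open $U_{y}\ni x$ and $a_{y}\in\underline{I}(U_{y})$ with $y\in a_{y}$. The sets $a_{y}\cap f^{-1}(x)$ form an open cover of the \emph{compact} space $f^{-1}(x)$, so finitely many of them, say those for $y_{1},\dots,y_{n}$, already cover it. Set $W_{x}=\bigcap_{j=1}^{n}U_{y_{j}}$ and $b_{x}=\bigl(\bigcup_{j=1}^{n}a_{y_{j}}\bigr)\cap f^{-1}W_{x}$. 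Each $a_{y_{j}}$ restricts from $U_{y_{j}}$ to $W_{x}$ into $\underline{I}(W_{x})$, which is closed under finite unions, so $b_{x}\in\underline{I}(W_{x})$; and $f^{-1}(x)\subseteq b_{x}$.

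The one genuinely non-formal step, which I expect to be the crux, is to pass from ``$b_{x}$ contains the fibre $f^{-1}(x)$'' to ``$b_{x}$ contains $f^{-1}V_{x}$ for some neighbourhood $V_{x}$ of $x$'' --- a pointfree tube lemma, and the only use of closedness of $f$. Choose $\tilde{b}_{x}\in\mathcal{O}(Y)$ with $\tilde{b}_{x}\cap f^{-1}W_{x}=b_{x}$. Then $Y\setminus\tilde{b}_{x}$ is closed, so $f(Y\setminus\tilde{b}_{x})$ is closed in $X$, and it avoids $x$ because $f^{-1}(x)\subseteq b_{x}\subseteq\tilde{b}_{x}$. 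Hence $V_{x}:=W_{x}\setminus f(Y\setminus\tilde{b}_{x})$ is an open neighbourhood of $x$; if $y\in f^{-1}V_{x}$ then $f(y)\notin f(Y\setminus\tilde{b}_{x})$ forces $y\in\tilde{b}_{x}$, and $y\in f^{-1}W_{x}$, so $y\in\tilde{b}_{x}\cap f^{-1}W_{x}=b_{x}$; thus $f^{-1}V_{x}\subseteq b_{x}$. Consequently the restriction of $b_{x}\in\underline{I}(W_{x})$ to $V_{x}$ is $b_{x}\cap f^{-1}V_{x}=f^{-1}V_{x}$, which is the top element of the frame at stage $V_{x}$ and therefore lies in $\underline{I}(V_{x})$.

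Finally the $V_{x}$, $x\in X$, cover $X$, and the global top element $Y\in\mathcal{O}(f^{-1}X)$ restricts on each $V_{x}$ to $f^{-1}V_{x}\in\underline{I}(V_{x})$. Since $\underline{I}$ is a \emph{sub}sheaf, this matching family amalgamates inside $\underline{I}$, and its amalgamation in $\mathcal{O}\mathcal{I}(f)(X)$ can only be $Y$; hence $Y\in\underline{I}(X)$, i.e. $1\in\underline{I}$, and $\mathcal{I}(f)$ is compact. (One could instead quote the general equivalence ``$\mathcal{I}(f)$ internally compact $\iff$ $f$ a proper locale map'' together with ``perfect $\Rightarrow$ proper'' for spaces; the argument above is just the unwinding of this for topological spaces, which is essentially the proof in \cite{jh2}.)
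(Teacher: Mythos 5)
The paper does not actually prove this lemma: it is quoted with attribution to Johnstone (\cite{jh2}, Proposition 1.1), and the text explicitly defers the proof to that reference. So there is no in-paper argument to compare against; what you have done is supply the missing proof, and it is correct. Your unwinding of the external data is right on all counts: $f_{\ast}(\Omega_{\Sh(Y)})$ is the sheaf $U\mapsto\mathcal{O}(f^{-1}U)$ with restriction $W\mapsto W\cap f^{-1}U'$, an internal ideal is a subsheaf that is stage-wise a down-set closed under finite unions, the internal join of $\underline{I}$ at stage $U$ is $\bigcup\{a\mid a\in\underline{I}(U'),\ U'\subseteq U\}$, and $1\in\underline{I}$ at stage $X$ means $Y\in\underline{I}(X)$. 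The core of the argument --- compactness of the fibre to get a finite union $b_{x}\in\underline{I}(W_{x})$ containing $f^{-1}(x)$, then closedness of $f$ as a tube lemma to shrink $W_{x}$ to $V_{x}$ with $f^{-1}V_{x}\subseteq b_{x}$, then gluing the local top elements using that $\underline{I}$ is a subsheaf --- is exactly the standard proof that perfect maps externalize compact internal locales, i.e.\ essentially Johnstone's own argument, as you note. The only step you assert without proof is that perfectness restricts to $f^{-1}U\to U$ for the localization to arbitrary stages; this is true and worth one line (for $C$ closed in $f^{-1}U$ one has $f(C)=f(\overline{C})\cap U$ with $\overline{C}$ the closure in $Y$, so closedness is inherited, and the fibres are unchanged). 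What your version buys over the paper's bare citation is a self-contained, concretely topological proof that also makes visible exactly where each half of perfectness (compact fibres versus closedness) is used.
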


In the previous lemma, $f_{\ast}$ denotes the direct image part of the geometric morphism associated to $f$, and $\Omega_{\Sh(Y)}$ denotes the subobject classifier of $\Sh(Y)$.

\begin{cor}
The locale $\uS^{\ast}$ in $[\mathcal{V}(A)\op,\mathbf{Set}]$ is compact.
\end{cor}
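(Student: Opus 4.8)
The plan is to deduce compactness of $\uS^{\ast}$ from Lemma 2.6 by showing that the continuous projection $\pi\colon\Sigma^{\ast}\to\mathcal{V}(A)$ is perfect in the sense of Definition 2.5. Recall from Subsection 2.1.2 that $\uS^{\ast}$ is precisely the internal locale $\mathcal{I}(\pi)=\pi_{\ast}(\Omega_{\Sh(\Sigma^{\ast})})$ of $\Sh(\mathcal{V}(A))=[\mathcal{V}(A)\op,\mathbf{Set}]$ associated to $\pi$ under the equivalence $\mathbf{Loc}/\mathcal{V}(A)\simeq\mathbf{Loc}(\Sh(\mathcal{V}(A)))$; indeed its frame was computed as $C\mapsto\mathcal{O}B_{C,\Sigma_{C}}$, which is exactly $\pi_{\ast}(\Omega_{\Sh(\Sigma^{\ast})})$ evaluated at the basic open $\downarrow C$. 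So by Lemma 2.6 it is enough to check the two conditions of Definition 2.5 for $\pi$.

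First I would treat the fibres. The fibre $\pi^{-1}(C)=\{(C,\lambda)\mid\lambda\in\Sigma_{C}\}$ is identified with $\Sigma_{C}$, and I claim the subspace topology it inherits from $\Sigma^{\ast}$ is the Gelfand topology on $\Sigma_{C}$. By condition 1 of Definition 2.1, $U\cap\Sigma_{C}$ is Gelfand-open for every $U\in\mathcal{O}\Sigma^{\ast}$, so the subspace topology is coarser than the Gelfand topology; conversely, for any Gelfand-open $u\subseteq\Sigma_{C}$ the basic open $B_{C,u}\in\mathcal{O}\Sigma^{\ast}$ satisfies $B_{C,u}\cap\Sigma_{C}=u$, so the two topologies agree. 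Since $C$ is a unital commutative C*-algebra, $\Sigma_{C}$ is compact in the Gelfand topology; hence $\pi^{-1}(C)$ is compact and condition 1 of Definition 2.5 holds.

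Next I would establish that $\pi$ is closed, using the (standard) fact that the restriction maps $\rho_{CC'}\colon\Sigma_{C'}\to\Sigma_{C}$ of the spectral presheaf are surjective for $C\subseteq C'$ --- equivalently, by Stone--Weierstrass, $\rho_{CC'}$ is the quotient map identifying characters of $C'$ that agree on $C$. Let $Z\subseteq\Sigma^{\ast}$ be closed and put $U:=\Sigma^{\ast}\setminus Z$. Since the closed subsets of $\mathcal{V}(A)$ in the downset topology are exactly the upward closed ones, it suffices to show $\pi(Z)$ is upward closed. Suppose $C\in\pi(Z)$, witnessed by $(C,\lambda)\in Z$, and let $C\subseteq C'$. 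If $C'\notin\pi(Z)$ then $U_{C'}=\Sigma_{C'}$; applying condition 2 of Definition 2.1 with $D=C$ yields $\mu|_{C}\in U_{C}$ for every $\mu\in\Sigma_{C'}$, and surjectivity of $\rho_{CC'}$ then gives $U_{C}=\Sigma_{C}$, contradicting $\lambda\notin U_{C}$. Hence $C'\in\pi(Z)$, so $\pi$ is closed and condition 2 of Definition 2.5 holds.

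Putting the two conditions together, $\pi$ is perfect, and Lemma 2.6 yields that $\uS^{\ast}=\mathcal{I}(\pi)$ is compact. I expect the closedness of $\pi$ to be the only step needing genuine care, and the crucial input there is surjectivity of the Gelfand restriction maps; the rest is a routine unwinding of Definition 2.1 together with compactness of the individual spectra $\Sigma_{C}$. (Alternatively one could verify compactness directly at the level of ideals of the internal frame $C\mapsto\mathcal{O}B_{C,\Sigma_{C}}$, but routing through the perfect-map criterion is shorter.)
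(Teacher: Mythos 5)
Your proposal is correct and takes essentially the same route as the paper: both deduce compactness from Lemma 2.6 by checking that $\pi:\Sigma^{\ast}\to\mathcal{V}(A)$ is perfect, with the fibres being the compact Gelfand spectra $\Sigma_{C}$ and closedness of $\pi$ reducing to the fact that the image of a closed set is upward closed, which both arguments obtain from surjectivity of the restriction maps $\Sigma_{C'}\to\Sigma_{C}$. The only cosmetic differences are that you verify the fibre topology explicitly (the paper calls it evident) and work with the open complement via Definition 2.1 where the paper writes the closed set as an intersection of complements of basic opens.
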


\begin{proof}
If we can show that $\pi:\Sigma^{\ast}\to\mathcal{V}(A)$ is a closed map that has compact fibres, then $\underline{\Sigma}^{\ast}$ is a compact locale. The fact that $\pi$ has compact fibres is evident. Let $X$ be a closed subset of $\Sigma^{\ast}$. Then $X=\bigcap_{i\in I}X_{C_{i},u_{i}}$ where $X_{C_{i},u_{i}}=B_{C_{i},u_{i}}^{c}$ for some $C_{i}\in\mathcal{V}(A)$ and $u_{i}\in\mathcal{O}\Sigma_{C_{i}}$. If $C\notin\pi(X)$ then for every $\lambda\in\Sigma_{C}$ we have $(C,\lambda)\in X^{c}$. Take any $D\subseteq C$ and $\lambda'\in\Sigma_{D}$. There is a $\lambda\in\Sigma_{C}$ such that $\lambda|_{D}=\lambda'$. As $(C,\lambda)\in X^{c}$ there is some $j\in I$ such that $(C,\lambda)\in B_{C_{j},u_{j}}$. By definition, $(D,\lambda')\in B_{C_{j},u_{j}}$. It follows that for every $D\subseteq C$ and any $\lambda'\in\Sigma_{D}$ we have $(D,\lambda')\notin X$. We find that $\pi(X)^{c}$ is downward closed, hence open. This proves that $\pi$ is closed.
\end{proof}

\begin{dork}{(\cite{jh4}, III.1, 1.1)}
Let $L$ be a locale and $x,y\in L$. Then $x$ is \textbf{well inside} $y$, denoted by $x\eqslantless y$, if there exists a $z\in L$ such that $z\wedge x=0_{L}$ and $z\vee y=1_{L}$. A locale $L$ is called \textbf{regular} if every $x\in L$ satisfies
\begin{equation}
x=\bigvee\{y\in L|y\eqslantless x\}.
\end{equation}
\end{dork}

Regularity of the internal locale $\underline{\Sigma}^{\ast}$ can conveniently be checked from its external description $\pi$, as shown by the following lemma.

\begin{lem}{(\cite{jh3} Lemma 1.2)}
Let $f:Y\to X$ be continuous. Then $f_{\ast}(\Omega_{\Sh(Y)})$ is regular iff for any open $U\in\mathcal{O}Y$ and $y\in U$ there is a neigborhood $N$ of $f(y)$ in $X$, and there exist opens $V,W\in\mathcal{O}Y$ such that $y\in V$, $V\cap W=\emptyset$ and $f^{-1}(N)\subseteq U\cup W$.
\end{lem}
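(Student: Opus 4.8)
The statement to prove is Lemma 2.9, characterizing regularity of $f_\ast(\Omega_{\Sh(Y)})$ in terms of the external map $f:Y\to X$. Since this is cited from \cite{jh3}, I would reconstruct the proof by unpacking the external description of the internal frame and translating the internal regularity condition (Definition 2.8) into the language of opens of $Y$.

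First I would recall that the internal locale $f_\ast(\Omega_{\Sh(Y)})$ corresponds, via the equivalence $\mathbf{Loc}/X\simeq\mathbf{Loc}(\Sh(X))$, to the locale map $f:Y\to X$ itself; hence the internal frame at a ``stage'' $N\in\mathcal{O}X$ is the set of opens of $Y$ that are contained in $f^{-1}(N)$, i.e. $\mathcal{O}(f^{-1}(N))$, with restriction maps given by intersection with smaller $f^{-1}(N')$. The top element at stage $N$ is $f^{-1}(N)$ itself. Next I would unwind the internal relation ``well inside'' $x\eqslantless y$ at a stage: internally, $V\eqslantless U$ holds at stage $N$ iff there is a cover of $N$ by opens $N_i\subseteq N$ and opens $W_i\in\mathcal{O}(f^{-1}(N_i))$ with $W_i\cap V\cap f^{-1}(N_i)=\emptyset$ and $W_i\cup U\supseteq f^{-1}(N_i)$ — this is the Kripke–Joyal reading of ``$\exists z.\ (z\wedge V=0)\wedge(z\vee U=1)$'' in $\Sh(X)$. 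Since the condition is local in $N$ and can be checked on a basis, one reduces to a single $N$.

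Then I would translate internal regularity, $U=\bigvee\{V\mid V\eqslantless U\}$, again via Kripke–Joyal: this join being everything means that for each point $y\in U$ (thinking of $U\subseteq f^{-1}(N)$) there is a neighbourhood $N$ of $f(y)$ and an open $V\ni y$ with $V\eqslantless U$ at stage $N$, which by the previous paragraph (shrinking $N$ so the cover is trivial) gives opens $V,W$ with $y\in V$, $V\cap W\cap f^{-1}(N)=\emptyset$, and $f^{-1}(N)\subseteq U\cup W$. Absorbing $f^{-1}(N)$ into $W$ (replace $W$ by $W\cap f^{-1}(N)$) yields exactly the stated condition $y\in V$, $V\cap W=\emptyset$, $f^{-1}(N)\subseteq U\cup W$. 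Conversely, given the stated condition for every $U$ and $y\in U$, one checks these $V$ are well inside $U$ internally and that they cover $U$, giving internal regularity. I would also note the small point that it suffices to verify the condition for $U$ ranging over a base of $\mathcal{O}Y$, and that one may always take $U\subseteq f^{-1}(N)$ after shrinking, since $f$ is continuous.

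**Main obstacle.** The delicate part is the careful bookkeeping of the Kripke–Joyal / sheaf-semantics translation of the two nested quantifiers in ``$x=\bigvee\{y\mid y\eqslantless x\}$'' — in particular getting the locality-over-$X$ exactly right (the existential ``$\exists z$'' produces a \emph{cover} of the stage $N$, not a single witness over $N$), and then justifying that one may refine to a single $N$ because both the hypothesis and the conclusion of the lemma are manifestly local in $X$ and stable under shrinking $N$. Once that reduction is in place, the remaining steps are a direct unraveling of definitions with no real computation.
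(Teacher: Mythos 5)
The paper does not actually prove this lemma: it is imported verbatim from Johnstone's \emph{Open Locales and Exponentiation} (Lemma 1.2), so there is no in-paper argument to compare against. Your reconstruction is the correct and standard one — identifying $f_{\ast}(\Omega_{\Sh(Y)})(N)$ with $\mathcal{O}(f^{-1}(N))$, reading $V\eqslantless U$ at stage $N$ via Kripke--Joyal as the existence of a cover of $N$ with local witnesses $W_i$, and then reducing to a single $N$ because both sides of the equivalence are local in $X$ and stable under shrinking — and the two directions go through exactly as you outline (in particular, absorbing $W$ into $f^{-1}(N)$ and using that $V\eqslantless U$ forces $V\leq U$ so that the join is automatically $\leq U$). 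As a plan it is sound; the only work left is the routine verification of the semantics of the nested quantifiers, which you have correctly identified as the one delicate point.
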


Up to this point it did not matter wether we excluded the trivial algebra $\mathbb{C}1$ from the set of contexts or not. For the discussion of regularity that follows, it does matter, so we need to be precise about it. Usually the trivial algebra is excluded in the contravariant approach\footnote{The author could not find a reason for doing so in the papers on the contravariant approach, but on the next page we will see that removing the trivial context has an impact on the Heyting negation of $\mathcal{O}_{cl}\uS$}. However, in discussions of composite systems in the contravariant approach (see e.g. Section 11 of \cite{di}) the trivial context is included. For the moment, we will include the trivial subalgebra as a context.

\begin{cor}
Let $A$ be a von Neumann algebra such that $\mathcal{V}(A)\neq\{\mathbb{C}\cdot1\}$. Then the locale $\uS^{\ast}$ in $[\mathcal{V}(A)\op,\mathbf{Set}]$ is not regular.
\end{cor}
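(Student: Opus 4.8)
The plan is to apply Lemma 2.9 directly to the external description $\pi : \Sigma^{\ast} \to \mathcal{V}(A)$ and exhibit an open set $U \in \mathcal{O}\Sigma^{\ast}$ and a point of it for which the regularity criterion fails. By Lemma 2.9, regularity of $\underline{\Sigma}^{\ast}$ would require: for every open $U$ and every $(C,\lambda) \in U$, a neighborhood $N$ of $C$ in $\mathcal{V}(A)$ together with opens $V,W \in \mathcal{O}\Sigma^{\ast}$ with $(C,\lambda) \in V$, $V \cap W = \emptyset$, and $\pi^{-1}(N) \subseteq U \cup W$. I will pick $C$ to be a nontrivial context, $\lambda \in \Sigma_C$ a character, and $U = B_{C,\Sigma_C} = \pi^{-1}(\downarrow C)$ itself (the whole fibre-bundle over $\downarrow C$), so that $(C,\lambda)\in U$ trivially. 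Since $\mathcal{V}(A) \neq \{\mathbb{C}\cdot 1\}$ and the trivial context is included, the point $0 := \mathbb{C}\cdot 1$ lies below every context, so any neighborhood $N$ of $C$ in the downset topology contains $0$, hence $\pi^{-1}(N)$ contains the entire fibre $\Sigma_0 = \{\lambda_0\}$ over the trivial context (a one-point space, since $\mathbb{C}$ has a unique character).

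The key observation is then a \emph{connectedness-type obstruction at the trivial context}. Any nonempty open $V \in \mathcal{O}\Sigma^{\ast}$ containing $(C,\lambda)$ must, by condition (2) of Definition 2.1 applied along $C \supseteq 0$, contain $(0,\lambda_0) = \lambda|_0$; likewise, if $W$ is any nonempty open of $\Sigma^{\ast}$ meeting the fibre over $0$ — which it must if $\pi^{-1}(N) \subseteq U \cup W$ is to cover the point $(0,\lambda_0)$ in the case that $U$ does not already contain it — then $W$ also contains $(0,\lambda_0)$. But here $U = \pi^{-1}(\downarrow C)$ already contains $(0,\lambda_0)$, so in fact the covering condition $\pi^{-1}(N) \subseteq U \cup W$ is automatically satisfied; the real tension is elsewhere. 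So I will instead take $U$ to be an open set that \emph{excludes} part of some fibre over a context incomparable to $C$ but sharing the trivial context below it. Concretely: choose two contexts $C_1, C_2$ with $C_1 \cap C_2 = \mathbb{C}\cdot 1$ (e.g. generated by two non-commuting projections when $A$ is large enough; if $\mathcal{V}(A)$ is a chain one argues separately, but a nontrivial von Neumann algebra whose context poset is not a single point always admits at least two distinct atoms or else is abelian with enough structure — I would handle the degenerate chain case by a direct check). Take a character $\lambda \in \Sigma_{C_1}$ and let $U$ be an open neighborhood of $(C_1,\lambda)$ small enough that its trace $U_{C_2}$ on $\Sigma_{C_2}$ is a proper nonempty subset (possible since $U_{C_2}$ is forced only to contain $\lambda_0$'s preimages, i.e. all of $\Sigma_{C_2}$ mapping into $U_0$ — so I must choose $U_0 = U \cap \Sigma_0$ to be empty, i.e. arrange $U$ to avoid the trivial fibre entirely).

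Here is the crux, and the main obstacle: the downset topology on $\mathcal{V}(A)$ makes $\{0\}$ the generic point — it lies in the closure of nothing and in every nonempty open's preimage is problematic. The clean route is: take $U = B_{C,u}$ for a context $C$ and a \emph{proper} clopen $u \subsetneq \Sigma_C$ with $u \neq \emptyset$, and the point $(C,\lambda)$ with $\lambda \in u$. Any neighborhood $N$ of $C$ contains $\downarrow C$, hence $\pi^{-1}(N) \supseteq \Sigma_C$. For the inclusion $\pi^{-1}(N) \subseteq U \cup W$ we need $W \supseteq \Sigma_C \setminus u = B_{C,\Sigma_C \setminus u}$ restricted to the fibre. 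Now $V$ contains $(C,\lambda)$ so $V_C \ni \lambda$ with $V_C$ open; $W_C \supseteq \Sigma_C \setminus u \ni$ points whose restrictions to any $D \subseteq C$ land in $W_D$. The disjointness $V \cap W = \emptyset$ forces $V_D \cap W_D = \emptyset$ for every $D \subseteq C$, in particular at $D = 0$: but $V_0$ and $W_0$ are both subsets of the one-point set $\Sigma_0 = \{\lambda_0\}$, and $V_0 \ni \lambda|_0 = \lambda_0$ (by condition (2) since $\lambda \in V_C$) while $W_0 \ni \mu|_0 = \lambda_0$ for any $\mu \in \Sigma_C \setminus u \subseteq W_C$ (nonempty because $u$ is proper). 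Hence $\lambda_0 \in V_0 \cap W_0$, contradicting $V \cap W = \emptyset$. Therefore no such $V, W, N$ exist, Lemma 2.9 fails, and $\underline{\Sigma}^{\ast}$ is not regular. The only thing to verify carefully is the existence of a context $C$ admitting a proper nonempty clopen subset of $\Sigma_C$ — this holds whenever $C \neq \mathbb{C}\cdot 1$, since then $\Sigma_C$ has at least two points and, $C$ being a commutative von Neumann algebra, its spectrum is extremally disconnected with a rich clopen algebra; and such a $C$ exists precisely because $\mathcal{V}(A) \neq \{\mathbb{C}\cdot 1\}$.
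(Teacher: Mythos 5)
Your final paragraph is a correct proof and is essentially the paper's own argument: both apply Lemma 2.9 to $\pi:\Sigma^{\ast}\to\mathcal{V}(A)$, pick a context $C$ with $\Sigma_{C}$ having at least two points and a proper nonempty open $u\subsetneq\Sigma_{C}$, and then observe that the required disjoint opens $V,W$ would both have to contain the unique point of the fibre over $\mathbb{C}\cdot 1$ (since opens of $\Sigma^{\ast}$ are closed under restriction to smaller contexts), a contradiction. The paper simply takes $u=\Sigma_{C}\setminus\{\lambda_{2}\}$, so your insistence on a clopen $u$ is harmless but unnecessary, and the exploratory first two paragraphs of your write-up can be deleted.
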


\begin{proof}
By the previous lemma, $\underline{\Sigma}^{\ast}$ is regular iff for any $U\in\mathcal{O}\Sigma^{\ast}$ and any $(C,\lambda)\in U$ there exist opens $V,W\in\mathcal{O}\Sigma^{\ast}$ such that $(C,\lambda)\in V$, $V\cap W=\emptyset$ and $B_{C,\Sigma_{C}}\subseteq U\cup W$. By assumption, there exists a context $C$ such that $\Sigma_{C}$ has at least two elements. This follows from the Gelfand-Mazur Theorem, which implies that if $\Sigma_{C}$ is a singleton, then $C\cong\mathbb{C}$. Take any two distinct $\lambda_{1},\lambda_{2}\in\Sigma_{C}$. We have $(C,\lambda_{1})\in U:=B_{C,\Sigma_{C}\backslash\{\lambda_{2}\}}$. If $\underline{\Sigma}^{\ast}$ is regular, there are $V,W\in\mathcal{O}\Sigma^{\ast}$ such that $(C,\lambda_{1})\in V$, $(C,\lambda_{2})\in W$ and $V\cap W=\emptyset$. In particular, for every $D\subseteq C$ we find that $\lambda_{1}|_{D}\neq\lambda_{2}|_{D}$. For $D=\mathbb{C}\cdot1$ this condition is not satisfied, so that the compact locale $\underline{\Sigma}^{\ast}$ is not regular.
\end{proof}

Hence the space $\Sigma^{\ast}$ with the topology $\mathcal{O}\Sigma^{\ast}$ is not regular, but it does satisfy the $T_{0}$-axiom. If we leave out the trivial context, Corollary 2.10 becomes slightly weaker. For example, the locale $\uS^{\ast}$ associated to the von Neumann algebra $A=M_{2}(\mathbb{C})$ is regular (the space $\Sigma^{\ast}$ has the discrete topology in this case).  In general the locale $\uS^{\ast}$ is not regular. For example, using the proof of Corollary 2.10 it is not hard to show that the locale $\uS^{\ast}$ is not regular for $A=M_{n}(\mathbb{C})$, for any $n>2$.

The nonregularity of the locale $\uS^{\ast}$ can also be seen logically.  If $\uS^{\ast}$ were regular, then (in the internal language of $[\mathcal{V}(A)\op,\mathbf{Set}]$) for any $\underline{U}\in\mathcal{O}\uS^{\ast}$ we would have
\begin{equation}
\underline{U}=\bigvee\{\underline{V}\in\mathcal{O}\uS^{\ast}\mid\neg\neg\underline{V}=\underline{V},\ \underline{V}\subseteq\underline{U}\}.
\end{equation}
We investigate the nonregularity of $\uS^{\ast}$ by taking a closer look at the negation $\neg$ of $\mathcal{O}_{cl}\uS$. For the moment, we include the trivial context in $\mathcal{V}(A)$, as it makes the investigation of the negation easier.

Let $U\in\mathcal{O}\Sigma^{\ast}$ be an open defined by a proposition $\underline{S}\in\mathcal{O}_{cl}\underline{\Sigma}$ in the sense that $I(\underline{S})=U$. The proposition $\neg\underline{S}$ corresponds to the open $\neg U$, which is the interior of the complement of $U$. We find that $\lambda\in(\neg U)_{C}$ iff for all $D\subseteq C$ we have $\lambda |_{D}\notin U_{D}$. If $U\neq\emptyset$, then $U_{\C\cdot 1}=\Sigma_{\C\cdot 1}$. If $\lambda\in\Sigma_{C}$, then $\lambda|_{\C\cdot 1}\in U_{C}$. It follows that $\neg U=\emptyset$. Thus the negation of any proposition $\underline{S}$ that is not the empty subobject, is $\bot$. This leads to a new proof of Corollary 2.10.The only elements $\underline{V}\in\mathcal{O}\uS^{\ast}$ such that $\neg\neg\underline{V}=\underline{V}$ are the top and bottom element of the frame. Again we conclude that $\uS^{\ast}$ is nonregular, but we also see that the double negation of any element $\underline{S}\in\mathcal{O}_{cl}\uS$ is either the bottom element (if $\underline{S}=\underline{\bot}$), or the top element (if $\underline{S}\neq\underline{\bot}$). 

Next we remove the trivial context from the set $\mathcal{V}(A)$, and the Heyting negation of $\mathcal{O}_{cl}\underline{\Sigma}$ will not be trivial anymore. Still, we have a similar situation. Take, for example an $n$-level system $A=M_{n}(\mathbb{C})$, with $n\geq3$, and pick $C\in\mathcal{V}(A)$ such that $\Sigma_{C}$ has at least 3 elements. In the contravariant approach one is often interested in propositions of the form $\underline{[a\epsilon\Delta]}$, as will be discussed in the next section. Their precise definition does not matter at the moment, but what does matter is that it follows from this definition that $\underline{[a\epsilon\Delta]}(C)\neq\emptyset$. Pick a $\lambda_{1}\in\underline{[a\epsilon\Delta]}(C)$ and another $\lambda_{2}\in\Sigma_{C}$. We assumed that there is a third distinct element $\lambda_{3}\in\Sigma_{C}$. This $\lambda_{3}$ corresponds to a projection operator $p$.\footnote{For an n-level system the characters $\lambda_{i}$ in $\Sigma_{C}$ correspond to mutually orthogonal projection operators $p_{i}$ such that $\sum_{i}p_{i}=1$ and $\lambda_{i}(p_{j})=\delta_{ij}$ with $\delta_{ij}$ the Kronecker delta.} Let $D=\{p\}''$ be the context generated by the projection $p$. In $D$ we have $\lambda_{1}|_{D}=\lambda_{2}|_{D}$. We conclude that $\underline{\neg[a\epsilon\Delta]}(C)=\emptyset$ for an $n$-level system with $n\geq3$, and any context $C$ such that $C$ is not an atom in $\mathcal{V}(A)^{\times}$ in the sense that $\Sigma_{C}$ has at least 3 elements.

What does $\uS^{\ast}$ not being regular, and the related behavior of the negation, teach us? At the level of the locale, it shows that $\uS^{\ast}$ cannot be the spectrum of some internal C*-algebra, hence in that respect it is different from the covariant state locale. At the level of the negation, it is less clear what the previous observations mean. The negation of $\uS^{\ast}$ (and the related one of $\mathcal{O}_{cl}\uS$) follows from the perspective of $\uS^{\ast}$ as a locale in $[\mathcal{V}(A)^{\op},\Set]$, but it is not clear to the author to what extent the internal language of the topos $[\mathcal{V}(A)^{\op},\Set]$ is important for the contravariant approach. Also note that the Heyting arrow and the associated negation of $\mathcal{O}_{cl}\uS$ do not seem to play an important role in the contravariant approach thus far.

\subsection{The Spectrum of the Bohrification of $A$}

In the previous section we looked at the state object $\uS$ of the contravariant approach as a locale in order to compare it with the state space of the covariant approach, the spectrum  $\underline{\Sigma}_{\underline{A}}$ of $\underline{A}$. Next, we turn our attention to this locale $\underline{\Sigma}_{\underline{A}}$. The bulk of this section will be devoted to a new description (Corollary 2.18) of the spectrum $\underline{\Sigma}_{\underline{A}}$, which makes it easier to compare with $\underline{\Sigma}^{\ast}$. This description is also helpful when considering daseinisation in the covariant approach, as explained in Section 3.2. We start with a short discussion of constructive Gelfand duality and its application to covariant topos quantum logic. The relevant references are the pioneering work of Banaschewski and Mulvey on Gelfand duality in topoi \cite{banmul, banmul2, banmul3}, the more explicit and fully constructive description of the Gelfand isomorphism by Coquand \cite{coq} and Coquand and Spitters \cite{coqsp}, and finally the work  \cite{hls, hls2} on covariant quantum logic by Heunen, Landsman and Spitters.

\begin{dork}
Let $C$ be a commutative C*-algebra, and define
\begin{equation}
C^{+}=\{a\in C_{sa}\mid a\geq 0\}=\{a\in C\mid\exists b\in C, a=b^{\ast}b\}.
\end{equation}
Now define the following relation on $C^{+}$: $a\precsim b$ whenever there is an $n\in\mathbb{N}$ such that $a\leq nb$. Define the equivalence relation $a\approx b$ whenever $a\precsim b$ and $b\precsim a$. Let $L_{C}$ denote the set of equivalence classes, and let $D^{C}_{a}$ denote the image $[a^{+}]$ in $L_{C}$, where $a=a^{+}-a^{-}$ is the decomposition in positive elements of $a\in C_{sa}$.
\end{dork}

The lattice operations on $C_{sa}$ (with respect to the partial order $a\leq b$ iff $(b-a)\in C^{+}$) respect the equivalence relation of the definition, turning $L_{C}$ into a distributive lattice. The following proposition connects the lattice $L_{C}$ to the spectrum $\Sigma_{C}$. The details can be found in \cite[Appendix A.1]{hls}.

\begin{poe}
The frame $\text{RIdl}(L_{C})$ of regular ideals of the distributive lattice $L_{C}$, is isomorphic to $\mathcal{O}\Sigma_{C}$.
\end{poe}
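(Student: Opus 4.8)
The plan is to make the lattice $L_{C}$ completely explicit via ordinary Gelfand duality and then to exhibit the frame isomorphism with $\mathcal{O}\Sigma_{C}$ directly. Identify $C$ with $C(\Sigma_{C},\mathbb{C})$ through the Gelfand transform, so that $C^{+}$ is the set of nonnegative functions in $C(\Sigma_{C})$, the relation $a\precsim b$ unwinds to the pointwise inequality $f\leq ng$ for some $n\in\mathbb{N}$, and, for $f\geq 0$, the element $D^{C}_{f}$ is the class $[f]$. Note that $D^{C}_{h}\leq D^{C}_{f}$ forces $\mathrm{coz}(h)\subseteq\mathrm{coz}(f)$, but not conversely --- for instance $f(x)=x$ and $h(x)=x^{2}$ on $[0,1]$ have the same cozero set while $D^{C}_{f}\not\leq D^{C}_{h}$ --- and this redundancy is exactly what passing to regular ideals removes. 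Finally, recall from \cite[Appendix A.1]{hls} (or \cite{coqsp}) that, extending $D$ to all of $C_{sa}$ by $D^{C}_{b}=[b^{+}]$ as in the preceding definition, the regular ideals of $L_{C}$ are the ideals $I$ such that, for every $f\in C^{+}$, one has $D^{C}_{f}\in I$ whenever $D^{C}_{f-q}=[(f-q)^{+}]\in I$ for all rational $q>0$.

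Define $\Phi\colon\mathcal{O}\Sigma_{C}\to\text{RIdl}(L_{C})$ by $\Phi(U)=\{D^{C}_{f}\mid f\in C^{+},\ \mathrm{coz}(f)\subseteq U\}$ and $\Psi\colon\text{RIdl}(L_{C})\to\mathcal{O}\Sigma_{C}$ by $\Psi(I)=\bigcup\{\mathrm{coz}(f)\mid D^{C}_{f}\in I\}$; here $\Phi(U)$ is independent of the chosen representatives, since $[f]=[f']$ implies $\mathrm{coz}(f)=\mathrm{coz}(f')$. That $\Psi(I)$ is open is clear, and $\Phi(U)$ is an ideal because $\mathrm{coz}(f\vee g)=\mathrm{coz}(f)\cup\mathrm{coz}(g)$ and $D^{C}_{h}\leq D^{C}_{f}$ gives $\mathrm{coz}(h)\subseteq\mathrm{coz}(f)$. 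Regularity of $\Phi(U)$ follows from the identity $\mathrm{coz}(f)=\bigcup_{q\in\mathbb{Q},\,q>0}\mathrm{coz}((f-q)^{+})$: if each $\mathrm{coz}((f-q)^{+})$ lies in $U$, so does their union. Both $\Phi$ and $\Psi$ are plainly order preserving.

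The substance of the argument is that $\Phi$ and $\Psi$ are mutually inverse. For $\Psi\circ\Phi=\mathrm{id}$ one uses that $\Sigma_{C}$ is compact Hausdorff, hence completely regular: for $x\in U$ there is $g\in C^{+}$ with $g(x)>0$ and $g$ vanishing off $U$, so $x\in\mathrm{coz}(g)\subseteq U$ and $D^{C}_{g}\in\Phi(U)$; this gives $U\subseteq\Psi(\Phi(U))$, and the reverse inclusion is immediate. For $\Phi\circ\Psi=\mathrm{id}$, the inclusion $I\subseteq\Phi(\Psi(I))$ is immediate; conversely, suppose $\mathrm{coz}(f)\subseteq\Psi(I)$. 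By regularity of $I$ it suffices to show $D^{C}_{(f-q)^{+}}\in I$ for each rational $q>0$. The set $K_{q}=\{f\geq q\}$ is compact and contained in $\mathrm{coz}(f)\subseteq\Psi(I)=\bigcup\{\mathrm{coz}(g)\mid D^{C}_{g}\in I\}$, so finitely many of these cozero sets cover it; letting $g$ be the join of the corresponding functions we get $D^{C}_{g}\in I$ (ideals are closed under finite joins) and $K_{q}\subseteq\mathrm{coz}(g)$. Since $g$ is bounded below by some $\delta>0$ on the compact set $K_{q}$ while $(f-q)^{+}$ vanishes off $K_{q}$ and is bounded by $\|f\|$, we obtain $(f-q)^{+}\leq(\|f\|/\delta)\,g$, hence $D^{C}_{(f-q)^{+}}\leq D^{C}_{g}\in I$, as wanted.

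Finally, since $\Phi$ and $\Psi$ are inverse order isomorphisms between the frames $\mathcal{O}\Sigma_{C}$ and $\text{RIdl}(L_{C})$, they automatically preserve all joins and all finite meets, so $\Phi$ is a frame isomorphism. The step I expect to be the main obstacle is $\Phi\circ\Psi=\mathrm{id}$: one must pass from a purely topological containment of cozero sets to genuine membership of $D^{C}_{f}$ in the ideal $I$, and this is exactly where compactness of $\Sigma_{C}$ enters twice --- once to replace an infinite cover of $K_{q}$ by a finite one, and once to extract a uniform positive lower bound $\delta$ for $g$ --- in combination with the description of regular ideals through the $(f-q)^{+}$ approximations. (An alternative, more abstract route would be to note that $\text{RIdl}(L_{C})$ is by construction the frame of the formal space presented by $L_{C}$ together with its covering relation, and then to check that this presentation coincides with the standard one for $\mathcal{O}\Sigma_{C}$; but the hands-on bijection above seems more transparent.)
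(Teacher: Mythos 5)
Your proof is correct. A caveat on the comparison: the paper does not actually prove this proposition --- it states it and refers to \cite[Appendix A.1]{hls} for the details --- so there is no in-paper argument to measure yours against. That said, your route (transporting everything to $C(\Sigma_{C})$ by classical Gelfand duality and exhibiting the mutually inverse maps $\Phi$, $\Psi$ through cozero sets) is the standard verification that the constructive spectrum $\text{RIdl}(L_{C})$ coincides with the classical one, and its two nontrivial steps are in fact reproduced elsewhere in the paper: the observation that $D^{C}_{h}\leq D^{C}_{f}$ forces $\text{coz}(h)\subseteq\text{coz}(f)$ is Lemma 2.14 (with $X^{C}_{a}=\text{coz}(\hat{a})$), your compactness argument for $\Phi\circ\Psi=\mathrm{id}$ --- finite subcover of $K_{q}$ plus a uniform positive lower bound $\delta$ for $g$ on $K_{q}$ --- is precisely the proof of Lemma 2.19, and the regularity of $\Phi(U)$ via $\text{coz}(f)=\bigcup_{q>0}\text{coz}((f-q)^{+})$ reappears in the surjectivity part of the proof of Theorem 2.17. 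The only point worth flagging is your reformulation of ``regular ideal'' as closure under the rule $\bigl(\forall q\in\Q^{+}:\ D^{C}_{f-q}\in I\bigr)\Rightarrow D^{C}_{f}\in I$: this is equivalent to closure under the covering relation $\drie_{C}$ of Theorem 2.13 only because an ideal is downward closed and closed under finite joins, so that the finite subset $U_{0}$ appearing in the definition of $\drie_{C}$ can be replaced by the single element $\bigvee U_{0}\in I$; you use this tacitly, but it is easily supplied. Everything else checks out, including the Urysohn argument for $U\subseteq\Psi(\Phi(U))$ and the final step that a poset isomorphism between frames is automatically a frame isomorphism.
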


The internal Gelfand spectrum $\uS_{\uA}$ for a C*-algebra $A$ is calculated as follows. First, define the poset $\mathcal{C}(A)$ of all unital commutative C*-subalgebras of $A$, where the order is given by inclusion. Next consider the topos $[\mathcal{C}(A),\mathbf{Set}]$ of covariant functors $\mathcal{C}(A)\to\mathbf{Set}$. In particular, consider the functor 
\begin{equation}
\underline{A}:\mathcal{C}(A)\to\mathbf{Set},\ \  \underline{A}(C)=C,\ \  \underline{A}(i_{DC})=i_{DC},
\end{equation}
where $i_{DC}$ is the inclusion $D\subseteq C$. The object $\underline{A}$, called the Bohrification of $A$, is a commutative C*-algebra internal to $[\mathcal{C}(A),\mathbf{Set}]$. As shown in \cite{hls}, its internal spectrum can be described in various ways. We will use the following description, which can be found in \cite[Theorem 29]{hls}:
\begin{tut}
Let $\underline{L}_{\underline{A}}:\mathcal{C}(A)\to\mathbf{Set}$ be the functor 
\begin{equation}
\underline{L}_{\underline{A}}(C)=L_{C},\ \  \underline{L}_{\underline{A}}(i_{CE}):L_{C}\to L_{E},\ \ \  D_{a}^{C}\mapsto D_{a}^{E}.
\end{equation}
Given any $C\in\mathcal{C}(A)$, the set $\mathcal{O}\uS_{\uA}(C)$ consists of all subfunctors $\underline{U}\in Sub(\underline{L}_{\underline{A}} |_{\uparrow C})$ that satisfy the following property: for every $C'\supseteq C$ we have $D^{C'}_{a}\in\underline{U}(C')$ if and only if for every $q\in\mathbb{Q}^{+}$ there exists a finite set $U_{0}\subseteq\underline{U}(C')$ such that $D^{C'}_{a-q}\leq\bigvee U_{0}$.
\end{tut}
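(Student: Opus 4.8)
\medskip
\noindent\textbf{Proof proposal.}
The plan is to read this description off as the stage-by-stage externalisation of the constructive Gelfand spectrum of the internal commutative C*-algebra $\underline{A}$, using the lattice presentation of Definition 2.11 together with the Kripke--Joyal semantics of $[\mathcal{C}(A),\mathbf{Set}]$. Concretely I would proceed in three steps: (i) identify the internal distributive lattice attached to $\underline{A}$ as the functor $\underline{L}_{\underline{A}}$; (ii) use the topos version of constructive Gelfand duality to present $\mathcal{O}\uS_{\uA}$ as the internal frame of regular ideals of $\underline{L}_{\underline{A}}$, and compute its value at a context $C$ via the slice $[\mathcal{C}(A),\mathbf{Set}]/\mathbf{y}C$; (iii) externalise ``internal regular ideal'' by Kripke--Joyal.

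For step (i), apply the construction of Definition 2.11 internally to $\underline{A}$: form the internal positive cone, the internal relations $\precsim$ and $\approx$, and the internal quotient. Since $[\mathcal{C}(A),\mathbf{Set}]$ is a presheaf topos over a poset, existential and finite-quantifier statements are forced ``honestly'' (no covering sieves intervene), so all of these are computed pointwise: at $C$ one obtains $C^{+}$, the relations $\precsim_{C}$ and $\approx_{C}$, and the quotient lattice $L_{C}$ with distinguished elements $D_{a}^{C}$. For an inclusion $C\subseteq C'$, an inequality $a\leq nb$ valid in $C$ is a fortiori valid in $C'$, so the inclusion $C\hookrightarrow C'$ descends to a well-defined lattice homomorphism $L_{C}\to L_{C'}$ sending $D_{a}^{C}$ to $D_{a}^{C'}$. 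This recovers the functor $\underline{L}_{\underline{A}}$ of the statement, and that it is internally a distributive lattice is immediate from pointwise distributivity; this is also part of the analysis in \cite{hls}.

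For step (ii), the proof of Proposition 2.12 given in \cite[Appendix A.1]{hls} is constructive, hence valid internally in any topos; applied to $\underline{A}$ it shows that $\mathcal{O}\uS_{\uA}$ is the internal frame of regular ideals of $\underline{L}_{\underline{A}}$, that is, of ideals saturated for the cover $D_{a}\triangleleft\{D_{a-q}\mid q\in\mathbb{Q}^{+}\}$ arising from the constructive relation $D_{a}=\bigvee_{q\in\mathbb{Q}^{+}}D_{a-q}$. Since $\mathcal{O}\uS_{\uA}$ is an object of $[\mathcal{C}(A),\mathbf{Set}]$, Yoneda gives $\mathcal{O}\uS_{\uA}(C)\cong\mathrm{Hom}(\mathbf{y}C,\mathcal{O}\uS_{\uA})$, with $\mathbf{y}C$ the representable at $C$; and under the equivalence $[\mathcal{C}(A),\mathbf{Set}]/\mathbf{y}C\simeq[\uparrow C,\mathbf{Set}]$ --- restriction along $\uparrow C\hookrightarrow\mathcal{C}(A)$ being (the inverse image part of) a local homeomorphism of toposes, hence exact, under which the internal Gelfand spectrum is stable (as in \cite{hls}) --- this identifies $\mathcal{O}\uS_{\uA}(C)$ with the set of internal regular ideals of $\underline{L}_{\underline{A}}|_{\uparrow C}$ in $[\uparrow C,\mathbf{Set}]$.

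Step (iii) is then an unwinding. A subobject $\underline{U}\hookrightarrow\underline{L}_{\underline{A}}|_{\uparrow C}$ is a subfunctor; the internal ideal axioms (downward closure and closure under finite joins), having no existential quantifier, force by Kripke--Joyal over the poset $\uparrow C$ the pointwise statement that each $\underline{U}(C')$ is an ideal of $L_{C'}$, while the saturation axiom $\forall a\,\bigl((\forall q\in\mathbb{Q}^{+}\ D_{a-q}\in\underline{U})\to D_{a}\in\underline{U}\bigr)$ unwinds --- using that $\underline{U}$ is a subfunctor, so that passage to later contexts is automatic --- to the pointwise statement that each $\underline{U}(C')$ is a regular ideal. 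Finally I would check that, for a subfunctor $\underline{U}$, ``each $\underline{U}(C')$ is a regular ideal'' is equivalent to the biconditional in the statement: once $\underline{U}(C')$ is an ideal, ``$D_{a-q}^{C'}\in\underline{U}(C')$'' is the same as ``$D_{a-q}^{C'}\leq\bigvee U_{0}$ for some finite $U_{0}\subseteq\underline{U}(C')$''; the ``only if'' half of the biconditional is automatic from $D_{a-q}^{C'}\leq D_{a}^{C'}$ and downward closure; and the ``if'' half alone already forces $0\in\underline{U}(C')$ (take $U_{0}=\emptyset$), downward closure, and closure under finite joins (via $D_{a}\vee D_{b}=D_{a+b}$), hence regular-ideal-hood. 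Assembling these equivalences yields the stated description. The main obstacle I anticipate is the bookkeeping in steps (ii)--(iii): ensuring that the internal frame of regular ideals localises ``stagewise'' to regular ideals over $\uparrow C$ rather than something carrying extra gluing data, and externalising the \emph{infinitary} relation $D_{a}=\bigvee_{q}D_{a-q}$ correctly --- its universal quantifiers must be tested at \emph{all} later contexts $C'\supseteq C$, and the infinitary join inside it is exactly what produces the ``$\forall q\in\mathbb{Q}^{+}$, finite $U_{0}$'' shape of the condition rather than a single finitary clause. Verifying that $\underline{L}_{\underline{A}}$ is the correct internal lattice, not merely pointwise correct, is a further technical point, already handled in \cite{hls}.
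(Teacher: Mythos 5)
The paper does not prove this statement itself: it is quoted as Theorem 29 of \cite{hls}, and the text simply refers the reader there. Your outline --- computing the internal lattice $\underline{L}_{\underline{A}}$ pointwise in the presheaf topos, presenting $\mathcal{O}\uS_{\uA}$ as the internal frame of regular ideals via constructive Gelfand duality, localising to $\uparrow C$ through the slice over the representable, and externalising the regular-ideal axioms by Kripke--Joyal --- is precisely the strategy of that reference, and your closing check that the single biconditional already encodes ideal-hood together with saturation under the cover $D_{a}\triangleleft\{D_{a-q}\mid q\in\mathbb{Q}^{+}\}$ is the right way to reconcile the two formulations. So the proposal is correct and takes essentially the same route as the paper's source.
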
 
In this theorem, the functor $\underline{L}_{\underline{A}}|_{\uparrow C}:(\uparrow C)\to\mathbf{Set}$ denotes the restriction of  $\underline{L}_{\underline{A}}$ to $\uparrow C=\{C'\in\mathcal{C}(A)\mid C\subseteq C'\}$.

Let $\mathcal{C}(A)$ denote the set of contexts with the Alexandrov topology (in which the open sets are exactly the upwards closed sets). Thus $V\in\mathcal{O}\mathcal{C}(A)$ if $C'\in V$ whenever $C\in V$ and $C\subseteq C'$. Defining a covariant functor $\underline{F}:\mathcal{C}(A)\to\mathbf{Set}$ is equivalent to giving a sheaf $\overline{F}:\mathcal{O}\mathcal{C}(A)\op\to\mathbf{Set}$ through the correspondence
\begin{equation}
\overline{F}(\uparrow C)=\underline{F}(C).
\end{equation}
Just as in the previous subsection, a locale in $\Sh(\mathcal{C}(A))$ is equivalent to a locale map $L\to\mathcal{C}(A)$ in $\mathbf{Set}$. The locale $\uS_{\uA}$ is externally described by the frame map
\begin{equation}
\pi^{-1}:\mathcal{O}\mathcal{C}(A)\to\mathcal{O}\uS_{\uA}(\C\cdot 1),\ \ (\uparrow C)\mapsto \underline{L}_{\uA}|_{\uparrow C}.
\end{equation}

Next, we prove some lemmas that will be helpful in the investigation of $\uS_{\uA}$.

\begin{lem}
For $C\in\mathcal{C}(A)$ and $a\in C_{sa}$, let $X^{C}_{a}=\{\lambda\in\Sigma_{C}\mid\lambda(a)>0\}$. If $D^{C}_{a}=D^{C}_{b}$ in $L_{C}$, for $a,b\in C_{sa}$,  then $X^{C}_{a}=X^{C}_{b}$ in $\Sigma_{C}$. In other words, the map
\begin{equation}
L_{C}\to\mathcal{O}\Sigma_{C},\ \ D^{C}_{a}\mapsto X^{C}_{a}.
\end{equation}
is well defined.
\end{lem}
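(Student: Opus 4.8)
The plan is to unwind the definitions of $D^{C}_{a}$ and of $X^{C}_{a}$ and show that $X^{C}_{a}$ depends only on the equivalence class of $a^{+}$ in $L_{C}$. First I would recall that for the commutative C*-algebra $C$, Gelfand duality identifies $C_{sa}$ with the real-valued continuous functions on $\Sigma_{C}$, so that $\lambda(a)$ is just the value of the function $\hat a$ at the point $\lambda$, and $a \leq b$ in $C_{sa}$ means $\hat a(\lambda) \leq \hat b(\lambda)$ for all $\lambda \in \Sigma_{C}$; in particular $a \geq 0$ iff $\hat a \geq 0$ pointwise, and $X^{C}_{a} = \{\lambda \mid \hat a(\lambda) > 0\} = \{\hat a > 0\}$ is the \emph{cozero set} (support of the Gelfand transform) of $a$, which is visibly open, so the target $\mathcal{O}\Sigma_{C}$ is the right codomain.

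The first key observation is that $X^{C}_{a} = X^{C}_{a^{+}}$, since $\hat a > 0$ exactly where $\widehat{a^{+}} = \max(\hat a, 0) > 0$. So it suffices to prove the statement for positive elements: if $x, y \in C^{+}$ and $x \approx y$ (i.e.\ $x \precsim y$ and $y \precsim x$), then $\{\hat x > 0\} = \{\hat y > 0\}$. The second key observation is that $x \precsim y$, meaning $x \leq n y$ for some $n \in \mathbb{N}$, immediately gives $\{\hat x > 0\} \subseteq \{\hat y > 0\}$: if $\hat x(\lambda) > 0$ then $n\,\hat y(\lambda) \geq \hat x(\lambda) > 0$, hence $\hat y(\lambda) > 0$. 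Applying this in both directions yields the reverse inclusion as well, so $x \approx y$ forces $\{\hat x > 0\} = \{\hat y > 0\}$.

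Putting these together: if $D^{C}_{a} = D^{C}_{b}$ in $L_{C}$, then by definition $a^{+} \approx b^{+}$ in $C^{+}$, so $X^{C}_{a} = X^{C}_{a^{+}} = \{\widehat{a^{+}} > 0\} = \{\widehat{b^{+}} > 0\} = X^{C}_{b^{+}} = X^{C}_{b}$, which is exactly the claim that $D^{C}_{a} \mapsto X^{C}_{a}$ is a well-defined map $L_{C} \to \mathcal{O}\Sigma_{C}$. There is essentially no obstacle here; the only point that needs a word of care is the reduction $X^{C}_{a} = X^{C}_{a^{+}}$ and the identification of the order on $C_{sa}$ with the pointwise order on Gelfand transforms, both of which are standard facts about commutative C*-algebras. (One should also note that $a^{+}$ itself depends only on $a$, so the notation $D^{C}_{a}$ is unambiguous, but that is already implicit in Definition~2.14.) I would present the argument in the two-inclusion form above, as it is the cleanest route.
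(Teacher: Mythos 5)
Your proof is correct and follows essentially the same route as the paper's: both arguments use $D^{C}_{a}=D^{C}_{b}$ to extract an inequality $a^{+}\leq nb^{+}$ and deduce $\lambda(a)>0\Rightarrow\lambda(b)>0$, then argue symmetrically for the reverse inclusion. Your version merely makes explicit the reduction $X^{C}_{a}=X^{C}_{a^{+}}$ that the paper leaves implicit.
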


\begin{proof}
Let $\lambda\in X^{C}_{a}$. As $D^{C}_{a}=D^{C}_{b}$, there exists an $n\in\mathbb{N}$ such that $a^{+}\leq nb^{+}$ in $C^{+}$. If $\lambda(a)>0$, this can only hold if $\lambda(b)>0$. Consequently $\lambda\in X^{C}_{b}$. The converse is analogous.
\end{proof}

Note that the opens $X^{C}_{a}$, with $a\in A_{sa}$, form a basis for the topology on $\Sigma_{C}$.

\begin{lem}
Let $D^{C}_{a}\in L_{C}$ and suppose $U\subseteq L_{C}$ satisfies the condition that for every $q\in\mathbb{Q}^{+}$ there exists a finite set $U_{0}\subseteq U$ such that $D^{C}_{a-q}\leq\bigvee U_{0}$. We will denote this situation as $D^{C}_{a}\drie_{C} U$. Then the following equality holds in $\Sigma_{C}$:
\begin{equation}
X^{C}_{a}=\bigcup_{D^{C}_{b}\drie_{C}\{D^{C}_{a}\}}X^{C}_{b}.
\end{equation}
\end{lem}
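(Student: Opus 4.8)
The plan is to prove the two inclusions of the claimed equality of open sets in $\Sigma_C$ separately, using Lemma 2.20 (well-definedness of $D^C_a\mapsto X^C_a$) as the main tool. The inclusion $X^C_a\subseteq\bigcup_{D^C_b\drie_C\{D^C_a\}}X^C_b$ is the easy direction: I would first observe that $D^C_a\drie_C\{D^C_a\}$ holds, since for any $q\in\Q^+$ we have $a-q\leq a$, hence $D^C_{a-q}\leq D^C_a$, so the singleton $U_0=\{D^C_a\}$ works. Therefore $X^C_a$ itself appears as one of the terms on the right-hand side, and the inclusion is immediate.

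For the reverse inclusion $\bigcup_{D^C_b\drie_C\{D^C_a\}}X^C_b\subseteq X^C_a$, I would fix some $D^C_b$ with $D^C_b\drie_C\{D^C_a\}$ and show $X^C_b\subseteq X^C_a$. Unwinding the definition of $\drie_C$ applied to the singleton $\{D^C_a\}$: for every $q\in\Q^+$ there is a finite $U_0\subseteq\{D^C_a\}$ with $D^C_{b-q}\leq\bigvee U_0$; the only nonempty possibility forces $D^C_{b-q}\leq D^C_a$ for every $q\in\Q^+$ (the case $U_0=\emptyset$ would give $D^C_{b-q}=0$ for all $q$, whence $D^C_b = 0$ by taking $q\to 0$, which only makes $X^C_b=\emptyset$ and the inclusion trivial). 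Now take $\lambda\in X^C_b$, i.e. $\lambda(b)>0$. Choose a rational $0<q<\lambda(b)$, so $\lambda(b-q)=\lambda(b)-q>0$, hence $\lambda\in X^C_{b-q}$. From $D^C_{b-q}\leq D^C_a$ in $L_C$, there is $n\in\N$ with $(b-q)^+\leq n\,a^+$ in $C^+$, and applying the character $\lambda$ (which is positive) gives $\lambda((b-q)^+)\leq n\,\lambda(a^+)$; since $\lambda((b-q)^+)\geq\lambda(b-q)>0$ we conclude $\lambda(a^+)>0$, so $\lambda(a)>0$, i.e. $\lambda\in X^C_a$. This proves $X^C_b\subseteq X^C_a$, and taking the union over all such $b$ gives the desired inclusion.

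The main subtlety — which is really only a minor bookkeeping point rather than a genuine obstacle — is the correct handling of the definition of $\drie_C$ when the right-hand family is a singleton: one must be careful that the "finite set $U_0$" in the condition is allowed to be empty or the singleton, and that in either case one extracts the relation $D^C_{b-q}\leq D^C_a$ for all positive rationals $q$, together with the degenerate case. The other ingredient used implicitly is that characters $\lambda\in\Sigma_C$ are positive linear functionals, so $a^+\leq n a'^+$ in $C^+$ implies $\lambda(a^+)\leq n\lambda(a'^+)$, and that $\lambda(c)\leq\lambda(c^+)$ for $c\in C_{sa}$ since $c\leq c^+$; these are standard facts about commutative C*-algebras that were already invoked in the proof of Lemma 2.20. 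No deeper machinery is needed.
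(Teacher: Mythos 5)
Your proposal is correct and follows essentially the same route as the paper's proof: the easy inclusion via $D^{C}_{a}\drie_{C}\{D^{C}_{a}\}$, and the reverse inclusion by extracting $D^{C}_{b-q}\leq D^{C}_{a}$, choosing $q$ with $\lambda(b-q)>0$, and using $(b-q)^{+}\leq na^{+}$ together with positivity of the character to conclude $\lambda(a)>0$. Your extra care about the case $U_{0}=\emptyset$ is harmless but not needed, since $D^{C}_{b-q}=0\leq D^{C}_{a}$ feeds into the same argument.
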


\begin{proof}
As $D^{C}_{a}\drie_{C}\{D^{C}_{a}\}$, we immediately have $X^{C}_{a}\subseteq\bigcup_{D^{C}_{b}\drie_{C}\{D^{C}_{a}\}}X^{C}_{b}$. Assume that $\lambda\in X^{C}_{b}$, where $D^{C}_{b}\drie_{C}\{D^{C}_{a}\}$. By definition of $\drie_{C}$, for every $q\in\Q^{+}$ we have $D^{C}_{b-q}\leq D^{C}_{a}$. By definition of $L_{C}$, there exists an $n\in\mathbb{N}$ such that $na^{+}-(b-q)^{+}\geq 0$. By assumption, $\lambda(b)>0$. Pick any $q\in\Q^{+}$ such that $\lambda(b-q)>0$. As $na^{+}-(b-q)^{+}\geq 0$, and $\lambda(b-q)>0$, it follows that $\lambda(a)>0$. This shows that $\bigcup_{D^{C}_{b}\drie_{C}\{D^{C}_{a}\}}X^{C}_{b}\subseteq X^{C}_{a}$, completing the proof.\footnote{Under the identification $\mathcal{O}\Sigma_{C}\cong\text{RIdl}(L_{C})$, Lemma 2.15 boils down to the claim that the map of Lemma 2.14 is equal to the canonical map \cite[(67),(78)]{hls}: 
\begin{equation}
f:L_{C}\to\text{RIdl}(L_{C})\cong\mathcal{F}(L_{C},\drie_{C}),\ \ D^{C}_{a}\mapsto\mathcal{A}(\downarrow D^{C}_{a}).\nonumber
\end{equation}}
\end{proof}

Now we return to the set $\Sigma=\{(C,\lambda)\mid C\in\mathcal{C}(A),\ \lambda\in\Sigma_{C}\}$ of the previous subsection. This time we equip it with a different topology.
\begin{dork}
The space $\Sigma_{\ast}$ is the set $\Sigma$ with the topology $\mathcal{O}\Sigma_{\ast}$, where $U\in\mathcal{O}\Sigma_{\ast}$ iff the following two conditions are satisfied:
\begin{enumerate}
\item $\forall C\in\mathcal{C}(A),\ U_{C}\in\mathcal{O}\Sigma_{C}$; 
\item If $\lambda\in U_{C}, C\subseteq C'$ and $\lambda'\in\Sigma_{C'}$ such that $\lambda'|_{C}=\lambda$, then $\lambda'\in U_{C'}$.
\end{enumerate}
\end{dork}
We leave it to the reader to verify that this defines a topology. The next theorem and corollary show that, up to isomorphism,  $\Sigma_{\ast}$ is the external description of the spectrum $\uS_{\uA}$.

\begin{tut}
Let $\uS_{\uA}$ be the spectrum of $\underline{A}$. Let $\mathcal{O}\Sigma_{\ast}$ be as in Definition 2.16, and $\underline{U}$ as in Theorem 2.13. Then the map
\begin{equation}
\Psi: \mathcal{O}\uS_{\uA}(\mathbb{C}\cdot1)\to\mathcal{O}\Sigma_{\ast},\ \ \Psi(\underline{U})_{C}=\bigcup_{D^{C}_{a}\in\underline{U}(C)}X^{C}_{a},
\end{equation}
is well defined and defines an isomorphism of frames.
\end{tut}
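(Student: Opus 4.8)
The plan is to produce an explicit two-sided inverse $\Phi$ of $\Psi$ and to check that both $\Psi$ and $\Phi$ are monotone, so that $\Psi$ is an order isomorphism; since both $\mathcal{O}\Sigma_{\ast}$ (a topology on a set) and $\mathcal{O}\uS_{\uA}(\C\cdot 1)$ (the frame of the external description of $\uS_{\uA}$ recalled above, cf.\ \cite[Section C1.6]{jh1}) are frames, ordered by inclusion of subsets and of subfunctors respectively, an order isomorphism between them is automatically an isomorphism of frames. Throughout I identify $\mathcal{O}\uS_{\uA}(\C\cdot 1)$ via Theorem 2.13 with the set of subfunctors $\underline{U}\subseteq\underline{L}_{\uA}$ such that each $\underline{U}(C)\subseteq L_{C}$ is closed under the covering relation $\drie_{C}$ of Lemma 2.15 (note $\uparrow(\C\cdot 1)=\mathcal{C}(A)$).

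The one genuinely technical point, which I would isolate first, is that for $C\in\mathcal{C}(A)$, $a\in C_{sa}$ and $U\subseteq L_{C}$ one has $D^{C}_{a}\drie_{C}U$ if and only if $X^{C}_{a}\subseteq\bigcup_{D^{C}_{b}\in U}X^{C}_{b}$ in $\Sigma_{C}$. The ``only if'' direction follows from Lemma 2.15 together with the facts that $D^{C}_{b}\mapsto X^{C}_{b}$ is monotone and preserves finite joins and that $X^{C}_{a}=\bigcup_{q\in\Q^{+}}X^{C}_{a-q}$. For ``if'' one argues by compactness: fix $q\in\Q^{+}$; the set $\{\lambda\in\Sigma_{C}\mid\lambda(a)\geq q\}$ is closed, hence compact, in $\Sigma_{C}$ and contained in $X^{C}_{a}$, so it is covered by finitely many $X^{C}_{b_{1}},\dots,X^{C}_{b_{n}}$ with $D^{C}_{b_{i}}\in U$; with $c=\sum_{i}b_{i}^{+}$ one has $\bigvee_{i}D^{C}_{b_{i}}=D^{C}_{c}$, and since $\lambda\mapsto\lambda(c)$ is bounded below by some $\delta>0$ on the compact set where $\lambda((a-q)^{+})$ can fail to vanish, Gelfand duality gives $(a-q)^{+}\leq(\|a\|/\delta)\,c$, hence $D^{C}_{a-q}\leq\bigvee_{i}D^{C}_{b_{i}}$; as $q$ was arbitrary, $D^{C}_{a}\drie_{C}U$. (This is essentially the identification recorded in the footnote to Lemma 2.15.)

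Granting this, the rest is bookkeeping. \emph{Well-definedness:} each $\Psi(\underline{U})_{C}$ is a union of the basic opens $X^{C}_{a}$, so it is open and, by Lemma 2.14, well defined as a set --- this is condition~1 of Definition 2.16; condition~2 holds because $\underline{U}$ is a subfunctor (so $D^{C}_{a}\in\underline{U}(C)$ gives $D^{C'}_{a}\in\underline{U}(C')$ for $C\subseteq C'$) and $\lambda'|_{C}\in X^{C}_{a}$ forces $\lambda'\in X^{C'}_{a}$ since $a\in C$. \emph{Inverse:} put $\Phi(V)(C)=\{D^{C}_{a}\mid X^{C}_{a}\subseteq V_{C}\}$; condition~2 of Definition 2.16 makes $\Phi(V)$ a subfunctor of $\underline{L}_{\uA}$, and the technical point above shows each $\Phi(V)(C)$ is $\drie_{C}$-closed, so $\Phi(V)\in\mathcal{O}\uS_{\uA}(\C\cdot 1)$. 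Then $\Psi\circ\Phi=\mathrm{id}$ because $V_{C}$ is open and the $X^{C}_{a}$ form a basis, i.e.\ $V_{C}=\bigcup\{X^{C}_{a}\mid X^{C}_{a}\subseteq V_{C}\}$, and $\Phi\circ\Psi=\mathrm{id}$ because, by the technical point and the $\drie_{C}$-closedness of $\underline{U}(C)$, the condition $X^{C}_{a}\subseteq\Psi(\underline{U})_{C}$ is equivalent to $D^{C}_{a}\in\underline{U}(C)$.

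Finally, $\Psi$ is monotone (obvious) and order-reflecting: if $\Psi(\underline{U}_{1})\subseteq\Psi(\underline{U}_{2})$ then $D^{C}_{a}\in\underline{U}_{1}(C)$ yields $X^{C}_{a}\subseteq\Psi(\underline{U}_{1})_{C}\subseteq\Psi(\underline{U}_{2})_{C}$, hence $D^{C}_{a}\in\underline{U}_{2}(C)$ by the equivalence just used; so $\Psi$ is an order isomorphism between frames, and therefore an isomorphism of frames (it sends, for instance, the top subfunctor $\underline{L}_{\uA}$ to $\Sigma$, since $X^{C}_{1}=\Sigma_{C}$). I expect the only real obstacle to be the compactness argument of the second paragraph, matching the formal cover $\drie_{C}$ on $L_{C}$ with set-theoretic covering of the basic opens $X^{C}_{a}$ in the compact space $\Sigma_{C}$; once that is in place everything else reduces to the subfunctor conditions and to the fact that the $X^{C}_{a}$ form a basis of $\mathcal{O}\Sigma_{C}$.
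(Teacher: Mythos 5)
Your proposal is correct, but it is organized quite differently from the paper's proof, and the reorganization is worth noting. The paper proves that $\Psi$ is a bijection by establishing injectivity and surjectivity separately (the surjectivity step uses exactly your candidate inverse $\Phi(V)(C)=\{D^{C}_{a}\mid X^{C}_{a}\subseteq V_{C}\}$, and the injectivity step contains, buried inside it, the same compactness argument you isolate: covering the closure of $X^{C}_{(a-q)^{+}}$ by finitely many $X^{C}_{b_{i}}$ and invoking Lemma 2.19 to pass back to the relation $\leq$ in $L_{C}$), and then verifies by hand that $\Psi$ preserves the empty and binary meets and arbitrary joins. You instead extract the single substantive equivalence $D^{C}_{a}\drie_{C}U\Leftrightarrow X^{C}_{a}\subseteq\bigcup_{D^{C}_{b}\in U}X^{C}_{b}$ as a standalone lemma --- which is precisely the content of the footnote to Lemma 2.15, i.e.\ the pointwise identification $\mathrm{RIdl}(L_{C})\cong\mathcal{O}\Sigma_{C}$ of Proposition 2.12 made explicit --- exhibit $\Phi$ as a two-sided inverse, and then dispense with the meet/join verifications entirely by observing that an order isomorphism between frames automatically preserves all joins and meets. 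Your route is shorter and makes the one nontrivial point (matching the formal cover $\drie_{C}$ with set-theoretic covering in the compact space $\Sigma_{C}$) more visible; the paper's route is more self-contained in that it shows concretely how the lattice operations on subfunctors transport to unions and intersections of opens, which is information the rest of the paper occasionally uses. Both compactness arguments are the same in substance, and your degenerate cases (empty cover, $q$ rational and strictly positive so that $\{\lambda(a)\geq q\}\subseteq X^{C}_{a}$) check out.
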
 

\begin{cor}
The projection map
\begin{equation}
\pi:\Sigma_{\ast}\to\mathcal{C}(A),\ \ (C,\lambda)\mapsto C,
\end{equation}
is continuous and defines a locale $\uS_{\ast}$ internal to $[\mathcal{C}(A),\mathbf{Set}]$. Up to isomorphism, this locale is the internal spectrum of $\underline{A}$. The frame associated to this locale is given by
\begin{equation}
\mathcal{O}\uS_{\ast}: C\to\mathcal{O}\Sigma_{\ast}|_{\uparrow C}=\{U\in\mathcal{O}\Sigma_{\ast}\mid U\subseteq\coprod_{C'\in(\uparrow C)}\Sigma_{C'}\},
\end{equation}
where for  $C\subseteq C'$ the transition map $\mathcal{O}\uS_{\ast}(C)\to\mathcal{O}\uS_{\ast}(C')$ is given by

\noindent $U\mapsto\coprod_{C''\in(\uparrow C')}U_{C''}$. Hence $\uSs$ is ismorphic  to $\uS_{\uA}$ as a locale.
\end{cor}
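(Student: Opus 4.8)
The plan is to derive the corollary by combining the frame isomorphism $\Psi$ of Theorem 2.17 with the equivalence $\mathbf{Loc}/X\simeq\mathbf{Loc}(\Sh(X))$ (for $X=\mathcal{C}(A)$ carrying the Alexandrov topology) that was already invoked in Subsection 2.1.1, together with the dictionary $\Sh(\mathcal{C}(A))\simeq[\mathcal{C}(A),\mathbf{Set}]$ given by $\overline{F}(\uparrow C)=\underline{F}(C)$.

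\emph{Step 1: continuity and the internal locale.} Since the upsets $\uparrow C$ form a basis for $\mathcal{O}\mathcal{C}(A)$, continuity of $\pi$ amounts to checking that each $\pi^{-1}(\uparrow C)=\coprod_{C'\in\uparrow C}\Sigma_{C'}$ lies in $\mathcal{O}\Ss$, i.e. satisfies the two clauses of Definition 2.16: clause~1 is immediate because the slice of $\pi^{-1}(\uparrow C)$ at a context $C'$ is $\Sigma_{C'}$ when $C\subseteq C'$ and $\emptyset$ otherwise, and clause~2 holds because $\uparrow C$ is upward closed. A continuous map $\pi\colon\Ss\to\mathcal{C}(A)$ is a locale map in $\mathbf{Set}$, and under the equivalence above it corresponds to a locale $\uSs$ internal to $[\mathcal{C}(A),\mathbf{Set}]$, namely $\mathcal{I}(\pi)=\pi_{\ast}(\Omega_{\Sh(\Ss)})$. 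Unwinding this construction exactly as in Subsection 2.1.1, the associated internal frame is the sheaf $\uparrow C\mapsto\{V\in\mathcal{O}\Ss\mid V\subseteq\pi^{-1}(\uparrow C)\}$ with restriction along $\uparrow C'\subseteq\uparrow C$ given by $V\mapsto V\cap\pi^{-1}(\uparrow C')=\coprod_{C''\in\uparrow C'}V_{C''}$; rewritten as a functor on $\mathcal{C}(A)$ this is precisely the frame $\mathcal{O}\uSs$ displayed in the statement.

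\emph{Step 2: identification with $\uS_{\uA}$.} By the slice equivalence it suffices to exhibit an isomorphism, over $\mathcal{C}(A)$, between the locale map $\pi\colon\Ss\to\mathcal{C}(A)$ and the external description of $\uS_{\uA}$ recorded above, namely the locale whose frame is $\mathcal{O}\uS_{\uA}(\C\cdot1)$ mapping to $\mathcal{C}(A)$ via $\pi^{-1}\colon\mathcal{O}\mathcal{C}(A)\to\mathcal{O}\uS_{\uA}(\C\cdot1)$, $\uparrow C\mapsto\underline{L}_{\uA}|_{\uparrow C}$. Theorem 2.17 already supplies the frame isomorphism $\Psi\colon\mathcal{O}\uS_{\uA}(\C\cdot1)\to\mathcal{O}\Ss$, so the only thing left is to verify that $\Psi$ is a morphism over $\mathcal{C}(A)$, i.e. that it carries the one external frame map to the other. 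On a basic open $\uparrow C$ this is a direct computation with the formula defining $\Psi$: for $C'\supseteq C$ one has $\Psi(\underline{L}_{\uA}|_{\uparrow C})_{C'}=\bigcup_{D^{C'}_a\in L_{C'}}X^{C'}_a=\Sigma_{C'}$ (using $X^{C'}_{1}=\Sigma_{C'}$, or that the $X^{C'}_a$ form a basis of $\mathcal{O}\Sigma_{C'}$), while for $C'\not\supseteq C$ the set $\underline{L}_{\uA}|_{\uparrow C}(C')$ is empty and hence $\Psi(\underline{L}_{\uA}|_{\uparrow C})_{C'}=\emptyset$; thus $\Psi(\underline{L}_{\uA}|_{\uparrow C})=\coprod_{C'\in\uparrow C}\Sigma_{C'}=\pi^{-1}(\uparrow C)$. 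Therefore $\Psi$ is an isomorphism in $\mathbf{Loc}/\mathcal{C}(A)$, and transporting through the equivalence yields $\uSs\cong\uS_{\uA}$ as locales internal to $[\mathcal{C}(A),\mathbf{Set}]$.

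\emph{Main obstacle.} Steps~1 and~2 are largely an exercise in unwinding definitions; the one genuinely delicate point is bookkeeping. Theorem 2.17 is phrased only at the level of the ``global'' frames $\mathcal{O}\uS_{\uA}(\C\cdot1)$ and $\mathcal{O}\Ss$, whereas the internal locale is the whole functor $C\mapsto\mathcal{O}\uS_{\uA}(C)$ of Theorem 2.13, so one must invoke the equivalence $\mathbf{Loc}/\mathcal{C}(A)\simeq\mathbf{Loc}(\Sh(\mathcal{C}(A)))$ carefully enough that the single compatibility check of Step~2 really does promote $\Psi$ to a natural isomorphism of internal frames, rather than having to re-run the proof of Theorem 2.17 over each up-set $\uparrow C$. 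Getting the direction of the external description straight---remembering that $\mathcal{O}\uS_{\uA}(\C\cdot1)$ is the global-sections frame and that $\pi^{-1}$ runs from $\mathcal{O}\mathcal{C}(A)$ into it---is the other place where a sign-type error could creep in.
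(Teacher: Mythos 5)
Your proposal is correct and follows essentially the same route the paper intends: Corollary 2.18 is stated without a separate proof precisely because it is meant to follow from the frame isomorphism $\Psi$ of Theorem 2.17 together with the equivalence $\mathbf{Loc}/\mathcal{C}(A)\simeq\mathbf{Loc}(\Sh(\mathcal{C}(A)))$ already set up in Subsection 2.1.1. Your Step 2, checking that $\Psi$ commutes with the two external frame maps on the basic opens $\uparrow C$, is the one compatibility the paper leaves implicit, and you verify it correctly.
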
 
Before we get started with the proof of Theorem 2.17, we first need one more lemma.

\begin{lem}
Let $C$ be a commutative C*-algebra, and take $a,b\in C^{+}$. If $X^{C}_{a}=X^{C}_{b}$ then $D^{C}_{a}\vartriangleleft_{C}\{D^{C}_{b}\}$.
\end{lem}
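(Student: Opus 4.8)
The plan is to unwind the definition of $\vartriangleleft_C$ and reduce the claim to a statement about the Gelfand spectrum that follows from compactness. Recall that $D^C_a \vartriangleleft_C \{D^C_b\}$ means: for every $q \in \mathbb{Q}^+$ there is a finite $U_0 \subseteq \{D^C_b\}$ — hence either $U_0 = \emptyset$ or $U_0 = \{D^C_b\}$ — with $D^C_{a-q} \leq \bigvee U_0$. Since we may always take $U_0 = \{D^C_b\}$, what must be shown is simply that for every $q \in \mathbb{Q}^+$ we have $D^C_{a-q} \leq D^C_b$ in $L_C$, i.e. that $(a-q)^+ \precsim b^+ = b$ (using $b \in C^+$). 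So the whole statement reduces to: if $X^C_a = X^C_b$, then for every rational $q > 0$ there is $n \in \mathbb{N}$ with $(a-q)^+ \leq n\,b$.

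First I would translate everything through the Gelfand transform, identifying $C$ with $C(\Sigma_C)$ and $a, b$ with continuous real-valued functions $\hat a, \hat b \geq 0$ on the compact Hausdorff space $\Sigma_C$. The hypothesis $X^C_a = X^C_b$ says $\{\hat a > 0\} = \{\hat b > 0\}$, so $\hat a$ and $\hat b$ have the same cozero set, call it $Z^c$. Now fix $q > 0$. The function $(\hat a - q)^+$ is supported on the closed set $K = \{\hat a \geq q\} \subseteq \{\hat a > 0\} = \{\hat b > 0\}$, which is compact. On $K$, the continuous function $\hat b$ is strictly positive, so by compactness it attains a positive minimum $m > 0$. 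Then on $K$ we have $(\hat a - q)^+ \leq \|\hat a\|_\infty \leq (\|\hat a\|_\infty / m)\,\hat b$, and off $K$ the left side is zero while the right side is nonnegative; choosing $n \in \mathbb{N}$ with $n \geq \|\hat a\|_\infty / m$ gives $(\hat a - q)^+ \leq n\,\hat b$ pointwise on all of $\Sigma_C$. Translating back, $(a-q)^+ \leq nb$ in $C^+$, hence $D^C_{a-q} \leq D^C_b$ in $L_C$.

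Since $q \in \mathbb{Q}^+$ was arbitrary, and $D^C_{a-q} \leq D^C_b = \bigvee\{D^C_b\}$ for every such $q$, the definition of $\vartriangleleft_C$ is satisfied with $U_0 = \{D^C_b\}$ in each case, giving $D^C_a \vartriangleleft_C \{D^C_b\}$ as desired.

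The only mild subtlety — and the step I would be most careful about — is the passage from the abstract order-theoretic statement $(a-q)^+ \precsim b$ to the concrete one about continuous functions, and making sure the compactness argument is deployed at the right place: it is essential that we pass to $(a-q)^+$ rather than $a^+$ itself, because the support of $a^+$ need only be contained in the \emph{open} cozero set of $\hat b$, on which $\hat b$ need not be bounded away from zero, whereas the support of $(\hat a - q)^+$ is a \emph{compact} subset of that cozero set, where the minimum argument works. This is precisely why the definition of $\vartriangleleft_C$ (and of $D^C_a$) involves the rational shift, and the lemma is really just the observation that this shift buys exactly the compactness needed. Everything else is routine manipulation inside $C(\Sigma_C)$.
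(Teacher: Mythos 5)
Your proof is correct and follows essentially the same route as the paper's: pass to Gelfand transforms, observe that the rational shift makes $\{\hat a\geq q\}$ a compact subset of the common cozero set on which $\hat b$ attains a strictly positive minimum, and conclude $(a-q)^{+}\leq nb$ for suitable $n$. The only (cosmetic) difference is that you take the compact set directly as $\{\hat a\geq q\}$, whereas the paper constructs it as the complement of a union of neighbourhoods on which $(\hat a-q)^{+}$ vanishes; the substance of the argument is identical.
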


\begin{proof}
Let $\lambda\notin X^{C}_{a}$ and $q\in\mathbb{Q}^{+}$. Then $\lambda(b)=\lambda(a)=\lambda((a-q)^{+})=0$. Take any open neighboorhood $U_{\lambda}$ of $\lambda$ in $\Sigma_{C}$ such that for all $\lambda'\in U_{\lambda}$ we have $\lambda'((a-q)^{+})=0$. Thus on $U_{\lambda}$ we have $(a-q)^{+}\leq b$. Take such a neighborhood for every $\lambda\notin X^{C}_{a}$, and define $F=(\bigcup_{\lambda\notin X^{C}_{a}}U_{\lambda})^{c}$. This is a closed set in the compact space $\Sigma_{C}$, hence it is compact too. For every $\lambda\in F$ we have $\lambda(b)>0$. Because $F$ is compact, the Gelfand transform $\hat{b}$ assumes its minimum value on $F$. Define $\delta=\text{min}\{\lambda(b)\mid \lambda\in F\}$, then $\delta>0$. Also, define $\alpha=\text{max}\{\lambda((a-q)^{+})\mid\lambda\in F\}$. Choose any $n\in\mathbb{N}$ such that $n>\alpha/\delta$. Then
\begin{equation}
\forall_{\lambda\in F}\ \ (\hat{a}-q)^{+}(\lambda)\leq\alpha<n\delta\leq n\hat{b}(\lambda).
\end{equation}
In fact, we found that for all $\lambda\in\Sigma_{C}$ we have $(\hat{a}-q)^{+}(\lambda)< n\hat{b}(\lambda)$. Consequently, $D^{C}_{a-q}\leq D^{C}_{b}$, proving the lemma.
\end{proof}

Now we can prove Theorem 2.17.

\begin{proof}
The first thing to check is that $\Psi$ is well defined. Because of Lemma 2.14, we only need to check that $\Psi(\underline{U})$ is open in $\Sigma_{\ast}$. First of all, note that $\Psi(\underline{U})_{C}\in\mathcal{O}\Sigma_{C}$. Next, assume that $\lambda\in\Psi(\underline{U})_{C}$. Take any $C'\supseteq C$ and $\lambda'\in\rho^{-1}_{C'C}(\lambda)$. As $\lambda\in\Psi(\underline{U})_{C}$, there is a $D^{C}_{a}\in\underline{U}(C)$ such that $\lambda(a)>0$. By definition, $\underline{U}$ is a subobject of $\underline{L}_{\underline{A}}$, so that $D^{C'}_{a}\in\underline{U}(C')$. Furthermore $\lambda'(a)=\lambda'|_{C}(a)=\lambda(a)>0$, hence $\lambda'\in\Psi(\underline{U})_{C'}$. This completes the proof that $\Psi(\underline{U})$ is open in $\Sigma_{\ast}$. 

Next, we prove that $\Psi$ is injective. Assume that for $\underline{U},\underline{V}\in\mathcal{O}\uS_{\uA}(\mathbb{C}\cdot 1)$ we have $\Psi(\underline{U})=\Psi(\underline{V})$. Pick any $D^{C}_{a}\in\underline{U}(C)$. If $\lambda\in\Sigma_{C}$ is such that $\lambda(a)>0$, then $\lambda\in\Psi(\underline{U})_{C}=\Psi(\underline{V})_{C}$. This in turn implies that there is a $D^{C}_{b_{\lambda}}\in\underline{V}(C)$ such that $\lambda(b_{\lambda})>0$. For every $\lambda\in X^{C}_{a}$ there is a $b_{\lambda}\in C^{+}$ such that $D^{C}_{b_{\lambda}}\in\underline{V}(C)$ and $\lambda\in X^{C}_{b_{\lambda}}$. The opens $X^{C}_{b_{\lambda}}$ cover $X^{C}_{a}$ in $\Sigma_{C}$. Take any $q\in\Q^{+}$. Then the closure of $X^{C}_{(a-q)^{+}}$ is a closed subset of $X^{C}_{a}$. As $\Sigma_{C}$ is compact, there is a finite subcover $\{X^{C}_{b_{i}}\}_{i=1}^{n}$ of (the closure of) $X^{C}_{(a-q)^{+}}$. Using $a,b_{i}\in C^{+}$, it is clear that $X^{C}_{b_{i}}\cap X^{C}_{(a-q)^{+}}=X^{C}_{b_{i}\wedge (a-q)^{+}}$. It is easily checked that $D^{C}_{b_{i}\wedge (a-q)^{+}}\leq D^{C}_{b_{i}}$, and subsequently that
\begin{equation}
D^{C}_{b_{i}\wedge (a-q)^{+}}\vartriangleleft_{C}\{D^{C}_{b_{i}}\}\subseteq\underline{V}(C),
\end{equation}
which in turn implies that $D^{C}_{b_{i}\wedge (a-q)^{+}}\in\underline{V}(C)$ for all $i\in\{1,...,n\}$.

Next, suppose that $D^{C}_{x},D^{C}_{y}\in\underline{V}(C)$ for certain $x,y\in C^{+}$. Then
\begin{equation}
\forall q\in\mathbb{Q}^{+}\ \ D^{C}_{(x\vee y)-q}\leq D^{C}_{x\vee y}=D^{C}_{x}\vee D^{C}_{y}\ .
\end{equation}
We conclude that $D^{C}_{x\vee y}\vartriangleleft_{C}\{D^{C}_{x},D^{C}_{y}\}$, which implies $D^{C}_{x\vee y}\in\underline{V}(C)$. Define $d=\bigvee_{i=1}^{n}b_{i}\wedge (a-q)^{+}$. Then by the previous argument $D^{C}_{d}\in\underline{V}(C)$. Thus far we found that if $D^{C}_{a}\in\underline{U}(C)$, then for every $q\in\Q^{+}$ there is a $d\in C^{+}$ such that $X^{C}_{d}=X^{C}_{(a-q)^{+}}$ and $D^{C}_{d}\in\underline{V}(C)$. By Lemma 2.19, $D^{C}_{(a-q)^{+}}\vartriangleleft_{C}\{D^{C}_{d}\}\subseteq\underline{V}(C)$. For every $q\in\Q^{+}$ we found that $D^{C}_{(a-q)^{+}}\in\underline{V}(C)$. By the definition of $\drie_{C}$ it follows that $D^{C}_{a}\in\underline{V}(C)$. This proves that for every $C\in\mathcal{C}(A)$ we have $\underline{U}(C)\subseteq\underline{V}(C)$. The reverse inclusion can be found in exactly the same way, proving injectivity of $\Psi$.

Next we prove that $\Psi$ is surjective. Pick any $U\in\mathcal{O}\Sigma_{\ast}$.  Define
\begin{equation}
\underline{U}:C\to\{D^{C}_{a}\in L_{C}\mid X_{a}\subseteq U_{C}\}.
\end{equation}
It follows from the definition that $\Psi(\underline{U})\subseteq U$. If $\lambda\in U_{C}$, then there exists an $X^{C}_{b}$ such that $\lambda\in X^{C}_{b}\subseteq U_{C}$. This is because the $X^{C}_{b}$ make up a basis for  $\mathcal{O}\Sigma_{C}$ and $U_{C}$ is open in $\Sigma_{C}$. In this way we find that for all $C\in\mathcal{C}(A)$ we have $U_{C}\subseteq\Psi(\underline{U})_{C}$. We need to check that $\underline{U}$ as defined in (34) is an element of $\mathcal{O}\uS_{\uA}(\mathbb{C}\cdot 1)$. First we check that $\underline{U}$ is a subobject of $\underline{L}_{\underline{A}}$. Assume that $D^{C}_{a}\in\underline{U}(C)$. By definition, $X^{C}_{a}\subseteq U_{C}$. Thus if $\lambda\in\Sigma_{C}$ implies $\lambda(a)>0$, then $\lambda\in U_{C}$. Now suppose that $C'\supseteq C$, $\lambda'\in\Sigma_{C'}$ and $\lambda'(a)>0$. Then $\lambda'|_{C}(a)>0$, hence $\lambda'|_{C}\in U_{C}$. By definition of $\mathcal{O}\Sigma_{\ast}$, we have $\lambda'\in U_{C'}$. We found $X^{C'}_{a}\subseteq U_{C'}$, implying $D^{C'}_{a}\in\underline{U}(C')$. In short, if $D^{C}_{a}\in\underline{U}(C)$ and $C'\supseteq C$, then $D^{C'}_{a}\in\underline{U}(C')$, proving that $\underline{U}$ is a subobject of $\underline{L}_{\underline{A}}$.

It remains to show that
\begin{equation}
D^{C}_{a}\vartriangleleft_{C}\underline{U}(C)\Rightarrow D^{C}_{a}\in\underline{U}(C).
\end{equation}
Assume that $D^{C}_{a}\vartriangleleft_{C}\underline{U}(C)$. In other words, for every $q\in\mathbb{Q}^{+}$, there exist $D^{C}_{b_{1}},...,D^{C}_{b_{n}}\in\underline{U}(C)$ such that $D^{C}_{a-q}\leq D^{C}_{b}$, with $b=\bigvee_{i=1}^{n}b_{i}$. For every $i\in\{1,...,n\}$, one has $X^{C}_{b_{i}}\subset U_{C}$, implying that $X^{C}_{b}=\bigcup_{i=1}^{n}X^{C}_{b_{i}}\subseteq U_{C}$. By definition of $\underline{U}$ we find $D^{C}_{b}\in\underline{U}(C)$. The assumption translates to
\begin{equation}
\forall q\in\mathbb{Q}^{+}\ \ \exists D^{C}_{b}\in\underline{U}(C)\ \ \ D^{C}_{a-q}\leq D^{C}_{b}.
\end{equation}
If $D^{C}_{a-q}\leq D^{C}_{b}$, then there exists an $n\in\mathbb{N}$ such that $(a-q)^{+}\leq nb^{+}$. This can only hold if $X^{C}_{(a-q)}\subseteq X^{C}_{b}$. Hence
\begin{equation}
X^{C}_{(a-q)}\subseteq X^{C}_{b}\subseteq U_{C}\Rightarrow\  \forall q\in\mathbb{Q}^{+},\ X^{C}_{(a-q)}\subseteq U_{C}.
\end{equation}
Let $\lambda\in X^{C}_{a}$, so that $\lambda(a)>0$. There exists a $q\in\mathbb{Q}^{+}$ such that $\lambda((a-q)^{+})>0$. By definition of $X^{C}_{(a-q)^{+}}$, $\lambda\in X^{C}_{(a-q)^{+}}\subseteq U_{C}$. In short, $X^{C}_{a}\subseteq U_{C}$ and $D^{C}_{a}\in\underline{U}(C)$. This settles surjectivity of $\Psi$.

Next we prove that $\Psi$ preserves all finite meets. The empty meet case is easy. The empty meet in $\mathcal{O}\uS_{\uA}(\mathbb{C}\cdot1)$ is $\underline{L}_{\underline{A}}$. This is mapped by $\Psi$ to $\Sigma_{C}$, which is the empty meet of $\mathcal{O}\Sigma_{C}$. Now consider binary meets. The meet operation of $\mathcal{O}\uS_{\uA}(\mathbb{C}\cdot1)$ is inherited from $\text{Sub}(\underline{L}_{\underline{A}})$. Let $\underline{U},\underline{V}\in\mathcal{O}\uS_{\uA}(\mathbb{C}\cdot1)$. Then
\begin{equation}
(\underline{U}\wedge\underline{V})(C)=\underline{U}(C)\cap\underline{V}(C);
\end{equation}
\begin{equation}
\Psi(\underline{U}\wedge\underline{V})_{C}=\{\lambda\in\Sigma_{C}\mid \exists D^{C}_{a}\in\underline{U}(C)\cap\underline{V}(C),\text{s.t.}\  \lambda(a)>0\};
\end{equation}
\begin{equation}
(\Psi(\underline{U})\cap\Psi(\underline{V}))_{C}=\{\lambda\in\Sigma_{C}\mid\exists D^{C}_{a}\in\underline{U}(C),\exists D^{C}_{b}\in\underline{V}(C),\lambda(a),\lambda(b)>0\}.
\end{equation}
So clearly $\Psi(\underline{U}\wedge\underline{V})\subseteq\Psi(\underline{U})\cap\Psi(\underline{V})$. Take any $\lambda\in\Psi(\underline{U})_{C}\cap\Psi(\underline{V})_{C}$. Then there exists a $D^{C}_{a}\in\underline{U}(C)$ and a $D^{C}_{b}\in\underline{V}(C)$ such that $\lambda(a)>0$ and $\lambda(b)>0$. Define $c\in C^{+}$ by $c=a\wedge b$. Then $D^{C}_{c}\in\underline{U}(C),\underline{V}(C)$ and $\lambda(c)>0$ showing that $\lambda\in\Psi(\underline{U}\wedge\underline{V})_{C}$. This completes the proof that $\Psi$ preserves finite meets.

The last thing we need to prove is that $\Psi$ preserves all joins. Take a family of objects $\{\underline{U}_{i}\}_{i\in I}$ in $\mathcal{O}\uS_{\uA}(\mathbb{C}\cdot1)$. Then
\begin{equation}
\left(\bigvee_{i\in I} \underline{U}_{i}\right)=\{D^{C}_{a}\in L_{C}\mid D^{C}_{a}\vartriangleleft_{C}\bigcup_{i\in I}\underline{U}_{i}(C)\};
\end{equation}
\begin{align}
\left(\bigcup_{i\in I}\Psi(\underline{U}_{i})\right)_{C} &= \bigcup_{i\in I}\{\lambda\in\Sigma_{C}\mid \exists D^{C}_{a}\in\underline{U}_{i}(C)\ s.t.\ \lambda(a)>0\}\\
&= \{\lambda\in\Sigma_{C}\mid \exists i\in I\  \exists D^{C}_{a}\in\underline{U}_{i}(C)\ s.t.\ \lambda(a)>0\}\\
&= \{\lambda\in\Sigma_{C}\mid \exists D^{C}_{a}\in\bigcup_{i\in I}\underline{U}_{i}(C)\ s.t.\ \lambda(a)>0\}.
\end{align}
Combining this with the observation
\begin{equation}
\forall_{C\in\mathcal{C}(A)}\ \ \bigcup_{i\in I}\underline{U}_{i}(C)\subseteq\left(\bigvee_{i\in I}\underline{U}_{i}\right)(C),
\end{equation}
we find
\begin{equation}
\bigcup_{i\in I}\Psi(\underline{U}_{i})\subseteq\Psi\left(\bigvee_{i\in I}\underline{U}_{i}\right).
\end{equation}
Now suppose that $\lambda\in\Psi(\bigvee_{i\in I}\underline{U}_{i})_{C}$. Then there exists a $D^{C}_{a}\in(\bigvee_{i\in I}\underline{U}_{i})(C)$ such that $\lambda(a)>0$. The fact that $D^{C}_{a}\in(\bigvee_{i\in I}\underline{U}_{i})(C)$ means that
\begin{equation}
\forall q\in\mathbb{Q}^{+}\ \exists D^{C}_{b_{1}},...,D^{C}_{b_{n}}\in\bigcup_{i\in I}\underline{U}_{i}(C),\ \ \ D^{C}_{(a-q)^{+}}\leq D^{C}_{b},
\end{equation}
where $b=\bigvee_{i=1}^{n}b_{i}$. As before, it follows that $X^{C}_{(a-q)}\subseteq X^{C}_{b}$. Because $\lambda(a)>0$, we can pick a $q'\in\mathbb{Q}^{+}$ small enough such that $\lambda((a-q')^{+})>0$. In short, $\lambda\in X^{C}_{(a-q')}$. As for every $q\in\mathbb{Q}^{+}$ we know $X^{C}_{(a-q)}\subseteq X^{C}_{b}$, we know that $\lambda\in X^{C}_{(a-q')}\subseteq X^{C}_{b}=\bigcup_{i=1}^{n} X^{C}_{b_{i}}$. There is some $j\in\{1,...,n\}$ such that $\lambda\in X^{C}_{b_{j}}$ and $D^{C}_{b_{j}}\in\bigcup_{i\in I}\underline{U}_{i}(C)$. Hence
\begin{equation}
\lambda\in\{\lambda'\in\Sigma_{C} \mid\exists D^{C}_{x}\in\bigcup_{i\in I}\underline{U}_{i}(C)\  \text{s.t.} \ \lambda'(x)>0\}=\left(\bigcup_{i\in I}\Psi(\underline{U}_{i})\right)_{C}.
\end{equation}
This completes the proof that $\Psi$ preserves joins.
\end{proof}

In \cite{spit} Spitters computes the external description of the internal spectrum $\underline{\Sigma}_{\ast}$, but in terms of formal topology (i.e. sites) \cite{acz}, and by a different technique, namely by using iterated forcing.

From now on we will identify the spectrum $\uS_{\uA}$ with the internal locale $\uS_{\ast}$. By Corollary 2.18 it is harmless to do so.

\begin{cor}
The locale associated to the frame $\mathcal{O}\Sigma_{\ast}$ is spatial. 
\end{cor}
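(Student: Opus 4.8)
The plan is to read Corollary 2.20 off directly from the construction of $\mathcal{O}\Sigma_{\ast}$, with essentially no new work beyond Theorem 2.17. First I would recall what spatiality means: a frame $L$ (equivalently the locale it determines) is \emph{spatial} precisely when it is isomorphic to the frame $\mathcal{O}(X)$ of open sets of some topological space $X$; concretely this happens exactly when the points of $L$ --- completely prime filters, equivalently frame maps $L\to\{0,1\}$ --- separate its elements, \ie\ whenever $U\not\leq V$ in $L$ there is a point containing $U$ but not $V$.

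The key observation is then that $\mathcal{O}\Sigma_{\ast}$ is not an abstractly presented frame at all: by Definition 2.16 it is \emph{literally} the lattice of open subsets of the topological space $\Sigma_{\ast}=(\Sigma,\mathcal{O}\Sigma_{\ast})$. Hence $\mathcal{O}\Sigma_{\ast}=\mathcal{O}(\Sigma_{\ast})$ is the frame of opens of a genuine topological space, and the frame of opens of any space is spatial. Spelled out: given $U,V\in\mathcal{O}\Sigma_{\ast}$ with, say, $U\not\subseteq V$, choose an element $(C,\lambda)\in U\setminus V$ of the set $\Sigma$; then $\{W\in\mathcal{O}\Sigma_{\ast}\mid (C,\lambda)\in W\}$ is a completely prime filter --- it is upward closed, closed under finite intersections, and completely prime since $(C,\lambda)\in\bigcup_i W_i$ forces $(C,\lambda)\in W_i$ for some $i$ --- hence a point of the locale, and it contains $U$ but not $V$. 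This already proves the corollary as stated.

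Finally, to make the link with the covariant approach explicit, I would invoke Theorem 2.17 and Corollary 2.18: the isomorphism of frames $\Psi:\mathcal{O}\uS_{\uA}(\mathbb{C}\cdot 1)\to\mathcal{O}\Sigma_{\ast}$ identifies the external description of the internal Gelfand spectrum $\uS_{\uA}$ of the Bohrification with the topology of the space $\Sigma_{\ast}$; since spatiality is preserved under frame isomorphism, the external spectrum of $\uA$ is spatial as well. There is no genuine obstacle to overcome here: the entire content has been absorbed into Theorem 2.17, whose proof (exhibiting $\Psi$ together with its inverse, using Lemmas 2.14, 2.15 and 2.19) was the substantive step. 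Once that identification is in hand, spatiality of $\mathcal{O}\Sigma_{\ast}$ is immediate from the tautology that it arises as an honest topology rather than as an abstract frame presentation.
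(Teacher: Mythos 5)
Your proposal is correct and matches the paper's (implicit) argument exactly: Corollary 2.20 is stated without separate proof precisely because, by Theorem 2.17 and Corollary 2.18, the frame $\mathcal{O}\Sigma_{\ast}$ is by construction the topology of the space $\Sigma_{\ast}$, and the frame of opens of any topological space has enough points. Your explicit verification that each $(C,\lambda)\in\Sigma$ yields a completely prime filter separating $U$ from $V$ is exactly the content behind the paper's remark that the external description ``has enough points to be spatial.''
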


Despite the fact that the internal spectrum $\underline{\Sigma}_{\ast}$ may have no global points because of the Kochen-Specker Theorem, its external description has enough points to be spatial. In the next section we will see that $\Sigma_{\ast}$ need not be sober (let alone completely regular), because the locale associated to the frame $\mathcal{O}\Sigma_{\ast}$ may have points that do not correspond to elements of the space $\Sigma_{\ast}$.

\begin{cor}
The internal locale $\underline{\Sigma}_{\ast}$ is compact and completely regular.
\end{cor}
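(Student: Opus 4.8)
The plan is to obtain this as a direct corollary of the identification already established. By Corollary 2.18 the internal locale $\underline{\Sigma}_{\ast}$ in $[\mathcal{C}(A),\mathbf{Set}]$ is isomorphic to $\underline{\Sigma}_{\underline{A}}$, the spectrum of the internal unital commutative C*-algebra $\underline{A}$. Constructive Gelfand duality --- in the form of Banaschewski and Mulvey \cite{banmul,banmul2,banmul3}, made fully explicit by Coquand \cite{coq} and Coquand and Spitters \cite{coqsp} --- asserts that in any topos the Gelfand spectrum of a unital commutative C*-algebra is a compact completely regular locale. Applying this to $\underline{A}$ in $[\mathcal{C}(A),\mathbf{Set}]$ gives the statement at once, so the real content of this proof is just citing the appropriate half of Gelfand duality together with Corollary 2.18. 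This is the route I would actually record.

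For readers who prefer an argument that stays with the external description $\pi:\Sigma_{\ast}\to\mathcal{C}(A)$, I would first establish compactness exactly as in the proof of Corollary 2.7: show that $\pi$ is perfect and invoke Lemma 2.6. The fibre $\pi^{-1}(C)=\Sigma_{C}$ carries the Gelfand topology as its subspace topology --- every open of $\Sigma_{\ast}$ restricts to an open of $\Sigma_{C}$ by clause (1) of Definition 2.16, and conversely every $u\in\mathcal{O}\Sigma_{C}$ is the trace of the open of $\Sigma_{\ast}$ whose stalk over each $C'\supseteq C$ is $\rho_{C'C}^{-1}(u)$ and which is empty elsewhere --- so the fibres are compact Hausdorff. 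And $\pi$ is closed: by clause (2) of Definition 2.16 a closed set $X\subseteq\Sigma_{\ast}$ satisfies $\lambda'\in X_{C'}\Rightarrow\lambda'|_{C}\in X_{C}$ whenever $C\subseteq C'$, whence $\pi(X)^{c}=\{C\mid X_{C}=\emptyset\}$ is upward closed and therefore open in the Alexandrov topology on $\mathcal{C}(A)$.

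Regularity in this second route is then short via Lemma 2.9. Given $U\in\mathcal{O}\Sigma_{\ast}$ and $(C,\lambda)\in U$, one uses regularity of the single compact Hausdorff space $\Sigma_{C}$ to pick an open $u$ with $\lambda\in u\subseteq\overline{u}\subseteq U_{C}$, sets $u'=\Sigma_{C}\setminus\overline{u}$, and takes $N=\uparrow C$ together with the opens $V,W\in\mathcal{O}\Sigma_{\ast}$ whose stalks over $C'\supseteq C$ are $\rho_{C'C}^{-1}(u)$ and $\rho_{C'C}^{-1}(u')$ respectively (empty elsewhere); clause (2) of Definition 2.16 gives $\rho_{C'C}^{-1}(\overline{u})\subseteq U_{C'}$, so $\pi^{-1}(N)\subseteq U\cup W$, while $u\cap u'=\emptyset$ gives $V\cap W=\emptyset$ and $(C,\lambda)\in V$. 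The one genuinely external input I expect to be the main obstacle is the final upgrade from regularity to \emph{complete} regularity: Lemma 2.9 only addresses regularity, so one must invoke the (constructively valid) fact that a compact regular locale is completely regular, or alternatively replace the separation above by a cozero separation built from a Urysohn function on $\Sigma_{C}$. Given how cleanly constructive Gelfand duality states the result, the first route remains preferable.
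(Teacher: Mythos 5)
Your proposal matches the paper's proof almost exactly: the paper likewise first appeals to constructive Gelfand duality (compact completely regular spectra of unital commutative C*-algebras) and then remarks that Corollary 2.18 allows a direct check by applying Lemma 2.6 and Lemma 2.9 to $\pi:\Sigma_{\ast}\to\mathcal{C}(A)$. If anything you are more careful than the paper, which silently elides the step from regularity (all that Lemma 2.9 delivers) to complete regularity; your observation that one must invoke the fact that a compact regular locale is completely regular is a genuine, if minor, improvement.
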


\begin{proof}
We already knew this for general commutative unital C*-algebras in a topos from constructive Gelfand duality, which establishes a duality between unital commutative C*-algebras and compact completely regular locales\footnote{In $\Set$ completely regular locales are equivalent to compact Hausdorff spaces (using the axiom of choice).} \cite{banmul, banmul2, banmul3}. However, Corollary 2.18 presents a way to check compactness and complete regularity directly. Indeed, Lemma 2.6 and Lemma 2.9 applied to the projection 

\noindent $\pi:\Sigma_{\ast}\to\mathcal{C}(A)$ prove the corollary.
\end{proof}

Consider the spectrum $\underline{\Sigma}_{\ast}=\underline{\Sigma}_{\ast}(\underline{A})$ for an $n$-level system $A=M_{n}(\mathbb{C})$. For every $C\in\mathcal{C}(A)$ the Gelfand spectrum $\mathcal{O}\Sigma_{C}$ is isomorphic to $\mathcal{P}(C)$ as a frame, where $\mathcal{P}(C)$ is the set of projection operators in $C$, partially ordered as $p\leq q$ if $p\C^{n}\subseteq q\C^{n}$.  Let $C\subseteq C'$ in $\mathcal{C}(A)$. Take $U_{C}\in\mathcal{O}\Sigma_{C}$ corresponding to the projection operator $P_{C}\in C$ and $U_{C'}\in\mathcal{O}\Sigma_{C'}$ corresponding to the projection operator $P_{C'}\in C'$. We have $\rho^{-1}_{C'C}(U_{C})\subseteq U_{C'}$ if and only if $P_{C'}\geq P_{C}$. This demonstrates that for an $n$-level system there is a bijection
\begin{equation}
\mathcal{O}\Sigma_{\ast}\cong\{S:\mathcal{C}(A)\to\mathcal{P}(A)\mid S(C)\in\mathcal{P}(C),\ C\subseteq C'\Rightarrow S(C)\leq S(C')\}.
\end{equation}
This description in terms of maps $S$ is exactly the externalization of $\mathcal{O}\underline{\Sigma}_{\ast}$ for an $n$-level system given in \cite{chls}. It is a straightforward exercise to verify that the Heyting algebra structure given in \cite{chls} coincides with the Heyting algebra structure of $\mathcal{O}\Sigma_{\ast}$.

Corollary 2.18 gives an explicit description of the internal spectrum of $\underline{A}$, and of the opens of this locale. In the covariant approach, the opens of the spectrum represent propositions about the system under investigation. This might seem a good time to take a closer look at the opens of the spectrum and see if we can understand them physically. However, we will wait until the end of Section 3 before we reflect on the physical interpretation of the covariant approach. The reason is that in Section 3 we use the daseinisation of self-adjoint operators from the contravariant approach to define elementary propositions $[a\in\Delta]$ in the covariant approach (as certain open subsets of $\Sigma_{\ast}$). For the remainder of Section 2 we will continue the investigation of the spaces $\Sigma_{\ast}$ and $\Sigma^{\ast}$.

\subsection{Sobriety of the Quantum State Spaces}

In this subsection we investigate the sobriety of the spaces $\Sigma^{\ast}$ and $\Sigma_{\ast}$. Let $\tilde{\Sigma}^{\ast}$ and $\tilde{\Sigma}_{\ast}$ denote the locales associated to the frames $\mathcal{O}\Sigma^{\ast}$ and $\mathcal{O}\Sigma_{\ast}$ respectively\footnote{We write tildes in order to distinguish the locales from the spaces $\Sigma^{\ast}$ and $\Sigma_{\ast}$}. As pointed out near the end of the previous subsection, the external description of the spectrum $\uS_{\uA}$ has been computed in \cite{spit} in terms of formal topology. In particular, the points of the external description $\uS_{\uA}(\C\cdot 1)$, called `consistent ideals of partial measurement outcomes' there, are given explicitly. The points of the locale $\tilde{\Sigma}_{\ast}$, presented in this subsection (Lemma 2.24), found using the space $\Sigma_{\ast}$, agree with these `consistent ideals'. This must be so, for it follows from Corollary 2.18 that $\tilde{\Sigma}_{\ast}\cong\uS_{\uA}(\C\cdot1)$ as locales.

We start with a small summary of the previous two subsections. In Subsection 2.1 we discussed the locale $\underline{\Sigma}^{\ast}$ internal to $[\mathcal{V}(A)\op,\mathbf{Set}]$. This locale is compact, but generally not regular, and typically has no global points because of the Kochen-Specker Theorem. It is connected to the contravariant approach as follows. By Theorem 2.2, there exists an injective morphism of complete Heyting algebras from the complete Heyting algebra of propositions $\mathcal{O}_{cl}\underline{\Sigma}$ to the externalization $\mathcal{O}\Sigma^{\ast}$ of the internal frame $\mathcal{O}\underline{\Sigma}^{\ast}$ (seen as a complete Heyting algebra). The externalized locale $\Sigma^{\ast}$ is spatial, but in general it is neither compact nor regular. Recall that its frame is given as follows:

\begin{dork}
Let $\Sigma=\{(C,\lambda) \mid C\in\mathcal{V}(A),\lambda\in\Sigma_{C}\}$. Then $U\in\mathcal{O}\Sigma^{\ast}$ iff
\begin{enumerate}
\item $\forall C\in\mathcal{V}(A)\ U_{C}\in\mathcal{O}\Sigma_{C}$.
\item If $\lambda\in U_{C}$ and $C'\subseteq C$, then $\lambda |_{C'}\in U_{C'}$.
\end{enumerate}
\end{dork}

In Subsection 2.2 we discussed the locale $\underline{\Sigma}_{\ast}$ internal to the topos $[\mathcal{C}(A),\mathbf{Set}]$. This locale is compact, regular, and may have no points because of the Kochen-Specker Theorem either. We showed that the the externalization of the (internal) Gelfand spectrum of the Bohrified C*-algebra $\underline{A}$ is isomorphic to $\mathcal{O}\Sigma_{\ast}$, the externalization of $\mathcal{O}\underline{\Sigma}_{\ast}$. This is an isomorphism of frames. The external locale $\Sigma_{\ast}$ is spatial and compact,\footnote{Let $\{U_{i}\}_{i\in I}$ be a cover of $\Sigma_{\ast}$. Then there exists a $j\in I$ such that $(\C\cdot1,\ast)\in U_{j}$. Because $U_{j}$ is open $U_{j}=\Sigma_{\ast}$. The set $\{U_{j}\}$ trivially gives a finite subcover.} but may in general not be regular. We recall its frame for convenience:

\begin{dork}
Let $\Sigma=\{(C,\lambda) \mid C\in\mathcal{V}(A),\lambda\in\Sigma_{C}\}$. Then $U\in\mathcal{O}\Sigma_{\ast}$ iff
\begin{enumerate}
\item $\forall C\in\mathcal{C}(A)\ U_{C}\in\mathcal{O}\Sigma_{C}$.
\item If $\lambda\in U_{C}$ and $C\subseteq C'$, then $\lambda'\in U_{C'}$ whenever $\lambda' |_{C}\in U_{C}$.
\end{enumerate}
\end{dork}

After this recap we start with the investigation of the sobriety of the spaces $\Sigma^{\ast}$ and $\Sigma_{\ast}$. We start with the  space $\Sigma_{\ast}$ of the covariant approach. A point of $\Sigma_{\ast}$ by definition corresponds to a frame map $p:\mathcal{O}\Sigma_{\ast}\to\underline{2}$. If we define $U$ to be the union of all $V\in\mathcal{O}\Sigma_{\ast}$ such that $p(V)=0$, then $U\in\mathcal{O}\Sigma_{\ast}$ is the largest open set mapped to 0 by $p$. This can be translated to the following condition. If there are $U_{1},U_{2}\in\mathcal{O}\Sigma_{\ast}$ such that $U=U_{1}\cap U_{2}$, then either $U_{1}=U$ or $U_{2}=U$. Switching to complements, one can equivalently look at irreducible closed sets. These are sets $F$ that are closed with respect to $\mathcal{O}\Sigma_{\ast}$ such that if there exist closed sets $F_{1}$ and $F_{2}$ with the property $F=F_{1}\cup F_{2}$, then either $F=F_{1}$ or $F=F_{2}$.

\begin{lem}
Let $F$ be closed in $\Sigma_{\ast}$. Then $F$ is irreducible if and only if the following two conditions are satisfied:
\begin{enumerate}
\item $\forall C\in\mathcal{C}(A)$: if $F_{C}\neq\emptyset$, then $F_{C}$ is a singleton.
\item $\forall C_{1},C_{2}\in\mathcal{C}(A)$: if $F_{C_{1}}$ and $F_{C_{2}}$ are both nonempty, then there exists a $C_{3}\in\mathcal{C}(A)$ such that $C_{1},C_{2}\subseteq C_{3}$ and $F_{C_{3}}$ is nonempty.
\end{enumerate}
\end{lem}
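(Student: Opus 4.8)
The plan is to first make the closed subsets of $\Sigma_{\ast}$ explicit. Taking complements in Definition 2.16, a set $F\subseteq\Sigma$ is closed in $\Sigma_{\ast}$ if and only if $F_{C}$ is closed in $\Sigma_{C}$ for every $C\in\mathcal{C}(A)$ and, in addition, $C\subseteq C'$ together with $\lambda'\in F_{C'}$ forces $\lambda'|_{C}\in F_{C}$; in other words, closed sets are ``downward closed under restriction''. We will also use that each $\Sigma_{C}$ is compact Hausdorff, and that (by Corollary 2.18) $\pi:\Sigma_{\ast}\to\mathcal{C}(A)$ is continuous for the Alexandrov topology on $\mathcal{C}(A)$. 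If $F=\emptyset$ both sides of the asserted equivalence hold trivially (the irreducible closed sets that matter are nonempty), so we may assume $F\neq\emptyset$.

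For the implication ``conditions $1$ and $2$ $\Rightarrow$ $F$ irreducible'', suppose $F=F_{1}\cup F_{2}$ with $F_{1},F_{2}$ closed and $F_{1}\neq F\neq F_{2}$. Pick $(C_{1},\lambda_{1})\in F\setminus F_{1}$ and $(C_{2},\lambda_{2})\in F\setminus F_{2}$. Then $F_{C_{1}}$ and $F_{C_{2}}$ are nonempty, so condition $2$ gives $C_{3}\supseteq C_{1},C_{2}$ with $F_{C_{3}}\neq\emptyset$, and condition $1$ makes this a singleton $F_{C_{3}}=\{\mu\}$. Since $F$ is downward closed under restriction, $\mu|_{C_{1}}\in F_{C_{1}}$ and $\mu|_{C_{2}}\in F_{C_{2}}$, and since $F_{C_{1}},F_{C_{2}}$ are singletons (condition $1$) we get $\mu|_{C_{1}}=\lambda_{1}$ and $\mu|_{C_{2}}=\lambda_{2}$. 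Now $(C_{3},\mu)$ lies in $F_{1}$ or in $F_{2}$; if in $F_{1}$, downward closure of $F_{1}$ yields $\lambda_{1}=\mu|_{C_{1}}\in F_{1,C_{1}}$, contradicting $(C_{1},\lambda_{1})\notin F_{1}$, and symmetrically the case $(C_{3},\mu)\in F_{2}$ contradicts $(C_{2},\lambda_{2})\notin F_{2}$. Hence no such decomposition exists.

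For the converse, assume $F$ is irreducible. Condition $2$ is the easy half: if $F_{C_{1}},F_{C_{2}}\neq\emptyset$ but no common upper bound $C_{3}$ has $F_{C_{3}}\neq\emptyset$, set $F_{i}=F\cap\pi^{-1}(\mathcal{C}(A)\setminus{\uparrow}C_{i})$; since ${\uparrow}C_{i}$ is open and $\pi$ continuous, each $F_{i}$ is closed, the assumption says precisely that $F=F_{1}\cup F_{2}$, and any $(C_{i},\lambda_{i})\in F$ with $\lambda_{i}\in F_{C_{i}}$ lies in $F\setminus F_{i}$, so $F_{i}\neq F$ — contradiction. For condition $1$, suppose $\lambda_{1}\neq\lambda_{2}$ both lie in $F_{C}$. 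Choose disjoint opens $V_{1}\ni\lambda_{1}$, $V_{2}\ni\lambda_{2}$ in $\Sigma_{C}$ and set $K_{1}=\Sigma_{C}\setminus V_{2}$, $K_{2}=\Sigma_{C}\setminus V_{1}$, so $K_{1},K_{2}$ are closed in $\Sigma_{C}$, $K_{1}\cup K_{2}=\Sigma_{C}$, $\lambda_{2}\notin K_{1}$, $\lambda_{1}\notin K_{2}$. The key move is to lift each $K_{j}$ to a closed subset of $\Sigma_{\ast}$:
\[
\widehat{K_{j}}=\{(C',\lambda')\mid C\not\subseteq C'\}\cup\{(C',\lambda')\mid C\subseteq C',\ \lambda'|_{C}\in K_{j}\}.
\]
One checks that its complement $\{(C',\lambda')\mid C\subseteq C',\ \lambda'|_{C}\notin K_{j}\}$ satisfies the two clauses of Definition 2.16 — the first by continuity of the restriction maps $\rho_{C'C}$, the second because restricting $\lambda'$ further down does not change $\lambda'|_{C}$ — so $\widehat{K_{j}}$ is closed. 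Putting $F_{j}=F\cap\widehat{K_{j}}$, the equality $K_{1}\cup K_{2}=\Sigma_{C}$ forces $F\subseteq\widehat{K_{1}}\cup\widehat{K_{2}}$ and hence $F=F_{1}\cup F_{2}$, while $(C,\lambda_{2})\in F\setminus\widehat{K_{1}}$ and $(C,\lambda_{1})\in F\setminus\widehat{K_{2}}$ give $F_{1}\neq F\neq F_{2}$, contradicting irreducibility. So $F_{C}$ is empty or a singleton.

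\textbf{Main obstacle.} The delicate point is the condition $1$ half of the forward direction: one must manufacture an honest closed decomposition of all of $\Sigma_{\ast}$ out of a mere Hausdorff separation of two points inside a single slice $\Sigma_{C}$, and this is exactly where the lift $\widehat{K_{j}}$ — which has to be defined correctly also at contexts below $C$ or incomparable to $C$ — needs to be checked against both defining clauses of $\mathcal{O}\Sigma_{\ast}$. The remaining parts (the closed-set description of $\Sigma_{\ast}$, the reverse direction, and condition $2$) are straightforward manipulations of those two clauses and the continuity of $\pi$.
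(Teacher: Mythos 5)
Your proof is correct and follows essentially the same route as the paper: the same characterization of closed sets as slicewise-closed and downward closed under restriction, the same decomposition $F_i=F\cap\pi^{-1}(\mathcal{C}(A)\setminus\uparrow C_i)$ for the necessity of condition 2, the same lift of a Hausdorff separation of $\Sigma_C$ to a closed decomposition of $F$ for the necessity of condition 1 (the paper phrases it as reducibility of the non-singleton closed set $F_C$ and sets $(F_i)_{C'}=\rho^{-1}_{C'C}(F_{iC})\cap F_{C'}$ above $C$, which is your $\widehat{K_j}$ construction in slightly different clothing), and the same use of the directedness of $\{C\mid F_C\neq\emptyset\}$ plus singleton fibres for sufficiency.
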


\begin{proof}
By definition of $\mathcal{O}\Sigma_{\ast}$, a set $F$ is closed iff the following two conditions are satisfied. First, for every $C\in\mathcal{C}(A)$ the set $F_{C}$ is closed in $\Sigma_{C}$. Second, if $\lambda\in F_{C}$ and $D\subseteq C$, then $\lambda|_{D}\in F_{D}$. 

Conversely, assume that there is a $C\in\mathcal{C}(A)$ such that $F_{C}$ has more than one element. The set $F_{C}$ is reducible in $\Sigma_{C}$, so there are closed $F_{1 C},F_{2 C}\subset F_{C}$ with the property $F_{1 C}\cup F_{2 C}=F_{C}$. Define the sets $F_{i}$, $i=1,2$ as follows. For any $C'\supseteq C$ take $(F_{i})_{C'}=\rho^{-1}_{C'C}(F_{i C})\cap F_{C'}$. For all other $C'\in\mathcal{C}(A)$ take $(F_{i})_{C'}=F_{C'}$. It is easily verified that the sets $F_{i}$ are closed in $\Sigma_{\ast}$, that $F_{i}\subset F$, and that $F_{1}\cup F_{2}=F$. Hence the first condition of the lemma is a necessary condition for irreducibility.

Assume that there are contexts $C_{1},C_{2}\in\mathcal{C}(A)$ such that $F_{C_{1}}$ and $F_{C_{2}}$ are nonempty and that for every $C'\in\mathcal{C}(A)$ with the property $C_{1},C_{2}\subseteq C'$ we have $F_{C'}=\emptyset$. In that case, define $F_{i}$ with $i\in\{1,2\}$, as follows. If $C'\supseteq C_{i}$ then $(F_{i})_{C'}=\emptyset$. For all other $C'\in\mathcal{C}(A)$ take $(F_{i})_{C'}=F_{C'}$. Again this produces closed sets $F_{1},F_{2}\subset F$ such that $F=F_{1}\cup F_{2}$. Thus the second condition in the lemma has also been shown to be necessary. 

Assume that $F$ satisfies both conditions of the lemma. Let $F=F_{1}\cup F_{2}$ and $F\neq F_{2}$. Then there is a $\lambda\in F_{C}$ such that $\lambda\in(F_{1})_{C}$ and $\lambda\notin(F_{2})_{C}$. Pick any $\lambda'\in F_{C'}$. By assumption, there is a context $C''\in\mathcal{C}$ such that $\lambda''\in F_{C''}$ and $C,C' \subseteq C''$. Evidently, $\lambda=\lambda''|_{C}$ and $\lambda'=\lambda''|_{C'}$.  As  $\lambda\notin(F_{2})_{C}$ and $F_{2}$ is closed, we find  $\lambda''\notin(F_{2})_{C''}$. As $F=F_{1}\cup F_{2}$, one has $\lambda''\in(F_{1})_{C''}$. Using that $F_{1}$ is closed, we find that $\lambda'\in(F_{1})_{C'}$. Thus $F\subseteq F_{1}$, proving irreducibility.
\end{proof}

\begin{tut}
Let $\mathcal{C}(A)$ satisfy the following \textbf{ascending chain property}: every chain of contexts
\begin{equation}
C_{1}\subseteq C_{2}\subseteq C_{3}\subseteq ...,
\end{equation}
stabilizes, in the sense that there exists an $n\in\mathbb{N}$ such that for all $m\geq n$ we have $C_{m+1}=C_{m}$. Then the space $\Sigma_{\ast}(A)$ is sober. In particular, if $A$ is finite-dimensional, then $\Sigma_{\ast}(A)$ is sober.
\end{tut}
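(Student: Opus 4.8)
The plan is to combine the characterisation of the irreducible closed subsets of $\Sigma_{\ast}$ from Lemma 2.24 with an explicit description of the closures of points, invoking the ascending chain property at exactly the one place where a maximal element is needed.

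First I would identify the closure of a point. Recall from the proof of Lemma 2.24 that a subset $F\subseteq\Sigma_{\ast}$ is closed iff each $F_{C}$ is closed in $\Sigma_{C}$ and, whenever $\lambda\in F_{C}$ and $D\subseteq C$, one has $\lambda|_{D}\in F_{D}$. Since each $\Sigma_{C}$ is compact Hausdorff, singletons are closed, so the smallest closed set containing $(C_{0},\lambda_{0})$ is
\[
\overline{\{(C_{0},\lambda_{0})\}}=\{(D,\lambda_{0}|_{D})\mid D\subseteq C_{0}\},
\]
whose component at a context $C$ is $\{\lambda_{0}|_{C}\}$ if $C\subseteq C_{0}$ and $\emptyset$ otherwise. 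In particular the set of $C$ at which this closed set is nonempty is the principal downset of $C_{0}$; recovering $C_{0}$ from this downset, and then $\lambda_{0}$ from the component at $C_{0}$, shows that two points with the same closure coincide, which already settles uniqueness of generic points.

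For existence, let $F$ be an irreducible closed set, which is nonempty, and set $S=\{C\in\mathcal{C}(A)\mid F_{C}\neq\emptyset\}$. Closedness of $F$ makes $S$ a downset, and condition (2) of Lemma 2.24 makes $S$ upward directed. The ascending chain property forbids infinite strictly increasing chains in $\mathcal{C}(A)$, so every element of $S$ lies below some maximal element of $S$; since $S$ is directed, such a maximal element is in fact the maximum $C_{0}$ of $S$. By condition (1) of Lemma 2.24, $F_{C_{0}}=\{\lambda_{0}\}$ for a unique $\lambda_{0}$, and I claim $F=\overline{\{(C_{0},\lambda_{0})\}}$: the inclusion $\supseteq$ is immediate from $\lambda_{0}\in F_{C_{0}}$ and closedness, while for $\subseteq$, if $\lambda\in F_{C}$ then $C\in S$, so $C\subseteq C_{0}$, and since $\lambda_{0}|_{C}\in F_{C}$ (by $\supseteq$) and $F_{C}$ is a singleton (by condition (1)) we get $\lambda=\lambda_{0}|_{C}$. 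Finally, if $A$ is finite-dimensional then every $C\in\mathcal{C}(A)$ is finite-dimensional with $\dim C\leq\dim A$, so an ascending chain of contexts has non-decreasing, bounded dimensions and hence stabilises; thus $\mathcal{C}(A)$ has the ascending chain property and the general statement applies.

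The only delicate point, and the sole place where the hypothesis enters, is the passage from ``$S$ is a directed downset'' to ``$S$ has a maximum''. Without the ascending chain property $S$ can be a directed downset coming from an unbounded chain of contexts and have no maximal element, in which case no point of $\Sigma_{\ast}$ is generic for $F$ — which is exactly why $\Sigma_{\ast}$ need not be sober in general. Everything else is routine bookkeeping with the two conditions of Lemma 2.24 and the explicit form of $\overline{\{(C_{0},\lambda_{0})\}}$.
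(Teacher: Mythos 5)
Your proposal is correct and follows essentially the same route as the paper: both use Lemma 2.24 to see that the set of contexts where $F$ is nonempty is a directed downset, use the ascending chain property to extract its (necessarily unique) maximal element $C_{0}$, and identify $F$ as the closure of the single point of $F_{C_{0}}$. Your write-up is in fact slightly more complete, since you verify explicitly the form of $\overline{\{(C_{0},\lambda_{0})\}}$ and the uniqueness of the generic point, which the paper leaves implicit.
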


\begin{proof}
Take any totally ordered subset of $\mathcal{Q}_{B}=\{C\in\mathcal{C}(A) | F_{C}\neq\emptyset\}$, where the order is given by inclusion. Then the ascending chain condition ensures that there is an upper bound. An application of Zorn's Lemma tells us that $\mathcal{Q}_{B}$ has a maximal element. By Lemma 2.24(2), the set $\mathcal{Q}_{B}$ is upwards directed so this maximal element must be unique. If $C$ is this maximal element and $F_{C}=\{\lambda\}$, then we recognize $F$ as the closure of $(C,\lambda)$. For C*-algebras where the ascending chain condition applies, such as $n$-level systems, and assuming the axiom of choice, the points of the locale $\Sigma_{\ast}$ correspond to the points of the topological space $\Sigma_{\ast}$. 
\end{proof}

Next we consider the points of the locale $\tilde{\Sigma}^{\ast}$, associated to the frame $\mathcal{O}\Sigma^{\ast}$. 

\begin{lem}
Let $F$ be an irreducible closed subset of $\Sigma^{\ast}$. Suppose there is a context $C\in\mathcal{V}(A)$ such that for all $D\subset C$ we have $F_{D}=\emptyset$, while $F_{C}\neq\emptyset$. Then there is a unique $\lambda\in\Sigma_{C}$ such that $F$ is the closure of $(C,\lambda)$.
\end{lem}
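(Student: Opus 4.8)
Throughout write $\uparrow\! C=\{D\in\mathcal{V}(A)\mid C\subseteq D\}$, and for $(D,\mu)\in\Sigma^{\ast}$ with $C\subseteq D$ write $\mu|_{C}=\rho_{CD}(\mu)\in\Sigma_{C}$ for the restriction. Recall that each $\Sigma_{C}$ is compact Hausdorff (being the Gelfand spectrum of the unital commutative C*-algebra $C$) and each $\rho_{CD}$ is continuous, and recall from Definition 2.21 that a subset $F\subseteq\Sigma^{\ast}$ is closed precisely when every fibre $F_{C}$ is closed in $\Sigma_{C}$ and $F$ is \emph{upward saturated}: whenever $\mu\in F_{D}$, $D\subseteq C$ and $\lambda\in\Sigma_{C}$ restricts to $\mu$, then $\lambda\in F_{C}$. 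A first, purely formal, remark: for $\lambda\in\Sigma_{C}$ the closure of the point $(C,\lambda)$ is
\[
\overline{\{(C,\lambda)\}}=\{(D,\mu)\mid C\subseteq D,\ \mu|_{C}=\lambda\},
\]
since the right-hand side is closed (fibre $\rho_{CD}^{-1}(\{\lambda\})$ over $D\supseteq C$, empty elsewhere, and manifestly upward saturated) and every closed set containing $(C,\lambda)$ must, by upward saturation, contain all these $(D,\mu)$. The plan is to prove, in this order: (1) the \emph{support} $\mathcal{Q}_{F}:=\{D\mid F_{D}\neq\emptyset\}$ equals $\uparrow\! C$; (2) the map $(D,\mu)\mapsto\mu|_{C}$ is constant on $F$, so $F_{C}$ is a singleton $\{\lambda\}$ and $F\subseteq\overline{\{(C,\lambda)\}}$; (3) $\overline{\{(C,\lambda)\}}\subseteq F$, whence equality, with uniqueness of $\lambda$ immediate from $F_{C}=\{\lambda\}$. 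Steps (1) and (2) are each instances of the same move used in Lemma 2.24: manufacture from an assumed failure a decomposition $F=F_{1}\cup F_{2}$ into two \emph{proper closed} subsets, contradicting irreducibility.

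For step (1): $\uparrow\! C\supseteq\mathcal{Q}_{F}$ is what we must show (we will not need the reverse inclusion). Suppose there is $C'\in\mathcal{Q}_{F}$ with $C\not\subseteq C'$. Define $F_{1}$ by $(F_{1})_{D}=F_{D}$ if $D\not\subseteq C$ and $(F_{1})_{D}=\emptyset$ otherwise, and $F_{2}$ by $(F_{2})_{D}=F_{D}$ if $D\not\subseteq C'$ and $\emptyset$ otherwise. Each is closed: the only point to check is upward saturation, and if $D\not\subseteq C$ and $D\subseteq D'$ then $D'\not\subseteq C$, so $(F_{1})_{D'}=F_{D'}$ and closedness of $F$ does the rest; likewise for $F_{2}$. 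One has $F=F_{1}\cup F_{2}$, because a context $D$ with $D\subseteq C$ and $D\subseteq C'$ must be a proper subalgebra of $C$ (equality $D=C$ would force $C\subseteq C'$), hence $F_{D}=\emptyset$ by hypothesis. Finally $F_{1}\subsetneq F$ because $(F_{1})_{C}=\emptyset\neq F_{C}$, and $F_{2}\subsetneq F$ because $(F_{2})_{C'}=\emptyset\neq F_{C'}$. This contradicts irreducibility, so $\mathcal{Q}_{F}\subseteq\uparrow\! C$.

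For step (2): suppose $(D_{1},\mu_{1}),(D_{2},\mu_{2})\in F$ with $\mu_{1}|_{C}\neq\mu_{2}|_{C}$ in $\Sigma_{C}$. Using that $\Sigma_{C}$ is Hausdorff, choose disjoint opens around these two points and take their complements $A_{C},B_{C}$: these are closed in $\Sigma_{C}$ with $A_{C}\cup B_{C}=\Sigma_{C}$, $\mu_{1}|_{C}\notin A_{C}$ and $\mu_{2}|_{C}\notin B_{C}$. Define $F_{1}$ by $(F_{1})_{D}=F_{D}\cap\rho_{CD}^{-1}(A_{C})$ for $D\supseteq C$ and $\emptyset$ otherwise, and $F_{2}$ similarly with $B_{C}$. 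Each is closed: upward saturation works because membership of $\mu$ in $(F_{1})_{D}$ forces $D\supseteq C$, any $D'\supseteq D$ then satisfies $D'\supseteq C$, an extension $\mu'$ of $\mu$ lies in $F_{D'}$ by closedness of $F$, and $\mu'|_{C}=\mu|_{C}\in A_{C}$. We get $F=F_{1}\cup F_{2}$: over $D\supseteq C$ this is $F_{D}\cap\rho_{CD}^{-1}(A_{C}\cup B_{C})=F_{D}$, and over $D\not\supseteq C$ both sides are empty by step (1). It is proper: $(F_{1})_{D_{1}}$ misses $\mu_{1}$, and $(F_{2})_{D_{2}}$ misses $\mu_{2}$. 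Contradiction. Hence $(D,\mu)\mapsto\mu|_{C}$ takes a single value $\lambda$ on $F$; in particular $F_{C}=\{\lambda\}$ (nonempty by hypothesis), and every $(D,\mu)\in F$ has $D\supseteq C$ (step (1)) and $\mu|_{C}=\lambda$, i.e.\ $F\subseteq\overline{\{(C,\lambda)\}}$.

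Step (3) is the easy inclusion: since $\lambda\in F_{C}$ and $F$ is upward saturated, $(D,\mu)\in\overline{\{(C,\lambda)\}}$ (so $C\subseteq D$, $\mu|_{C}=\lambda$) implies $\mu\in F_{D}$; thus $\overline{\{(C,\lambda)\}}\subseteq F$, and with step (2) we conclude $F=\overline{\{(C,\lambda)\}}$. Uniqueness of $\lambda$ holds because the fibre of $\overline{\{(C,\lambda)\}}$ over $C$ is exactly $\{\lambda\}$. The step I expect to be delicate is getting the decompositions $F=F_{1}\cup F_{2}$ to consist of genuinely \emph{closed} sets: naively splitting the fibre $F_{C}$ into two proper closed pieces does not extend to a closed decomposition of $F$, both because contexts incomparable to (or below) $C$ may carry points of $F$ and because $\mu\mapsto\mu|_{C}$ is not a closure operation, so there is no a priori control on how fibres over larger contexts meet the split. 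Pinning down the support as $\uparrow\! C$ first, and then splitting along closed sets $A_{C},B_{C}$ that cover \emph{all} of $\Sigma_{C}$ rather than just $F_{C}$, is exactly what removes these obstructions.
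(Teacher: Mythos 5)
Your proof is correct: every set you manufacture is genuinely closed in $\Sigma^{\ast}$ (fibrewise closed and upward saturated) and genuinely proper, so both appeals to irreducibility are sound, and your explicit identification of $\overline{\{(C,\lambda)\}}$ with $\{(D,\mu)\mid C\subseteq D,\ \mu|_{C}=\lambda\}$ spells out a point the paper dispatches with ``clearly''. The route does differ from the paper's in the choice of decompositions. The paper first writes $F=F_{1}\cup F_{2}$ with $F_{1}$ equal to $F$ with its fibre over $C$ deleted and $F_{2}$ the full preimage $\rho^{-1}_{C'C}(F_{C})$ over every $C'\supseteq C$ (empty elsewhere); irreducibility forces $F=F_{2}$, exhibiting $F$ at once as the upward saturation of its fibre $F_{C}$, which subsumes your step (1) and most of your step (3). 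It then splits the fibre $F_{C}$ itself into two proper closed pieces (a closed subset of the Hausdorff space $\Sigma_{C}$ with two points is reducible) and pulls both back over $\uparrow C$; this naive fibre-splitting \emph{is} a legitimate closed decomposition there precisely because $F=\rho^{-1}(F_{C})$ has already been established, so your closing caveat about why it fails is accurate only for your weaker intermediate state of knowledge, where just the support of $F$ has been pinned down. You compensate in step (2) by proving the stronger statement that restriction to $C$ is constant on all of $F$, achieved by covering the whole of $\Sigma_{C}$ by two closed sets and intersecting $F$ with their preimages. The trade-off: the paper's first decomposition is more economical and reduces everything to a one-line argument about the fibre $F_{C}$, while your version needs less structural information at each stage and lands directly on $F\subseteq\overline{\{(C,\lambda)\}}$ without passing through $F=\rho^{-1}(F_{C})$.
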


\begin{proof}
By definition of $\mathcal{O}\Sigma^{\ast}$, a set $F$ is closed iff the following two conditions are satisfied:
\begin{enumerate}
\item For every $C\in\mathcal{V}(A)$ the set $F_{C}$ is closed in $\Sigma_{C}$, 
\item If $\lambda\in F_{C}$, $C\subseteq C'$ and $\lambda'\in\rho_{C'C}^{-1}(\lambda)$ then $\lambda'\in F_{C'}$. 
\end{enumerate}
Define $F_{1}$ as follows: for every $C'$ different from $C$ we take $(F_{1})_{C'}=F_{C'}$, and at the context $C$ we take $(F_{1})_{C}=\emptyset$. It is easily checked that $F_{1}\subset F$ and that $F_{1}$ is closed. Define $F_{2}$ as follows: if $C'\supseteq C$, then $(F_{2})_{C'}=\rho^{-1}_{C'C}(F_{C})$. For all other $C'\in\mathcal{V}(A)$, define $(F_{2})_{C'}=\emptyset$. The set $F_{2}$ is closed and $F=F_{1}\cup F_{2}$. By irreducibility of $F$ it follows that $F=F_{2}$.

Suppose that $F_{C}$ has more than one element. In that case $F_{C}$ is reducible in $\Sigma_{C}$ and we find two proper closed subsets $F_{1C},F_{2C}\subset F$ such that $F_{1C}\cup F_{2C}=F_{C}$. Define the sets $F'_{i}$, for $i=1,2$, as follows. If $C'\supseteq C$, then $(F_{i}')_{C'}=\rho^{-1}_{C'C}(F_{iC})$. For all other $C'\in\mathcal{V}(A)$ take $(F'_{i})_{C'}=\emptyset$. Again, $F'_{i}\subset F$, the $F'_{i}$ are closed, and $F=F'_{1}\cup F'_{2}$. As $F$ is irreducible, $F_{C}$ must be a singleton. If $F_{C}=\{\lambda\}$, then $F$ is clearly the closure of $(C,\lambda)$. 
\end{proof}

\begin{poe}
Let $\mathcal{C}(A)$ satify the following \textbf{descending chain property}: every chain of contexts
\begin{equation}
... \subseteq C_{3}\subseteq C_{2}\subseteq C_{1},
\end{equation}
stabilizes in the sense that there exists an $n\in\mathbb{N}$ such that for all $m\geq n$ we have $C_{m+1}=C_{m}$. Then the space $\Sigma^{\ast}(A)$ is sober. In particular, if $A$ is finite-dimensional, then $\Sigma^{\ast}(A)$ is sober.
\end{poe}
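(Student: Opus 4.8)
The plan is to follow the proof of Theorem 2.25 almost verbatim, trading the ascending chain property for the descending one and maximal elements for minimal ones, and then handing the conclusion to the immediately preceding lemma (which, for an irreducible closed $F\subseteq\Sigma^{\ast}$ possessing a least context with nonempty fibre, identifies the generic point of $F$). First I would recall that $\Sigma^{\ast}$ is $T_{0}$ (noted just after Corollary 2.10), so that a generic point is automatically unique once it exists; it therefore suffices to show that every nonempty irreducible closed subset $F$ of $\Sigma^{\ast}$ has a generic point.

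Given such an $F$, put $\mathcal{Q}=\{C\in\mathcal{V}(A)\mid F_{C}\neq\emptyset\}$, which is nonempty because $F$ is. The descending chain property says that every descending chain in $\mathcal{V}(A)$ stabilises; applied to chains lying in $\mathcal{Q}$, it shows that $\mathcal{Q}$, ordered by reverse inclusion, meets the hypothesis of Zorn's lemma, so $\mathcal{Q}$ has a minimal element $C_{0}$ (this is the exact analogue of the Zorn step in the proof of Theorem 2.25). By minimality, no $D\in\mathcal{Q}$ lies strictly below $C_{0}$, i.e. $F_{D}=\emptyset$ for every $D\subsetneq C_{0}$, while $F_{C_{0}}\neq\emptyset$; this is precisely the hypothesis of the preceding lemma. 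That lemma then produces a (unique) $\lambda\in\Sigma_{C_{0}}$ with $F=\overline{\{(C_{0},\lambda)\}}$, which is the desired generic point, so $\Sigma^{\ast}(A)$ is sober. For the last assertion, if $A$ is finite-dimensional then every commutative subalgebra of $A$ has dimension at most $\dim A$, so all chains in $\mathcal{V}(A)$ are finite and the descending chain property holds; the general statement then applies.

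I do not expect a genuine obstacle: all the real work has been done in the preceding lemma. The two points that need a little care are (i) checking that a minimal element $C_{0}$ of $\mathcal{Q}$ really satisfies that lemma's hypothesis — which is immediate from the meaning of ``minimal'', since no element of $\mathcal{Q}$ strictly below $C_{0}$ forces $F_{D}=\emptyset$ for all $D\subsetneq C_{0}$ — and (ii) being precise about how the descending chain property yields a minimal element, which uses a small amount of choice in exactly the same way as the Zorn's lemma argument in Theorem 2.25. If one prefers to avoid invoking the $T_{0}$ property, uniqueness of the generic point can instead be read off directly from the explicit form of $\overline{\{(C_{0},\lambda)\}}$ together with surjectivity of the Gelfand restriction maps, but this is not needed for the argument above.
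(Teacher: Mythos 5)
Your proposal is correct and follows essentially the same route as the paper: apply the descending chain condition and Zorn's lemma to $\mathcal{Q}=\{C\in\mathcal{V}(A)\mid F_{C}\neq\emptyset\}$ ordered by reverse inclusion to obtain a minimal context with nonempty fibre, then invoke Lemma 2.26 to exhibit $F$ as the closure of a point. The only (harmless) difference is cosmetic: the paper deduces uniqueness of the minimal context from Lemma 2.26, whereas you obtain uniqueness of the generic point from the $T_{0}$ property.
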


\begin{proof}
Take any totally ordered subset of $\mathcal{Q}_{I}=\{C\in\mathcal{V}(A) | F_{C}\neq\emptyset\}$, where the order is now given by reversed inclusion. Then the descending chain condition ensures that there is an upper bound. An application of Zorn's Lemma tells us that $\mathcal{Q}_{I}$ has a maximal element, which is a minimal context $C$ such that $F_{C}\neq\emptyset$. It follows from Lemma 2.26 that this minimal context must be unique. For C*-algebras where the descending chain condition applies, such as $n$-level systems, and assuming the axiom of choice, the points of the locale $\Sigma^{\ast}$ correspond to the points of the topological space $\Sigma^{\ast}$. 
\end{proof}

\subsection{Gelfand Transform}

In this subsection we will only consider the covariant approach, as internal Gelfand duality seems irrelevant in the contravariant approach because the locale $\uS^{\ast}$ is not regular and hence cannot arise as the Gelfand spectrum of any commutative C*-algebra in $[\mathcal{V}(A)\op,\mathbf{Set}]$. The goal of this subsection is the explicit computation of the externalized Gelfand transform of $\uA$ (given by (62)-(63)). 

By constructive Gelfand duality, the internal commutative C*-algebra $\underline{A}$ with internal spectrum $\underline{\Sigma}_{\ast}$ is isomorphic to the internal commutative C*-algebra of continuous maps $\underline{C}(\underline{\Sigma}_{\ast},\uC)$ (which is the object of frame maps $\mathcal{O}\underline{\C}\to\mathcal{O}\uSs$). Here $\uC$ denotes the internal locale of complex numbers, given explicitly by the external description $\pi_{1}:\mathcal{C}(A)\times\C\to\mathcal{C}(A)$ (e.g.\cite{banmul3}). Let $\uA_{sa}$ be the self-adjoint part of $\uA$, defined by the functor $\uA_{sa}(C)=C_{sa}$. Then $\uA_{sa}$ is naturally isomorphic to the object $\underline{C}(\underline{\Sigma}_{\ast},\underline{\R})$, where $\underline{\R}$ is the internal locale of real numbers. The object $\underline{C}(\uSs,\underline{\R})$ is the object of internal frame maps $\underline{\text{Frm}}(\mathcal{O}\underline{\R},\mathcal{O}\uSs)$. For $C\in\mathcal{C}(A)$ we have
\begin{equation}
\underline{\text{Frm}}(\mathcal{O}\underline{\mathbb{R}},\mathcal{O}\uS_{\ast})(C)=\text{Nat}_{\text{Frm}}(\mathcal{O}\underline{\mathbb{R}}|_{\uparrow C},\mathcal{O}\uS_{\ast}|_{\uparrow C}).
\end{equation}
The external description of $\mathcal{O}\underline{\mathbb{R}}|_{\uparrow C}$ is the frame map
\begin{equation}
\pi^{-1}_{\mathbb{R}}:\mathcal{O}(\uparrow C)\to\mathcal{O}(\uparrow C\times\mathbb{R}),
\end{equation}
which is the inverse image of the continuous map $\pi_{\mathbb{R}}:(\uparrow C)\times\mathbb{R}\to(\uparrow C)$, the projection on the first coordinate. Here $(\uparrow C)$ has the Alexandrov topology and $(\uparrow C)\times\mathbb{R}$ carries the product topology. In \cite[Section 5]{chls} the right hand side of (52) is shown to be equal to the set of frame maps
\begin{equation}
\phi^{\ast}_{C}:\mathcal{O}(\uparrow C\times\R)\to\mathcal{O}\Sigma_{\ast}|_{\uparrow C}
\end{equation}
that satisfy the property that for every $C'\supseteq C$,
\begin{equation}
\phi^{\ast}_{C}(\uparrow C'\times\R)=\Ss|_{\uparrow C'}=\coprod_{C''\in\uparrow C'}\Sigma_{C''}.
\end{equation}
We denote the set of frame maps satisfying this property by
\begin{equation}
\text{Frm}'(\mathcal{O}(\uparrow C\times\R),\mathcal{O}\Sigma_{\ast}|_{\uparrow C}).
\end{equation}
Under the identification of (52) with (54), the Gelfand transformation becomes the natural isomorphism
\begin{equation}
\tilde{\mathcal{G}}:\uA_{sa}\stackrel{\cong}{\longrightarrow}\text{Frm}'(\mathcal{O}(\uparrow-\times\R),\mathcal{O}\Sigma_{\ast}|_{\uparrow-}),
\end{equation}
defined by
\begin{equation}
\hat{a}_{C}^{-1}:=\tilde{\mathcal{G}}_{C}(a):\mathcal{O}(\uparrow C\times\R)\to\mathcal{O}\Ss|_{\uparrow C},
\end{equation}
\begin{align}
\hat{a}_{C}^{-1}(\uparrow C'\times(p,q)) &=\{(C'',\lambda'')\mid C''\in\uparrow C',\ \lambda''(a)\in(p,q)\}\\
&= \coprod_{C''\in\uparrow C'}(\hat{a}^{(C'')})^{-1}(p,q),
\end{align}
where $a\in C_{sa}$, and $\hat{a}^{(C'')}$ denotes the (classical) Gelfand transform of $a$, seen as element of $C''\supseteq C$. This frame map is the inverse image of the continuous map
\begin{equation}
\hat{a}_{C}:\Ss|_{\uparrow C}\to(\uparrow C\times\R),\ \ (C',\lambda')\mapsto(C', \lambda'(a)).
\end{equation}
Note that continuous maps $f:\Ss|_{\uparrow C}\to(\uparrow C\times\R)$ such that $\pi_{1}\circ f=\pi$ correspond bijectively to continuous maps $f:\Ss|_{\uparrow C}\to\R$. The Gelfand isomorphism $\tilde{\mathcal{G}}$ thus induces the natural isomorphism
\begin{equation}
\mathcal{G}:\uA_{sa}\stackrel{\cong}{\longrightarrow} C(\Sigma_{\ast}|_{\uparrow-},\R),
\end{equation}
\begin{equation}
\mathcal{G}_{C}(a)=\hat{a}_{C}:\Sigma_{\ast}|_{\uparrow C}\to\R,\ \ \hat{a}_{C}(C',\lambda')=\lambda'(a).
\end{equation}
This may look surprising at first glance, but in fact a continuous map $f:\Ss|_{\uparrow C}\to\R$ is determined by $f|_{\Sigma_{C}}$. This is because continuity implies that $f(C',\lambda')=f(C,\lambda'|_{C})$, giving a bijection $C(\Sigma_{C},\R)\simeq C(\Ss|_{\uparrow C},\R)$. Next, note that by (61), $\hat{a}_{C}|_{\Sigma_{C'}}=\hat{a}^{(C')}$. If we are using (classical) Gelfand duality to identify $C\simeq C(\Sigma_{C},\R)$ and subsequently identify  $C(\Sigma_{C},\R)\simeq C(\Ss|_{\uparrow C},\R)$, we recover (63). We conclude that the internal Gelfand transformation of $\underline{A}_{sa}$, looked upon externally, combines the Gelfand transformations of all the contexts into a single presheaf. This was already pointed out in \cite{hlsw}.

\se{Elementary Propositions and Daseinisation}

In this section we investigate elementary propositions and daseinisation in both the contravariant and the covariant approaches. Daseinisation plays an important role in the contravariant approach in at least two ways \cite{di2}. Firstly, it is used to define elementary propositions. These are propositions of the form $[a\in\Delta]$ with $a\in A_{sa}$ and $\Delta\in\mathcal{O}\R$. Secondly, daseinisation in a more advanced form is used to define an arrow $\breve{\delta}(a):\uS\to\underline{\R}^{\leftrightarrow}$, for every $a\in A_{sa}$, where $\underline{\R}^{\leftrightarrow}$ is called the \textbf{value object} of the topos $[\mathcal{V}(A)\op,\mathbf{Set}]$.

In the covariant approach there is a daseinisation arrow, too. For each $a\in A_{sa}$ this is an arrow $\underline{\delta}(a):\uSs\to\uIR$, where $\uIR$ is the interval domain internal to the topos $[\mathcal{C}(A),\mathbf{Set}]$. This daseinisation arrow is an internal locale map. The original covariant daseinisation arrow of \cite{hls} did not use the daseinisation techniques of the contravariant approach \cite{di2}, but it can be greatly simplified by a minor modification that does (see Subsection 3.2). Subsequently, any $\Delta\in\mathcal{O}\R$ defines a point $\underline{\Delta}:\underline{1}\to\mathcal{O}\uIR$. Combining this with the daseinisation arrow $\underline{\delta}(a)^{-1}$ produces the covariant  version of elementary propositions 

\noindent $[a\in\Delta]:\underline{1}\to\mathcal{O}\uSs$. 

In Subsection 3.1 we look at daseinisation of self-adjoint operators as originally defined in the contravariant approach. The elementary propositions of the contravariant approach are also introduced. In Subsection 3.2 we look at the covariant daseinisation arrow. After defining this arrow, we adapt it in order to apply the daseinisation techniques of Subsection 3.1 to the covariant setting. This leads to an explicit combination of the daseinisation arrow and elementary propositions. In Subsection 3.3 we study the contravariant daseinisation arrow $\breve{\delta}(a):\uS\to\underline{\R}^{\leftrightarrow}$, and in particular we compare it with the adapted version of the covariant daseinisation arrow. Subsection 3.4 discusses the relation between the so-called antonymous and observable functions, and the covariant daseinisation arrow. Finally, in Subsection 3.5 we consider the physical interpretation of daseinisation.

\su{Contravariant Approach}

We start with elementary propositions and daseinisation of selfadjoint operators in the contravariant approach. The reader familiar with daseinisation can skip this subsection, as it contains no new material. An extensive discussion of daseinisation can be found in the paper \cite{doe} by D\"oring. First we deal with outer daseinisation of projection operators, as we need these to define elementary propositions. In order to motivate outer daseinisation, let $a\in A_{sa}$ and $\Delta\in\mathcal{O}\R$. In quantum logic \`a la von Neumann, the elementary proposition ``$a\in\Delta$" is represented by a projection operator $p=\chi_{\Delta}(a)$, where $\chi_{\Delta}$ is defined by functional calculus (or, equivalently, by the Spectral Theorem for Borel functions). A proposition in the contravariant approach is a closed open subobject of the spectral presheaf $\underline{S}\rightarrowtail\uS$. Therefore, for every context $C\in\mathcal{V}(A)$ we want to associate a closed open subset $\underline{S}(C)$ of the spectrum $\Sigma_{C}$ in such a way that these choices combine to give a presheaf. If $p\in C$, then the natural choice would be
\begin{equation}
\underline{S}(C)=\{\lambda\in\Sigma_{C}\mid\lambda(p)=1\},
\end{equation}
but what about the other contexts? Let $C\in\mathcal{V}(A)$ be any context. Following \cite{di2}, we approximate the projection operator $p$ using the projection operators available in $C$ as follows:
\begin{equation}
\daspo=\bigwedge\{q\in\mathcal{P}(C)\mid q\geq p\},
\end{equation}
where $\mathcal{P}(C)$ is the lattice of projections in $C$. Hence $\daspo$ is the smallest projection operator $C$ that is larger than $p$. Note that if $p\in C$, then $\daspo=p$. Also note that $\daspo$ must be an element of $C$, since the projections in a von Neumann algebra form a \textit{complete} lattice \cite{kari}.\footnote{If the context $C$ is a commutative unital C*-algebra, then it could very well be that  $\daspo\notin C$, but for abelian von Neumann algebras or the larger class of commutative AW*-algebras the daseinisation operation works.} Next, define
\begin{equation}
\underline{\delta}^{o}(p)(C)=\{\lambda\in\Sigma_{C}\mid\lambda(\daspo)=1\}.
\end{equation}
This is a closed open subset of $\Sigma_{C}$, because the Gelfand transform of $\delta^{o}(p)_{C}$ is a continuous function on $\Sigma_{C}$. Noting that for $C\subseteq C'$ we have $\daspo\geq\delta^{o}(p)_{C'}$, it is easy to check that $\underline{\delta}^{o}(p)$ defines a closed open subobject of the spectral presheaf. The elementary proposition $\underline{[a\in\Delta]}\rightarrowtail\underline{\Sigma}$ is defined as
\begin{equation}
\underline{[a\in\Delta]}(C):=\underline{\delta}^{o}(\chi_{\Delta}(a))(C)=\{\lambda\in\Sigma_{C}\mid\lambda(\delta^{o}(\chi_{\Delta}(a))_{C})=1\},
\end{equation}
where $\chi_{\Delta}(a)$ denotes the spectral projection operator associated to ``$a\in\Delta$''. Note that because for $C\subseteq C'$ we have $\delta^{o}(p)_{C}\geq\delta^{o}(p)_{C'}$, the definition of elementary propositions fits very well with the coarse-graining philosophy.

In addition to the daseinisation of projection operators given in (65), which we will call \textbf{outer daseinisation}, we will also consider \textbf{inner daseinisation}. Inner daseinisation approximates a projection operator $p$ by taking, in each context, the largest projection operator in $C$ that is smaller than $p$. In other words:
\begin{equation}
\daspi=\bigvee\{q\in\mathcal{P}(C)\mid q\leq p\}.
\end{equation}
Note that if $p\in C$, we have $\daspi=p$ and that if $C\subseteq C'$, then $\daspi\leq\delta^{i}(p)_{C'}$. Inner daseinisation does not yield propositions in the same way as outer daseinisation, but it remains an important construction. For example, it is needed for the definition of the outer daseinisation of self-adjoint operators and it is important in defining the daseinisation arrow $\breve{\underline{\delta}}(a):\uS\to\underline{\R}^{\leftrightarrow}$, which will be discussed in Subsection 3.3. 

Next, we turn our attention to daseinisation of self-adjoint operators. By the spectral theorem \cite{kari}, every self-adjoint element $a\in A$ has a spectral resolution $\{e^{a}_{\lambda}\}_{\lambda\in\R}$, where $e^{a}_{\lambda}=\chi_{(-\infty,\lambda]}(a)$. The daseinisation of a self-adjoint operator proceeds by daseinisation of  its spectral resolution, as we will see in a moment.

Thus far, we only made use of the partial order $\leq$ on self-adjoint operators, where $a\leq b$ means that $b-a$ is a positive operator. In what follows, we will in addition use a different partial order on $A_{sa}$, which was first considered in \cite{ols}. Let $a,b\in A_{sa}$, with spectral resolutions $\{e^{a}_{\lambda}\}$ and $\{e^{b}_{\lambda'}\}$. Then $a$ is below $b$ in the spectral order, denoted $a\leq_{s}b$, if for every $\lambda\in\R$ we have $e^{a}_{\lambda}\geq e^{b}_{\lambda}$.\footnote{Equivalently, for positive operators $a$ and $b$, $a\leq_{s} b$ iff $\forall_{n\in\N}\ a^{n}\leq b^{n}$ \cite[Theorem 3]{ols}.} The spectral order is coarser than the linear order in the sense that $a\leq_{s} b$ implies $a\leq b$, while the converse need not hold in general. However, let $p$ be a projection operator in $A$. Then the spectral resolution of $p$ is given by
\begin{equation*}
e^{p}_{\lambda}=\left\{
\begin{array}{rl}
0 & \text{if} \ \lambda\in (-\infty,0);\\
1-p & \text{if}\  \lambda\in [0,1);\\
1 & \text{if } \lambda\in [1,\infty). 
\end{array} 
\right. 
\end{equation*} 
From this it follows that if $p$ and $q$ are projections in $A$, then $p\leq_{s} q$ iff $p \leq q$. Also, if $a,b\in A_{sa}$ such that $[a,b]=0$, then similarly $a\leq_{s} b$ iff $a\leq b$. So in every context $C\in\mathcal{V}(A)$ the spectral order $\leq_{s}$ reduces to the usual order $\leq$. The proof of this last claim and more information on the spectral order can be found in \cite{gro1}. 

\begin{dork}
Let $a\in A_{sa}$. Define the outer and inner daseinisations of $a$ at context $C\in\mathcal{V}(A)$ by, respectively,
\begin{equation}
\dasao=\bigwedge\ \{b\in C_{sa}\mid b\geq_{s}a\},
\end{equation}
\begin{equation}
\dasai=\bigvee\ \{b\in C_{sa}\mid b\leq_{s}a\}.
\end{equation}
\end{dork}

The self-adjoint operators $\dasao$ and $\dasai$ are elements of $C$ because $C_{sa}$ is a boundedly complete lattice with respect to the spectral order. The daseinisation of self-adjoint operators can be described by the daseinisation of the projections in their spectral resolution. Let $\lambda\mapsto e_{\lambda}$ be the spectral resolution of a self-adjoint bounded operator $a$. Then
\begin{align}
\lambda & \mapsto\bigwedge_{\mu>\lambda}\delta^{o}(e_{\mu})_{C},\\
\lambda & \mapsto\delta^{i}(e_{\lambda})_{C},
\end{align}
are also spectral resolutions of self-adjoint bounded operators \cite{di, gro1}.

\begin{lem}
Let $a\in A_{sa}$. Then the spectral resolutions of the outer and inner daseinisations of $a$ at context $C$ are
\begin{align}
\delta^{o}(a)_{C} &=\int\lambda d(\delta^{i}(e^{a}_{\lambda})_{C});\\
\delta^{i}(a)_{C} &=\int\lambda d(\bigwedge_{\mu>\lambda} \delta^{o}(e^{a}_{\mu})_{C}).
\end{align}
\end{lem}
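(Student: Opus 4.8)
The plan is to show that in each of the two defining expressions the set over which the extremum is taken actually has a least (resp.\ greatest) element, and that this element is exactly the bounded self-adjoint operator whose spectral resolution is displayed in the statement. First I would name the relevant operators: by the material recalled just above the lemma, $\lambda\mapsto\delta^{i}(e^{a}_{\lambda})_{C}$ and $\lambda\mapsto\bigwedge_{\mu>\lambda}\delta^{o}(e^{a}_{\mu})_{C}$ are spectral resolutions of bounded self-adjoint operators; since all projections occurring in them lie in $C$ and $C$ is a von Neumann algebra, these two operators, which I call $b_{0}$ and $c_{0}$, lie in $C_{sa}$. The only further tool is the translation of the spectral order into resolutions: $b\geq_{s}c$ iff $e^{b}_{\mu}\leq e^{c}_{\mu}$ for all $\mu\in\R$.

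For the outer daseinisation, consider $S=\{b\in C_{sa}\mid b\geq_{s}a\}$. Since $e^{b_{0}}_{\lambda}=\delta^{i}(e^{a}_{\lambda})_{C}\leq e^{a}_{\lambda}$ for every $\lambda$, we get $b_{0}\geq_{s}a$, so $b_{0}\in S$. Moreover $b_{0}$ is the least element of $S$: if $b\in S$ then $e^{b}_{\lambda}$ is a projection of $C$ with $e^{b}_{\lambda}\leq e^{a}_{\lambda}$, so by the defining property of inner daseinisation $e^{b}_{\lambda}\leq\delta^{i}(e^{a}_{\lambda})_{C}=e^{b_{0}}_{\lambda}$ for all $\lambda$, i.e.\ $b\geq_{s}b_{0}$. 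Hence $\dasao=\bigwedge S=b_{0}$, which has spectral resolution $\lambda\mapsto\delta^{i}(e^{a}_{\lambda})_{C}$; this is the first formula.

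For the inner daseinisation, consider $T=\{b\in C_{sa}\mid b\leq_{s}a\}$. Right-continuity of $\{e^{a}_{\mu}\}$ gives $e^{c_{0}}_{\lambda}=\bigwedge_{\mu>\lambda}\delta^{o}(e^{a}_{\mu})_{C}\geq\bigwedge_{\mu>\lambda}e^{a}_{\mu}=e^{a}_{\lambda}$, so $c_{0}\leq_{s}a$ and $c_{0}\in T$. And $c_{0}$ is the greatest element of $T$: if $b\in T$ then $e^{b}_{\mu}\geq e^{a}_{\mu}$ with $e^{b}_{\mu}$ a projection of $C$, so $e^{b}_{\mu}\geq\delta^{o}(e^{a}_{\mu})_{C}$ for every $\mu$, and using right-continuity of $\{e^{b}_{\mu}\}$ we get $e^{b}_{\lambda}=\bigwedge_{\mu>\lambda}e^{b}_{\mu}\geq\bigwedge_{\mu>\lambda}\delta^{o}(e^{a}_{\mu})_{C}=e^{c_{0}}_{\lambda}$, i.e.\ $b\leq_{s}c_{0}$. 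Hence $\dasai=\bigvee T=c_{0}$, which has spectral resolution $\lambda\mapsto\bigwedge_{\mu>\lambda}\delta^{o}(e^{a}_{\mu})_{C}$; this is the second formula.

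The step that needs care is the right-continuous regularisation in the inner case: the family $\lambda\mapsto\delta^{o}(e^{a}_{\lambda})_{C}$ need not itself be right-continuous, which is precisely why the regularised family $\lambda\mapsto\bigwedge_{\mu>\lambda}\delta^{o}(e^{a}_{\mu})_{C}$, rather than $\lambda\mapsto\delta^{o}(e^{a}_{\lambda})_{C}$, appears in the statement; dually, $\lambda\mapsto\delta^{i}(e^{a}_{\lambda})_{C}$ is already right-continuous so no regularisation is needed in the outer case. Beyond keeping the direction of the spectral order straight, there is no real obstacle: everything rests on the defining extremal properties of $\delta^{i}(p)_{C}$ and $\delta^{o}(p)_{C}$, on right-continuity of spectral resolutions, and on the fact that these daseinised projections and their meets remain in the von Neumann algebra $C$.
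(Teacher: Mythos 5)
Your argument is correct. The paper itself gives no proof of this lemma: it simply records the two displayed families as spectral resolutions (citing D\"oring--Isham and de Groote) and states the identities, so there is nothing to compare against except the literature it defers to; your proof is exactly the standard argument from those sources. The two essential points are both handled properly: (i) once one knows that $\lambda\mapsto\delta^{i}(e^{a}_{\lambda})_{C}$ and $\lambda\mapsto\bigwedge_{\mu>\lambda}\delta^{o}(e^{a}_{\mu})_{C}$ are spectral resolutions of elements $b_{0},c_{0}\in C_{sa}$, the extremal characterisations of $\delta^{i}(p)_{C}$ and $\delta^{o}(p)_{C}$ on projections translate, via $b\leq_{s}c\Leftrightarrow e^{b}_{\lambda}\geq e^{c}_{\lambda}$, into $b_{0}$ being the least element of $\{b\in C_{sa}\mid b\geq_{s}a\}$ and $c_{0}$ the greatest element of $\{b\in C_{sa}\mid b\leq_{s}a\}$, which identifies them with $\dasao$ and $\dasai$; and (ii) you correctly locate the only delicate point, namely that outer daseinisation of the resolution must be right-regularised while inner daseinisation need not be (and indeed $\bigwedge_{\mu>\lambda}\delta^{i}(e^{a}_{\mu})_{C}\leq\bigwedge_{\mu>\lambda}e^{a}_{\mu}=e^{a}_{\lambda}$ forces this meet back down to $\delta^{i}(e^{a}_{\lambda})_{C}$, since meets of projections of $C$ stay in $C$). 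No gaps.
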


Note that the \emph{outer} daseinisation of $a$ uses the \emph{inner} daseinisation of the spectral resolution $\lambda\mapsto e^{a}_{\lambda}$, and vice versa. It also follows from the definition that for any $D,C\in\mathcal{V}(A)$ with $D\subseteq C$ we have 
\begin{equation}
\delta^{i}(a)_{D}\leq_{s}\dasai\leq_{s}a\leq_{s}\dasao\leq_{s}\delta^{o}(a)_{D}.
\end{equation}
If $a\in C$, then $a=\dasai=\dasao$. Let $p$ be a projection operator. Then the outer daseinisation of $p$ as a self-adjoint operator, as in Definition 3.1, coincides with the outer daseinisation of $p$ as a projection, as in (65). For inner daseinisation we have a similar situation.

For a projection operator $p$, (75) implies that if we move from a context $C$ to a coarser context $D$ then outer daseinisation approximates $p$ by a larger projection operator in the coarser context $D$. Hence a coarser context means a weaker proposition, fitting well with the idea of coarse-graining. For inner daseinisation, moving to a coarser context amounts to taking a smaller projection operator. This does not seem to fit with the idea of coarse-graining. 

We can consider a different view that does fit with coarse-graining and involves both inner and outer daseinisation. By Gelfand duality, for any $a\in A_{sa}$, we can see $\delta^{i}(a)_{C}$ and $\delta^{o}(a)_{C}$ as real-valued continuous functions on the spectrum $\Sigma_{C}$. Given a local state $\lambda\in\Sigma_{C}$, we cannot assign a sharp value of $a$ to that state (except in the special case $a\in C_{sa}$). However, we can assign the closed interval $[\lambda(\delta^{i}(a)_{C}),\lambda(\delta^{o}(a)_{C})]\subset\mathbb{R}$ to $a$ and state $\lambda$. If we restrict the state to a coarser context $D$, then (75) tells us that we associate a larger interval  $[\lambda|_{D}(\delta^{i}(a)_{D}),\lambda|_{D}(\delta^{o}(a)_{D})]$ to $a$. As contexts become coarser, the associated values become less sharp. At a heuristic level this two-sided daseinisation fits with coarse-graining. We will return to the use of two-sided daseinisation in the contravariant approach in Section 3.3.

The reader might wonder why the spectral order $\leq_{s}$ is used, instead of the natural order $\leq$. For example, why not define an inner daseinisation by
\begin{equation}
\delta_{i}(a)_{C}=\bigvee\ \{b\in C_{sa}\mid b\leq a\}.
\end{equation}
This supremum $\delta_{i}(a)_{C}$ exists and is an element of $C$, because the spectral order and the order $\leq$ coincide on $C$. However, $\delta_{i}(a)_{C}\leq a$ may not hold as is shown in the following example using $A=M_{2}(\C)$. Define
\begin{equation*} 
a =\left( 
\begin{matrix} 
0 & 1 \\ 
1 & 1  
\end{matrix} \right),\ \ 
b_{1}=\left( 
\begin{array}{cc}
-1 & 0 \\
0 & 0
\end{array}\right),\ \ 
b_{2}=\left( 
\begin{array}{cc}
-1/4 & 0 \\
0 & -3
\end{array}\right).
\end{equation*}
For any $\underline{v}=(v_{1},v_{2})^{t}\in\C^{2}$ it is easily seen that
\begin{equation}
(\underline{v},(a-b_{1})\underline{v})\geq(|v_{1}|-|v_{2}|)^{2}\geq0,
\end{equation}
\begin{equation}
(\underline{v},(a-b_{2})\underline{v})\geq(1/4|v_{1}|-4|v_{2}|)^{2}\geq0.
\end{equation}
We find $b_{1},b_{2}\leq a$. But $b_{1}\vee b_{2}\nleq a$, which follows from
\begin{equation*}
b_{1}\vee b_{2}=\left(
\begin{array}{cc}
-1/4 & 0 \\
0 & 0
\end{array}\right),\ \ 
\underline{w}=\left(
\begin{matrix}
-i \\ i \end{matrix} \right),
\end{equation*}
\begin{equation}
(\underline{w},(a-b_{1}\vee b_{2})\underline{w})=-3/4.
\end{equation}
It is because of the spectral order that the daseinisation of an operator can be compared with the operator itself, as in (75).

\subsection{Covariant Approach}

In this subsection we investigate the covariant version of the daseinisation map. The original daseinisation arrow of the covariant approach was first introduced in \cite{hls}, where all the details of its construction can be found. We will from the start present a different definition of the daseinisation arrow, which we regard as an improvement or at least as a simplification of the original one. Subsequently we recall the original definition \cite{hls} and compare it with this new definition.

Before we can define the daseinisation arrow, a discussion of Scott's interval domain is in order \cite{abju}. As a set, the interval domain $\IR$ consists of all compact $[a,b]$ with $a,b\in\R$, $a\leq b$. This includes the singletons $[a,a]=\{a\}$. The elements of $\IR$ are ordered by reverse inclusion.\footnote{We might think of elements of $\IR$ as approximations of real numbers (this idea goes back to L.E.J. Brouwer). A smaller set provides more information about the real number it approximates than a larger interval. The smaller interval is higher in the information order.} The interval domain is equipped with the so-called Scott topology. A set $U\subseteq\IR$ is Scott closed if it satisfies the following two conditions. Firstly, it is downward closed in the sense that if $[a,b]\in U$ and $[a,b]\subseteq[a',b']$ then $[a',b']\in U$. Secondly, it is closed under suprema of directed subsets. The collection
\begin{equation}
(p,q)_{S}:=\{[r,s]\mid p<r\leq s<q\},\ \ p,q\in\Q,\ \ p<q,
\end{equation}
defines a basis for the Scott topology $\mathcal{O}\IR$. We will also need the interval domain $\uIR$, internal to $[\mathcal{C}(A),\mathbf{Set}]$. This is an internal locale, whose associated frame $\mathcal{O}\uIR$ has external description
\begin{equation}
\pi_{1}^{-1}:\mathcal{O}(\mathcal{C}(A))\to\mathcal{O}(\mathcal{C}(A)\times\IR),
\end{equation}
where $\mathcal{O}(\mathcal{C}(A))$ denotes the Alexandrov topology, and where $\pi_{1}^{-1}$ is the inverse image of the continuous projection
\begin{equation}
\pi_{1}:\mathcal{C}(A)\times\IR\to\mathcal{C}(A),\ \ \ (C,[a,b])\mapsto C.
\end{equation}

Next, we would like to use the daseinisation of self-adjoint operators introduced in Subsection 3.1, but we are immediately faced with a problem: these constructions do not work for arbitrary C*-algebras, because these generally do not have enough projections. For the remainder of this subsection, also in the covariant approach we will therefore use the context category $\mathcal{V}(A)$ of abelian von Neumann subalgebras, and work in the topos $[\mathcal{V}(A),\mathbf{Set}]$ of covariant functors. Von Neumann algebras have the advantage that a covariant daseinisation arrow can be given explicitly, in terms of the daseinisation maps of the contravariant approach. This makes it easier to compare the two topos approaches.

Without further ado we now define the covariant daseinisation map.

\begin{dork}
The \textbf{covariant daseinisation map} is the function
\begin{equation}
\delta:A_{sa}\to C(\Sigma_{\ast},\IR),\ \ \delta(a):(C,\lambda)\mapsto[\lambda(\delta^{i}(a)_{C}),\lambda(\delta^{o}(a)_{C})].
\end{equation}
Define in $[\mathcal{V}(A),\Set]$, the arrow $\underline{\delta}(a)^{-1}:\mathcal{O}\underline{\IR}\to\mathcal{O}\uS_{\ast}$ by
\begin{equation}
\underline{\delta}(a)^{-1}_{C}(\uparrow C'\times (p,q)_{S})=\delta(a)^{-1}(p,q)_{S}\cap\Sigma_{\ast}|_{\uparrow C'},
\end{equation}
where $\Sigma_{\ast}|_{\uparrow C'}=\coprod_{C''\in\uparrow C'}\Sigma_{C''}$, and $(\uparrow C')\times(p,q)$ denotes the basic open subset $\{(C,[r,s])\mid C\in\mathcal{C}(A), C\supseteq C', p<r\leq s<q\}$ of $\mathcal{C}(A)\times\IR$.
\end{dork}

The map $\delta$ is well-defined, which requires some checking.

\begin{poe}
For each $a\in A_{sa}$, the map $\delta(a):\Sigma_{\ast}\to\IR$ is continuous. Furthermore, $\underline{\delta}(a)^{-1}$ is a frame map in $[\mathcal{V}(A),\mathbf{Set}]$, and thus defines a locale map $\underline{\delta}(a):\uSs\to\uIR$.
\end{poe}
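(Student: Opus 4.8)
The plan is to prove the statement in two stages: first establish that the point-map $\delta(a)\colon\Sigma_{\ast}\to\IR$ of Definition 3.3 is continuous, and then read off that $\underline{\delta}(a)^{-1}$ is an internal frame map directly from the external-description dictionary $\mathbf{Loc}/X\simeq\mathbf{Loc}(\Sh(X))$ that was already used for Theorem 2.2 and Corollary 2.18. Throughout we work with $\mathcal{V}(A)$ in place of $\mathcal{C}(A)$, as the subsection has stipulated, and note in passing that $\delta(a)$ indeed takes values in $\IR$ because $\delta^{i}(a)_{C}\leq\delta^{o}(a)_{C}$, so $[\lambda(\delta^{i}(a)_{C}),\lambda(\delta^{o}(a)_{C})]$ is a genuine compact interval.

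For continuity it is enough to handle the basic Scott-opens $(p,q)_{S}$ with $p,q\in\Q$, $p<q$, and to verify that
\[
\delta(a)^{-1}\big((p,q)_{S}\big)=\{(C,\lambda)\mid p<\lambda(\delta^{i}(a)_{C})\leq\lambda(\delta^{o}(a)_{C})<q\}
\]
satisfies the two conditions of Definition 2.16. For condition 1, fix $C$: the operators $\delta^{i}(a)_{C}$ and $\delta^{o}(a)_{C}$ lie in $C_{sa}$ (the supremum and infimum in Definition 3.1 exist in $C_{sa}$, which is boundedly complete for the spectral order), so their Gelfand transforms are continuous on $\Sigma_{C}$ and hence
\[
\{\lambda\in\Sigma_{C}\mid\lambda(\delta^{i}(a)_{C})>p\}\cap\{\lambda\in\Sigma_{C}\mid\lambda(\delta^{o}(a)_{C})<q\}
\]
is open in $\Sigma_{C}$. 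For condition 2, suppose $\lambda\in\delta(a)^{-1}((p,q)_{S})_{C}$, $C\subseteq C'$ and $\lambda'\in\Sigma_{C'}$ with $\lambda'|_{C}=\lambda$. By the chain (75) applied with $D=C$ we have $\delta^{i}(a)_{C}\leq_{s}\delta^{i}(a)_{C'}\leq_{s}a\leq_{s}\delta^{o}(a)_{C'}\leq_{s}\delta^{o}(a)_{C}$, hence the same inequalities hold for the linear order; since characters are positive linear functionals and $\delta^{i}(a)_{C},\delta^{o}(a)_{C}\in C$, we obtain
\[
p<\lambda(\delta^{i}(a)_{C})=\lambda'(\delta^{i}(a)_{C})\leq\lambda'(\delta^{i}(a)_{C'})\leq\lambda'(\delta^{o}(a)_{C'})\leq\lambda'(\delta^{o}(a)_{C})=\lambda(\delta^{o}(a)_{C})<q,
\]
so $\lambda'\in\delta(a)^{-1}((p,q)_{S})_{C'}$. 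Thus $\delta(a)$ is continuous.

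Now combine $\delta(a)$ with the projection $\pi\colon\Sigma_{\ast}\to\mathcal{V}(A)$ into the map $h=(\pi,\delta(a))\colon\Sigma_{\ast}\to\mathcal{V}(A)\times\IR$, where $\mathcal{V}(A)$ carries the Alexandrov topology and the target the product topology. It is continuous because each coordinate is, and $\pi_{1}\circ h=\pi$, so $h$ is a morphism in $\mathbf{Loc}/\mathcal{V}(A)$ from the external description $\pi\colon\Sigma_{\ast}\to\mathcal{V}(A)$ of $\uSs$ (Corollary 2.18) to the external description $\pi_{1}\colon\mathcal{V}(A)\times\IR\to\mathcal{V}(A)$ of $\uIR$. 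Under $\mathbf{Loc}/\mathcal{V}(A)\simeq\mathbf{Loc}(\Sh(\mathcal{V}(A)))$ this $h$ corresponds to an internal locale map $\uSs\to\uIR$; its inverse-image frame map, computed stage-wise exactly as the transition maps in Corollary 2.18, is at stage $\uparrow C$ the restriction map $W\mapsto h^{-1}(W)\colon\mathcal{O}((\uparrow C)\times\IR)\to\mathcal{O}(\Sigma_{\ast}|_{\uparrow C})$, and on a basic open $(\uparrow C')\times(p,q)_{S}$ with $C'\supseteq C$ this gives $\{(C'',\lambda'')\mid C''\supseteq C',\ \delta(a)(C'',\lambda'')\in(p,q)_{S}\}=\delta(a)^{-1}((p,q)_{S})\cap\Sigma_{\ast}|_{\uparrow C'}$. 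This is precisely the formula defining $\underline{\delta}(a)^{-1}$ in Definition 3.3, so $\underline{\delta}(a)^{-1}$ is a frame map in $[\mathcal{V}(A),\mathbf{Set}]$ and $\underline{\delta}(a)\colon\uSs\to\uIR$ is an internal locale map.

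The only step that requires real work is condition 2 of the continuity argument, and it is exactly there that the choice of the spectral order (rather than the linear order) in Definition 3.1 is indispensable: it is (75) for $\leq_{s}$, hence for $\leq$, that allows the monotone positive functional $\lambda'$ to trap $\lambda'(\delta^{i}(a)_{C'})$ and $\lambda'(\delta^{o}(a)_{C'})$ between $\lambda(\delta^{i}(a)_{C})$ and $\lambda(\delta^{o}(a)_{C})$. Everything afterwards — well-definedness of $\underline{\delta}(a)^{-1}$, naturality, and preservation of finite meets and arbitrary joins — is automatic once the external-description machinery of Section 2 is invoked; one could instead verify these properties of $\underline{\delta}(a)^{-1}$ by hand from its defining formula, but that route is strictly more laborious.
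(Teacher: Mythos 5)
Your proposal is correct and follows essentially the same route as the paper: continuity is checked on the basic Scott opens $(p,q)_{S}$, with condition 1 of Definition 2.16 coming from continuity of the Gelfand transforms of $\delta^{i}(a)_{C},\delta^{o}(a)_{C}\in C_{sa}$ and condition 2 from the chain (75), and the internal locale map is then obtained from the commuting triangle $\langle\pi,\delta(a)\rangle$ over $\mathcal{V}(A)$ via the equivalence $\mathbf{Loc}/\mathcal{V}(A)\simeq\mathbf{Loc}(\Sh(\mathcal{V}(A)))$. You merely spell out the details the paper leaves implicit (the passage from $\leq_{s}$ to $\leq$ and the positivity of characters, and the stage-wise check that the external description reproduces the formula of Definition 3.3), which is a faithful elaboration rather than a different argument.
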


\begin{proof}
In order to prove continuity, note that
\begin{align}
(\delta(a)^{-1}(p,q)_{S})_{C}&=\{\lambda\in\Sigma_{C}\mid\lambda(\delta^{i}(a)_{C})>p\}\cap\{\lambda\in\Sigma_{C}\mid\lambda(\delta^{o}(a)_{C})<q\}\\
&= X^{C}_{\delta^{i}(a)_{C}-p}\cap X^{C}_{q-\delta^{o}(a)_{C}}=X^{C}_{(\delta^{i}(a)_{C}-p)\wedge(q-\delta^{o}(a)_{C})}.
\end{align}
Therefore, $\delta(a)^{-1}(p,q)_{S}$ satisfies the first condition for opens of $\Sigma_{\ast}$ given in Definition 2.23. The second condition follows from (75). 

The map $\delta(a)$ defines an internal locale map $\underline{\delta}(a)$, with external description simply given by the commutative triangle of continuous maps
$$\xymatrix{
\Sigma_{\ast} \ar[d]_{\pi} \ar[rr]^{\langle\pi,\delta(a)\rangle} & & \mathcal{V}(A)\times\IR \ar[dll]^{\pi_{1}} & & (C,\lambda) \ar@{|->}[r] \ar@{|->}[d] & (C,\delta(a)(C,\lambda)) \ar@{|->}[dl]\\
\mathcal{V}(A) & & & & C &}$$
\end{proof}

We may use the daseinisation map $\delta:A_{sa}\to C(\Ss,\IR)$ to define elementary propositions. 
\begin{dork}
Let $a\in A_{sa}$ and $(p,q)\in\mathcal{O}\R$. Then the \textbf{covariant elementary proposition} $[a\in(p,q)]\in\mathcal{O}\Sigma_{\ast}$ is defined by
\begin{align}
[a\in(p,q)] &=\delta(a)^{-1}(p,q)_{S}\\
&=\coprod_{C\in\mathcal{V}(A)}\{\lambda\in\Sigma_{C}\mid[\lambda(\delta^{i}(a)_{C}),\lambda(\delta^{o}(a)_{C})]\in(p,q)_{S}\}.
\end{align}
 Each elementary proposition $[a\in(p,q)]$ defines an open of the spectrum of $\uA$ by
 \begin{equation}
 \underline{[a\in(p,q)]}:\underline{1}\to\mathcal{O}\uS_{\ast},\  \underline{[a\in(p,q)]}_{C}(\ast)=\coprod_{C'\in\uparrow C}[a\in(p,q)]_{C'}.
 \end{equation}
 \end{dork}

If we define
\begin{equation}
\underline{(p,q)}:\underline{1}\to\mathcal{O}\uIR,\ \ \ \underline{(p,q)}_{C}(\ast)=\uparrow C\times(p,q)_{S},
\end{equation}
then
\begin{equation}
\underline{[a\in(p,q)]}=\underline{\delta}(a)^{-1}\circ\underline{(p,q)}:\underline{1}\to\mathcal{O}\uSs.
\end{equation}
 
Compare the covariant elementary proposition (91) with the contravariant elementary propostions which, under the identification of $\mathcal{O}_{cl}\uS$ as a subframe of $\mathcal{O}\Sigma^{\ast}$, are given by
\begin{equation}
[a\in(p,q)]=\coprod_{C\in\mathcal{V}(A)}\{\lambda\in\Sigma_{C}\mid\lambda(\delta^{o}(\chi_{(p,q)}(a))_{C})=1\}.
\end{equation}
This clearly differs from the covariant version. In the contravariant approach, which is motivated by coarse-graining, the spectral projection associated to $``a\in(p,q)"$ as a whole is approximated, whereas in the covariant approach the operator $a$ itself is approximated, which in turn implies the formula (87) for $[a\in(p,q)]$. The covariant approach uses both inner and outer daseinisation, whereas the contravariant approach only uses outer daseinisation. We could have chosen to define covariant elementary propositions in a different way such that it closely mirrors the contravariant version. Consider
\begin{equation}
[a\in(p,q)]=\coprod_{C\in\mathcal{V}(A)}\{\lambda\in\Sigma_{C}\mid\lambda(\delta^{i}(\chi_{(p,q)}(a))_{C})=1\}.
\end{equation}
This subset of $\Sigma_{\ast}$ is open because it is equal to $\delta(\chi_{(p,q)}(a))^{-1}(1-\epsilon,1+\epsilon)$ (for any positive $\epsilon$ smaller than 1). In Subsection 3.3 we will investigate if the covariant elementary proposition of Definition 3.5 has a natural counterpart in the contravariant approach. Using the material in Subsection 3.4, we will see in Subsection 3.5 how the covariant elementary proposition of Definition 3.5 is related to (93). In Subsection 3.5 we will also consider the physical interpretation of the covariant elementary propositions.

Next, we compare the covariant daseinisation arrow with the Gelfand transform $\mathcal{G}$ of Subsection 2.4. Let $i:\R\to\IR$, $x\mapsto[x,x]$, be the inclusion map, and let $\delta(a)|_{\uparrow C(a)}$ denote the restriction of $\delta(a):\Sigma_{\ast}\to\IR$ to $\Sigma_{\ast}|_{\uparrow C(a)}$, where $C(a)$ is the context generated by $a$. Then we have the following commutative triangle
$$\xymatrix{
\Sigma_{\ast}|_{\uparrow C(a)} \ar[rr]^{\delta(a)|_{\uparrow C(a)}} \ar[drr]_{\mathcal{G}_{C(a)}} & & \IR \\
& & \R\ \ . \ar@{^{(}->}[u]_{i} } $$
Hence on the open $\Sigma_{\ast}|_{\uparrow C(a)}\in\mathcal{O}\Sigma_{\ast}$ the daseinisation of $a$ coincides with the Gelfand transform of $a$, formulated as a locale map.

Next, we show how the new covariant daseinisation arrow of Definition 3.3 is related to the original covariant daseinisation arrow of \cite{hls}. The covariant daseinisation map of \cite{hls} is a function $\delta:A_{sa}\to C(\uS_{\uA},\uIR)$, which for every $a\in A_{sa}$, gives  a locale map $\underline{\delta}(a):\uS_{\uA}\to\uIR$ internal to $[\mathcal{C}(A),\mathbf{Set}]$. The inverse image of this daseinisation map, i.e. $\underline{\delta}(a)^{-1}:\mathcal{O}\uIR\to\mathcal{O}\uS_{\uA}$, is given by the frame maps 
\begin{equation}
\underline{\delta}(a)^{-1}_{C}:\mathcal{O}(\uparrow C\times\IR)\to\mathcal{O}\uS_{\uA}(C),
\end{equation}
\begin{eqnarray}
\underline{\delta}(a)^{-1}_{C}(\uparrow C'\times (p,q)_{S}):C''\to \left\{
\begin{array}{rl} 
\emptyset & \text{if } C''\nsupseteq C'\\
Y_{C''}(p,q,a) & \text{if} \ C''\supseteq C'.
\end{array} \right.
\end{eqnarray}
In (95), $Y_{C''}(p,q,a)\subseteq L_{C''}$ is defined as $D^{C''}_{b}\in Y_{C''}(p,q,a)$ iff
\begin{equation}
D^{C''}_{b}\drie_{C''}\{D^{C''}_{(a_{0}-r)\wedge(s-a_{1})}\mid a_{0},a_{1}\in C''_{sa}, \ a_{0}\leq a\leq a_{1},\ [r,s]\in(p,q)_{S}\}.
\end{equation}

Recall that the covering relation $\drie_{C}$ was introduced in Lemma 2.15. In order to relate $\underline{\delta}(a)$ to the daseinisation arrow of Definition 3.3, we replace the context category $\mathcal{C}(A)$ by the category $\mathcal{V}(A)$, and replace $a_{0}\leq a\leq a_{1}$ in (96) by $a_{0}\leq_{s} a\leq_{s} a_{1}$, where $\leq_{s}$ is the spectral order. 
\begin{lem}
Define $\omega=(\dasai-p)\wedge(q-\dasao)$, where $\dasai$ and $\dasao$ are the daseinisations of $a$, as in Subsection 3.1. Then $D^{C}_{b}\drie_{C}\{D^{C}_{\omega}\}$ iff
\begin{equation}
D^{C}_{b}\in \{D^{C}_{(a_{0}-r)\wedge(s-a_{1})}\mid a_{0},a_{1}\in C_{sa}, \ a_{0}\leq_{s}a\leq_{s}a_{1},\ [r,s]\in(p,q)_{S}\}.
\end{equation}
\end{lem}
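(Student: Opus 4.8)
The plan is to pass from the covering relation $\drie_{C}$ to inclusions of the associated open sets $X^{C}_{\bullet}\subseteq\Sigma_{C}$ and compare them there. The first ingredient is the reformulation
\[
D^{C}_{x}\drie_{C}W\iff X^{C}_{x}\subseteq\bigcup_{D^{C}_{y}\in W}X^{C}_{y}\qquad(x\in C_{sa},\ W\subseteq L_{C}),
\]
in particular $D^{C}_{x}\drie_{C}\{D^{C}_{y}\}$ iff $X^{C}_{x}\subseteq X^{C}_{y}$. The implication ``$\Rightarrow$'' is immediate: if $D^{C}_{x-q}\leq\bigvee W_{0}$ for a finite $W_{0}\subseteq W$ then $X^{C}_{(x-q)^{+}}\subseteq\bigcup_{D^{C}_{y}\in W}X^{C}_{y}$, and $X^{C}_{x}=\bigcup_{q\in\Q^{+}}X^{C}_{(x-q)^{+}}$ since each $\lambda\in\Sigma_{C}$ is a character, so $\lambda((x-q)^{+})=(\lambda(x)-q)^{+}$. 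For ``$\Leftarrow$'' one runs the compactness argument of Lemma 2.19: for $q\in\Q^{+}$ the set $\overline{X^{C}_{x-q}}$ is a compact subset of $\bigcup_{D^{C}_{y}\in W}X^{C}_{y}$, hence of finitely many $X^{C}_{y_{1}},\dots,X^{C}_{y_{n}}$ with $D^{C}_{y_{i}}\in W$; on this compact set the Gelfand transform of $y_{1}\vee\dots\vee y_{n}$ has a strictly positive minimum while $(\hat{x}-q)^{+}$ is bounded, so $(x-q)^{+}\leq m(y_{1}\vee\dots\vee y_{n})$ for some $m$, i.e.\ $D^{C}_{x-q}\leq D^{C}_{y_{1}}\vee\dots\vee D^{C}_{y_{n}}$. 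Equivalently, this says $D^{C}_{x}\drie_{C}W$ exactly when $D^{C}_{x}$ lies in the regular ideal generated by $W$, which under $\text{RIdl}(L_{C})\cong\mathcal{O}\Sigma_{C}$ (Proposition 2.12, and the footnote to Lemma 2.15) is the ideal corresponding to the open $\bigcup_{D^{C}_{y}\in W}X^{C}_{y}$. I read the right-hand side of the lemma accordingly, as saying that $D^{C}_{b}$ is covered by the displayed family, consistently with the way $Y_{C''}(p,q,a)$ was defined above.

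Next I would identify the two opens. Since characters preserve the lattice operations, $\lambda(u\wedge v)=\min(\lambda(u),\lambda(v))$, so
\[
X^{C}_{\omega}=\{\lambda\in\Sigma_{C}\mid\lambda(\dasai)>p\ \text{and}\ \lambda(\dasao)<q\},
\]
which is precisely the set $(\delta(a)^{-1}(p,q)_{S})_{C}$ computed in the proof of Proposition 3.4. For the right-hand side I would prove
\[
X^{C}_{\omega}=\bigcup\{X^{C}_{(a_{0}-r)\wedge(s-a_{1})}\mid a_{0},a_{1}\in C_{sa},\ a_{0}\leq_{s}a\leq_{s}a_{1},\ [r,s]\in(p,q)_{S}\}.
\]
The inclusion ``$\supseteq$'' uses that inside the commutative von Neumann algebra $C$ the spectral order coincides with the usual order and that $\dasai$ (resp.\ $\dasao$) is the spectral supremum (resp.\ infimum) of the elements of $C_{sa}$ below (resp.\ above) $a$; hence $a_{0}\leq\dasai$ and $a_{1}\geq\dasao$ in $C$, so $\lambda\in X^{C}_{(a_{0}-r)\wedge(s-a_{1})}$ forces $\lambda(\dasai)\geq\lambda(a_{0})>r>p$ and $\lambda(\dasao)\leq\lambda(a_{1})<s<q$. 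For ``$\subseteq$'', fix $\lambda\in X^{C}_{\omega}$; from $p<\lambda(\dasai)\leq\lambda(\dasao)<q$ one picks rationals with $p<r<\lambda(\dasai)\leq\lambda(\dasao)<s<q$, so $[r,s]\in(p,q)_{S}$ and $\lambda\in X^{C}_{(\dasai-r)\wedge(s-\dasao)}$, which lies in the family because $\dasai\leq_{s}a\leq_{s}\dasao$. Combining the two displayed identities with the first step yields the equivalence.

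The step I expect to be the main obstacle is the ``$\Leftarrow$'' half of the reformulation in the first paragraph. It genuinely needs compactness of $\Sigma_{C}$, both to extract a finite subcover and to dominate $(\hat{x}-q)^{+}$ by a multiple of a finite join of generators, and it is here that one must use the precise meaning of the relation defining $L_{C}$: $u\precsim v$ means $u\leq nv$ for some $n$, not merely $\{\hat{u}>0\}\subseteq\{\hat{v}>0\}$. Everything else is routine bookkeeping. Should one instead want the conclusion in the literal form $D^{C}_{b}=D^{C}_{(a_{0}-r)\wedge(s-a_{1})}$ for a single admissible quadruple, a further argument is required: one would approximate $X^{C}_{b}$ from inside by the $X^{C}_{(b-q')^{+}}$ (whose closures sit inside $X^{C}_{\omega}$) and truncate $\dasai$ and $\dasao$ so that the Gelfand transform of $(a_{0}-r)\wedge(s-a_{1})$ becomes comparable, up to a multiplicative constant, to $\hat{b}^{+}$, all while respecting $a_{0}\leq_{s}a\leq_{s}a_{1}$ and $p<r\leq s<q$.
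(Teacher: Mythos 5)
Your proposal is correct, but it takes a genuinely different route from the paper. The paper stays entirely inside the lattice $L_{C}$ and never passes to the spectrum: for one direction it notes that $a_{0}\leq\delta^{i}(a)_{C}$, $\delta^{o}(a)_{C}\leq a_{1}$, $r>p$ and $s<q$ give $D^{C}_{(a_{0}-r)\wedge(s-a_{1})}\leq D^{C}_{\omega}$ directly, and for the other it observes that $\omega-\epsilon$ is \emph{literally} $(\delta^{i}(a)_{C}-(p+\epsilon))\wedge((q-\epsilon)-\delta^{o}(a)_{C})$, i.e.\ already an element of the displayed family, so $D^{C}_{\omega}\drie_{C}X$ is immediate --- a three-line argument establishing mutual covering $X\drie_{C}\{D^{C}_{\omega}\}$ and $D^{C}_{\omega}\drie_{C}X$. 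You instead prove a spatial characterization $D^{C}_{x}\drie_{C}W\iff X^{C}_{x}\subseteq\bigcup_{D^{C}_{y}\in W}X^{C}_{y}$ (a mild generalization of Lemmas 2.14, 2.15 and 2.19, and your compactness argument for the nontrivial direction is sound) and then identify the two open sets, both being $(\delta(a)^{-1}(p,q)_{S})_{C}$. Your route is longer but buys a reusable geometric reading of $\drie_{C}$ and makes transparent \emph{why} the lemma holds; the paper's is shorter because the witness $\omega-\epsilon$ happens to lie in the family on the nose. You were also right to flag the reading of the right-hand side: taken literally as membership the statement fails (e.g.\ $D^{C}_{\omega}\drie_{C}\{D^{C}_{\omega}\}$ but $D^{C}_{\omega}\notin X$ since $[p,q]\notin(p,q)_{S}$), and the paper's own proof and its use in (99) confirm that the intended content is the mutual covering, exactly as you interpreted it.
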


\begin{proof}
Call the set in (97) $X$ for convenience. If $a_{0}\leq_{s}a$, then by definition $a_{0}\leq_{s}\dasai$, and if $a\leq_{s}a_{1}$ then $\dasao\leq_{s}a_{1}$. If $[r,s]\in(p,q)_{S}$ then
$D^{C}_{(a_{0}-r)\wedge(s-a_{1})}\leq D^{C}_{\omega}$. This proves that $X\drie_{C}\{D^{C}_{\omega}\}$. In order to prove that $D^{C}_{\omega}\drie_{C} X$, it suffices to show that for $\epsilon\in\Q^{+}$ small enough, there is a $[r,s]\in(p,q)$ such that 
\begin{equation}
D^{C}_{\omega-\epsilon}\leq D^{C}_{(\dasai-r)\wedge(s-\dasao)}. 
\end{equation}
Take $\epsilon$ such that $p+\epsilon<q-\epsilon$. Taking $r=p+\epsilon$ and $s=q-\epsilon$ gives the desired inequality.
\end{proof}

This lemma may be used to simplify the covariant daseinisation map of \cite{hls}. For $C\subseteq C'\subseteq C''$ we have
\begin{equation}
D^{C''}_{b}\in\underline{\delta}(a)^{-1}_{C}(\uparrow C'\times(p,q)_{S})(C'')\ \text{iff} \ D^{C''}_{b}\drie_{C}\{D^{C''}_{(\delta^{i}(a)_{C''}-p)\wedge(q-\delta^{o}(a)_{C''})}\}.
\end{equation}
Identifying $\uS_{\uA}$ with $\uSs$ by Corollary 2.18,  $\underline{\delta}(a)^{-1}_{C}(\uparrow C'\times(p,q)_{S})$ corresponds to the following open set of $\mathcal{O}\Ss|_{\uparrow C}$:
\begin{equation}
\{(C'',\lambda'')\in\Sigma\mid C''\supseteq C',\ \lambda''(\delta^{i}(a)_{C''})>p,\ \lambda''(\delta^{o}(a)_{C''})<q\}.
\end{equation}
As $\delta^{i}(a)_{C''}\leq_{s}\delta^{o}(a)_{C''}$ and the spectral order is coarser than the order $\leq$, the set in (100) is equal to
\begin{equation}
\{(C'',\lambda'')\in\Sigma\mid C''\supseteq C',\ \ [\lambda''(\delta^{i}(a)_{C''}),\lambda''(\delta^{o}(a)_{C''})]\in(p,q)_{S}\}.
\end{equation}
This is exactly the inverse image of $\uparrow C'\times (p,q)_{S}$ of the continuous function
\begin{equation}
\delta(a):\Ss\to\mathcal{V}(A)\times\IR,\ \ (C,\lambda)\to(C, [\lambda(\delta^{i}(a)_{C}),\lambda(\delta^{o}(a)_{C})]),
\end{equation}
We recognize this as the external description of the daseinisation arrow $\underline{\delta}(a):\uS_{\ast}\to\uIR$ given in Proposition 3.4. Hence, by the observations above and Lemma 3.6, the daseinisation arrow Definition 3.3 simply follows from the original covariant daseinisation arrow of \cite{hls} by replacing the partial order $\leq$ by the spectral order $\leq_{s}$ (and replacing $\mathcal{C}(A)$ by $\mathcal{V}(A)$ accordingly). Looking at  Section 5.2 of \cite{hls}, and in particular equation (54) and footnote 20, close relationship is not surprising.

\subsection{Contravariant Daseinisation Map}

The covariant daseinisation arrow $\underline{\delta}(a):\uSs\to\uIR$ in (84) was inspired\footnote{The original covariant daseinisation map was also inspired by the daseinisation constructions of the contravariant approach.} by the contravariant constructions (73) and (74). In this subsection we investigate the contravariant counterpart for the daseinisation arrow (84), which was first introduced in \cite{di3}.

In Subsection 3.1 it was shown that there is an outer daseinisation map $\delta^{o}:\mathcal{P}(A)\to\mathcal{O}_{cl}\uS$, associating a proposition of the contravariant approach to each projection operator of $A$. This map has interesting properties, which are discussed in \cite[Section 5]{di}. There is also another daseinisation arrow in the contravariant approach \cite{di3}. Before giving its definition, we first try to mimick the daseinisation map given at the end of the previous subsection. Let $\mathcal{V}(A)$ have the anti-Alexandrov topology, and give $\mathcal{V}(A)\times\IR$ the product topology. The internal interval domain in the topos $[\mathcal{V}(A)\op,\mathbf{Set}]$ is externally described by the continuous projection $\pi_{1}:\mathcal{V}(A)\times\IR\to\mathcal{V}(A)$. Analogously to (83), consider the map
\begin{equation}
f:\Sigma^{\ast}\to\IR,\ \ (C,\lambda)\mapsto [\lambda(\dasai),\lambda(\dasao)].
\end{equation}
Is this map continuous? If $(C',\lambda')\in f^{-1}(\downarrow C\times(p,q)_{S})$, then $C'\subseteq C$ and $[\lambda(\delta^{i}(a)_{C'}),\lambda(\delta^{o}(a)_{C'})]\in(p,q)_{S}$. Take any $C''\subseteq C'$ and consider $(C'',\lambda'|_{C''})$. It may very well be that
\begin{equation}
[(\lambda'|_{C''})(\delta^{i}(a)_{C''}),(\lambda'|_{C''})(\delta^{o}(a)_{C''})] <_{\IR} [\lambda'(\delta^{i}(a)_{C'}),\lambda'(\delta^{o}(a)_{C'})],
\end{equation}
where the subscript $\IR$ was added to remind the reader that in the partial order of $\IR$ we have $[a,b]\leq_{\IR}[c,d]$ whenever $a\leq c$ and $b\geq d$. Hence there is no reason why $(C'',\lambda'|_{C''})$ should be in $f^{-1}(\downarrow C\times (p,q)_{S})$, so that the map $f$ may \underline{not} be continuous and therefore may not define a locale map $\uS^{\ast}\to\uIR$ internal to $[\mathcal{V}(A)\op,\mathbf{Set}]$.

Instead of the internal interval domain $\uIR$, the contravariant approach makes use of a certain presheaf $\underline{\R}^{\leftrightarrow}$. Let $C\in\mathcal{V}(A)$ be a context. A function $\mu:(\downarrow C)\to\R$ is called \textbf{order-preserving} if for any $C'\subseteq C''$ we have $\mu(C')\leq\mu(C'')$. Denote the set of order preserving functions $(\downarrow C)\to\R$ by $\textit{OP}(\downarrow C,\R)$. A function $\nu:(\downarrow C)\to\R$ is called \textbf{order reversing} if whenever $C'\subseteq C''$ we have $\nu(C')\geq\nu(C'')$. Let $\textit{OR}(\downarrow C,\R)$ be the set of such functions. We let $\mu\leq\nu$ if for every $C'\in(\downarrow C)$ we have $\mu(C')\leq\nu(C')$. Define the presheaf $\unR^{\leftrightarrow}$ by
\begin{equation}
\unR^{\leftrightarrow}(C)=\{(\mu,\nu)\mid\mu\in \textit{OP}(\downarrow C,\R),\nu\in \textit{OR}(\downarrow C,\R),\mu\leq\nu\},
\end{equation}
\begin{equation}
\unR^{\leftrightarrow}(i):\unR^{\leftrightarrow}(C)\to\unR^{\leftrightarrow}(C')\ \ (\mu,\nu)\mapsto(\mu|_{\downarrow C'},\nu|_{\downarrow C'}),
\end{equation}
where $C,C'\in\mathcal{V}(A)$ and $C'\subseteq C$. For $a\in A$, the daseinisation $\breve{\delta}(a):\uS\to\unR^{\leftrightarrow}$ is given by
\begin{equation}
\breve{\delta}(a)_{C}(\lambda):C'\mapsto(\lambda(\delta^{i}(a)_{C'}),\lambda(\delta^{o}(a)_{C'})).
\end{equation}
The reader may check that this is a well-defined natural transformation, or find a proof and further details in \cite{di3}. 

Is it possible to see $\breve{\delta}(a)$ as a locale map? Recall that for the covariant approach, in Definition 3.5, we defined the elementary proposition $[a\in(p,q)]$ as $\delta(a)^{-1}(p,q)_{S}$. If $\breve{\delta}(a)$ can be seen as a locale map, then this would provide a contravariant counterpart for this elementary proposition defined by both inner and outer daseinisation.

In order to make sense of the question if $\breve{\delta}(a)$ is a locale map, $\unR^{\leftrightarrow}$ should have the structure of a locale in the first place. We do this once again by considering the projection $\pi_{1}:\mathcal{V}(A)\times\IR\to\mathcal{V}(A)$. We equip $\mathcal{V}(A)\times\IR$ with a different topology from the product topology, such that the projection remains continuous, and the internal frame $\mathcal{O}(\underline{\mathcal{V}(A)\times\IR})$ can be identified with $\unR^{\leftrightarrow}$. For any $C\in\mathcal{V}(A)$, pick any $(\mu,\nu)\in\unR^{\leftrightarrow}(C)$ and define $U(\mu,\nu)\subset\mathcal{V}(A)\times\IR$ as follows: $(C',[r,s])\in U(\mu,\nu)$ if $C'\subseteq C$ (equivalently $C'\in\text{dom}(\mu)=\text{dom}(\nu)$), and $\mu(C')<r\leq s<\nu(C')$. The reader may check  that these sets, with varying $C\in\mathcal{V}(A)$, form a basis for a topology. It is straightforward to check that the projection $\pi_{1}:\mathcal{V}(A)\times\IR\to\mathcal{V}(A)$ is continuous. Hence we obtain an internal locale with associated frame
\begin{equation}
\mathcal{O}(\underline{\mathcal{V}(A)\times\IR})(C)=\mathcal{O}(\mathcal{V}(A)\times\IR)|_{\downarrow C\times\IR}.
\end{equation}
The value presheaf $\underline{\R}^{\leftrightarrow}$ is a subobject of this presheaf, given by
\begin{equation}
i:\underline{\R}^{\leftrightarrow}\rightarrowtail\mathcal{O}(\underline{\mathcal{V}(A)\times\IR}),\ \ i_{C}(\mu,\nu)\mapsto U(\mu,\nu).
\end{equation}
We return to the function $f$ from (103), but now taking values in $\mathcal{V}(A)\times\IR$ equipped with the new topology. If this function is continuous, then it defines an internal locale map $\uS^{\ast}\to\underline{\mathcal{V}(A)\times\IR}$. However, as the reader may verify, the function $f$ is not continuous. Thus we have been unable to find a contravariant counterpart to the covariant elementary propositions of Definition 3.5.

\su{Observable Functions and Antonymous Functions}

In this subsection we investigate the connection between the observable functions and the antonymous functions \cite{doe, di} on the one hand, and the covariant daseinisation map of Definition 3.3 on the other. These functions were introduced in the contravariant approach in \cite{di3}, and are based on work of de Groote \cite{gro}.

Let $A$ be a von Neumann algebra, and $C\in\mathcal{V}(A)$. Let $\mathcal{F}_{C}$ denote the set of filters in $\mathcal{P}(C)$.\footnote{Recall that $F\subset\mathcal{P}(C)$ is a filter if the following three conditions are satisfied. Firstly, $0\notin F$. Secondly, if $p,q\in F$, then $p\wedge q\in F$. Thirdly, if $p\in F$, and $q\geq p$, then $q\in F$} We give $\mathcal{F}_{C}$ a topology $\mathcal{O}\mathcal{F}_{C}$, by taking the following sets as a basis:
\begin{equation}
Ext(p)=\{F\in\mathcal{F}_{C}\mid p\in F\},\ \ p\in\mathcal{P}(C).
\end{equation}
We combine the filter spaces into one ambient space, just like we did for the Gelfand spectra.\footnote{Note that the filters spaces $\mathcal{F}_{C}$ define a presheaf $\mathcal{F}:\mathcal{V}(A)\op\to\mathbf{Set}$, by $\mathcal{F}(C)=\mathcal{F}_{C}$ and for $C\subseteq C'$ the function $\mathcal{F}(i_{CC'}):\mathcal{F}_{C'}\to\mathcal{F}_{C}$ is defined as $F'\mapsto F'\cap\mathcal{P}(C)$.}

\begin{dork}
Define the set $\mathcal{F}=\coprod_{C\in\mathcal{V}(A)}\mathcal{F}(C)$. Then $\mathcal{F}$ is given the topology $\mathcal{O}\mathcal{F}$, where $U$ is open iff the following two conditions are satisfied:
\begin{enumerate}
 \item $\forall C\in\mathcal{V}(A),\ \ U_{C}\in\mathcal{O}\mathcal{F}_{C}$.
 \item If $C\subseteq C'$, $F\in U_{C}$ and $F'\in\mathcal{F}_{C'}$ such that $F'\cap\mathcal{P}(C)=F$, then $F'\in U_{C'}$.
 \end{enumerate}
The projection $\pi:\mathcal{F}\to\mathcal{V}(A)$ defines the locale $\underline{\mathcal{F}}$ in $[\mathcal{V}(A),\mathbf{Set}]$ that has the projection $\pi$ as its external description.
 \end{dork}

\begin{lem}
The map
\begin{equation}
J:\Sigma_{\ast}\to\mathcal{F},\ \ (C,\lambda)\mapsto(C,F_{\lambda}),
\end{equation}
where $F_{\lambda}=\{p\in\mathcal{P}(C)\mid\lambda(p)=1\}$, is continuous and injective, and hence it defines an injective locale map $\uSs\to\underline{\mathcal{F}}$ in $[\mathcal{V}(A),\mathbf{Set}]$.
\end{lem}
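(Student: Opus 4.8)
The plan is to verify the three assertions in turn: that $J$ is well-defined and injective, that $J$ is continuous (so that it is a map of spaces), and that continuity plus the external-description machinery of Subsections 2.1--2.2 upgrades $J$ to an internal locale map $\uSs\to\underline{\mathcal F}$ in $[\mathcal{V}(A),\mathbf{Set}]$. Well-definedness is immediate: for $\lambda\in\Sigma_C$ the set $F_\lambda=\{p\in\mathcal P(C)\mid\lambda(p)=1\}$ really is a filter in $\mathcal P(C)$, since $\lambda$ is multiplicative and positive ($\lambda(0)=0$ so $0\notin F_\lambda$; $\lambda(p)=\lambda(q)=1$ gives $\lambda(p\wedge q)=\lambda(pq)=1$; and $p\le q$ with $\lambda(p)=1$ forces $\lambda(q)\ge\lambda(p)=1$, hence $=1$). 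Injectivity: if $\lambda_1\neq\lambda_2$ in $\Sigma_C$, then since the characters of a commutative von Neumann algebra are determined by their values on projections (the projections span a weak-$*$ dense subalgebra, and two characters agreeing on all projections agree on $C$), there is a projection $p\in\mathcal P(C)$ with $\lambda_1(p)\neq\lambda_2(p)$, say $\lambda_1(p)=1$, $\lambda_2(p)=0$; then $p\in F_{\lambda_1}\setminus F_{\lambda_2}$, so $(C,F_{\lambda_1})\neq(C,F_{\lambda_2})$. On different fibres $J$ is injective because it preserves the first coordinate.

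Next I would check continuity. By Definition 3.8, a basic open of $\mathcal F$ is obtained from a basic open $Ext(p)\subseteq\mathcal F_C$ ($p\in\mathcal P(C)$) by saturating upward: it suffices to show that the preimage under $J$ of the smallest open set $\widetilde{Ext}(p)\in\mathcal O\mathcal F$ containing $Ext(p)$ in the fibre over $C$ is open in $\Sigma_\ast$. Concretely, $J^{-1}(\widetilde{Ext}(p))$ consists of all $(C',\lambda')$ with $C'\supseteq C$ and $\lambda'|_C(p)=1$, i.e. $\lambda'(p)=1$ (using $p\in C\subseteq C'$). Its slice at $C'\supseteq C$ is $\{\lambda'\in\Sigma_{C'}\mid\lambda'(p)=1\}=X^{C'}_{p}$ in the notation of Lemma 2.14 (equivalently the support of the Gelfand transform $\hat p$, which is clopen in $\Sigma_{C'}$), and its slice at $C'\not\supseteq C$ is empty. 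So condition 1 of Definition 2.16 holds, and condition 2 holds because if $\lambda'(p)=1$ and $\lambda''|_{C'}=\lambda'$ with $C'\subseteq C''$, then $\lambda''(p)=\lambda''|_{C'}(p)=1$. Hence $J^{-1}(\widetilde{Ext}(p))\in\mathcal O\Sigma_\ast$, and since these sets generate $\mathcal O\mathcal F$, the map $J$ is continuous.

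Finally, continuity of $J$ gives a commuting triangle of continuous maps
$$\xymatrix{
\Sigma_{\ast} \ar[dr]_{\pi} \ar[rr]^{J} & & \mathcal{F} \ar[dl]^{\pi} \\
& \mathcal{V}(A) &}$$
over $\mathcal V(A)$ (with $\mathcal V(A)$ carrying the Alexandrov topology), because both projections forget the character/filter and keep the context. By the equivalence $\mathbf{Loc}/\mathcal V(A)\simeq\mathbf{Loc}(\Sh(\mathcal V(A)))$ recalled in Subsection 2.1 (applied, as in Corollary 2.18 and Definition 3.8, to the Alexandrov-topologised $\mathcal V(A)$ so that $\Sh(\mathcal V(A))\simeq[\mathcal V(A),\mathbf{Set}]$), this triangle is exactly an internal locale map $\uSs\to\underline{\mathcal F}$ in $[\mathcal V(A),\mathbf{Set}]$. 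Injectivity of $J$ as a map of spaces translates into injectivity of the corresponding frame-map at each stage $C$ (the slicewise statement $X^{C'}_p$'s separate characters, shown above), giving an injective internal locale map as claimed.

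\emph{Main obstacle.} The only genuinely delicate point is matching the topology on $\mathcal F$ from Definition 3.8 with that on $\Sigma_\ast$ under $J$ --- i.e. identifying the slices of $J^{-1}(\widetilde{Ext}(p))$ with the clopen sets $X^{C'}_p$ and checking the saturation condition 2 of Definition 2.16 --- together with the standard but load-bearing fact that characters of a commutative von Neumann algebra are separated by projections (needed for injectivity). Everything else is bookkeeping with the external-description dictionary already set up in the paper.
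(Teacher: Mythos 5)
Your proof is correct and follows essentially the same route as the paper: the paper likewise verifies continuity by reducing to slices of the form $Ext(p)$ and checking the two defining conditions of $\mathcal{O}\Sigma_{\ast}$ on the preimage, and it explicitly leaves well-definedness and injectivity (which you spell out via separation of characters by projections) to the reader. One small caveat on your final step: injectivity of the locale map corresponds to \emph{surjectivity} of the frame map $J^{-1}$ (equivalently, to $J$ being a subspace embedding, which holds because the sets $J^{-1}(\widetilde{Ext}(p))$, with slices $X^{C'}_{p}$, generate $\mathcal{O}\Sigma_{\ast}$), not to injectivity of $J^{-1}$.
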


\begin{proof}
We only prove continuity of $J$, leaving the rest to the reader. Take $U\in\mathcal{O}\mathcal{F}$. We need to check that $J^{-1}(U)_{C}\in\mathcal{O}\Sigma_{C}$. First note that, $J^{-1}(U)_{C}=J^{-1}(U_{C})$. Without loss of generality, we assume that $U_{C}=Ext(p)$. We find
\begin{equation}
J^{-1}(Ext(p))_{C}=\{\lambda\in\Sigma_{C}\mid\lambda(p)=1\},
\end{equation}
which is open in $\Sigma_{C}$. Next, assume that $\lambda\in J^{-1}(U)_{C}$, $C\subseteq C'$, and $\lambda'\in\Sigma_{C'}$ such that $\lambda'|_{C}=\lambda$. From  $\lambda\in J^{-1}(U)_{C}$ it follows that $F_{\lambda}\in U_{C}$. From  $\lambda'|_{C}=\lambda$ it follows that $F_{\lambda'}\cap\mathcal{P}(C)=F_{\lambda}$. By definition of $\mathcal{O}\mathcal{F}$, we find $F_{\lambda'}\in U_{C'}$. We conclude that $\lambda'\in J^{-1}(U)_{C'}$, proving that $J$ is continuous.
\end{proof}

Now  we introduce the antonymous functions and the observable functions. Let $\mathcal{N}$ be a von Neumann algebra (read $A$ or $C$ for $\mathcal{N}$), and let $\mathcal{F}(\mathcal{N})$ denote the set of all filters in the projection lattice $\mathcal{P}(\mathcal{N})$. Let $a\in\mathcal{N}_{sa}$, with spectrum $\sigma(a)$ and spectral resolution $\{e^{a}_{r}\}_{r\in\R}$. Then the \textbf{antonymous function}  $g_{a}^{\mathcal{N}}$ is defined by  \cite{doe}
\begin{equation}
g_{a}^{\mathcal{N}}:\mathcal{F}(\mathcal{N})\to\sigma(a),\ \ F\mapsto\text{sup}\{r\in\R\mid1-e^{a}_{r}\in F\}.
\end{equation}
The \textbf{observable function}$f_{a}^{\mathcal{N}}$ is defined by \cite{doe} 
\begin{equation}
f_{a}^{\mathcal{N}}:\mathcal{F}(\mathcal{N})\to\sigma(a),\ \ F\mapsto\text{inf}\{r\in\R\mid e^{a}_{r}\in F\}.
\end{equation}

\begin{poe}
Define the map
\begin{equation}
h(a):\mathcal{F}\to\IR,\ \ h(a)(C,F)=[g^{C}_{\delta^{i}(a)_{C}}(F),f^{C}_{\delta^{o}(a)_{C}}(F)].
\end{equation}
This map is continuous and defines a locale map $\underline{\mathcal{F}}\to\uIR$ in $[\mathcal{V}(A),\mathbf{Set}]$.
\end{poe}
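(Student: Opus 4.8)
The plan is to prove the three assertions bundled in the statement in turn: that $h(a)$ genuinely takes values in $\IR$, that it is continuous, and that it then assembles into an internal locale map $\underline{\mathcal{F}}\to\uIR$ by the slice-category description of internal locales used throughout the paper. For well-definedness I need $g^{C}_{\delta^{i}(a)_{C}}(F)\le f^{C}_{\delta^{o}(a)_{C}}(F)$ for every $(C,F)\in\mathcal{F}$. By (75) we have $\dasai\le_{s}\dasao$, so $e^{\delta^{i}(a)_{C}}_{r}\ge e^{\delta^{o}(a)_{C}}_{r}$ for every $r\in\R$; if the inequality failed there would exist $r_{2}<r_{1}$ with $e^{\delta^{o}(a)_{C}}_{r_{2}}\in F$ and $1-e^{\delta^{i}(a)_{C}}_{r_{1}}\in F$, and then $e^{\delta^{o}(a)_{C}}_{r_{2}}\le e^{\delta^{o}(a)_{C}}_{r_{1}}\le e^{\delta^{i}(a)_{C}}_{r_{1}}$ would force the meet of these two members of the filter $F$ below $e^{\delta^{o}(a)_{C}}_{r_{2}}\wedge(1-e^{\delta^{o}(a)_{C}}_{r_{2}})=0$, contradicting $0\notin F$. (This is de Groote's inequality $g^{C}_{b}\le f^{C}_{b}$, used along $\dasai\le_{s}\dasao$.)

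For continuity I would argue exactly as in the proof of Proposition 3.4: it suffices to produce the commuting triangle $\pi_{1}\circ\langle\pi,h(a)\rangle=\pi$ of continuous maps, where $\langle\pi,h(a)\rangle\colon\mathcal{F}\to\mathcal{V}(A)\times\IR$ sends $(C,F)$ to $(C,h(a)(C,F))$, with $\mathcal{V}(A)$ in its Alexandrov topology and $\mathcal{V}(A)\times\IR$ in the product topology. Since $\pi$ is continuous (it is the external description of $\underline{\mathcal{F}}$), everything comes down to continuity of $h(a)$ tested on the basic Scott opens $(p,q)_{S}$. Because $g^{C}_{\delta^{i}(a)_{C}}\le f^{C}_{\delta^{o}(a)_{C}}$, we get $h(a)^{-1}((p,q)_{S})=U_{p}\cap V_{q}$, where $U_{p}=\{(C,F)\mid g^{C}_{\delta^{i}(a)_{C}}(F)>p\}$ and $V_{q}=\{(C,F)\mid f^{C}_{\delta^{o}(a)_{C}}(F)<q\}$, so it is enough to check that $U_{p}$ and $V_{q}$ belong to $\mathcal{O}\mathcal{F}$, i.e. satisfy the two conditions of Definition 3.7. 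Condition (1) is immediate from the definitions of the antonymous and observable functions: $(U_{p})_{C}=\bigcup_{r>p}Ext\bigl(1-e^{\delta^{i}(a)_{C}}_{r}\bigr)$ and $(V_{q})_{C}=\bigcup_{r<q}Ext\bigl(e^{\delta^{o}(a)_{C}}_{r}\bigr)$ are unions of basic opens of $\mathcal{F}_{C}$.

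The substantive step is condition (2) of Definition 3.7, and this is where the structure of daseinisation must be used. I must show that if $C\subseteq C'$, $F'\in\mathcal{F}_{C'}$ and $F=F'\cap\mathcal{P}(C)$ lies in $(U_{p})_{C}$ (resp.\ $(V_{q})_{C}$), then $F'\in(U_{p})_{C'}$ (resp.\ $(V_{q})_{C'}$). By Lemma 3.2 the spectral resolution of $\dasao$ is $\lambda\mapsto\delta^{i}(e^{a}_{\lambda})_{C}$ and that of $\dasai$ is $\lambda\mapsto\bigwedge_{\mu>\lambda}\delta^{o}(e^{a}_{\mu})_{C}$; taking complements in the latter and using $1-\delta^{o}(p)_{C}=\delta^{i}(1-p)_{C}$ gives $1-e^{\delta^{i}(a)_{C}}_{r}=\bigvee_{\mu>r}\delta^{i}(1-e^{a}_{\mu})_{C}$. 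Because inner daseinisation of a projection is monotone in the context ($\delta^{i}(q)_{C}\le\delta^{i}(q)_{C'}$ whenever $C\subseteq C'$, as recorded in Subsection 3.1), this forces $e^{\delta^{o}(a)_{C}}_{r}\le e^{\delta^{o}(a)_{C'}}_{r}$ and $1-e^{\delta^{i}(a)_{C}}_{r}\le 1-e^{\delta^{i}(a)_{C'}}_{r}$ for all $r$. Now if $F\in(U_{p})_{C}$, choose $r>p$ with $1-e^{\delta^{i}(a)_{C}}_{r}\in F\subseteq F'$; then $1-e^{\delta^{i}(a)_{C'}}_{r}\in\mathcal{P}(C')$ dominates an element of $F'$, so it lies in $F'$ by upward-closure of the filter, whence $g^{C'}_{\delta^{i}(a)_{C'}}(F')\ge r>p$ and $F'\in(U_{p})_{C'}$; the argument for $V_{q}$ is identical with $e^{\delta^{o}(a)_{C}}_{r}$ in place of $1-e^{\delta^{i}(a)_{C}}_{r}$. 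With continuity established, the triangle above presents $h(a)$ as an internal locale map $\underline{\mathcal{F}}\to\uIR$.

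The main obstacle is precisely this last point: transporting a statement about antonymous/observable functions of daseinised operators across the presheaf restriction maps $F'\mapsto F'\cap\mathcal{P}(C)$, which amounts to understanding how the relevant spectral projections grow as the context enlarges --- and this is supplied by Lemma 3.2 together with the monotonicity of inner daseinisation. A minor care-point is that $\lambda\mapsto\delta^{i}(e^{a}_{\lambda})_{C}$ and $\lambda\mapsto\bigwedge_{\mu>\lambda}\delta^{o}(e^{a}_{\mu})_{C}$ must genuinely be right-continuous spectral resolutions, which is exactly what Lemma 3.2 (and the remark preceding it) guarantees. As a consistency check, precomposing $h(a)$ with the embedding $J$ of Lemma 3.8 recovers $\delta(a)$ of Definition 3.3, since on an ultrafilter $F_{\lambda}$ one has $g^{C}_{b}(F_{\lambda})=f^{C}_{b}(F_{\lambda})=\lambda(b)$.
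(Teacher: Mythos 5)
Your proposal is correct, and it actually proves slightly more than the paper does, but the tactics differ in an instructive way. Both you and the paper reduce continuity to checking that $h(a)^{-1}\bigl((p,q)_{S}\bigr)$ satisfies the two conditions of Definition 3.7. For the fibrewise openness (condition 1), the paper exploits the monotonicity $F\subseteq F'\Rightarrow h(F)\leq_{\IR}h(F')$ to show that $h^{-1}(r,s)_{C}$ is the union of the sets $Ext(p)$ over those $p$ with $(\uparrow p)\in h^{-1}(r,s)_{C}$, and then, given $F$ in the preimage, explicitly constructs such a $p\in F$ as a meet $e_{y+\epsilon_{1}}\wedge(1-e_{x-\epsilon_{0}})$ of two spectral projections chosen via the defining sup/inf of the antonymous and observable functions. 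You instead split the preimage as $U_{p}\cap V_{q}$ and read off each fibre directly as $\bigcup_{r>p}Ext\bigl(1-e^{\delta^{i}(a)_{C}}_{r}\bigr)$ resp.\ $\bigcup_{r<q}Ext\bigl(e^{\delta^{o}(a)_{C}}_{r}\bigr)$; this is more economical and avoids the $\epsilon$-bookkeeping, at the mild cost of needing the observation $g\le f$ up front so that the Scott-open condition really factors into the two half-conditions. Where your write-up genuinely adds value is condition (2), which the paper dismisses as ``easily checked and left to the reader'': you supply the actual mechanism, namely that by Lemma 3.2 the spectral projections $e^{\delta^{o}(a)_{C}}_{r}=\delta^{i}(e^{a}_{r})_{C}$ and the complements $1-e^{\delta^{i}(a)_{C}}_{r}$ are built from \emph{inner} daseinisations and therefore grow as the context enlarges, so membership in $F=F'\cap\mathcal{P}(C)$ propagates to $F'$ by upward closure. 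This is the right argument and makes explicit why the construction is compatible with the covariant restriction maps. Your closing consistency check against $J$ and $\delta(a)$ is exactly the content of Theorem 3.10 and equations (120), so it is a sound sanity check rather than circular reasoning.
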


\begin{proof}
We use the shorthand notation $h$ for $h(a)$. For $(r,s)\in\mathcal{O}\IR$ we need to show that $h^{-1}(r,s)$ is open in $\mathcal{F}$. If $p\in\mathcal{P}(C)$, then $(\uparrow p)=\{q\in\mathcal{P}(C)\mid q\geq p\}$ is the smallest filter $\mathcal{P}(C)$ containing $p$. If $F\subseteq F'$, then it follows from the definition of $h$ that $h(F)\leq_{\IR} h(F')$. Consequently, if $(\uparrow p)\in h^{-1}(r,s)_{C}$ then $Ext(p)\subseteq h^{-1}(r,s)$. We conclude that
\begin{equation}
\bigcup_{(\uparrow p)\in h^{-1}(r,s)}Ext(p)\subseteq h^{-1}(r,s).
\end{equation}
The next step is to show that if $F\in h^{-1}(r,s)$, then there exists a $p\in F$ with the property $(\uparrow p)\in h^{-1}(r,s)_{C}$. IOnce this has been shown, the inclusion of (116) becomes an equality. If $F\in h^{-1}(r,s)$, then $r<g^{C}_{\delta^{i}(a)_{C}}(F)\leq f^{C}_{\delta^{o}(a)_{C}}(F)<s$. Define $x=g^{C}_{\delta^{i}(a)_{C}}$ and $y=f^{C}_{\delta^{o}(a)_{C}}$ and $\epsilon=1/2\text{min}(x-r,s-y)$. By definition, $y=\text{inf}\{t\in\R\mid e^{a}_{t}\in F\}$. Choose any $\epsilon_{1}\leq\epsilon$ such that $e^{a}_{y+\epsilon_{1}}\in F$. Similarly, choose an $\epsilon_{0}\leq\epsilon$ such that $1-e^{a}_{x-\epsilon_{0}}\in F$. Define $p=e^{a}_{y+\epsilon_{1}}\wedge(1-e^{a}_{x-\epsilon_{0}})$. Note that
\begin{equation}
f^{C}_{\delta^{o}(a)_{C}}(\uparrow e^{a}_{y+\epsilon_{1}})=y+\epsilon_{1}<s,
\end{equation}
\begin{equation}
g^{C}_{\delta^{i}(a)_{C}}(\uparrow(1-e^{a}_{x-\epsilon_{0}}))=x-\epsilon_{0}>r,
\end{equation}
\begin{equation}
h(\uparrow e^{a}_{y+\epsilon_{1}})\leq_{\IR},\ h(\uparrow p),\ \ \ h(\uparrow(1-e^{a}_{x-\epsilon_{0}}))\leq_{\IR}h(\uparrow p).
\end{equation}
We conclude that $(\uparrow p)\in h^{-1}(r,s)$ and that $F\in\bigcup_{(\uparrow p)\in h^{-1}(r,s)}Ext(p)$. Thus we have shown that $h^{-1}(r,s)_{C}$ is open in $\mathcal{F}_{C}$. It remains to show that if $F\in h^{-1}(r,s)_{C}$, $C\subseteq C'$ and $F\in\mathcal{F}_{C'}$ is such that $F'\cap\mathcal{P}(C)=F$, then $F'\in h^{-1}(r,s)_{C'}$. This is easily checked and left to the reader.
\end{proof}

\begin{tut}
The covariant daseinisation map $\delta(a)$ factors through $\mathcal{F}$. In other words, the following triangle is commutative:
$$\xymatrix{
\Sigma_{\ast} \ar[r]^{J} \ar[dr]_{\delta(a)} & \mathcal{F} \ar[d]^{h(a)}\\
& \IR\ \ .}$$
\end{tut}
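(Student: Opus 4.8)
The plan is to unwind the definitions on both sides of the triangle and check that they agree pointwise at each $(C,\lambda)\in\Sigma_{\ast}$, which suffices since both $\delta(a)$ and $h(a)\circ J$ are honest continuous functions of sets (the localic statement then follows automatically, as $\pi$-sections and frame maps are determined by their underlying set maps here). Fix a context $C\in\mathcal{V}(A)$ and a character $\lambda\in\Sigma_{C}$. By Definition 3.3, $\delta(a)(C,\lambda)=[\lambda(\delta^{i}(a)_{C}),\lambda(\delta^{o}(a)_{C})]$. On the other side, $J(C,\lambda)=(C,F_{\lambda})$ with $F_{\lambda}=\{p\in\mathcal{P}(C)\mid\lambda(p)=1\}$, and then $h(a)(C,F_{\lambda})=[\,g^{C}_{\delta^{i}(a)_{C}}(F_{\lambda}),\,f^{C}_{\delta^{o}(a)_{C}}(F_{\lambda})\,]$. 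So the whole statement reduces to the two scalar identities
\begin{equation}
\lambda(\delta^{i}(a)_{C})=g^{C}_{\delta^{i}(a)_{C}}(F_{\lambda}),\qquad \lambda(\delta^{o}(a)_{C})=f^{C}_{\delta^{o}(a)_{C}}(F_{\lambda}).
\end{equation}
Both $\delta^{i}(a)_{C}$ and $\delta^{o}(a)_{C}$ are self-adjoint elements of $C$, so it is enough to prove, for an \emph{arbitrary} $b\in C_{sa}$ with spectral resolution $\{e^{b}_{r}\}_{r\in\R}$, that $\lambda(b)=g^{C}_{b}(F_{\lambda})=f^{C}_{b}(F_{\lambda})$.

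First I would establish $\lambda(b)=f^{C}_{b}(F_{\lambda})$. By Definition 3.8, $f^{C}_{b}(F_{\lambda})=\inf\{r\in\R\mid e^{b}_{r}\in F_{\lambda}\}=\inf\{r\in\R\mid\lambda(e^{b}_{r})=1\}$. Since $\lambda$ is a character of the commutative von Neumann algebra $C$, it evaluates each projection $e^{b}_{r}=\chi_{(-\infty,r]}(b)$ to $0$ or $1$, and by functional calculus $\lambda(e^{b}_{r})=\chi_{(-\infty,r]}(\lambda(b))$, i.e. $\lambda(e^{b}_{r})=1$ iff $\lambda(b)\leq r$. Hence the set $\{r\mid e^{b}_{r}\in F_{\lambda}\}=[\lambda(b),\infty)$, whose infimum is exactly $\lambda(b)$. (Here one uses right-continuity of the spectral resolution to get the closed interval; the key point, that $\lambda$ commutes with bounded Borel functional calculus on $b$, follows because $\lambda$ is multiplicative and weak*-continuous, or more elementarily because $\lambda$ corresponds to a point of $\Sigma_C$ and $\hat b(\lambda)=\lambda(b)$.) Symmetrically, $g^{C}_{b}(F_{\lambda})=\sup\{r\mid 1-e^{b}_{r}\in F_{\lambda}\}=\sup\{r\mid\lambda(e^{b}_{r})=0\}=\sup\{r\mid \lambda(b)>r\}=\lambda(b)$, using that $\lambda(1-e^{b}_{r})=1$ iff $\lambda(b)>r$. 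This gives both identities and hence the commutativity of the triangle on underlying sets.

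Finally I would note that since $\delta(a)=h(a)\circ J$ as set maps, and all three maps are continuous (Proposition 3.4, Lemma 3.9, Proposition 3.11) with $\pi_{\Sigma_\ast}=\pi_{\mathcal F}\circ J$ etc., the triangle also commutes in $\mathbf{Loc}(\mathrm{Sh}(\mathcal V(A)))$; under the equivalence $\mathbf{Loc}(\mathrm{Sh}(X))\simeq\mathbf{Loc}/X$ this is the desired internal statement. The only mild subtlety — the ``main obstacle'' such as it is — is the interplay between the strict and non-strict inequalities in the definitions of $f^{C}_b$, $g^{C}_b$ and the endpoints of the spectral resolution (open versus closed spectral projections), so that one really lands on the correct closed interval $[\lambda(\delta^i(a)_C),\lambda(\delta^o(a)_C)]$ and not a half-open one; this is handled by the standard conventions $e^{b}_{r}=\chi_{(-\infty,r]}(b)$ together with $\inf$ of a closed-from-below ray and $\sup$ of an open-from-above ray both being attained/approached at $\lambda(b)$. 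Everything else is a direct unwinding of Definitions 3.3, 3.7, 3.8 and of the relation $\lambda(\chi_S(b))=\chi_S(\lambda(b))$.
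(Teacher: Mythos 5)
Your reduction is exactly the paper's: Theorem 3.10 is proved there in one line by reducing the commutativity of the triangle to the two scalar identities $\lambda(\delta^{i}(a)_{C})=g^{C}_{\delta^{i}(a)_{C}}(F_{\lambda})$ and $\lambda(\delta^{o}(a)_{C})=f^{C}_{\delta^{o}(a)_{C}}(F_{\lambda})$ (the paper's (120)), which are then simply cited from D\"oring \cite[Corollary 7, Corollary 9]{doe}. You go further and prove these identities from scratch; your generalisation to an arbitrary $b\in C_{sa}$ is the right move, and makes the argument self-contained.

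However, one step of your direct argument is wrong as stated. You claim $\lambda(e^{b}_{r})=\chi_{(-\infty,r]}(\lambda(b))$, justified by saying that $\lambda$ commutes with bounded Borel functional calculus because it is ``multiplicative and weak*-continuous''. Characters of a commutative von Neumann algebra are in general \emph{not} weak*-continuous (normal): for $C=L^{\infty}[0,1]$ and $b$ the identity function there are characters with $\lambda(b)=r$ but $\lambda(\chi_{(r,1]})=1$, hence $\lambda(e^{b}_{r})=0$ even though $\lambda(b)\leq r$. So the set $\{r\mid e^{b}_{r}\in F_{\lambda}\}$ need not equal $[\lambda(b),\infty)$. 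Your conclusion nevertheless survives, because only one-sided estimates are needed and these follow from positivity and multiplicativity alone: $be^{b}_{r}\leq re^{b}_{r}$ gives $\lambda(e^{b}_{r})=1\Rightarrow\lambda(b)\leq r$, while $b(1-e^{b}_{r})\geq r(1-e^{b}_{r})$ gives $\lambda(e^{b}_{r})=0\Rightarrow\lambda(b)\geq r$. Therefore $(\lambda(b),\infty)\subseteq\{r\mid e^{b}_{r}\in F_{\lambda}\}\subseteq[\lambda(b),\infty)$, whose infimum is $\lambda(b)$ in either case, and dually for $g^{C}_{b}$. With that repair (and dropping the false continuity claim) your proof is complete and correct.
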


\begin{proof}
This follows from the identities
\begin{equation}
\lambda(\delta^{i}(a)_{C})=g^{C}_{\delta^{i}(a)_{C}}(F_{\lambda}),\ \ \lambda(\delta^{o}(a)_{C})=f^{C}_{\delta^{o}(a)_{C}}(F_{\lambda}).
\end{equation}
A proof of these identities is given in \cite[Corollary 7, Corollary 9]{doe}.
\end{proof}

Now that we have shown the close relationship between the covariant daseinisation map on one hand, and the observable functions and antonymous functions on the other, we conclude by showing how the observable and antonymous functions can be of help in calculating the daseinisation arrow.

It is shown in \cite{doe} that 
\begin{equation}
g^{C}_{\delta^{i}(a)_{C}}(F_{\lambda})=g^{A}_{a}(\uparrow_{A}F_{\lambda}),\ \ f^{C}_{\delta^{o}(a)_{C}}(F_{\lambda})=f^{A}_{a}(\uparrow_{A}F_{\lambda}),
\end{equation}
where $\uparrow_{A}F_{\lambda}=\{p\in\mathcal{P}(A)\mid\exists q\in F_{\lambda},\ p\geq q\}$. This identity also follows from continuity of $h^{a}$ and the observation $(\uparrow_{A}F_{\lambda})\cap\mathcal{P}(C)=F_{\lambda}$.

Combining (120) and (121), we obtain useful identities for calculating the daseinisation arrow, viz.
\begin{equation}
\lambda(\delta^{i}(a)_{C})=\text{sup}\{r\in\R\mid1-e^{a}_{r}\in(\uparrow_{A}F_{\lambda})\},
\end{equation}
\begin{equation}
\lambda(\delta^{o}(a)_{C})=\text{inf}\{r\in\R\mid e^{a}_{r}\in(\uparrow_{A}F_{\lambda})\}.
\end{equation}

We will use the identities (122)-(123) in the proof of Lemma 3.11.

\su{The Physical Interpretation of Elementary Propositions}

In Subsection 3.2 we proposed two different versions of covariant elementary propositions. We repeat these for convenience, and add labels in order to distinguish between them:
\begin{align}
[a\in(p,q)]_{1} &=\coprod_{C\in\mathcal{V}(A)}\{\lambda\in\Sigma_{C}\mid[\lambda(\delta^{i}(a)_{C}),\lambda(\delta^{o}(a)_{C})]\in(p,q)_{S}\},\\
[a\in(p,q)]_{2} &=\coprod_{C\in\mathcal{V}(A)}\{\lambda\in\Sigma_{C}\mid\lambda(\delta^{i}(\chi_{(p,q)}(a))_{C})=1\},
\end{align}
where (124) coincides with (88) and (125) with (93). The elementary propositions $[a\in(p,q)]_{1}$, are closest to the covariant elementary propositions in \cite{hls}. We first investigate if these covariant elementary propositions fit in the neorealism scheme of Isham and D\"oring. In their setting, a physical quantity is represented by an arrow $\underline{a}:\underline{S}\to\underline{R}$, where $\underline{S}$ is the state object of the topos and $\underline{R}$ the value object. In the covariant setting we have a locale map $\underline{\delta}(a):\uS_{\ast}\to\uIR$, which means that we have an internal frame morphism $\underline{\delta}(a)^{-1}:\mathcal{O}\uIR\to\mathcal{O}\uS_{\ast}$, which is not an arrow from a state object to a value object. So we redefine $\uS_{\ast}$ and $\uIR$ a little\footnote{Alternatively, we could have adapted the ``neorealism'' formalism of D\"oring and Isham by requiring that the state and value types are always represented by locales in the topos representations, and that physical quantities are represented by continuous maps. However, in Subsection 3.3 we were unable to view the contravariant daseinisation maps as locale maps, therefore we do not use this alternative.}. Define the covariant functors $\underline{S},\underline{R}:\mathcal{V}(A)\to\Set$ by $\underline{S}(C)=\Sigma_{\ast}|_{\uparrow C}$ and $\underline{R}(C)=(\uparrow C)\times\IR$, using truncation for the transition maps. Next, for $a\in A_{sa}$ we rewrite the locale map $\underline{\delta}(a)$ as the natural transformation,
\begin{equation}
\underline{a}:\underline{S}\to\underline{R},\ \ \ \underline{a}_{C}(C',\lambda)=(C',[\lambda(\delta^{i}(a)_{C'}),\lambda(\delta^{o}(a)_{C'})]),
\end{equation}
where $C'\in(\uparrow C)$ and $\lambda\in \Sigma_{C'}$. The open $(p,q)_{S}\in\mathcal{O}\IR$ can be seen as a subobject $\underline{(p,q)}$ of $\underline{R}$ in a natural way. In the neorealist formalism of D\"oring and Isham one not only considers arrows $\underline{a}:\underline{S}\to\underline{R}$, representing physical quantities, but also the subobjects\footnote{With some abuse of notation. Strictly speaking one considers the term $\{\tilde{s}\mid a(\tilde{s})\in\tilde{\Delta}\}$ in the local language $\mathcal{L}(A)$ of the system $S$, where $a$ is the linguistic precursor of $\underline{a}$. The object $\{\tilde{s}\mid\underline{a}(\tilde{s})\in\tilde{\Delta}\}$ that we consider is the representation of this term in the functor topos.} $\{\tilde{s}\mid\underline{a}(\tilde{s})\in\tilde{\Delta}\}$, of $\underline{S}$, where $\tilde{s}$ is a variable of type $\underline{S}$, represented by $id:\underline{S}\to\underline{S}$, and $\tilde{\Delta}$ is a variable of type $\mathcal{P}\underline{R}$, represented by the identity $id:\mathcal{P}\underline{R}\to\mathcal{P}\underline{R}$  (e.g. Section 4.2. of \cite{di}). 

If we consider the object $\{\tilde{s}\mid\underline{a}(\tilde{s})\in\underline{(p,q)}\}$ instead, it is easy to prove that for $\underline{a}$ given by (126), we have
\begin{equation}
\{\tilde{s}\mid\underline{a}(\tilde{s})\in\underline{(p,q)}\}(C)=[a\in(p,q)]_{1}|_{\uparrow C}.
\end{equation}
The elementary propositions $[a\in(p,q)]_{1}$ fit well with the D\"oring and Isham formalism because this formalism uses the internal `set theory' of the topos. But aside from the formalism, do these elementary propositions make sense physically?

It was noted in Subsection 3.1 that at least at a heuristic level, the maps $(C,\lambda)\mapsto[\lambda(\delta^{i}(a)_{C}),\lambda(\delta^{o}(a)_{C})]$ fit well with the coarse-graining philosophy of the contravariant approach, as the value assigned to $a$ becomes less sharp when we move from a context $C$ to a coarser context $D$. Conversely, we can say that if we move from a context $C$ to a more refined context $C'$, the values assigned to quantities become sharper. At a heuristic level, this fits well with the Kripke model interpretation of the covariant approach given in Subsection 1.2. Moving to a more refined context, we can learn more about the system and assign sharper values to quantities.

The covariant elementary propositions $[a\in(p,q)]_{2}$ also fit very well with this Kripke model perspective. These elementary propositions assign to every context $C$, the largest available projection in $C$ that is smaller than the spectral projection $\chi_{(p,q)}(a)$. Heuristically, for each context $C$, we take the weakest proposition that can be investigated by the means of $C$, such that verification of this proposition entails that $a\in(p,q)$ holds. But what do we mean by: $``a\in(p,q)"$ holds? We could read it in an instumentalist way by saying that a measurement of $a$ with certainty yields a value in $(p,q)$. However, we prefer to consider a different way of understanding that $a\in(p,q)$ holds. Note that from (75), it follows that $\delta^{i}(\chi_{(p,q)}(a))_{C}\leq\delta^{o}(\chi_{(p,q)}(a))_{C'}$ for every $(p,q)\in\mathcal{O}\IR$, $a\in A_{sa}$ and $C,C'\in\mathcal{V}(A)$. We interpret $[a\in(p,q)]_{2}$ at context $C$ as the weakest proposition that can be investigated by the means of $C$ \emph{and} that implies that the proposition $a\in(p,q)$ is true in the sense of the contravariant approach\footnote{The elementary propositions $[a\in(p,q)]_{2}$ are defined in terms of the inner daseinisation map. In the Kripke model interpretation, what knowledge can be gained about the system from a context, plays an important role. The name \textit{daseinisation}, which (with capital D) is a reference to Heidegger, seems somewhat misplaced in this interpretation.}. 

In Section 4 we will see how states and elementary propositions pair to give truth values. In light of the interpretation of covariant elementary propositions given above, we should then check the following. If, in the covariant approach, the pairing of a state $\rho$ with an elementary proposition $[a\in(p,q)]_{2}$ yields ``true" for some context $C$, then for any context $C'\in\mathcal{V}(A)$, the pairing of $\rho$ with the contravariant elementary proposition $[a\in(p,q)]$ should yield ``true" in the contravariant approach. We will check this in Subsection 4.2.

The next lemma shows the way the two covariant elementary propositions are related.

\begin{lem}
For any $a\in A_{sa}$, $(p,q)\in\mathcal{O}\IR$, $C\in\mathcal{V}(A)$, and $\lambda\in\Sigma_{C}$, the following holds:
\begin{enumerate}
\item If $\lambda(\delta^{i}(a)_{C})> p$ and $\lambda(\delta^{o}(a)_{C})< q$, then $\lambda(\delta^{i}(\chi_{(p,q)}(a))_{C})=1$. In short, $[a\in(p,q)]_{1}\subseteq[a\in(p,q)]_{2}$.
\item If $\lambda(\delta^{i}(\chi_{(p,q)}(a))_{C})=1$, i.e. $(C,\lambda)\in[a\in(p,q)]_{2}$, then $\lambda(\delta^{i}(a)_{C})\geq p$ and $\lambda(\delta^{o}(a)_{C})\leq q$.
\end{enumerate}
\end{lem}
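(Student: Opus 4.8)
The plan is to work entirely with spectral resolutions and the explicit formulas (73)--(74) for the daseinised self-adjoint operators, translating everything into statements about characters $\lambda\in\Sigma_C$. Write $p_\Delta = \chi_{(p,q)}(a)$ for the spectral projection and let $\{e^a_r\}_{r\in\R}$ be the spectral resolution of $a$. The key observation is that $p_\Delta = e^a_{q^-} - e^a_p$ (modulo the usual care about whether the endpoints of the interval lie in the spectrum), so that $\lambda(\delta^i(p_\Delta)_C)=1$ should be detected by simultaneous statements about the inner daseinisation of $e^a_p$ (it should be ``small'', i.e. its complement large) and of $e^a_{q}$ (it should be ``large''). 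Throughout I would use Lemma 3.2, which gives $\delta^o(a)_C = \int \lambda\, d(\delta^i(e^a_\lambda)_C)$ and $\delta^i(a)_C = \int \lambda\, d(\bigwedge_{\mu>\lambda}\delta^o(e^a_\mu)_C)$, so that the spectral resolution of $\delta^o(a)_C$ is $\lambda\mapsto\delta^i(e^a_\lambda)_C$ and that of $\delta^i(a)_C$ is $\lambda\mapsto\bigwedge_{\mu>\lambda}\delta^o(e^a_\mu)_C$. I would also freely use the identities (122)--(123) of Subsection 3.4, which express $\lambda(\delta^i(a)_C)$ and $\lambda(\delta^o(a)_C)$ as a supremum resp. infimum over $r$ of conditions ``$1-e^a_r\in\uparrow_A F_\lambda$'' resp. ``$e^a_r\in\uparrow_A F_\lambda$''; here $F_\lambda=\{P\in\mathcal P(C)\mid\lambda(P)=1\}$.

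For part (1): assume $\lambda(\delta^i(a)_C)>p$ and $\lambda(\delta^o(a)_C)<q$. From the description of the spectral resolution of $\delta^o(a)_C$ as $\lambda\mapsto\delta^i(e^a_\lambda)_C$, the condition $\lambda(\delta^o(a)_C)<q$ is equivalent to $\lambda\bigl(\delta^i(e^a_r)_C\bigr)=1$ for some $r<q$; pick such an $r$ with $p<r<q$ (possible since also $\lambda(\delta^i(a)_C)>p$, which by the dual resolution forces $\lambda\bigl(\bigwedge_{\mu>p}\delta^o(e^a_\mu)_C\bigr)=0$, hence there is $r'>p$ with $\lambda(\delta^o(e^a_{r'})_C)=0$, i.e. $\lambda\bigl(1-\delta^o(e^a_{r'})_C\bigr)=1$, and $1-\delta^o(e^a_{r'})_C=\delta^i(1-e^a_{r'})_C$). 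Choosing $r'$ with $p<r'<r<q$ and taking the meet $P:=\delta^i(e^a_r)_C\wedge\delta^i(1-e^a_{r'})_C\in\mathcal P(C)$, one has $\lambda(P)=1$, and since $P\le e^a_r\wedge(1-e^a_{r'})\le e^a_{q}-e^a_p=p_\Delta$, by maximality of the inner daseinisation $P\le\delta^i(p_\Delta)_C$, whence $\lambda(\delta^i(p_\Delta)_C)=1$. The inclusion $[a\in(p,q)]_1\subseteq[a\in(p,q)]_2$ then follows by taking the coproduct over $C\in\mathcal V(A)$.

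For part (2): assume $\lambda(\delta^i(p_\Delta)_C)=1$. Then $Q:=\delta^i(p_\Delta)_C\in F_\lambda$ and $Q\le p_\Delta$, so $e^a_p\le 1-Q$ and $e^a_{q}\ge Q$, i.e. $1-e^a_p\ge Q$ and $e^a_{q}\ge Q$; hence $1-e^a_p\in\uparrow_A F_\lambda$ (even $\in F_\lambda$) and $e^a_{q}\in\uparrow_A F_\lambda$. Plugging into (122) gives $\lambda(\delta^i(a)_C)=\sup\{r\mid 1-e^a_r\in\uparrow_A F_\lambda\}\ge p$ (since for every $r<p$, $1-e^a_r\ge 1-e^a_p\in\uparrow_A F_\lambda$), and plugging into (123) gives $\lambda(\delta^o(a)_C)=\inf\{r\mid e^a_r\in\uparrow_A F_\lambda\}\le q$ (since for every $r>q$, $e^a_r\ge e^a_{q}\in\uparrow_A F_\lambda$). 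This is exactly the asserted pair of inequalities.

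The step I expect to be the main obstacle is the bookkeeping around the endpoints of the interval $(p,q)$ and the precise meaning of $e^a_{q}$ versus $e^a_{q^-}$: the identity $p_\Delta=e^a_{q^-}-e^a_p$ and the various ``$<$'' versus ``$\le$'' distinctions in the spectral resolutions need to be handled carefully, and it is here that the asymmetry between the two statements (strict inequalities $>p$, $<q$ in (1) producing equality to $1$, versus equality to $1$ producing only non-strict inequalities $\ge p$, $\le q$ in (2)) comes from. Once one fixes conventions (e.g. right-continuity of $r\mapsto e^a_r$, so $e^a_r=\chi_{(-\infty,r]}(a)$), the rest is a routine manipulation of suprema and infima together with the maximality/minimality properties defining inner and outer daseinisation, so I would not expect any deeper difficulty.
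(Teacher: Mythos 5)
Your overall route coincides with the paper's: for part (1) you produce a projection $P\in\mathcal{P}(C)$ with $\lambda(P)=1$ and $P\leq\chi_{(p,q)}(a)$ and then invoke the maximality property defining inner daseinisation, and for part (2) you use $\delta^{i}(\chi_{(p,q)}(a))_{C}\leq\chi_{(-\infty,q]}(a)$ and $\delta^{i}(\chi_{(p,q)}(a))_{C}\leq\chi_{(p,\infty)}(a)$ together with (122)--(123). Part (2) is correct as written, and is essentially the paper's argument (the paper first rewrites (122)--(123) as a supremum/infimum over projections $P\in\mathcal{P}(C)$ with $\lambda(P)=1$ lying below $\chi_{[r,\infty)}(a)$ resp.\ $\chi_{(-\infty,r)}(a)$, and reads off both inequalities).

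There is, however, one step in your part (1) that does not hold as stated: from $\lambda\bigl(\bigwedge_{\mu>p}\delta^{o}(e^{a}_{\mu})_{C}\bigr)=0$ you conclude that $\lambda(\delta^{o}(e^{a}_{r'})_{C})=0$ for some $r'>p$. This would require $\lambda$ to commute with the infinite meet of the decreasing family $\{\delta^{o}(e^{a}_{\mu})_{C}\}_{\mu>p}$, i.e.\ the filter $F_{\lambda}$ to be completely prime. Characters of a commutative von Neumann algebra are not normal in general: a decreasing family of projections can have meet $0$ while every member lies in $F_{\lambda}$, so the implication fails as a general principle. The conclusion you want is nevertheless true, but its justification must go through the identity (122) (equivalently the paper's rewritten form (128)) rather than through the spectral resolution of $\delta^{i}(a)_{C}$: from $\lambda(\delta^{i}(a)_{C})>p$ the supremum formula directly supplies some $r>p$ and a projection $P\in\mathcal{P}(C)$ with $\lambda(P)=1$ and $P\leq\chi_{[r,\infty)}(a)\leq 1-e^{a}_{r'}$ for any $p<r'<r$, whence $P\leq\delta^{i}(1-e^{a}_{r'})_{C}=1-\delta^{o}(e^{a}_{r'})_{C}$ and $\lambda(\delta^{o}(e^{a}_{r'})_{C})=0$. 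This is exactly how the paper argues (it takes $p_{i}\in\mathcal{P}(C)$ with $\lambda(p_{i})=1$ and $p_{i}\leq\chi_{[p+\epsilon,\infty)}(a)$, $p_{o}$ with $p_{o}\leq\chi_{(-\infty,q)}(a)$, and multiplies them). With that repair the rest of your part (1), including the meet $P=\delta^{i}(e^{a}_{r})_{C}\wedge\delta^{i}(1-e^{a}_{r'})_{C}$ and the endpoint bookkeeping you yourself flag, goes through.
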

 
\begin{proof}
First rewrite the identities (122) and (123) of the previous subsection as
\begin{equation}
\lambda(\delta^{i}(a)_{C})=\text{sup}\{r\in\mathbb{R}\mid\exists p\in\mathcal{P}(C),\ \lambda(p)=1,\  p\leq\chi_{[r,\infty)}(a)\},
\end{equation}
\begin{equation}
\lambda(\delta^{o}(a)_{C})=\text{inf}\{r\in\mathbb{R}\mid\exists p\in\mathcal{P}(C),\ \lambda(p)=1,\  p\leq\chi_{(-\infty,r)}(a)\}.
\end{equation}
For the remainder of the proof we will denote $(p,q)_{S}\in\mathcal{O}\IR$ by $(r,s)_{S}$ instead, as we use the letters $p$ and $q$ to denote projections of $C$. If $\lambda(\delta^{i}(a)_{C})> r$, then by (128), for $\epsilon>0$ small enough there exists a projection $p_{i}\in\mathcal{P}(C)$ such that $\lambda(p_{i})=1$ and $p_{i}\leq\chi_{[r+\epsilon,\infty)}(a)$. If $\lambda(\delta^{o}(a)_{C})< s$, then by (129) there exists a projection $p_{o}\in\mathcal{P}(C)$, such that $\lambda(p_{o})=1$ and $p_{o}\leq\chi_{(-\infty,s)}(a)$.

Defining $p=p_{i}\cdot p_{o}$, we obtain a projection $p\in\mathcal{P}(C)$ such that $\lambda(p)=1$ and $p\leq\chi_{[r+\epsilon,s)}(a)\leq\chi_{(r,s)}(a)$. From the definition of the inner daseinisation map we now conclude $p\leq\delta^{i}(a)_{C}$, and subsequently $\lambda(\delta^{i}(\chi_{(r,s)}(a))_{C})=1$, proving the first claim of the lemma.

For the second claim, assume that  $\lambda(\delta^{i}(\chi_{(r,s)}(a))_{C})=1$. Noting that $\delta^{i}(\chi_{(r,s)}(a))_{C})\leq\chi_{(-\infty,s)}(a)$, the claim  $\lambda(\delta^{o}(a)_{C})\leq s$ follows from (129). Using  $\delta^{i}(\chi_{(r,s)}(a))_{C}\leq\chi_{[r,\infty)}(a)$, the claim $\lambda(\delta^{i}(a)_{C})\geq r$ follows from (128).
\end{proof}
Note that the converse of the second claim in the lemma does not hold in general. For a counterexample, consider $a=rp$, with $p$ a nontrivial projection.

Which version of covariant elementary proposition is to be preferred? We will consider this question in Subsection 4.2, where we will see how the elementary propositions pair with states. 

We close this section with a short comparison between the covariant approach and the work of Coecke on intuitionistic quantum logic \cite{coe}. In the contravariant approach, one often considers the map $\delta^{o}:\mathcal{P}(A)\to\mathcal{O}_{cl}\uS$, where $\mathcal{O}_{cl}\uS$ denotes the complete Heyting algebra of closed open subobjects of the spectral presheaf (e.g. Section 4 of \cite{doe}). In order to compare this with the covariant construction, we see $\delta^{o}$ as a map
\begin{equation}
\delta^{o}:\mathcal{P}(A)\to\mathcal{O}\Sigma^{\ast},\ \ \ \delta^{o}(p)=\coprod_{C\in\mathcal{V}(A)}\{\lambda\in\Sigma_{C}\mid\lambda(\delta^{o}(p)_{C})=1\}.
\end{equation}
This map is injective, and preserves all joins of the projection lattice $\mathcal{P}(A)$, but not the meets. It cannot preserve both, as $\mathcal{O}\Sigma^{\ast}$ (and likewise $\mathcal{O}_{cl}\uS$) is distributive, whereas $\mathcal{P}(A)$ is nondistributive. Dually, for the covariant approach we define 
\begin{equation}
\delta^{i}:\mathcal{P}(A)\to\mathcal{O}\Sigma_{\ast},\ \ \ \delta^{i}(p)=\coprod_{C\in\mathcal{V}(A)}\{\lambda\in\Sigma_{C}\mid\lambda(\delta^{i}(p)_{C})=1\}.
\end{equation}
This injective map preserves all meets of $\mathcal{P}(A)$, but generally it does not preserve the joins\footnote{This can be shown in the same way as for the dual properties of $\delta^{o}$.}. Thinking of $\mathcal{P}(A)$ as a property lattice, the map $\delta^{i}$ is in fact a balanced inf-embedding (the relevant definitions can be found in \cite{coe}). For each property $p\in\mathcal{P}(A)$, the open $\delta^{i}(p)\in\mathcal{O}\Sigma_{\ast}$ can be thought of as the set of pure states in context $(C,\lambda)$, such that the property $p$ is actual in that state. Thinking of the elements of $\mathcal{P}(A)$ in operational terms and thinking of conjunction and disjunction intuitionistically (just like in the Kripke model interpretation of the covariant approach), the meets of $\mathcal{P}(A)$ should be preserved by $\delta^{i}$, as these coincide with conjunctions. The joins of $\mathcal{P}(A)$ need not be preserved, as these do not coincide with disjunctions because of superposition\footnote{For two properties $p_{1},p_{2}\in\mathcal{P}(A)$, it may be the case that the property $p_{1}\vee p_{2}$ is actual for a given state, while neither property $p_{1}$ nor property $p_{2}$ is actual for that same state.}. 

Typically in Coecke's approach we consider a balanced inf-embedding 

\noindent $\mu:L\to H$, with $L$ the property lattice and $H$ a complete Heyting algebra which is the injective hull of $L$ (obtained using the Bruns-Lakser construction). Although $\mu$ need not preserve all joins, it should preserve those joins where superposition is not an issue. This means that (under the assumption of superpositional faithfulness) $\mu$ should preserve all distributive joins. A join $p_{1}\vee p_{2}$ in $L$, is called distributive if for every $q\in L$ we have $q\wedge(p_{1}\vee p_{2})=(q\wedge p_{1})\vee(q\wedge p_{2})$. The frame $\mathcal{O}\Sigma_{\ast}$ is the injective hull of $\mathcal{P}(A)$ iff the following two conditions are satisfied.
\begin{enumerate}
\item $\delta^{i}(\mathcal{P}(A))$ is join dense in $\mathcal{O}\Sigma_{\ast}$: 

\noindent If $U\in\mathcal{O}\Sigma_{\ast}$, then $U=\bigcup\{\delta^{i}(p)\mid\delta^{i}(p)\subseteq U\}$.
\item $\delta^{i}$ preserves distributive joins.
\end{enumerate}
General, neither condition is satisfied, as we can see from the simple example $A=M_{3}(\mathbb{C})$. Consider the singleton $U=\{(C,\lambda)\}\in\mathcal{O}\Sigma_{\ast}$, where $C$ is any maximal context and $\lambda$ any element of its spectrum. The only $\delta^{i}(p)$ that is a subset of $U$, is $\delta^{i}(0)=\emptyset$. This is not a real problem. The reader can check that for any von Neumann algebra $A$, the opens of the form $\delta^{i}(p)$ form a basis for a topology on $\Sigma$, and if we restrict $\mathcal{O}\Sigma_{\ast}$ to the topology generated by the opens $\delta^{i}(p)$, then the first condition is satisfied. 

Returning to the $A=M_{3}(\mathbb{C})$ example, we will show $\delta^{i}$ does not preserve distributive joins. Consider the set $X\subset\mathcal{P}(M_{3}(\mathbb{C}))$ consisting of all rank 1 projections. The reader can check that $X$ has a distributive join. For every context $C\in\mathcal{V}(A)$, $\delta^{i}(\bigvee X)_{C}=\delta^{i}(1)_{C}=\Sigma_{C}$. For the trivial context\footnote{In fact, any nonmaximal context can be used for this counterexample.} $\mathbb{C}1$, $\bigvee_{p\in X}\delta^{i}(p)_{\mathbb{C}1}=\bigvee_{p\in X}\emptyset=\emptyset$. The distributive join is not preserved. 

Distributive joins are not preserved because of contextuality. If $\lambda\in\Sigma_{C}$ is a state in context $C$, then for any property $p\in\mathcal{P}(A)$ we can only say that $p$ is certain for state $\lambda$ in context $C$ if there is a property $q\in\mathcal{P}(A)$ which can be invesitgated from the context $C$, i.e.   $q\in\mathcal{P}(C)$, and which implies $p$, i.e. $q\leq p$. In the example above  $\bigvee_{p\in X}\delta^{i}(p)_{\mathbb{C}1}=\emptyset$ because the context $\mathbb{C}1$ is so coarse that only trvial properties such as $1=\bigvee X$ can be inferred from it.

\se{States and Truth Values}

Our comparison would not be complete without a discussion of states and the way these states combined with propositions yield truth values. We start with a short discussion of state-related objects in the contravariant approach. In Subsection 4.2 the states of the covariant approach are introduced and the physical interpretation of the state proposition pairing is discussed. In Subsection 4.3 we compare the states and pairings of the two approaches.

\su{Contravariant Approach}

We start with state-related objects in the contravariant approach. None of the material presented in this subsection is new. We first discuss pseudo-states and truth objects. After that we treat the more recent measures introduced by D\"oring \cite{doe3, doe2}. By the Kochen-Specker theorem, the spectral presheaf typically does not have global points. Thus global points $\underline{1}\to\uS$ do not give a fruitful concept of state. Let $A=\mathcal{B}(\mathcal{H})$, and let $\stat\in\mathcal{H}$ be a unit vector. In the contravariant approach one associates two closely related objects to the vector $\stat$, namely the truth object $\underline{\mathbb{T}}^{\stat}$ and the pseudo-state $\underline{\mathfrak{w}}^{\stat}$. A more complete discussion of these objects may be found in \cite[Section 6]{di}, \cite{doe2}.

In order to define $\truob$ it is convenient to first introduce the so-called \textbf{outer presheaf} $\underline{O}:\mathcal{V}(A)\op\to\mathbf{Set}$. For $C\in\mathcal{V}(A)$ we have $\underline{O}(C)=\mathcal{P}(C)$, i.e. the set of projection operators in $C$. If $C\subseteq C'$ we have $\underline{O}(i):\mathcal{P}(C')\to\mathcal{P}(C)$ given by $P\mapsto\daspo$. Each projection operator $p\in\mathcal{P}(\mathcal{H})$ defines a global point $\underline{1}\to\underline{O}$ of the outer presheaf by outer daseinisation $\underline{\delta}^{o}(p)$, which at stage $C$ picks the projection operator $\daspo$. The truth object $\truob$ is a subobject of the outer presheaf, given by
\begin{equation}
\truob_{C}=\{p\in\mathcal{P}(C)\mid\langle\psi\vert p\stat=1\}=\{ p\in\mathcal{P}(C)\mid p\geq\stat\langle\psi\vert\}.
\end{equation}
It is shown in Subsection 6.5.2 of \cite{di} that there is a monic arrow $\underline{O}\rightarrowtail\mathcal{O}_{cl}\uS$. The truth object $\truob$ can be seen as a subobject of $\mathcal{O}_{cl}\uS$, or, equivalently, as a point of $P\mathcal{O}_{cl}\uS$. The truth object has been defined for a vector state $\stat$, but there is also a generalization for mixed states, which we are going to discuss at the end of this subsection.

The point $\underline{\delta}^{o}(p):\underline{1}\to\underline{O}$ can also be viewed as an open closed subobject of the spectral presheaf, as we have seen in Subsection 3.1. Thus it represents a proposition in the contravariant approach. Together with the truth object $\truob$, it forms the sentence $\underline{\delta}^{o}(p)\in\truob$ in the language of $[\mathcal{V}(A)\op,\mathbf{Set}]$. A sentence is represented by a subobject of the terminal object $\underline{1}$ and hence is equivalent to a truth value $\underline{1}\to\underline{\Omega}$. Recall that $\underline{\Omega}(C)$ is the set of sieves on $C$. At context $C$, the truth value of $\underline{\delta}^{o}(p)\in\truob$ is given by
\begin{equation}
\nu(\underline{\delta}^{0}(p)\in\truob)_{C}=\{C'\in(\downarrow C)\mid\langle\psi\vert\delta^{o}(p)_{C'}\stat=1\}.
\end{equation}

The second state-related object is the \textbf{pseudo-state} $\pseu$. This is a subobject of the spectral presheaf, defined by
\begin{equation}
\pseu_{C}=\underline{\delta}^{o}(\stat\langle\psi\vert)_{C}=\{\lambda\in\Sigma_{C}\mid\lambda(\delta^{o}(\stat\langle\psi\vert)_{C})=1\},
\end{equation}
where $\stat\langle\psi\vert$ denotes the projection onto the ray $\C\stat$. Once again, consider $\underline{\delta}^{o}(p)_{C}$. Rather than as a point of the outer presheaf, it is now seen as a subobject of the spectral presheaf, as in (66). We form the sentence $\pseu\subseteq\underline{\delta}^{o}(p)$, whose associated truth value is
\begin{equation}
\nu(\pseu\subseteq\underline{\delta}^{o}(p))_{C}=\{C'\in(\downarrow C)\mid \delta^{o}(\stat\langle\psi\vert)_{C'}\leq\delta^{o}(p)_{C'}\}.
\end{equation}

\begin{poe}{(Section 6.4.2 \cite{di})}
Let $A=\mathcal{B}(\mathcal{H})$, and let $\stat\in\mathcal{H}$ be a unit vector. Then,
\begin{equation}
\forall_{C\in\mathcal{V}(A)}\ \ \nu(\underline{\delta}^{0}(p)\in\truob)_{C}=\nu(\pseu\subseteq\underline{\delta}^{o}(p))_{C}.
\end{equation}
\end{poe}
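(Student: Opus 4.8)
The plan is to unwind both definitions and reduce the claimed equality of sieves to a single pointwise equivalence about projection operators and Gelfand characters. Fix a context $C\in\mathcal{V}(A)$ and a subcontext $C'\subseteq C$. On the left-hand side, $C'$ belongs to $\nu(\underline{\delta}^{0}(p)\in\truob)_{C}$ iff $\langle\psi\vert\delta^{o}(p)_{C'}\stat=1$, i.e.\ iff $\delta^{o}(p)_{C'}\geq\stat\langle\psi\vert$ (using that $\delta^{o}(p)_{C'}$ is a projection, so $\langle\psi\vert P\stat=1$ for a unit vector $\stat$ is equivalent to $P\stat=\stat$, equivalently $P\geq\stat\langle\psi\vert$). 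On the right-hand side, $C'$ belongs to $\nu(\pseu\subseteq\underline{\delta}^{o}(p))_{C}$ iff $\delta^{o}(\stat\langle\psi\vert)_{C'}\leq\delta^{o}(p)_{C'}$. So the whole proposition collapses to showing, for each context $D:=C'$, the equivalence
\begin{equation}
\delta^{o}(p)_{D}\geq\stat\langle\psi\vert \quad\Longleftrightarrow\quad \delta^{o}(\stat\langle\psi\vert)_{D}\leq\delta^{o}(p)_{D}.
\end{equation}

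The key step is to prove this equivalence from the universal property defining outer daseinisation: $\delta^{o}(q)_{D}=\bigwedge\{r\in\mathcal{P}(D)\mid r\geq q\}$ is the least projection in $D$ dominating $q$. For the direction ($\Leftarrow$): since $\stat\langle\psi\vert\leq\delta^{o}(\stat\langle\psi\vert)_{D}$ always holds, combining with $\delta^{o}(\stat\langle\psi\vert)_{D}\leq\delta^{o}(p)_{D}$ gives $\stat\langle\psi\vert\leq\delta^{o}(p)_{D}$. For the direction ($\Rightarrow$): if $\delta^{o}(p)_{D}\geq\stat\langle\psi\vert$, then $\delta^{o}(p)_{D}$ is a projection \emph{in $D$} that dominates $\stat\langle\psi\vert$, so by minimality of $\delta^{o}(\stat\langle\psi\vert)_{D}$ among such projections we get $\delta^{o}(\stat\langle\psi\vert)_{D}\leq\delta^{o}(p)_{D}$. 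This is genuinely just the defining property of $\delta^{o}(-)_{D}$ applied twice, so the argument is short once the definitions are unwound correctly.

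The steps in order: first, recall from (134) and (137) the explicit descriptions of the two truth values as sets of subcontexts $C'\in(\downarrow C)$; second, rewrite the condition $\langle\psi\vert\delta^{o}(p)_{C'}\stat=1$ appearing in (134) as $\delta^{o}(p)_{C'}\geq\dirac$ using that it is a projection (here $\dirac$ denotes $\stat\langle\psi\vert$, matching the notation in (135)); third, establish the pointwise equivalence displayed above for $D=C'$ using the universal property of outer daseinisation; fourth, conclude that the two defining sets of subcontexts coincide for every $C$, which is exactly (138). One should also note in passing that both sides are indeed sieves on $C$ — this is already built into the constructions (it follows from $\delta^{o}(p)_{C''}\geq\delta^{o}(p)_{C'}$ for $C''\subseteq C'$, which the excerpt records) — so no separate verification of the sieve property is needed.

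I expect the main (very mild) obstacle to be purely notational bookkeeping: keeping straight that the pseudo-state $\pseu$ is the daseinisation of the \emph{projection} $\dirac$ onto the ray $\C\stat$, that the inclusion $\pseu\subseteq\underline{\delta}^{o}(p)$ at stage $C'$ is the containment of closed open subsets of $\Sigma_{C'}$, and that this containment of Gelfand supports is equivalent to the operator inequality $\delta^{o}(\dirac)_{C'}\leq\delta^{o}(p)_{C'}$ (because the Gelfand transform is a lattice isomorphism between projections in the commutative algebra $C'$ and closed open subsets of $\Sigma_{C'}$). Once that dictionary between "subobject of $\uS$" and "projection operator" is in place, the proof is immediate from the minimality property of outer daseinisation, and there is no real analytic content — no appeal to the Hilbert space structure beyond the elementary fact that a projection $P$ fixes a unit vector $\stat$ iff $P\geq\dirac$.
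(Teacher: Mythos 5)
Your proposal is correct and follows essentially the same route as the paper's own proof: both reduce the statement to the pointwise equivalence $\delta^{o}(p)_{C'}\geq\stat\langle\psi\vert \Leftrightarrow \delta^{o}(\stat\langle\psi\vert)_{C'}\leq\delta^{o}(p)_{C'}$ via the minimality property of outer daseinisation and the fact that $\langle\psi\vert P\stat=1$ iff $P\geq\stat\langle\psi\vert$ for a projection $P$. The only cosmetic difference is that the paper phrases the latter equivalence using the spectral order $\leq_{s}$, which coincides with $\leq$ on projections, so nothing is gained or lost.
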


\begin{proof}
The observation to make is that $\langle\psi\vert\delta^{o}(p)_{C}\stat=1$ iff $\stat\langle\psi\vert\leq_{s}\delta^{o}(p)_{C}$, \cite{di}. Suppose that $C'\in\nu(\underline{\delta}^{0}(p)\in\truob)_{C}$, which is equivalent to $C'\Vdash\underline{\delta}^{0}(p)\in\truob$. This implies that $\delta^{o}(p)_{C'}\geq_{s}\stat\langle\psi\vert$. By definition, $\delta^{0}(\stat\langle\psi\vert)_{C'}$ is the smallest projection operator in $C'$ that is greater than $\stat\langle\psi\vert$. It follows that  $\delta^{o}(p)_{C'}\geq\delta^{0}(\stat\langle\psi\vert)_{C'}$. Thus $C'\in\nu(\pseu\subseteq\underline{\delta}^{o}(p))_{C}$. Conversely, assume that $C'\in\nu(\pseu\subseteq\underline{\delta}^{o}(p))_{C}$, which is equivalent to $C'\Vdash\pseu\subseteq\underline{\delta}^{o}(p)$. Then  $\delta^{o}(p)_{C'}\geq\delta^{0}(\stat\langle\psi\vert)_{C'}\geq\stat\langle\psi\vert$. It is immediate that $C'\in\nu(\underline{\delta}^{0}(p)\in\truob)_{C}$.
\end{proof}

In \cite{doe3} and \cite{doe2}, D\"oring uses measures of closed open subobjects of the spectral presheaf in order to describe states. This description of states has the advantage that it generalizes to mixed states. 

\begin{dork}
A measure on the spectral presheaf is a function $\mu:\mathcal{O}_{cl}\uS\to\text{OR}(\mathcal{V}(A),[0,1])$, such that for every $C\in\mathcal{V}(A)$ and $\forall \underline{S}_{1},\underline{S}_{2}\in\mathcal{O}_{cl}\uS$,
\begin{itemize}
\item $\mu(\uS)(C)=1$;
\item $\mu(\underline{S}_{1})(C)+\mu(\underline{S}_{2})(C)=\mu(\underline{S}_{1}\wedge\underline{S}_{2})(C)+\mu(\underline{S}_{1}\vee\underline{S}_{2})(C)$.
\item For any fixed $C\in\mathcal{V}(A)$, the function $\mu^{C}:=\mu(-)(C):\mathcal{O}_{cl}\uS\to[0,1]$, $\underline{S}\mapsto\mu(\underline{S})(C)$ depends only on $\underline{S}_{C}$. We write $\mu^{C}(\underline{S})=\mu^{C}(\underline{S}_{C})$ with slight abuse of notation.
\end{itemize}
\end{dork}

Any state, in the guise of a normalized, positive linear functional, $\rho:A\to\mathbb{C}$, defines a measure by $\mu_{\rho}(\underline{S})(C)=\rho(p_{\underline{S}(C)})$, where $p_{\underline{S}(C)}$ denotes the projection corresponding to the closed open subset $\underline{S}(C)$ of $\Sigma_{C}$. In order to see that these measures in fact generalize pseudo-states and truth objects, we use the internal structure of the topos. For example, the lower reals $\underline{[0,1]}_{l}$ in the topos $[\mathcal{V}^{\op},\Set]$ are given by the presheaf $\underline{[0,1]}_{l}(C)=\textit{OR}(\downarrow C,[0,1])$ (in order to show this, recall that $[\mathcal{V}^{\op},\Set]$ is equivalent to the topos of sheaves on $\mathcal{V}(A)$ equipped with the downset topology). Any measure as in Definition 4.2 defines a natural transformation
\begin{equation}
\underline{\mu}:\mathcal{O}\uS_{cl}\to\underline{[0,1]}_{l},\ \ \ (\underline{\mu})_{C}(\underline{S})=\mu(\underline{S})|_{\downarrow C},
\end{equation}
where $\mathcal{O}\uS_{cl}$ is the presheaf $\mathcal{O}\uS_{cl}(C)=\mathcal{O}_{cl}\uS|_{\downarrow C}$, which is the frame of Proposition 2.3. The function $\underline{1}_{l}:\mathcal{V}(A)\to[0,1]$ that is constantly $1$ can be seen as a global point $\underline{1}_{l}:\underline{1}\to\underline{[0,1]}_{l}$. A measure $\underline{\mu}$ as in (137) together with a subobject $\underline{S}\in\text{Sub}_{cl}\uS$, define (by means of the language of the topos) a truth value $[\underline{\mu}(\underline{S})=\underline{1}_{l}]:\underline{1}\to\underline{\Omega}$. If $\underline{\mu}$ comes from a vector state $\psi$, then
\begin{equation}
(\underline{\mu}_{\psi})_{C}(\underline{S})(C)=\langle\psi|p_{\underline{S}(C)}|\psi\rangle.
\end{equation}
Taking $\underline{S}=\underline{\delta}^{o}(p)$, we then find
\begin{equation}
\nu(\underline{\mu}_{\psi}(\underline{\delta}^{o}(p))=\underline{1}_{l})_{C}=\{C'\in(\downarrow C)\mid \langle\psi|\delta^{o}(p)_{C}|\psi\rangle=1\}.
\end{equation}

The measures of Definition 4.2 that come from vector states yield exactly the same truth values as pseudo-states and truth objects paired with propositions. In this sense, the measures of Definition 4.2 are a generalization of both pseudo-states and truth objects.

\su{Covariant Approach}

Next, we consider covariant states and how these combine with elementary propositions. We only present a short discussion of the subject. A more complete treatment can be found in \cite[Section 4]{hls}. 

In the covariant approach, a state is described by a probability valuation on the spectrum $\uS_{\uA}$. 

\begin{dork}
Let $X$ be a locale, and let $[0,1]_{l}$ be the set of lower reals between 0 and 1. A \textbf{probability valuation} on $X$ is a monotone map $\mu:\mathcal{O}X\to[0,1]_{l}$ satisfying the following conditions. Let $U,V\in\mathcal{O}X$ and $\{U_{\lambda}\}_{\lambda\in I}\subseteq\mathcal{O}X$ be a directed subset. Then:
\begin{itemize}
\item $\mu(0)=0,\  \mu(1)=1$;
\item $\mu(U)+\mu(V)=\mu(U\wedge V)+\mu(U\vee V)$;
\item $\mu(\bigvee_{\lambda\in I}U_{\lambda})=\bigvee_{\lambda\in I}\mu(U_{\lambda})$.
\end{itemize}
\end{dork}

Assume for conveniece that the C*--algebra $A$ is a subalgebra of some $\mathcal{B}(\mathcal{H})$. A unit vector $\stat\in\mathcal{H}$ defines a state on $A$ (in the sense of a positive normalized linear functional) by $\rho_{\psi}:A\to\C$, $\rho_{\psi}(a)=\langle\psi\vert a\stat$. A state $\rho_{\psi}:A\to\C$ defines a probability integral $I_{\psi}:\uA_{sa}\to\underline{\R}$, on the Bohrification $\uA$ \cite[Definition 10, Theorem 14]{hls}. By the generalized Riesz-Markov Theorem, \cite{coqsp2, hls}, probability integrals $I:\uA_{sa}\to\underline{\R}$ correspond to probability valuations $\underline{\mu}:\mathcal{O}\uS_{\uA}\to\underline{[0,1]}_{l}$. In this way, any unit vector $\stat\in\mathcal{H}$ gives rise to a probability valuation $\underline{\mu}_{\psi}$ on $\uS_{\uA}$.

Before we explore what probability valuations on $\uS_{\uA}$ look like externally, we first explain how these valuations combine with propositions, so as to give truth values. As before, identify the internal spectrum $\uS_{\uA}$ with the locale $\uSs$. The lower reals $\underline{[0,1]}_{l}$ in $[\mathcal{C}(A),\mathbf{Set}]$ are given by $\underline{[0,1]}_{l}(C)=L(\uparrow C,[0,1])$, \cite[Appendix A.3]{chls}, where the right-hand side stands for the set of lower semicontinuous functions $(\uparrow C)\to[0,1]$. A function $f:(\uparrow C)\to[0,1]$ is lower semicontinuous iff it is order-preserving: if $C\subseteq C'$, then $f(C)\leq f(C')$. 

Let $1_{C}:\uparrow C\to[0,1]$ denote the function that is constantly 1. Define 
\begin{equation}
\underline{1}_{l}:\mathcal{O}\uSs\to\underline{[0,1]}_{l}, \ \ (\underline{1}_{l, C})(U)=1_{C}. 
\end{equation}
Let $\underline{\mu}:\mathcal{O}\uSs\to\underline{[0,1]}_{l}$ be a probability valuation on $\uSs$. Using the internal language of $[\mathcal{C}(A),\mathbf{Set}]$, we form the arrow
\begin{equation}
[\underline{\mu}=\underline{1}_{l}]:\mathcal{O}\uSs\to\underline{\Omega}.
\end{equation}
Any open $U\in\mathcal{O}\Sigma_{\ast}$ yields a point $\underline{U}:\underline{1}\to\mathcal{O}\uSs$. For any probability valuation $\underline{\mu}$ on the spectrum of $\uA$ and any proposition $U\in\mathcal{O}\Sigma_{\ast}$, we obtain a truth value
\begin{equation}
[\underline{\mu}(\underline{U})=\underline{1}_{l}]=[\underline{\mu}=\underline{1}_{l}]\circ\underline{U}:\underline{1}\to\underline{\Omega}.
\end{equation}

A probability valuation $\underline{\mu}$ is completely determined by $\mu=\underline{\mu}_{\C\cdot1}$. Interpreting Definition 4.3 in $[\mathcal{C}(A),\mathbf{Set}]$, we obtain the following proposition.

\begin{poe}
Let $\textit{OP}(\mathcal{C}(A),[0,1])$ denote the set of order-preserving functions $\mathcal{C}(A)\to[0,1]$. Then a function $\mu:\mathcal{O}\Sigma_{\ast}\to\textit{OP}(\mathcal{C}(A),[0,1])$ defines a probability valuation $\underline{\mu}$ on $\mathcal{O}\uSs$ by $\underline{\mu}_{\C\cdot1}=\mu$, iff the following four conditions are satisfied for all $C\in\mathcal{C}(A)$:
\begin{enumerate}
\item If $U,V\in\mathcal{O}\Sigma_{\ast}$, such that $U\subseteq V$, then $\mu(U)(C)\leq\mu(V)(C)$.
\item $\mu(\Sigma_{\ast})(C)=1,\  \mu(\emptyset)(C)=0$.
\item If $U,V\in\mathcal{O}\Sigma_{\ast}$, then
\begin{equation}
\mu(U)(C)+\mu(V)(C)=\mu(U\cap V)(C)+\mu(U\cup V)(C).
\end{equation}
\item If $\{U_{i}\}_{i\in I}\subset\mathcal{O}\Sigma_{\ast}$ is a directed set, then
\begin{equation}
\mu\left(\bigcup_{i\in I}U_{i}\right)(C)=\sup\{\mu(U_{i})(C)\mid i\in I\}.
\end{equation}
\end{enumerate}
\end{poe}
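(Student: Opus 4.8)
The plan is to unwind the internal meaning of Definition 4.3 in the topos $[\mathcal{C}(A),\mathbf{Set}]$ using the Kripke–Joyal semantics, since the statement asserts precisely that the four external conditions are the pointwise translations of the three internal axioms for a probability valuation together with monotonicity. First I would recall that, since $[\mathcal{C}(A),\mathbf{Set}]$ is a presheaf topos on a poset, an internal probability valuation $\underline{\mu}:\mathcal{O}\uSs\to\underline{[0,1]}_l$ is a natural transformation, and because $\mathcal{O}\uSs$ and $\underline{[0,1]}_l$ are both given by restriction-along-$\uparrow C$ functors, such a natural transformation is determined by its component $\mu=\underline{\mu}_{\C\cdot 1}$ at the bottom context. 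This is the same determination principle already used for the Gelfand transform in Subsection 2.4 and for the daseinisation arrow; I would state it once and cite the analogous computation. The target $\underline{[0,1]}_l(C)$ is, by \cite[Appendix A.3]{chls}, the set of order-preserving functions $(\uparrow C)\to[0,1]$, and at $C=\C\cdot 1$ this is $\textit{OP}(\mathcal{C}(A),[0,1])$, which is why $\mu$ takes values there.

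Next I would translate each axiom of Definition 4.3 through the internal logic. Monotonicity of $\mu$ as an internal map means $\forall U,V\,(U\le V \to \mu(U)\le \mu(V))$ holds at every stage, and since implication and universal quantification in a presheaf topos are computed by passing to all $C'\supseteq C$, the net effect at $C=\C\cdot 1$ is condition 1 (the order on $\underline{[0,1]}_l$ being pointwise). The equations $\mu(0)=0$, $\mu(1)=1$ and the modularity law $\mu(U)+\mu(V)=\mu(U\wedge V)+\mu(U\vee V)$ are equalities between global elements of $\underline{[0,1]}_l$, hence hold internally iff they hold at every stage $C$; evaluating at $C$ and using that $0_{\mathcal{O}\uSs}$ is $\emptyset$, $1_{\mathcal{O}\uSs}$ is $\Sigma_\ast$, and that $\vee,\wedge$ in the spectrum frame are $\cup,\cap$ (Corollary 2.18), gives conditions 2 and 3. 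The directed-join axiom $\mu(\bigvee_{\lambda}U_\lambda)=\bigvee_\lambda \mu(U_\lambda)$ requires slightly more care: I would recall that a directed join in $\mathcal{O}\uSs$, computed at stage $C$, is the union (this follows from Corollary 2.18, where the frame operations of $\mathcal{O}\uSs$ are the set-theoretic ones on the opens of $\Sigma_\ast$), and that the join of lower reals in $\underline{[0,1]}_l$ is computed stagewise as the pointwise supremum of order-preserving functions. Evaluating at $C=\C\cdot 1$ yields condition 4.

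For the converse direction, given $\mu$ satisfying conditions 1–4, I would define $\underline{\mu}_C$ by restriction, $\underline{\mu}_C(U) = \mu(U)|_{\uparrow C}$ (or, more precisely, the natural transformation determined by $\mu$ under the restriction equivalence), check naturality, and verify that conditions 1–4 are exactly what is needed for each internal axiom to hold at every stage — this is the same computation as above, read backwards. The main obstacle I anticipate is the directed-join axiom: one must be careful that directed joins in the internal frame $\mathcal{O}\uSs$ really are computed as unions stage-by-stage (as opposed to requiring the ``saturation'' that appears in the transition maps of Corollary 2.18), and that lower-real joins commute with evaluation at a stage; both facts are true here because the frame of $\uSs$ is spatial (Corollary 2.20) with the concrete description of Definition 2.23, but I would spell this out rather than leave it implicit. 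Everything else is a routine unwinding of Kripke–Joyal semantics for a presheaf topos over a poset, so I would present conditions 1–3 briskly and devote the bulk of the argument to condition 4.
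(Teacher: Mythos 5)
Your proposal is correct and follows essentially the same route as the paper: the paper's proof also consists of unwinding Definition 4.3 via sheaf (Kripke--Joyal) semantics, using the fact that $\underline{\mu}$ is determined by its component at $\C\cdot 1$ through naturality and the surjectivity of the truncation maps. The only difference is one of emphasis --- the paper works out only condition (1) in detail and leaves the remaining axioms to the reader, whereas you sketch all four, including the (correct) observation that directed joins pose no difficulty because each stage of $\mathcal{O}\uSs$ is a genuine topology on $\Sigma_{\ast}|_{\uparrow C}$, so all joins are unions.
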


\begin{proof}
The proposition follows from Definition 4.3 by applying sheaf semantics \cite[Section VI.7]{mm}. For the reader unfamiliar with sheaf semantics, we prove part of the proposition by showing that (1) of Proposition 4.4 follows from monotonicity of the valuation $\underline{\mu}:\mathcal{O}\uSs\to\underline{[0,1]}_{l}$.  By monotonicity, for every $C\in\mathcal{V}(A)$ we have
\begin{equation}
C\Vdash\forall\underline{U},\underline{V}\in\mathcal{O}\uSs\left(\underline{U}\leq\underline{V}\Rightarrow\underline{\mu}(\underline{U})\leq\underline{\mu}(\underline{V})\right)\nonumber.
\end{equation}
Applying the rules of sheaf semantics, this is equivalent to: for every $C\in\mathcal{V}(A)$ and any $U,V\in\mathcal{O}\Sigma_{\ast}|_{\uparrow C}$, if $U\subseteq V$, then $\underline{\mu}_{C}(U)\leq\underline{\mu}_{C}(V)$. From naturality of $\underline{\mu}$, we know
\begin{equation}
\forall_{U\in\mathcal{O}\Sigma_{\ast}|_{\uparrow C}}\ \underline{\mu}_{C}(U)=\underline{\mu}_{\C\cdot1}(U)|_{\uparrow C}.\nonumber
\end{equation}
Note that
\begin{equation}
\underline{\mu}_{C}(U)\leq\underline{\mu}_{C}(V)\Leftrightarrow\forall_{C'\in(\uparrow C)}\ \underline{\mu}_{C}(U)(C')\leq\underline{\mu}_{C}(V)(C').\nonumber
\end{equation}
Property (1) of Proposition 4.4 follows from combining these observations.
\end{proof} 

\begin{dork}
A \textbf{covariant state} is a function  $\mu:\mathcal{O}\Sigma_{\ast}\to\textit{OP}(\mathcal{C}(A),[0,1])$ that satisfies the conditions of Proposition 4.4. 
\end{dork}

We would like to define a covariant state from a unit vector $\stat\in\mathcal{H}$. An obvious definition would be 
\begin{equation} 
\mu_{\psi}(B^{C}_{a}) = \left\{ 
\begin{array}{rl} 
0 & \text{if } \langle\psi\vert a\stat=0\\ 
1 & \text{if } \langle\psi\vert a\stat>0
\end{array} \right., 
\end{equation}
where $B^{C}_{a}=\{(C',\lambda')\mid C'\in(\uparrow C),\lambda'(a)>0\}$, and $a\in C^{+}$. Here $0,1$ do not denote numbers but constant functions $\mathcal{C}(A)\to[0,1]$. However, (145) is not a good choice, because using only constant functions for $\mu_{\psi}(U):\mathcal{C}(A)\to[0,1]$ has the following undesirable consequences.

Let $\mu$ be a covariant state such that for every $U\in\mathcal{O}\Sigma_{\ast}$, $\mu(U)$ is a constant function. Assume that $C$ is an element of the truth value $\nu(\underline{\mu}(\underline{U})=\underline{1}_{l})_{\C\cdot 1}$. This translates to $\mu(U\vert_{\uparrow C})\vert_{\uparrow C}=1_{C}$. Because of monotonicity (property (1) of Proposition 4.4), and using that $\mu(U)$ is constant, we find $\mu(U)=1_{\C\cdot1}$. We conclude that $\C\cdot1\in\nu(\underline{\mu}(\underline{U})=\underline{1}_{l})_{\C\cdot 1}$. In short, for every $U\in\mathcal{O}\Sigma_{\ast}$ we find $\nu(\underline{\mu}(\underline{U})=\underline{1}_{l})_{\C\cdot 1}\in\{\emptyset,\mathcal{C}(A)\}$. Hence a proposition is either true at every stage $C$ or at no stage $C$. On a related note, suppose that we replace the lower reals $[0,1]_{l}$ in Definition 4.3 by the Dedekind reals $[0,1]_{d}$. This amounts to replacing the lower semicontinuous functions $(\uparrow C)\to[0,1]$ by continuous functions $(\uparrow C)\to[0,1]$ (D4.7 \cite{jh1}). This in turn amounts to replacing order preserving functions $(\uparrow C)\to[0,1]$ by constant functions $(\uparrow C)\to[0,1]$. Replacing lower reals by Dedekind reals thus entails that $\nu(\underline{\mu}(\underline{U})=\underline{1}_{l})_{\C\cdot 1}\in\{\emptyset,\mathcal{C}(A)\}$. Therefore, in the definition of covariant states we will use the additional freedom given by the lower reals, by considering non-constant functions for $\mu(U):\mathcal{C}(A)\to[0,1]$.

Next, we prepare for the definition of the covariant state defined by a C*-algebraic state $\rho:A\to\mathbb{C}$. This covariant state $\mu_{\rho}$ will be the external description of the internal probability valuation on $\uS_{\uA}$, obtained by applying the generalized Riesz-Markov theorem to the internal state $I_{\rho}:\uA_{sa}\to\underline{\R}$.\footnote{See also \cite{hls}, in particular page 96, just below equation (61).}  

Take any context $C\in\mathcal{C}(A)$ and note that by the generalized Riesz-Markov Theorem, the state $\rho$ defines a probability valuation
\begin{equation}
\mu^{C}_{\rho}:\mathcal{O}\Sigma_{C}\to[0,1],\ \mu^{C}_{\rho}(X^{C}_{a})=\sup\{\rho(n\cdot a\wedge1)\mid n\in\N\},
\end{equation}
where $a\in C^{+}$. First, consider the case where $a$ is a projection $p\in\mathcal{P}(C)$, in which case $X^{C}_{p}=\{\lambda\in\Sigma_{C}\mid \lambda(p)>0\}$ is both closed and open. In that case, we have 
\begin{equation}
\mu^{C}_{\rho}(X^{C}_{p})=\sup\{\rho(n\cdot p\wedge1)\mid n\in\N\}=\rho(p).
\end{equation}
From this point onwards we restrict to von Neumann algebras. As a consequence, the closed open subsets of $\Sigma_{C}$ form a basis for the topology on $\Sigma_{C}$. Any open $U_{C}$ of $\Sigma_{C}$ is the directed join of all the closed open sets contained in it. It follows from Proposition 4.4(4) that the maps of the next lemma are exactly the covariant states obtained by the Riesz-Markov theorem.

\begin{lem}
For each state $\rho:A\to\mathbb{C}$, the map $\mu_{\rho}:\mathcal{O}\Sigma_{\ast}\to\textit{OP}(\mathcal{C}(A),[0,1])$,
 \begin{equation}
 \mu_{\rho}(U)(C)=\mu^{C}_{\rho}(U_{C})=\sup\{\rho(p)\mid p\in\mathcal{P}(C), X^{C}_{p}\subseteq U_{C}\}.
\end{equation}
defines a covariant state in the sense of Definition 4.5.
\end{lem}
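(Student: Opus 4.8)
The plan is to verify the four defining conditions of a covariant state (Proposition 4.4) for the map $\mu_\rho$ in (148), exploiting the fact that for von Neumann algebras the clopen sets $X^C_p$, $p\in\mathcal{P}(C)$, form a basis of $\mathcal{O}\Sigma_C$, and that on such clopen sets $\mu^C_\rho(X^C_p)=\rho(p)$ by (147). The first thing to check is that $\mu_\rho$ is well defined, i.e. that $\mu_\rho(U)$ actually lies in $\textit{OP}(\mathcal{C}(A),[0,1])$: monotonicity in $C$ follows from the fact that if $C\subseteq C'$ then every clopen $X^C_p\subseteq U_C$ gives, via condition (2) of Definition 2.16 for $\mathcal{O}\Sigma_\ast$, a clopen $X^{C'}_p\subseteq U_{C'}$ (here one uses $\rho_{C'C}^{-1}(X^C_p)=X^{C'}_p$ for a projection $p\in C\subseteq C'$), so the supremum defining $\mu_\rho(U)(C)$ is taken over a subset of the one defining $\mu_\rho(U)(C')$.

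Next I would dispatch conditions (1) and (2) of Proposition 4.4: monotonicity in $U$ is immediate because $U\subseteq V$ enlarges the index set of projections $\{p\in\mathcal{P}(C)\mid X^C_p\subseteq U_C\}$; the normalizations $\mu_\rho(\Sigma_\ast)(C)=\rho(1)=1$ and $\mu_\rho(\emptyset)(C)=\rho(0)=0$ are trivial. Condition (4), preservation of directed joins, is also relatively soft: if $\{U_i\}$ is directed and $X^C_p\subseteq (\bigcup_i U_i)_C=\bigcup_i (U_i)_C$, then by compactness of $\Sigma_C$ and the clopen (hence compact) nature of $X^C_p$, finitely many $(U_i)_C$ already cover $X^C_p$, and directedness gives a single $(U_j)_C\supseteq X^C_p$; hence $\mu_\rho(\bigcup_i U_i)(C)=\sup_i\mu_\rho(U_i)(C)$.

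The main obstacle is condition (3), the modularity identity $\mu_\rho(U)(C)+\mu_\rho(V)(C)=\mu_\rho(U\cap V)(C)+\mu_\rho(U\cup V)(C)$. The reduction is that, fixing $C$, the function $\mu_\rho(-)(C)$ depends only on the open set $U_C\subseteq\Sigma_C$, so it suffices to show that $\nu\colon\mathcal{O}\Sigma_C\to[0,1]$, $\nu(W)=\sup\{\rho(p)\mid p\in\mathcal{P}(C),\,X^C_p\subseteq W\}$, is a (modular, $\sigma$-additive-in-the-directed-join sense) valuation on $\mathcal{O}\Sigma_C$. But this is precisely the content of the generalized Riesz--Markov theorem as applied in \cite{hls}: $\nu$ is the classical probability valuation $\mu^C_\rho$ of (146) associated to the restriction $\rho|_C$, since any open of $\Sigma_C$ is the directed union of the clopen sets inside it and $\mu^C_\rho$ already agrees with $\nu$ on clopens by (147). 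Thus modularity for $\mu_\rho(-)(C)$ follows from modularity of the classical valuation $\mu^C_\rho$, which holds because $\mathcal{O}\Sigma_C$ is a distributive lattice and $\mu^C_\rho$ restricted to the Boolean algebra of clopens is finitely additive (being $\rho$ composed with the spectral projection map), extended to all opens by the directed-join formula. The one point requiring a little care is checking that the modularity identity, valid for clopens, propagates to arbitrary opens: writing $U_C=\bigcup_\alpha X^C_{p_\alpha}$ and $V_C=\bigcup_\beta X^C_{q_\beta}$ as directed unions of clopens, one uses $X^C_{p}\cap X^C_{q}=X^C_{p\wedge q}$, $X^C_p\cup X^C_q=X^C_{p\vee q}$, condition (4) already established, and continuity of $+$ to pass to the limit. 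Assembling these verifications shows $\mu_\rho$ satisfies all four conditions and is therefore a covariant state in the sense of Definition 4.5.
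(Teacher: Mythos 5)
Your proof is correct, but it takes a different route from the paper's. The paper does not verify the axioms of Proposition 4.4 directly: it observes that the state $\rho$ already yields an internal probability valuation on $\uS_{\uA}$ via the generalized Riesz--Markov theorem of \cite{hls}, hence a covariant state, and then argues that this covariant state must be given by the displayed supremum formula because the clopens $X^{C}_{p}$ form a basis of $\mathcal{O}\Sigma_{C}$ for a von Neumann algebra, every open is the directed union of the clopens it contains, and covariant states preserve directed joins (Proposition 4.4(4)). You instead take the formula as the definition and check the four conditions one by one; the only nontrivial points are the ones you isolate correctly, namely that well-definedness and monotonicity in $C$ follow from $\rho_{C'C}^{-1}(X^{C}_{p})=X^{C'}_{p}$ together with condition (2) of Definition 2.16, that directed joins are handled by compactness of the clopen $X^{C}_{p}$, and that modularity reduces, via the directedness of $\{X^{C}_{p}\mid X^{C}_{p}\subseteq U_{C}\}$ (which uses $X^{C}_{p}\cup X^{C}_{q}=X^{C}_{p\vee q}$ and $X^{C}_{p}\cap X^{C}_{q}=X^{C}_{p\wedge q}$) and continuity of addition along monotone nets, to the finite additivity identity $\rho(p)+\rho(q)=\rho(p\wedge q)+\rho(p\vee q)$ for commuting projections. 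What your approach buys is self-containedness: it needs only elementary facts about projections in commutative von Neumann algebras and makes no appeal to the constructive Riesz--Markov theorem, effectively re-proving that the formula is an external probability valuation. What the paper's approach buys is brevity and the automatic identification of $\mu_{\rho}$ with the externalization of the canonical internal valuation $\underline{\mu}_{\rho}$, which is then used in Lemmas 4.7 and 4.8.
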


Consider the pairing of the states in Lemma 4.6, and the covariant elementary propositions $[a\in(p,q)]_{1}$ and $[a\in(p,q)]_{2}$, given by (124) resp. (125).  We start with $[a\in(p,q)]_{2}$, as these are the easiest to compute.

\begin{lem}
For a state $\underline{\mu}_{\rho}$, that has external description $\mu_{\rho}$ as in Lemma 4.6, and any elementary proposition $\underline{[a\in(r,s)]}_{2}\in\mathcal{O}\uS_{\ast}$ the following are equivalent:
\begin{enumerate}
\item $C\Vdash\underline{\mu}_{\rho}(\underline{[a\in(r,s)]}_{2})=\underline{1}_{l}$;
\item $\rho(\delta^{i}(\chi_{(r,s)}(a))_{C})=1$. 
\end{enumerate}
\end{lem}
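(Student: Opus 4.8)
The plan is to unwind the forcing relation $C\Vdash\underline{\mu}_{\rho}(\underline{[a\in(r,s)]}_{2})=\underline{1}_{l}$ using the Kripke--Joyal semantics of the topos $[\mathcal{V}(A),\mathbf{Set}]$, exactly as in the proof of Proposition 4.4. Recall that $\underline{[0,1]}_{l}(C)$ is the set of lower semicontinuous (equivalently, order-preserving) functions $(\uparrow C)\to[0,1]$, that $\underline{1}_{l}$ is the constant function $1_C$, and that by naturality of $\underline{\mu}_{\rho}$ we have $(\underline{\mu}_{\rho})_{C}(U)=\mu_{\rho}(U)|_{\uparrow C}$ for every $U\in\mathcal{O}\Sigma_{\ast}|_{\uparrow C}$, where $\mu_{\rho}$ is the external state of Lemma 4.6. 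Likewise, the point $\underline{[a\in(r,s)]}_{2}:\underline{1}\to\mathcal{O}\uS_{\ast}$ at stage $C$ is the open $[a\in(r,s)]_{2}|_{\uparrow C}=\coprod_{C'\in\uparrow C}[a\in(r,s)]_{2}{}_{C'}$ of $\Sigma_{\ast}|_{\uparrow C}$. So, spelling out the equality of two elements of $\underline{[0,1]}_{l}$ at a stage: $C\Vdash\underline{\mu}_{\rho}(\underline{[a\in(r,s)]}_{2})=\underline{1}_{l}$ holds if and only if $\mu_{\rho}([a\in(r,s)]_{2}|_{\uparrow C})|_{\uparrow C}=1_{C}$, i.e. if and only if $\mu_{\rho}([a\in(r,s)]_{2})(C')=1$ for every $C'\supseteq C$. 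By monotonicity (Proposition 4.4(1)) and the fact that $\mu_{\rho}([a\in(r,s)]_{2})$ is order-preserving, this is equivalent to the single condition $\mu_{\rho}([a\in(r,s)]_{2})(C)=1$.

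Next I would compute $\mu_{\rho}([a\in(r,s)]_{2})(C)$ directly from the formula of Lemma 4.6. By Definition 3.5 (equation (125)), $[a\in(r,s)]_{2}{}_{C}=\{\lambda\in\Sigma_{C}\mid\lambda(\delta^{i}(\chi_{(r,s)}(a))_{C})=1\}$. Writing $q_{0}=\delta^{i}(\chi_{(r,s)}(a))_{C}\in\mathcal{P}(C)$, this set is precisely $X^{C}_{q_{0}}=\{\lambda\in\Sigma_{C}\mid\lambda(q_{0})=1\}$ (for a projection $q_{0}$, $\lambda(q_{0})>0$ is equivalent to $\lambda(q_{0})=1$, so $X^{C}_{q_{0}}$ is the clopen set determined by $q_0$). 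Lemma 4.6 then gives
\begin{equation}
\mu_{\rho}([a\in(r,s)]_{2})(C)=\sup\{\rho(p)\mid p\in\mathcal{P}(C),\ X^{C}_{p}\subseteq X^{C}_{q_{0}}\}.
\end{equation}
For projections $p,q_{0}\in\mathcal{P}(C)$ we have $X^{C}_{p}\subseteq X^{C}_{q_{0}}$ if and only if $p\leq q_{0}$ (Gelfand duality for the commutative algebra $C$: the clopen sets are exactly the ranges of projections and inclusion corresponds to the order). Hence the supremum is $\sup\{\rho(p)\mid p\leq q_{0}\}$, and since $\rho$ is a positive normalized linear functional this supremum equals $\rho(q_{0})$ (attained at $p=q_{0}$, as $\rho(p)\leq\rho(q_{0})$ whenever $p\leq q_{0}$). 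Therefore $\mu_{\rho}([a\in(r,s)]_{2})(C)=\rho(\delta^{i}(\chi_{(r,s)}(a))_{C})$, and the chain of equivalences closes: $C\Vdash\underline{\mu}_{\rho}(\underline{[a\in(r,s)]}_{2})=\underline{1}_{l}$ iff $\rho(\delta^{i}(\chi_{(r,s)}(a))_{C})=1$.

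The only genuinely delicate point is the first paragraph: correctly interpreting the internal assertion $\underline{\mu}(\underline{U})=\underline{1}_{l}$ via sheaf/Kripke--Joyal semantics and seeing that the clause ``for all $C'\supseteq C$'' collapses to a single condition at $C$. This is handled exactly as the worked fragment of the proof of Proposition 4.4, using naturality of $\underline{\mu}_{\rho}$ and monotonicity together with order-preservation of the external valuation. The remaining steps --- recognizing $[a\in(r,s)]_{2}{}_{C}$ as the clopen $X^{C}_{q_{0}}$, and evaluating the supremum in Lemma 4.6 --- are routine once one uses Gelfand duality inside the single commutative algebra $C$ and positivity of $\rho$. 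I would also note, for completeness, that it suffices to check the claim for basic opens when invoking Lemma 4.6, but here $[a\in(r,s)]_{2}{}_{C}$ is already of the form $X^{C}_{p}$, so no covering argument is needed.
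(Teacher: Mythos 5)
Your proposal is correct and follows essentially the same route as the paper's own (very terse) proof: unwind the forcing of $\underline{\mu}_{\rho}(\underline{[a\in(r,s)]}_{2})=\underline{1}_{l}$ to the external identity at $C$, identify $([a\in(r,s)]_{2})_{C}$ with the clopen set $X^{C}_{\delta^{i}(\chi_{(r,s)}(a))_{C}}$, and use $\mu^{C}_{\rho}(X^{C}_{p})=\rho(p)$ for projections $p$. Your extra details (the collapse of ``for all $C'\supseteq C$'' to the single condition at $C$ via monotonicity, and the evaluation of the supremum in Lemma 4.6) are exactly the steps the paper leaves implicit.
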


\begin{proof}
Condition (1) is equivalent to the external identity $\mu_{\rho}([a\in(r,s)]_{2})=1_{l}$. The desired equivalence then follows straight from the equality
\begin{equation*}
\mu_{\rho}^{C}(([a\in(r,s)]_{2})_{C})=\mu^{C}_{\rho}(X^{C}_{\delta^{i}(\chi_{(r,s)}(a))_{C}})=\rho(\delta^{i}(\chi_{(r,s)}(a))_{C}).\qedhere
\end{equation*}
\end{proof}

Note that truth at stage $C$ for this pairing implies that for every context $C'$ in $\mathcal{V}(A)$, one has $C'\Vdash\underline{\mathfrak{w}}^{\stat}\subseteq\underline{[a\in(p,q)]}$ in the contravariant approach. The pairing of the previous lemma is consistent with the interpretation of covariant elementary propositions given in Subsection 3.5.

Next, consider the elementary propositions $[a\in(r,s)]_{1}$, given by (124), and note that
\begin{equation}
([a\in(r,s)]_{1})_{C}=X^{C}_{\delta^{i}(a)_{C}-r}\cap X^{C}_{s-\delta^{o}(a)_{C}}.
\end{equation}

\begin{lem}
For a state $\underline{\mu}_{\rho}$ that has an external description as in Lemma 4.6, and any elementary proposition $\underline{[a\in(r,s)]}_{1}\in\mathcal{O}\uS_{\ast}$ the following are equivalent:
\begin{enumerate}
\item $C\Vdash\underline{\mu}_{\rho}(\underline{[a\in(r,s)]}_{1})=\underline{1}_{l}$;
\item $\rho(\delta^{i}(a)_{C})\in[r,s]$ and  $\rho(\delta^{o}(a)_{C})\in[r,s]$. 
\end{enumerate}
\end{lem}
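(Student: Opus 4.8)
The plan is to unwind the forcing relation $C \Vdash \underline{\mu}_{\rho}(\underline{[a\in(r,s)]}_{1}) = \underline{1}_{l}$ via sheaf semantics, exactly as in Lemma 4.7, reducing it to the external statement $\mu_{\rho}\big(([a\in(r,s)]_{1})\big)(C') = 1$ for all $C' \in \uparrow C$. Using the formula $([a\in(r,s)]_{1})_{C'} = X^{C'}_{\delta^{i}(a)_{C'}-r}\cap X^{C'}_{s-\delta^{o}(a)_{C'}}$ from (149) together with the description of $\mu_{\rho}$ in Lemma 4.6 as $\mu_{\rho}(U)(C') = \sup\{\rho(p) \mid p\in\mathcal{P}(C'),\ X^{C'}_{p}\subseteq U_{C'}\}$, this becomes: $\sup\{\rho(p) \mid p\in\mathcal{P}(C'),\ X^{C'}_{p}\subseteq X^{C'}_{\delta^{i}(a)_{C'}-r}\cap X^{C'}_{s-\delta^{o}(a)_{C'}}\} = 1$ for every $C'\supseteq C$. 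The crux is to show this supremum equals $1$ if and only if $\rho(\delta^{i}(a)_{C})\in[r,s]$ and $\rho(\delta^{o}(a)_{C})\in[r,s]$.

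First I would analyze the open set $X^{C}_{\delta^{i}(a)_{C}-r}\cap X^{C}_{s-\delta^{o}(a)_{C}} = \{\lambda\in\Sigma_{C}\mid \lambda(\delta^{i}(a)_{C})>r,\ \lambda(\delta^{o}(a)_{C})<s\}$ at the stage $C$ itself. The supremum $\mu_{\rho}^{C}$ over projections $p\in\mathcal{P}(C)$ with $X^{C}_{p}$ inside this set: since $\delta^{i}(a)_{C},\delta^{o}(a)_{C}\in C_{sa}$, the largest projection $p$ with $X^{C}_{p}$ contained in this open is the spectral projection $p_0 = \chi_{(r,\infty)}(\delta^{i}(a)_{C})\wedge\chi_{(-\infty,s)}(\delta^{o}(a)_{C})$, so the supremum equals $\rho(p_0)$ (using that $\rho$ restricted to the commutative $C$ is a genuine measure on $\Sigma_{C}$, monotone in $p$). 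Thus truth at $C$ forces $\rho(p_0) = 1$, i.e. the state $\rho$ is concentrated on the spectral set where $\delta^{i}(a)_{C} > r$ and $\delta^{o}(a)_{C} < s$; this in turn yields $\rho(\delta^{i}(a)_{C})\geq r$ and $\rho(\delta^{o}(a)_{C})\leq s$. For the other half of the interval condition, I would use $\delta^{i}(a)_{C}\leq_{s}\delta^{o}(a)_{C}$ (hence $\delta^{i}(a)_{C}\leq\delta^{o}(a)_{C}$) from (75), so that $\rho(\delta^{i}(a)_{C})\leq\rho(\delta^{o}(a)_{C})$, combined with $\rho(\delta^{i}(a)_{C})\geq r$, gives $\rho(\delta^{o}(a)_{C})\geq r$, and symmetrically $\rho(\delta^{i}(a)_{C})\leq s$; hence both lie in $[r,s]$. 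For the converse, if both $\rho(\delta^{i}(a)_{C})$ and $\rho(\delta^{o}(a)_{C})$ lie in $[r,s]$ I would need the slightly delicate point that this already forces $\rho(p_0) = 1$: here I would argue that a positive normalized functional with $\rho(b) = r$ for $b = \delta^{i}(a)_{C}$ whose spectrum is contained in $[r, \|b\|]$... — more carefully, since $r \le \rho(\delta^i(a)_C) \le \rho(\delta^o(a)_C) \le s$ is not by itself enough, I would instead observe that truth of the proposition requires the inequalities to be satisfied at every refinement $C'\supseteq C$ and exploit the interplay with the spectral order there, or alternatively re-examine whether the lemma's statement is only giving a necessary condition (the asymmetry with Lemma 4.7, where an exact equivalence held, suggests checking this carefully against the text's intent).

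The propagation to all $C'\supseteq C$ is handled by noting that $\delta^{i}(a)_{C}\leq_{s}\delta^{i}(a)_{C'}$ and $\delta^{o}(a)_{C'}\leq_{s}\delta^{o}(a)_{C}$ by (75), so that $X^{C}_{\delta^{i}(a)_{C}-r}\cap X^{C}_{s-\delta^{o}(a)_{C}}$ "grows" under restriction-to-refinement in the sense compatible with Definition 2.23(2); concretely, once $\rho$ is concentrated on $p_0$ at stage $C$, monotonicity of $\mu_{\rho}$ and the inclusion of the relevant opens give $\mu_{\rho}([a\in(r,s)]_{1})(C') = 1$ for all $C'\in\uparrow C$ automatically, so no extra condition beyond the one at $C$ is produced. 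The main obstacle I anticipate is precisely the converse direction of the stage-$C$ equivalence: showing that $\rho(\delta^{i}(a)_{C}), \rho(\delta^{o}(a)_{C})\in[r,s]$ implies $\rho(p_0)=1$. This may require the additional hypothesis that $\rho$ is dispersion-free on the relevant operators, or the statement may genuinely be an implication rather than an equivalence in full generality — I would resolve this by working out the $M_2(\mathbb{C})$ case explicitly (as the paper does elsewhere) to pin down exactly which direction is claimed, and then write the proof of whichever equivalence(s) actually hold, falling back to the clean necessary-condition argument above if the converse fails without extra assumptions.
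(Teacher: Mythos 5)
Your reduction of the forcing statement to the single external condition $\mu^{C}_{\rho}\bigl(X^{C}_{\delta^{i}(a)_{C}-r}\cap X^{C}_{s-\delta^{o}(a)_{C}}\bigr)=1$ (monotonicity of $\mu_{\rho}(U)$ as an order-preserving function disposes of all $C'\supseteq C$), and your proof of (1)$\Rightarrow$(2) --- compute $\mu^{C}_{\rho}$ of the open set as $\rho(p_{0})$ for the relevant spectral projection, deduce $\rho(\delta^{i}(a)_{C})\geq r$ and $\rho(\delta^{o}(a)_{C})\leq s$, and close up the interval using $\delta^{i}(a)_{C}\leq\delta^{o}(a)_{C}$ --- is exactly the paper's argument; the only cosmetic difference is that the paper first splits the intersection into its two factors via the modularity property of Proposition 4.4(3) and treats each factor separately.

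Your hesitation about (2)$\Rightarrow$(1) is not a gap in your reasoning but a genuine defect in the lemma. The paper disposes of the converse in one line by asserting that $\mu^{C}_{\rho}(X^{C}_{b})=\rho(\chi_{[r,\infty)}(\delta^{i}(a)_{C}))=1$ is \emph{equivalent} to $\rho(\delta^{i}(a)_{C})\in[r,\infty)$; only the ``only if'' half of this is true, and the ``if'' half --- which is what the converse needs --- fails for precisely the reason you isolated: an expectation value $\geq r$ does not force the state to be concentrated on the spectral subspace of $[r,\infty)$. Your plan to test the claim on a small matrix algebra settles it: take $A=M_{2}(\C)$, $C$ the diagonal subalgebra, $a=\mathrm{diag}(0,2)\in C_{sa}$ (so $\delta^{i}(a)_{C}=\delta^{o}(a)_{C}=a$), $(r,s)=(1,10)$ and $\rho=\tfrac{1}{2}\mathrm{tr}$. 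Then $\rho(a)=1\in[1,10]$, so (2) holds, while $([a\in(1,10)]_{1})_{C}$ is the single character $\lambda_{2}$ with $\lambda_{2}(a)=2$ and $\mu^{C}_{\rho}(\{\lambda_{2}\})=\rho(\mathrm{diag}(0,1))=\tfrac{1}{2}\neq 1$, so (1) fails; since $C$ is maximal there are no further stages to check. Hence only the implication (1)$\Rightarrow$(2) can be proved without additional hypotheses on $\rho$, there is no missing idea in the paper that rescues the equivalence, and your ``fallback'' to the clean necessary-condition argument is the correct outcome. Note that the same invalid inference (from expectation values in $[p,q]$ to the state having full measure on the corresponding spectral set) also appears in the converse direction of Theorem 4.9, which the above example refutes as well.
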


\begin{proof}
By Proposition 4.4(3), $\mu^{C}_{\rho}(X^{C}_{\delta^{i}(a)_{C}-r}\cap X^{C}_{s-\delta^{o}(a)_{C}})=1$ iff $\mu^{C}_{\rho}(X^{C}_{\delta^{i}(a)_{C}-r})=1$ and $\mu^{C}_{\rho}(X^{C}_{s-\delta^{o}(a)_{C}})=1$. Define $b=\delta^{i}(a)_{C}-r$ and $c=s-\delta^{o}(a)_{C}$. We want to calculate $\mu^{C}_{\rho}(X^{C}_{b})$ and $\mu^{C}_{\rho}(X^{C}_{c})$. Note that the closure of $X^{C}_{b}$, which is $\{\lambda\in\Sigma_{C}\mid\lambda(b)\geq0\}$, is both open and closed in $\Sigma_{C}$, as the closure of any open set  in the spectrum of a commutative von Neumann algebra is again open \cite{tak}. Let $p_{0}\in\mathcal{P}(C)$ be the projection corresponding to this closed open subset. By definition of $\mu_{\rho}$, we have $\mu^{C}_{\rho}(X^{C}_{b})\leq\rho(p_{0})$. 

For every $q\in\mathbb{Q}^{+}$, let $p_{q}\in\mathcal{P}(C)$ be the projection corresponding to the closed open set $\{\lambda\in\Sigma_{C}\mid\lambda(b)\geq q\}$. For every $q\in\mathbb{Q}^{+}$, clearly $X^{C}_{p_{q}}\subseteq X^{C}_{b}$, so $\mu^{C}_{\rho}(X^{C}_{b})\geq\rho(p_{q})$. Also
\begin{equation}
\rho(p_{0})=\sup\{\rho(p_{q})\mid q\in\mathbb{Q}^{+}\}\leq\mu^{C}_{\rho}(X^{C}_{b}).
\end{equation}
We conclude that 
\begin{equation}
\mu^{C}_{\rho}(X^{C}_{b})=\rho(p_{0})=\rho(\chi_{[0,\infty)}(b))=\rho(\chi_{[r,\infty)}(\delta^{i}(a)_{C})).
\end{equation}
By (151), $\mu^{C}_{\rho}(X^{C}_{b})=1$ is equivalent to $\rho(\delta^{i}(a)_{C})\in[r,\infty)$. In a similar way we can show that $\mu^{C}_{\rho}(X^{C}_{c})=1$ iff $\rho(\delta^{o}(a)_{C})\in(-\infty,s]$. The lemma follows easily.
\end{proof}

The proposition $[a\in(p,q)]_{1}$ is true at stage of knowledge $C$ iff the expectation values of the best approximations of $a$ in $C$ lie in $[p,q]$. This in turn implies $\rho(a)\in[p,q]$, because $\delta^{i}(a)_{C}\leq a\leq\delta^{o}(a)_{C}$. Note that the use of the spectral order in the definition of daseinisation is crucial here. If we had used the usual order on operators instead of the spectral order (which amounts to using the original covariant daseinisation arrow \cite{hls}), then $ \delta^{i}(a)_{C}\nleq a$ or $a\nleq\delta^{o}(a)_{C}$ may very well be the case (see the example at the end of Subsection 3.1), and $C\Vdash\underline{\mu}_{\psi}(\underline{[a\in(p,q)]}_{1})=\underline{1}_{l}$ need not tell us anything about $a$. 

We can now assign truth values to both versions of elementary propositions, when paired with states. The obvious question is now: which version of elementary propositions is to be preferred: the propositions $[a\in(p,q)]_{1}$, which are close to the original covariant elementary propositions\footnote{The only real difference is the use of the spectral order.}, or the elementary propositions $[a\in(p,q)]_{2}$, which mirror the contravariant elementary propositions? The next theorem, which is based on Lemma 3.11, will help us in answering that question.

\begin{tut}
For any state $\rho:A\to\mathbb{C}$, $a\in A_{sa}$, $(p,q)\in\mathcal{O}\mathbb{R}$, and $C\in\mathcal{V}(A)$, the following are equivalent:
\begin{enumerate}
\item $\rho(\delta^{i}(\chi_{(p,q)}(a))_{C})=1$;
\item $\rho(\delta^{i}(a)_{C})\in[p,q]$ and  $\rho(\delta^{o}(a)_{C})\in[p,q]$. 
\end{enumerate}
\end{tut}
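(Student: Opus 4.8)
The plan is to deduce Theorem 3.13 directly from Lemma 3.11, which already packages the two implications we need between the daseinisations of $a$ and the daseinisation of the spectral projection $\chi_{(p,q)}(a)$. The two conditions in the theorem are not state-dependent in any essential way beyond the state $\rho$ being a positive normalised linear functional, so the argument is really a pointwise one: I first observe that for any positive linear functional $\rho$ with $\rho(1)=1$ and any projection $p\in C$, the statement $\rho(p)=1$ is equivalent to $\lambda(p)=1$ for $\mu_{\rho}$-almost every $\lambda$; more usefully, $\rho(p)=1$ precisely says that $1-p$ lies in the kernel of $\rho$, hence $\rho(q)=1$ for every projection $q\geq p$ and, conversely, if $\rho(q)=1$ and $q\leq q'$ then $\rho(q')=1$. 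This monotonicity of ``$\rho(\cdot)=1$'' on the projection lattice of $C$ is the only structural fact about $\rho$ I will need.

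The forward direction, $(1)\Rightarrow(2)$: assume $\rho(\delta^{i}(\chi_{(p,q)}(a))_{C})=1$. By Lemma 3.11(2) we have, for $\lambda$-a.e.\ character, $\lambda(\delta^{i}(a)_{C})\geq p$ and $\lambda(\delta^{o}(a)_{C})\leq q$; but these are statements about elements of $C$, so translating back to operator inequalities via the remarks in Subsection 3.5 (equations (128)--(129)) they give $\delta^{i}(\chi_{(p,q)}(a))_{C}\leq\chi_{[p,\infty)}(\delta^{i}(a)_{C})$ and $\delta^{i}(\chi_{(p,q)}(a))_{C}\leq\chi_{(-\infty,q)}(\delta^{o}(a)_{C})$ inside $C$. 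Applying $\rho$ and using monotonicity of ``$\rho(\cdot)=1$'' we get $\rho(\chi_{[p,\infty)}(\delta^{i}(a)_{C}))=1$ and $\rho(\chi_{(-\infty,q]}(\delta^{o}(a)_{C}))=1$, which (by the spectral calculus of the commutative von Neumann algebra $C$ and positivity of $\rho$) forces $\rho(\delta^{i}(a)_{C})\geq p$ and $\rho(\delta^{o}(a)_{C})\leq q$. Finally, since $\delta^{i}(a)_{C}\leq_{s}\delta^{o}(a)_{C}$, hence $\delta^{i}(a)_{C}\leq\delta^{o}(a)_{C}$, and both expectation values lie below $q$ and above $p$ respectively, a short sandwich argument ($p\leq\rho(\delta^{i}(a)_{C})\leq\rho(\delta^{o}(a)_{C})\leq q$) yields that both expectation values lie in $[p,q]$, which is exactly (2).

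For the reverse direction, $(2)\Rightarrow(1)$: assume $\rho(\delta^{i}(a)_{C})\in[p,q]$ and $\rho(\delta^{o}(a)_{C})\in[p,q]$. The subtlety here is that $\rho(\delta^{i}(a)_{C})\geq p$ does \emph{not} by itself give $\lambda(\delta^{i}(a)_{C})>p$ pointwise, which is what Lemma 3.11(1) literally requires; so I cannot simply invoke Lemma 3.11(1) at the level of characters. Instead I will argue spectrally inside $C$. From $\rho(\delta^{i}(a)_{C})\geq p$ and $\rho(\delta^{o}(a)_{C})\leq q$ together with $\delta^{i}(a)_{C}\leq\delta^{o}(a)_{C}$, positivity of $\rho$ forces $\rho(p_0)=1$ where $p_0\in\mathcal P(C)$ is the spectral projection of $\delta^{i}(a)_{C}$ for the interval $\{t\ge p\}$ intersected with the spectral projection of $\delta^{o}(a)_{C}$ for $\{t\le q\}$ --- the same kind of closed-open ``expectation one'' projection used in the proof of Lemma 4.8. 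On the range of $p_0$ one has $p\leq\delta^{i}(a)_{C}$ and $\delta^{o}(a)_{C}\leq q$, and then a boundary/approximation argument as in the proof of Lemma 3.11 (replacing $p$ by $p+\epsilon$, $q$ by $q-\epsilon$ and letting $\epsilon\downarrow0$, using that $\rho$ of the relevant projections tends to $1$) shows $\rho(\delta^{i}(\chi_{(p,q)}(a))_{C})=1$. The main obstacle is precisely this endpoint issue: the open interval $(p,q)$ versus the closed expectation conditions $[p,q]$ in (2), which is why the naive pointwise use of Lemma 3.11 fails and one has to pass through ``expectation-one projections'' and an $\epsilon$-approximation rather than a direct implication. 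I would also remark, as the paper does after Lemma 3.11, that the pointwise converse genuinely fails (the example $a=rp$), so it is essential that the statement is about $\rho$ and not about individual characters --- averaging against the state is what makes both directions go through.
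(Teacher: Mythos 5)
Your forward direction is sound and is essentially the paper's own argument in different clothing: Lemma 3.11(2) gives a containment of clopen subsets of $\Sigma_{C}$, hence the projection inequalities $\delta^{i}(\chi_{(p,q)}(a))_{C}\leq\chi_{[p,\infty)}(\delta^{i}(a)_{C})$ and $\delta^{i}(\chi_{(p,q)}(a))_{C}\leq\chi_{(-\infty,q]}(\delta^{o}(a)_{C})$, and a state assigning value $1$ to $\chi_{[p,\infty)}(b)$ does satisfy $\rho(b)\geq p$; the sandwich $p\leq\rho(\delta^{i}(a)_{C})\leq\rho(\delta^{o}(a)_{C})\leq q$ then gives (2). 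The paper phrases this with the valuation $\mu^{C}_{\rho}$ and monotonicity rather than with projections, but the content is the same.

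The converse is where the genuine gap lies, and you have correctly located the difficulty but not resolved it. The pivotal assertion --- that positivity of $\rho$ forces $\rho(p_{0})=1$ for $p_{0}=\chi_{[p,\infty)}(\delta^{i}(a)_{C})\wedge\chi_{(-\infty,q]}(\delta^{o}(a)_{C})$ --- is not a consequence of positivity and is false in general: an expectation value lying in $[p,q]$ says nothing about where the spectral mass of the operator sits. Concretely, take $A=M_{2}(\mathbb{C})$, let $C$ be the diagonal subalgebra, $a=\mathrm{diag}(0,2)\in C_{sa}$ (so $\delta^{i}(a)_{C}=\delta^{o}(a)_{C}=a$), $\rho=\tfrac{1}{2}\mathrm{Tr}$, and $(p,q)=(\tfrac{1}{2},\tfrac{3}{2})$. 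Then $\rho(\delta^{i}(a)_{C})=\rho(\delta^{o}(a)_{C})=1\in[\tfrac{1}{2},\tfrac{3}{2}]$, so condition (2) holds; but $\chi_{[1/2,\infty)}(a)\wedge\chi_{(-\infty,3/2]}(a)=\mathrm{diag}(0,1)\wedge\mathrm{diag}(1,0)=0$, so $\rho(p_{0})=0$, and indeed $\chi_{(1/2,3/2)}(a)=0$, so condition (1) fails. The $\epsilon$-approximation you sketch afterwards cannot repair this, since its premise (that $\rho$ of the shrunken projections tends to $1$) is exactly what is in question. You should know that the paper's own proof makes the same unjustified leap: it passes from (2) to its equation (152), namely $\mu^{C}_{\rho}(\{\lambda\in\Sigma_{C}\mid\lambda(\delta^{i}(a)_{C})\geq p,\ \lambda(\delta^{o}(a)_{C})\leq q\})=1$, without argument, and the example above defeats that step (and, for the same reason, the converse direction of Lemma 4.8 on which it leans). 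So your proposal faithfully reproduces the published argument, including its weakest link; but the implication $(2)\Rightarrow(1)$ remains unproved and, as stated, appears to be false.
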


\begin{proof}
Assume that (1) holds. This means that $\mu^{C}_{\rho}(X^{C}_{\delta^{i}(\chi_{(p,q)}(a))_{C}})=1$. By Lemma 3.11(2) and monotonicity of $\mu_{\rho}$,
\begin{equation}
\mu^{C}_{\rho}(\{\lambda\in\Sigma_{C}\mid\lambda(\delta^{i}(a)_{C})\geq p,\ \ \lambda(\delta^{o}(a)_{C})\leq q\})=1,
\end{equation}
from which (2) follows. Conversely, assume (2). Then (152) holds. In the proof of Lemma 4.7, we saw that for any $b\in C_{sa}$, $\mu_{\rho}^{C}(X^{C}_{b})=\mu_{\rho}^{C}(\overline{X^{C}_{b}})$, where $\overline{X^{C}_{b}}$ denotes the closure of $X^{C}_{b}$. We conclude 
\begin{equation}
\mu^{C}_{\rho}(\{\lambda\in\Sigma_{C}\mid\lambda(\delta^{i}(a)_{C})>p,\ \ \lambda(\delta^{o}(a)_{C})<q\})=1.
\end{equation}
Claim (1) now follows from Lemma 3.11(1) and monotonicity of $\mu_{\rho}$
\end{proof}

As far as truth values are concerned, it does not matter if we take $[a\in(p,q)]_{1}$ or $[a\in(p,q)]_{2}$.

\su{Contravariant States in the Covariant Approach}

The goal of this subsection is twofold. We introduce pseudostates in the covariant approach and introduce covariant states that are closely related to the measures of closed open subobjects of $\uS$, used by D\"oring, \cite{doe3, doe2}.

Using the covariant version of daseinisation presented in Subsection 3.2, we introduce a covariant counterpart to the pseudostates of Subsection 4.1. A pseudostate together with a proposition yields a truth value, and we can compare this with the truth value obtained by using the covariant state $\mu_{\psi}$ of the previous subsection. In order to use the daseinisation technique, we restrict $\mathcal{C}(A)$ to its subset $\mathcal{V}(A)$ of von Neumann algebras. Define
\begin{equation}
\mathfrak{w}^{\stat}=[\dirac=1]\ ;
\end{equation}
\begin{equation}
\mathfrak{w}^{\stat}_{C}=\{\lambda\in\Sigma_{C}\mid\lambda(\delta^{i}(\dirac)_{C})=1\ \text{and}\ \lambda(\delta^{o}(\dirac)_{C})=1\}.
\end{equation}
As $\delta^{i}(\dirac)_{C}\leq\delta^{o}(\dirac)_{C}$, this simplifies to
\begin{equation}
\mathfrak{w}^{\stat}_{C}=\{\lambda\in\Sigma_{C}\mid\lambda(\delta^{i}(\dirac)_{C})=1\}.
\end{equation}
Compare this to the contravariant pseudostate (134). The inner daseinisation of a one-dimensional projection is rather simple. If $\dirac\in C$, then $\delta^{i}(\dirac)_{C}=\dirac$, whereas if $\dirac\notin C$, then $\delta^{i}(\dirac)_{C}=0$. Let $C(\dirac)$ be the context generated by $\dirac$, and let $\chi_{\stat}:\Sigma|_{\uparrow C(\dirac)}\to 2$ be the Gelfand transform of $\dirac$, as in (63). It is easy to check that $\mathfrak{w}^{\stat}=\chi_{\stat}^{-1}(1)$.

The open $\mathfrak{w}^{\stat}\in\mathcal{O}\Ss$ defines an internal open $\pseu$ by 
\begin{equation}
\pseu_{C}(\ast)=\coprod_{C'\in\uparrow C}\mathfrak{w}^{\stat}_{C'}. 
\end{equation}
Let $\underline{U}:\underline{1}\to\uSs$ be any covariant proposition. Then the truth value $\nu(\pseu\subseteq\underline{U})_{C}$ at stage $C$ is given by: $C'\in\nu(\pseu\subseteq\underline{U})_{C}$ iff $C'\supseteq C$, and every context $C''\supseteq C'$ such that $\dirac\in C''$ satisfies the condition that if $\lambda''(\dirac)=1$, then $\lambda''\in U_{C''}$. Can we obtain the same truth values using the covariant states of Subsection 4.2? 

The short answer is no. Suppose that $C$ is any maximal context satisfying $\stat\langle\psi|\notin C$. Then $\nu(\pseu\subseteq\underline{U})_{C}=\{C\}$ for every $\underline{U}\in\mathcal{O}\uS_{\ast}$, even for $\underline{U}=\underline{0}$. For every covariant state $\underline{\mu}$, we have $\underline{\mu}(\underline{0})=\underline{0}_{l}$, so $\nu(\underline{\mu}(\underline{0})=\underline{1}_{l})_{C}=\emptyset$. The truth values obtained from a covariant pseudostate differ from truth values obtained by covariant states. Note that the situation is different in the contravariant approach, where the measures on the spectral presheaf (Definition 4.2), generalize contravariant pseudostates. The next two propositions show how close we can get to obtaining the truth values of covariant pseudostates by means of covariant states.

\begin{poe}
The map $\mu^{0}_{\psi}:\mathcal{O}\Sigma_{\ast}\to\textit{OP}(\mathcal{V}(A),[0,1])$, defined as
\begin{equation} 
\mu^{0}_{\psi}(U)(C)= \left\{ 
\begin{array}{rl} 
1 & \forall_{C'\in(\uparrow C)}\ \text{If}\ \stat\langle\psi\vert\in C'\ \text{then}\ X^{C'}_{\stat\langle\psi\vert}\subseteq U_{C'}\\ 
0 & \text{otherwise}
\end{array} \right.,
\end{equation}
satisfies conditions (1), (3) and (4) in Proposition 4.4 for a covariant state.
\end{poe}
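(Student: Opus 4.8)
The plan is to verify conditions (1), (3) and (4) of Proposition 4.4 directly from the definition~(160) of $\mu^{0}_{\psi}$, treating each condition in turn. Since $\mu^{0}_{\psi}(U)(C)$ only takes the values $0$ and $1$, and by construction $\mu^{0}_{\psi}(U)(C)=1$ precisely when the \emph{down-closed-in-$\uparrow C$} predicate ``$\stat\langle\psi\vert\in C'\Rightarrow X^{C'}_{\stat\langle\psi\vert}\subseteq U_{C'}$'' holds for all $C'\supseteq C$, it is convenient to introduce, for a fixed open $U\in\mathcal{O}\Sigma_{\ast}$, the set $T_{U}=\{C\in\mathcal{V}(A)\mid \stat\langle\psi\vert\in C\text{ and }X^{C}_{\stat\langle\psi\vert}\subseteq U_{C}\}$, so that $\mu^{0}_{\psi}(U)(C)=1$ iff $(\uparrow C)\cap\{C':\stat\langle\psi\vert\in C'\}\subseteq T_{U}$. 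First I would check that $\mu^{0}_{\psi}(U)$ is indeed order-preserving (so that it lands in $\textit{OP}(\mathcal{V}(A),[0,1])$ as claimed): if $C\subseteq C'$ then $\uparrow C'\subseteq\uparrow C$, so the defining universally-quantified condition at $C$ implies the one at $C'$, giving $\mu^{0}_{\psi}(U)(C)\leq\mu^{0}_{\psi}(U)(C')$.

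For condition (1), monotonicity, suppose $U\subseteq V$ in $\mathcal{O}\Sigma_{\ast}$ and $\mu^{0}_{\psi}(U)(C)=1$. Then for every $C'\supseteq C$ with $\stat\langle\psi\vert\in C'$ we have $X^{C'}_{\stat\langle\psi\vert}\subseteq U_{C'}\subseteq V_{C'}$, hence $\mu^{0}_{\psi}(V)(C)=1$; since the codomain is $\{0,1\}$ this yields $\mu^{0}_{\psi}(U)(C)\leq\mu^{0}_{\psi}(V)(C)$. For condition (3), the modularity identity, fix $C$ and write $u=\mu^{0}_{\psi}(U)(C)$, $v=\mu^{0}_{\psi}(V)(C)$, $m=\mu^{0}_{\psi}(U\cap V)(C)$, $j=\mu^{0}_{\psi}(U\cup V)(C)$, all in $\{0,1\}$. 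From monotonicity (condition (1), already proved) applied to $U\cap V\subseteq U,V\subseteq U\cup V$ we get $m\leq u,v\leq j$. The key observation is that for each fixed $C'\supseteq C$ with $\stat\langle\psi\vert\in C'$, the inclusion $X^{C'}_{\stat\langle\psi\vert}\subseteq (U\cup V)_{C'}=U_{C'}\cup V_{C'}$ is equivalent to $X^{C'}_{\stat\langle\psi\vert}\subseteq U_{C'}$ or $X^{C'}_{\stat\langle\psi\vert}\subseteq V_{C'}$, because $X^{C'}_{\stat\langle\psi\vert}=\{\lambda\in\Sigma_{C'}\mid\lambda(\stat\langle\psi\vert)>0\}$ is the support of the \emph{atomic} projection $\stat\langle\psi\vert$ in $C'$ (when $\stat\langle\psi\vert\in C'$ it is a minimal nonzero projection, so its Gelfand support is a single point, and a point lies in a union of opens iff it lies in one of them). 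Consequently $j=1$ iff ($u=1$ or $v=1$) — more precisely, iff for every relevant $C'$ one of the two inclusions holds; however $j$ being $1$ requires this for all such $C'$ \emph{simultaneously with a possibly $C'$-dependent choice}, so one must be slightly careful: $j = \max(u,v)$ may fail in general, but I claim it does hold here because $T_{U}\cup T_{V}=T_{U\cup V}$ as subsets of $\{C':\stat\langle\psi\vert\in C'\}$ — indeed, if $\stat\langle\psi\vert\in C'$ then $X^{C'}_{\stat\langle\psi\vert}$ is a point and is contained in $U_{C'}\cup V_{C'}$ iff it is in one of them, giving $T_{U\cup V}=T_{U}\cup T_{V}$, and similarly $T_{U\cap V}=T_{U}\cap T_{V}$. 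Then $\mu^{0}_{\psi}(\cdot)(C)$ is the indicator of ``$(\uparrow C)\cap\{\stat\langle\psi\vert\in C'\}\subseteq T_{(\cdot)}$'', and for a fixed ambient set $S_C:=(\uparrow C)\cap\{C':\stat\langle\psi\vert\in C'\}$ the function $W\mapsto[\,S_C\subseteq W\,]$ on the Boolean lattice of subsets satisfies $[S_C\subseteq W_1]+[S_C\subseteq W_2]=[S_C\subseteq W_1\cap W_2]+[S_C\subseteq W_1\cup W_2]$ (elementary, since $S_C\subseteq W_1\cup W_2$ and $S_C\subseteq W_1\cap W_2$ together are equivalent to $S_C\subseteq W_1$ and $S_C\subseteq W_2$, as $S_C\subseteq W_1\cap W_2\Leftrightarrow (S_C\subseteq W_1\wedge S_C\subseteq W_2)$ and $S_C\subseteq W_1\cup W_2$ is implied by either). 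Taking $W_1=T_U$, $W_2=T_V$ gives $u+v=m+j$.

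For condition (4), directed joins, let $\{U_i\}_{i\in I}\subseteq\mathcal{O}\Sigma_{\ast}$ be directed with union $U=\bigcup_i U_i$. Since $\sup$ of a subset of $\{0,1\}$ is again $0$ or $1$, the claim $\mu^{0}_{\psi}(U)(C)=\sup_i\mu^{0}_{\psi}(U_i)(C)$ reduces to: $\mu^{0}_{\psi}(U)(C)=1$ iff $\mu^{0}_{\psi}(U_i)(C)=1$ for some $i$. The direction ``$\Leftarrow$'' is monotonicity. For ``$\Rightarrow$'', assume $\mu^{0}_{\psi}(U)(C)=1$, i.e. $S_C\subseteq T_U$. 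I would argue $S_C\subseteq \bigcup_i T_{U_i}$ (by the atomicity argument: for $C'\in S_C$, $X^{C'}_{\stat\langle\psi\vert}$ is a point contained in $U_{C'}=\bigcup_i (U_i)_{C'}$, hence in $(U_i)_{C'}$ for some $i$ depending on $C'$), so $C'\in T_{U_i}$, i.e. $T_U\cap S_C=\bigcup_i(T_{U_i}\cap S_C)$, and then use that $\{T_{U_i}\cap S_C\}_i$ is \emph{directed} (from directedness of $\{U_i\}$ and $T_{(\cdot)}$ being order-preserving) to conclude; here the main obstacle is genuine: a directed union of subsets each failing to contain $S_C$ can still contain $S_C$ when $S_C$ is infinite, so I need that $S_C$ is such that $S_C\subseteq\bigcup_i A_i$ for a directed family forces $S_C\subseteq A_i$ for some $i$ — this is exactly compactness of $S_C$ in an appropriate sense, and it should follow from the structure of $\mathcal{V}(A)$ above $C(\stat\langle\psi\vert)$ together with the fact that once $\stat\langle\psi\vert\in C'$, membership in $T_{U_i}$ is inherited downward to every $C''$ with $C(\stat\langle\psi\vert)\subseteq C''\subseteq C'$ (Definition 2.23(2) and monotonicity), so it suffices to control a single minimal element, namely $C(\stat\langle\psi\vert)$ itself (if $C(\stat\langle\psi\vert)\not\subseteq C$ one checks directly that both sides are $1$ vacuously or $0$). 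Thus the hard part is precisely verifying this ``local compactness of the condition'' for the directed-join axiom; conditions (1) and (3) are straightforward Boolean-algebra manipulations once the atomicity of the rank-one projection $\stat\langle\psi\vert$ has been exploited. Finally, I would remark (as the surrounding text already does) that $\mu^{0}_{\psi}$ need \emph{not} satisfy condition (2), since $\mu^{0}_{\psi}(\Sigma_{\ast})(C)=1$ always but $\mu^{0}_{\psi}(\emptyset)(C)=0$ only when no $C'\supseteq C$ contains $\stat\langle\psi\vert$, which fails e.g. at a maximal $C$ with $\stat\langle\psi\vert\notin C$; this is why the proposition only claims (1), (3), (4).
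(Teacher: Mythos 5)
Your encoding of $\mu^{0}_{\psi}(U)(C)$ as the indicator of $S_{C}\subseteq T_{U}$ and your use of the fact that $X^{C'}_{\stat\langle\psi\vert}$ is a singleton are both on target, but the argument for condition (3) rests on a false identity. For arbitrary subsets $W_{1},W_{2}$ the equation $[S_{C}\subseteq W_{1}]+[S_{C}\subseteq W_{2}]=[S_{C}\subseteq W_{1}\cap W_{2}]+[S_{C}\subseteq W_{1}\cup W_{2}]$ fails in exactly the case your parenthetical justification omits: take $S_{C}=\{a,b\}$, $W_{1}=\{a\}$, $W_{2}=\{b\}$; the left-hand side is $0$ and the right-hand side is $1$. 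So knowing $T_{U\cup V}=T_{U}\cup T_{V}$ does not by itself give $j=\max(u,v)$, and the "elementary" modular law you invoke is not available for general $W_{1},W_{2}$. What rescues the statement is a structural fact about $T_{U}\cap S_{C}$ that you never establish: by Definition 2.16/2.23(2), opens of $\Sigma_{\ast}$ propagate \emph{upward}, so if $D\subseteq C'$ both contain $\stat\langle\psi\vert$ and $X^{D}_{\stat\langle\psi\vert}\subseteq U_{D}$, then every $\lambda'\in X^{C'}_{\stat\langle\psi\vert}$ restricts into $U_{D}$ and hence lies in $U_{C'}$. Thus $T_{U}\cap S_{C}$ is an up-set of $S_{C}$, and since $S_{C}$ (when nonempty) has a minimum $D$, namely the commutative algebra generated by $C$ together with $\stat\langle\psi\vert$ (not $C(\stat\langle\psi\vert)$, which need not lie above $C$), the condition $S_{C}\subseteq T_{U}$ collapses to the single condition $D\in T_{U}$. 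This is precisely the paper's device: it exhibits this $D$ and deduces $X^{D}_{\stat\langle\psi\vert}\nsubseteq U_{D}\cup V_{D}$ from $X^{D}_{\stat\langle\psi\vert}\nsubseteq U_{D}$ and $X^{D}_{\stat\langle\psi\vert}\nsubseteq V_{D}$. Once the reduction to $D$ is in place, your Boolean computation becomes the (true) modular identity for the indicator of a single element and condition (3) follows.

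The same reduction closes the part of condition (4) that you explicitly leave open. You ask for a compactness property of $S_{C}$ so that $S_{C}\subseteq\bigcup_{i}T_{U_{i}}$ for a directed family forces $S_{C}\subseteq T_{U_{i}}$ for a single $i$; after reducing to $D$ no such property is needed: the point $X^{D}_{\stat\langle\psi\vert}$ lies in $\bigcup_{i}U_{i,D}$, hence in some $U_{i,D}$, hence $D\in T_{U_{i}}$ and $S_{C}\subseteq T_{U_{i}}$ by upward propagation. (The paper argues this step via compactness of the clopen set $X^{D}_{\stat\langle\psi\vert}$ in $\Sigma_{D}$ together with directedness, which also works without the singleton property.) Note finally that you state the inheritance in the wrong direction ("inherited downward to every $C''$ with $C(\stat\langle\psi\vert)\subseteq C''\subseteq C'$"): Definition 2.23(2) pushes membership \emph{up} the context poset, and it is only the upward direction that makes "it suffices to control the minimal element" a valid reduction.
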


\begin{proof}
Most of the proof is easy and is left to the reader. For the proof that $\mu^{0}_{\psi}$ satisfies (3) of Proposition 4.4, we consider the case that $\mu^{0}_{\psi}(U)(C)=\mu^{0}_{\psi}(V)(C)=0$ and that there is a $E\in(\uparrow C)$ such that $\stat\langle\psi\vert\in E$. By definition of $\mu^{0}_{\psi}$, there exist contexts $C',C''\in(\uparrow C)$ such that $\stat\langle\psi\vert\in C',C''$, and $X^{C'}_{\stat\langle\psi\vert}\nsubseteq U_{C'}$, $X^{C''}_{\stat\langle\psi\vert}\nsubseteq V_{C''}$. Let $D\in(\uparrow C)$ be the commutative C*-algebra generated by $C$ and $\stat\langle\psi\vert$. Then $D$ is the smallest context satisfying $D\in(\uparrow C)$ and $\stat\langle\psi\vert\in D$. We find $X^{D}_{\stat\langle\psi\vert}\nsubseteq U_{D}\cup V_{D}$. This is because $D\subseteq C',C''$ and $X^{D}_{\stat\langle\psi\vert}\subseteq U_{D}$ implies $X^{C'}_{\stat\langle\psi\vert}\subseteq U_{C'}$, whilst $X^{D}_{\stat\langle\psi\vert}\subseteq V_{D}$ implies $X^{C''}_{\stat\langle\psi\vert}\subseteq V_{C''}$. We conclude that $\mu^{0}_{\psi}(U\vee V)(C)=0$. 

For the proof that $\mu^{0}_{\psi}$ satisfies (4) of Proposition 4.4, let $\{U_{i}\}_{i\in I}\subset\mathcal{O}\Sigma_{\ast}$ be directed, and assume that $\mu^{0}_{\psi}(\cup_{i}U_{i})(C)=1$. Also assume that there is a context $C'\in(\uparrow C)$ such that $\stat\langle\psi\vert\in C'$. Then let, as before $D$, be the smallest such context. By assumption, $X^{D}_{\stat\langle\psi\vert}\subseteq\bigcup_{i\in I}U_{i, D}$. For every $\lambda\in X^{D}_{\stat\langle\psi\vert}$, take an open neighborhood $V_{\lambda}$ small enough, such that $V_{\lambda}\subseteq U_{i, D}$ for $i\in I$ sufficiently large. The open and closed set $X^{D}_{\stat\langle\psi\vert}$ is the union of these $V_{\lambda}$, and by compactness of $\Sigma_{C}$ we only need a finite number of $V_{\lambda}$. By directedness, we find that  $X^{D}_{\stat\langle\psi\vert}\subseteq U_{i,D}$ for $i\in I$ sufficiently large. This demonstrates that $\sup\{\mu^{0}_{\psi}(U_{i})(C)\mid i\in I\}=1$.
\end{proof}

\begin{poe}
For every $C\in\mathcal{V}(A)$ and every $U\in\mathcal{O}\Sigma_{\ast}$,
\begin{equation}
\nu(\underline{\mu}^{0}_{\psi}(\underline{U})=\underline{1}_{l})_{C}=\nu(\underline{\mathfrak{w}}^{\stat}\subseteq\underline{U})_{C}.
\end{equation}
\end{poe}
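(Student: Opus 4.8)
The plan is to unwind both sides of the claimed equality using the Kripke--Joyal semantics for the functor topos $[\mathcal{V}(A),\mathbf{Set}]$ and the explicit descriptions of the objects involved. On the right-hand side, recall from the discussion preceding the proposition that $C'\in\nu(\underline{\mathfrak{w}}^{\stat}\subseteq\underline{U})_{C}$ precisely when $C'\supseteq C$ and every context $C''\supseteq C'$ with $\dirac\in C''$ satisfies: $\lambda''(\dirac)=1$ implies $\lambda''\in U_{C''}$. Equivalently, writing $D'$ for the (smallest) context generated by $C'$ and $\dirac$, this says that $X^{D'}_{\dirac}\subseteq U_{D'}$, since once this containment holds at $D'$ it propagates upward to every $C''\supseteq D'$ by the defining property of $\mathcal{O}\Sigma_{\ast}$ (Definition 2.23(2)), and the points $\lambda''$ with $\lambda''(\dirac)=1$ in such $C''$ are exactly the restrictions-back of elements of $X^{D'}_{\dirac}$. (If $\dirac\in C'$ already, then $D'=C'$.) So the right-hand side is the set of $C'\supseteq C$ with $X^{D'}_{\dirac}\subseteq U_{D'}$.

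Next I would compute the left-hand side. By the sheaf-semantics translation used in Proposition 4.4, $C'\in\nu(\underline{\mu}^{0}_{\psi}(\underline{U})=\underline{1}_{l})_{C}$ iff $C'\supseteq C$ and $\mu^{0}_{\psi}(U)(C')=1_{C'}$ as a function on $\uparrow C'$, i.e. $\mu^{0}_{\psi}(U)(C'')=1$ for every $C''\supseteq C'$. Unravelling the definition (159) of $\mu^{0}_{\psi}$, the value $\mu^{0}_{\psi}(U)(C'')$ equals $1$ iff for every $E\in(\uparrow C'')$ with $\dirac\in E$ one has $X^{E}_{\dirac}\subseteq U_{E}$. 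I then claim this is independent of which $C''\supseteq C'$ we pick, and is equivalent to the single condition $X^{D'}_{\dirac}\subseteq U_{D'}$, where again $D'$ is the context generated by $C'$ and $\dirac$. The ``only if'' direction is immediate by taking $C''=C'$ and $E=D'$. For the converse, given any $C''\supseteq C'$ and any $E\supseteq C''$ with $\dirac\in E$, we have $D'\subseteq E$; since $\dirac\in D'$, the inclusion $X^{D'}_{\dirac}\subseteq U_{D'}$ propagates to $X^{E}_{\dirac}\subseteq U_{E}$ exactly as in the first paragraph (using Definition 2.23(2) together with the fact that $\lambda\in X^{E}_{\dirac}$ iff $\lambda|_{D'}\in X^{D'}_{\dirac}$, which holds because $\lambda(\dirac)=\lambda|_{D'}(\dirac)$).

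Putting the two computations together, both $\nu(\underline{\mu}^{0}_{\psi}(\underline{U})=\underline{1}_{l})_{C}$ and $\nu(\underline{\mathfrak{w}}^{\stat}\subseteq\underline{U})_{C}$ equal $\{C'\in\mathcal{V}(A)\mid C'\supseteq C,\ X^{D'}_{\dirac}\subseteq U_{D'}\}$, which gives the desired identity. The one point that needs a little care --- and which I expect to be the main (though modest) obstacle --- is the ``upward propagation'' lemma: that $X^{D'}_{\dirac}\subseteq U_{D'}$ forces $X^{E}_{\dirac}\subseteq U_{E}$ for all $E\supseteq D'$. This is where the defining axiom of $\mathcal{O}\Sigma_{\ast}$ is used essentially, and it is exactly the reason the covariant topos (with its upward-directed information order) is the natural home for this pseudostate; the same style of argument already appears implicitly in the reduction of $\mathfrak{w}^{\stat}_{C}$ to $\chi_{\stat}^{-1}(1)$ via the Gelfand transform. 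Beyond this lemma, the remaining steps are purely a matter of carefully applying the Kripke--Joyal clauses for $\Rightarrow$/$\subseteq$ and for equality of global elements, exactly as in the proof of Proposition 4.4.
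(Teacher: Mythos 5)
The paper states this proposition without proof (it is presented as an immediate consequence of the definitions), so there is no official argument to compare against; your unwinding via Kripke--Joyal semantics is the right one and is essentially correct. In fact, once both sides are translated, they become the \emph{same} condition on $C'$ --- namely $C'\supseteq C$ and, for every $E\supseteq C'$ with $\dirac\in E$, $X^{E}_{\dirac}\subseteq U_{E}$ --- using on the left that $\mu^{0}_{\psi}(U)$ is order-preserving, so that $\mu^{0}_{\psi}(U)(C'')=1$ for all $C''\supseteq C'$ reduces to $\mu^{0}_{\psi}(U)(C')=1$. The further reduction to the single context $D'$ is therefore not needed for the equality itself.

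One caveat about that reduction: the context $D'$ ``generated by $C'$ and $\dirac$'' need not exist. If $\dirac$ does not commute with $C'$, no abelian von Neumann subalgebra contains both, the set of relevant $E$ is empty, and both conditions hold vacuously --- this is exactly the degenerate case the paper highlights just before the proposition, where a maximal $C$ with $\stat\langle\psi\vert\notin C$ gives $\nu(\pseu\subseteq\underline{U})_{C}=\{C\}$ even for $\underline{U}=\underline{0}$. Your identification of both sides with $\{C'\supseteq C\mid X^{D'}_{\dirac}\subseteq U_{D'}\}$ must therefore be read with the convention that the condition is vacuously satisfied when $D'$ does not exist; with that convention the upward-propagation step via Definition 2.23(2) is fine. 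A second, purely notational point: $\mu^{0}_{\psi}(U)(C')$ is a number, not a function on $\uparrow C'$; the forcing condition is $\mu^{0}_{\psi}(U)\vert_{\uparrow C'}=1_{C'}$, which is what your subsequent sentence correctly spells out.
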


The truth values of Proposition 4.10 coincide, but $\mu^{0}_{\psi}$ is not quite a covariant state, because it does not satisfy $\mu^{0}_{\psi}(\emptyset)(C)=0$ in Proposition 4.4(2).

We close this section with a discussion of a covariant version of measures on the spectral presheaf, as in Definition 4.2. Define
\begin{equation} 
\mathcal{O}_{cl}\Sigma_{\ast}=\{U\in\mathcal{O}\Sigma_{\ast}\mid\forall C\in\mathcal{V}(A),\ U_{C}\ \text{is closed in}\  \Sigma_{\ast}\}.
\end{equation}

\begin{dork}
A measure on $\mathcal{O}_{cl}\Sigma_{\ast}$, is a map $\mu:\mathcal{O}_{cl}\Sigma_{\ast}\to\text{OP}(\mathcal{V}(A),[0,1])$, such that for every $C\in\mathcal{V}(A)$ and $\forall U_{1}, U_{2}\in\mathcal{O}_{cl}\Sigma_{\ast}$,
\begin{itemize}
\item $\mu(\Sigma_{\ast})(C)=1$;
\item $\mu(U_{1})(C)+\mu(U_{2})(C)=\mu(U_{1}\wedge U_{2})(C)+\mu(U_{1}\vee U_{2})(C)$.
\item For any fixed $C\in\mathcal{V}(A)$, the function $\mu^{C}:=\mu(-)(C):\mathcal{O}_{cl}\Sigma_{\ast}\to[0,1]$, $U\mapsto\mu(U)(C)$ depends only on $U_{C}$. We write $\mu^{C}(U)=\mu^{C}(U_{C})$ with slight abuse of notation.
\end{itemize}
\end{dork}

Note that we use order preserving functions instead of order reversing functions (as in Definition 4.2). Internal to the respective topoi there is no difference. The set $\text{OR}(\mathcal{V}(A),[0,1])$ is the external description of the lower reals $\underline{[0,1]}_{l}$ in $[\mathcal{V}(A)^{\op},\Set]$, while $\text{OP}(\mathcal{V}(A),[0,1])$ is the external description of  $\underline{[0,1]}_{l}$ in $[\mathcal{V}(A),\Set]$.

Let $\mathcal{M}(\mathcal{O}_{cl}\Sigma_{\ast})$ denote the set of measures as in Definition 4.12, and let $\mathcal{S}(\mathcal{O}\Sigma_{\ast})$ denote the set of covariant states, which are functions satisfying the conditions of Proposition 4.4. If $\mu:\mathcal{O}\Sigma_{\ast}\to\text{OP}(\mathcal{V}(A),[0,1])$ is a covariant state, then the restriction of $\mu$ to $\mathcal{O}_{cl}\Sigma_{\ast}$ clearly satisfies the first two conditions for measures on $\mathcal{O}_{cl}\Sigma_{\ast}$. If $\mu=\mu_{\rho}$, comes from a quasi-state as in Lemma 4.6, then $\mu$ also satisfies the third condition of Definition 4.12. By the generalized Riesz-Markov theorem every covariant state comes from a quasi-state on $A$, so we can conclude that the function
\begin{equation}
r:\mathcal{S}(\mathcal{O}\Sigma_{\ast})\to\mathcal{M}(\mathcal{O}_{cl}\Sigma_{\ast}),\ \ \ r(\mu)=\mu|_{\mathcal{O}_{cl}(\Sigma_{\ast})}
\end{equation}
is well-defined.

\begin{tut}
Let $A$ be a von Neumann algebra without type $I_{2}$ summand. Then the map $r$ of (161) is a bijection.
\end{tut}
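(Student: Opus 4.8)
The plan is to exhibit an explicit inverse to $r$. Given a measure $\mu:\mathcal{O}_{cl}\Sigma_{\ast}\to\textit{OP}(\mathcal{V}(A),[0,1])$, the third condition of Definition 4.12 says that for each fixed $C$ the value $\mu(U)(C)$ depends only on $U_{C}$, i.e.\ $\mu$ restricts to a finitely additive probability content $\mu^{C}$ on the Boolean algebra $\mathcal{O}_{cl}\Sigma_{C}$ of closed open subsets of $\Sigma_{C}$. Since $\Sigma_{C}$ is the Gelfand spectrum of the commutative von Neumann algebra $C$, the Boolean algebra $\mathcal{O}_{cl}\Sigma_{C}$ is isomorphic to the projection lattice $\mathcal{P}(C)$, and $\mu^{C}$ becomes a finitely additive probability measure on $\mathcal{P}(C)$. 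By the theorem of Gleason (for the type $I$ part) and the Christensen--Yeadon extension to von Neumann algebras with no type $I_{2}$ summand, such a finitely additive measure on the projection lattice extends uniquely to a (normal, if we want, but here just) state $\rho_{C}:C\to\mathbb{C}$. Thus each $\mu$ produces a family of local states $(\rho_{C})_{C\in\mathcal{V}(A)}$. The first step of the proof is to assemble these into a single quasi-state $\rho:A\to\mathbb{C}$ on $A$: this uses the standard characterization of quasi-linear functionals on a von Neumann algebra without type $I_{2}$ summand, together with the fact that the $\rho_{C}$ are automatically compatible (for $D\subseteq C$, $\rho_{C}|_{D}$ and $\rho_{D}$ both extend the same finitely additive measure on $\mathcal{P}(D)$ and hence coincide by the uniqueness part of the Gleason/Christensen--Yeadon theorem). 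By the generalized Riesz--Markov theorem and Lemma~4.6, this quasi-state $\rho$ gives a covariant state $\mu_{\rho}$.

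The second step is to check that this assignment $\mu\mapsto\mu_{\rho}$ is inverse to $r$ on both sides. For $r\circ(\mu\mapsto\mu_{\rho})=\mathrm{id}$: given $\mu$, we must show $\mu_{\rho}|_{\mathcal{O}_{cl}\Sigma_{\ast}}=\mu$. By Lemma~4.6 and (147), for a closed open $U$ with $U_{C}=X^{C}_{p}$ corresponding to a projection $p\in\mathcal{P}(C)$ we have $\mu_{\rho}(U)(C)=\mu^{C}_{\rho}(X^{C}_{p})=\rho(p)=\rho_{C}(p)=\mu^{C}(U_{C})=\mu(U)(C)$, where the third equality uses that $\rho$ restricted to $C$ is $\rho_{C}$ and the fourth uses the construction of $\rho_{C}$ from $\mu^{C}$. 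Since every $U_{C}\in\mathcal{O}_{cl}\Sigma_{C}$ is of this form, equality of the two measures follows. For $(\mu\mapsto\mu_{\rho})\circ r=\mathrm{id}$: start from a covariant state $\mu_{\sigma}$ coming from a quasi-state $\sigma$ (every covariant state does, by the generalized Riesz--Markov theorem, as noted just before (161)), form $\mu=r(\mu_{\sigma})$, reconstruct a quasi-state $\rho$ as above, and show $\mu_{\rho}=\mu_{\sigma}$. Here one checks that the local states $\rho_{C}$ obtained from $\mu^{C}=\mu_{\sigma}^{C}|_{\mathcal{P}(C)}$ are precisely $\sigma|_{C}$ (again by uniqueness in Gleason/Christensen--Yeadon), hence $\rho=\sigma$ as quasi-states on $A$, hence $\mu_{\rho}=\mu_{\sigma}$ by Lemma~4.6. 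For arbitrary open $U\in\mathcal{O}\Sigma_{\ast}$ one then uses Proposition~4.4(4): $U_{C}$ is a directed join of closed open sets (we are working with von Neumann algebras, so these form a basis), and both $\mu_{\rho}$ and $\mu_{\sigma}$ are continuous along directed joins and agree on closed opens, so they agree everywhere.

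The main obstacle is the passage from a finitely additive measure on each projection lattice $\mathcal{P}(C)$ to a genuinely \emph{compatible} family of local states and thence to a quasi-state on $A$ --- this is exactly where the ``no type $I_{2}$ summand'' hypothesis is needed, and it is essentially the content of Gleason's theorem together with its non-type-$I$ generalization (Christensen, Yeadon) and Bunce--Wright's work on quasi-linear maps. One must be careful that the additivity axiom in Definition~4.12 is the modular (inclusion--exclusion) identity $\mu(U_{1})+\mu(U_{2})=\mu(U_{1}\wedge U_{2})+\mu(U_{1}\vee U_{2})$, which for \emph{orthogonal} projections in $\mathcal{P}(C)$ reduces to finite additivity; so on each commutative $C$ we get a finitely additive probability measure and the classical (commutative) Riesz representation already suffices there, while the global step across all $C$ simultaneously is where the Gleason-type input enters. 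I would also remark that without the type $I_{2}$ restriction the map $r$ can fail to be injective or surjective precisely because $M_{2}(\mathbb{C})$-summands admit ``measures'' on their projection lattices not coming from states (Gleason's theorem fails in dimension $2$), which is the standard source of such counterexamples and explains why the hypothesis appears.
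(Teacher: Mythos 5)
Your overall strategy --- extract a finitely additive probability measure on the projection lattice from $\mu$, invoke the generalized Gleason theorem for von Neumann algebras without type $I_{2}$ summand, identify covariant states with (quasi-)states via Riesz--Markov and Bunce--Wright, and then verify the two composites --- is the same route the paper takes. But one step you declare ``automatic'' is in fact the technical heart of the argument. You claim the local states are compatible because, for $D\subseteq C$, $\rho_{C}|_{D}$ and $\rho_{D}$ ``both extend the same finitely additive measure on $\mathcal{P}(D)$.'' That is precisely the assertion $\mu^{C}(X^{C}_{p})=\mu^{D}(X^{D}_{p})$ for $p\in\mathcal{P}(D)$, i.e.\ that the number $\mu(U)(C)$, for $U\in\mathcal{O}_{cl}\Sigma_{\ast}$ with $U_{C}=X^{C}_{p}$, depends only on the projection $p$ and not on the context in which it is read off. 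Nothing in Definition 4.12 gives this for free: condition (3) only says the value depends on the slice $U_{C}$, and the functions $\mu(U)$ are merely order-\emph{preserving}, so for a single witness such as $U=\delta^{i}(p)$ (which has $U_{D}=X^{D}_{p}$ and $U_{C}=X^{C}_{p}$) one gets a priori only the inequality $\mu(U)(D)\leq\mu(U)(C)$, not equality.

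The paper closes this gap with a complement trick: pick $V\in\mathcal{O}_{cl}\Sigma_{\ast}$ with $V_{D}=X^{D}_{1-p}$ and $V_{C}=X^{C}_{1-p}$; normalization and the modular identity give $\mu(U)(D)+\mu(V)(D)=1=\mu(U)(C)+\mu(V)(C)$, and since both $\mu(U)$ and $\mu(V)$ are order-preserving, both inequalities are forced to be equalities. Together with the observation that two closed opens with the same slice at $C$ receive the same value (again by modularity), and passing through the context $C\cap C'$ for two arbitrary contexts containing $p$, this yields a well-defined $m:\mathcal{P}(A)\to[0,1]$. Once you supply this argument, your proof is complete and coincides in substance with the paper's: the remaining ingredients you list --- finite additivity via a common context for orthogonal projections, the Gleason--Maeda theorem, Bunce--Wright to upgrade quasi-linearity, and the directed-join property of Proposition 4.4(4) to propagate agreement from closed opens to all opens --- are all correctly in place.
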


\begin{proof}
For a von Neumann algebra without type $I_{2}$ summand, any quasi-state $\rho:A\to\mathbb{C}$ is a state \cite{buwi}. By the Riesz-Markov theorem there is a bijective correspondence between states $\rho$ on $A$, and covariant states $\mu\in\mathcal{S}(\mathcal{O}\Sigma_{\ast})$.

Next, we want to show that the measures in $\mathcal{M}(\mathcal{O}_{cl}\Sigma_{\ast})$ also correspond bijectively to states $\rho$ on $A$. The proof of this claim is an adaptation of a proof of a similar result in the contravariant setting, given in \cite{doe3}. The underlying idea is the following:
\begin{enumerate}
\item Define for every $\mu\in\mathcal{M}(\mathcal{O}_{cl}\Sigma_{\ast})$ the function $m:\mathcal{P}(A)\to[0,1]$ by 

\noindent $m(p)=\mu(U)(C)$, where $U\in\mathcal{O}_{cl}\Sigma_{\ast}$ satisfies $U_{C}=X^{C}_{p}$.
\item Show that $m$ is well-defined (that it does not depend on the choice of $U$ and $C$).
\item Show that $m(1)=1$, and for any pair of orthogonal projections $p,q\in\mathcal{P}(A)$ we have $m(p)+m(q)=m(p+q)$. This makes $m$ into a finitely additive probability measure on the projections of $A$.
\item For a von Neumann algebra without type $I_{2}$ summand, $m$ has a unique extension to a state on $A$ by Gleason's theorem \cite{mae}.
\item To each state $\rho$ we can assign a measure $\mu_{\rho}$ by $\mu_{\rho}(U)(C)=\rho(p)$, where $U_{C}=X^{C}_{p}$. The assignment $\rho\mapsto\mu_{\rho}$ is easily seen to be an inverse of the assignment $\mu\mapsto\rho_{\mu}$ obtained by the previous 4 steps.
\end{enumerate}
For the proof of step (2), first assume that $U\in\mathcal{O}_{cl}\Sigma_{\ast}$ has the property that $U_{C}=X^{C}_{p}$, and that there is a context $C'\subset C$, such that $p\in C'$ and $U_{C'}=X^{C'}_{p}$. We will show that $\mu(U)(C)=\mu(U)(C')$. Choose any $V\in\mathcal{O}_{cl}\Sigma_{\ast}$ such that $V_{C}=X^{C}_{1-p}$ and $V_{C'}=X^{C'}_{1-p}$. From Definition 4.12(1) and (2) we deduce
\begin{equation}
\mu(U)(C)+\mu(V)(C)=1=\mu(U)(C')+\mu(V)(C').
\end{equation}
As both $\mu(U)$ and $\mu(V)$ are order preserving, we conclude $\mu(U)(C)=\mu(U)(C')$.

Next consider the case that $U,V\in\mathcal{O}_{cl}\Sigma_{\ast}$ satisfy $U_{C}=X^{C}_{p}=V_{C}$. We want to show that $\mu(U)(C)=\mu(V)(C)$. Using Definition 4.12(2) and (3) we find
\begin{align}
\mu(U)(C)+\mu(V)(C) &= \mu(U\vee V)(C)+\mu(U\wedge V)(C)\\
&= \mu^{C}(U_{C}\cup V_{C})+\mu^{C}(U_{C}\cap V_{C})\\
&= \mu^{C}(U_{C})+\mu^{C}(U_{C})\\
&= \mu(U)(C)+\mu(U)(C),
\end{align}
from which $\mu(U)(C)=\mu(V)(C)$ follows. Now we can prove (2). Suppose that  $U,V\in\mathcal{O}_{cl}\Sigma_{\ast}$, satisfy $U_{C}=X^{C}_{p}$ and $V_{C'}=X^{C'}_{p}$. Consider the context $C\cap C'$, which contains $p$. By the previous two steps we find
\begin{align}
\mu(U)(C) &= \mu(\delta^{i}(p))(C)\\ 
&= \mu(\delta^{i}(p))(C\cap C')\\
&= \mu(\delta^{i}(p))(C') = \mu(V)(C'),
\end{align}
which proves (2). For the proof of step (3), we remark that as $p$ and $q$ commute, there exists a context $C$ containing both. For such a context $C$ we find
\begin{equation}
m(p+q)=\mu(\delta^{i}(p+q))(C)=\mu^{C}(\delta^{i}(p+q)_{C})=\mu^{C}(\delta^{i}(p)_{C}\cup\delta^{i}(q)_{C}),
\end{equation}
while $\mu(\delta^{i}(p)_{C}\cap\delta^{i}(q)_{C})=\mu(\emptyset)=0$. Note that the orthogonality of $p$ and $q$ where used for these identities. By Definition 4.12(2), $m(p)+m(q)=m(p+q)$, proving step (3). We leave proving step (5) to the reader.

Now we can prove Theorem 4.13. Start with a state $\rho$ on $A$. This induces a covariant state $\mu_{\rho}$ as in Lemma 4.6. If for $U\in\mathcal{O}_{cl}\Sigma_{\ast}$ we have $U_{C}=X^{C}_{p}$, then for restriction of this covariant state $r(\mu_{\rho})$ we find $r(\mu_{\rho})(U)(C)=\rho(p)$. The associated map $m: \mathcal{P}(A)\to[0,1]$ is then given by $m(p)=\rho(p)$ and the unique state on $A$ corresponding to $r(\mu_{\rho})$ is $\rho$. In short, if $\mathcal{S}(A)$ denotes the set of states on $A$, then we have shown that the following square, where the vertical arrows are bijections, commutes.
$$\xymatrix{
\mathcal{S}(\mathcal{O}\Sigma_{\ast}) \ar[rr]^{r} & & \mathcal{M}(\mathcal{O}_{cl}\Sigma_{\ast}) \ar[d]^{\cong}\\
\mathcal{S}(A) \ar[u]^{\cong}  \ar[rr]_{id} & & \mathcal{S}(A)}$$
This implies that $r$ is a bijection.
\end{proof}

\newpage

\se{Summary}

What have we learned? Regarding the formalism of the covariant approach, we have learned quite a lot. Corollary 2.18 gives an explicit description of the spectrum of $\underline{A}$ in $[\mathcal{C}(A),\Set]$ by means of a bundle of topological spaces $\pi:\Sigma_{\ast}\to\mathcal{C}(A)$. As a consequence, the external spectrum is a spatial locale (Corollary 2.20).

Restricting attention from general (unital) C*-algebras to von Neumann algebras we introduced a daseinisation map (Definition 3.3). This map was defined to mirror the contravariant daseinisation map $\breve{\delta}(a):\uS\to\unR^{\leftrightarrow}$ as closely as possible. Lemma 3.6 and the discussion following it show that the only difference between this new daseinisation arrow and the original covariant daseinisation arrow \cite{hls} is that the new version uses the spectral order. By continuity of the daseinisation arrows $\delta(a)$ (Proposition 3.4), we define covariant elementary propositions $[a\in(p,q)]_{1}$ in the same way as for the original covariant daseinisation arrow (Definition 3.5). We also introduce covariant elementary propositions $[a\in(p,q)]_{2}$ in a different way (93), which mirrors the contravariant elementary propositions more closely. Although the two versions are different, Lemma 3.11 shows that they are closely connected.

In Subsection 4.2 we study the truth values obtained by pairing elementary propositions with covariant states. Lemma 4.7 treats the $[a\in(p,q)]_{2}$ version of elementary propositions and Lemma 4.8 discusses the  $[a\in(p,q)]_{1}$ version of elementary propositions. Theorem 4.9 showed that the truth values obtained from $[a\in(p,q)]_{1}$ and $[a\in(p,q)]_{2}$ are exactly the same.

In Subsections 1.2 and 3.5 we consider a possible physical interpretation for the covariant approach. In essence, this interpretation sees the Kripke-Joyal semantics of the topos $[\mathcal{V}(A),\Set]$ as a (physical) Kripke model. We think of a context $C\in\mathcal{V}(A)$ as a stage of knowledge about the system, and think of the elementary proposition $[a\in(p,q)]$ as representing a property of the system (just as in the contravariant approach). Although we think of $[a\in(p,q)]$ in the same way as in the contravariant approach, the covariant notion of truth differs greatly the contravariant one. In the covariant approach, truth of the property  $[a\in(p,q)]$ at a context $C$ relative to some state is interpreted as follows: From the knowledge that $C$ provides, we can verify that the system has the property $[a\in(p,q)]$ relative to that state. If, in the covariant approach, the pairing of a property and a state yields ``true" at \textit{some} context, then the pairing of the same property and state in the contravariant approach yields ``true" at \textit{every} context (in that the property holds with respect to the given state). Note that truth in the covariant approach is concerned with our \textit{knowledge} of the system, whereas truth in the contravariant approach is concerned with the question to what extent a system \textit{has} a property. The state-proposition pairings of Lemma 4.7 and Lemma 4.8 seems consistent with this physical interpretation. This consistency relies on the use of the spectral order in the definition of the daseinisation map (see discussion after Lemma 4.8).

In this paper, the contravariant approach has mainly been investigated by taking a topos internal perspective on the various constructions of the approach. In Subsection 2.1, the closed open subobjects of the spectral presheaf were shown to give a locale in the topos $[\mathcal{V}(A)^{\op},\Set]$ (Proposition 2.3). Instead of only looking at closed open subobjects, we studied the locale of open subobjects (a locale more closely connected to the spectrum of the covariant approach), which was described by a bundle of topological spaces $\pi:\Sigma^{\ast}\to\mathcal{V}(A)$ (Theorem 2.2). This internal locale has been shown to be compact (Corollary 2.7) but generally not regular (Corollary 2.10). Therefore, the locale in question cannot arise as the spectrum of some internal commutative C*-algebra. In Subsection 3.3, an attempt is made to view the contravariant daseinisation map internally as a continuous map. However, this investigation does not yield a contravariant counterpart to the covariant elementary proposition (which is defined by both inner and outer daseinisation). Applying (internal) constructions from the covariant approach to the contravariant approach turns out to be quite difficult, giving rise to the following question: What role might the internal language of the topos $[\mathcal{V}(A),\Set]$ play in the contravariant approach? For the covariant approach the use of internal language has been crucial so far, but for the contravariant approach, where the semantics is dictated by coarse-graining, the role of the internal language is not a priori clear.

Mathematically, the two approaches are closely related, and also at the level of interpretation there may be important connections. Entering the realm of speculation, it would seem desirable to have a formalism which incoorporates both approaches. In such a formalism, one could both coarse-grain and refine, moving more freely between contexts. Furthermore, one could investigate both to what extent a system has a property and in which way we can verify that it has that property. However, at the time of writing, it seems unclear how such a formalism can be obtained.

\se{Acknowledgements}

The author would like to thank Bas Spitters for sharing his ideas (such as the suggestion of using the space $\Sigma^{\ast}$), and Klaas Landsman for his comments, which greatly improved this paper. Also, thanks to the editor and the anonymous referees, for suggestions that helped improve this paper.

\end{document}